\newtheorem{theorem}{Theorem}
\newtheorem{definition}[theorem]{Definition}
\newtheorem{proposition}[theorem]{Proposition}
\newtheorem{remark}[theorem]{Remark}
\newtheorem{lemma}[theorem]{Lemma}
\newtheorem{example}[theorem]{Example}
\newtheorem{algorithm}[theorem]{Algorithm}
\newtheorem{approximation problem}[theorem]{Approximation problem}
\newtheorem{corollary}[theorem]{Corollary}
\newtheorem{claim}[theorem]{Claim}
\newcommand{\ip}[2]{\left\langle#1,#2\right\rangle}
\newcommand{\abs}[1]{\left|#1\right|}
\newcommand{\norm}[1]{\left\|#1\right\|}
\def\bT{\breve{T}}
\def\bQ{\breve{Q}}
\def\bbQ{{\bf\breve{Q}}}
\def\hf{\hat{f}}
\def\hs{\hat{s}}
\def\Re{{\rm Re}}
\def\Im{{\rm Im}}
\def\w{\omega}
\def\d{\delta}
\def\e{\epsilon}
\def\l{\lambda}
\def\k{\kappa}
\def\a{\alpha}
\def\b{\beta}
\def\x{\omega}
\def\y{\nu}
\def\bw{\boldsymbol{\omega}}
\def\bx{{\bf x}}
\def\bD{{\bf D}}
\def\bg{{\bf g}}
\def\NN{\mathbb{N}}
\def\cF{\mathcal{F}}
\def\cH{\mathcal{H}}
\def\cW{\mathcal{W}}
\def\cS{\mathcal{S}}
\def\CC{\mathbb{C}}
\def\NN{\mathbb{N}}
\def\ZZ{\mathbb{Z}}
\def\RR{\mathbb{R}}
\def\SS{\mathbb{S}}
\definecolor{gr}{RGB}{150,150,255}
\begin{document}


\title{A Wavelet Plancherel Theory \\ with Application to Multipliers and Sparse Approximations}

\author{Ron Levie \ \ \ \ \ \ \ \ \ \ \ \  \ \ \ \  \ \ \ \ \ \ \ \ \ \ \ \ \ \ \ \ \ \ \ \ \     Nir Sochen \\
{\small levie@math.lmu.de \ \ \ \ \ \ \ \ \ \ \ \ \ \ \ \ \ \ \ \ \  \ \ \ \ \ \ \ \ \ \ \ \ \ \ \ \ \ sochen@tauex.tau.ac.il}}
\affil{Ludwig Maximilian University of Munich \ \ \ \ \ \ \ \ \ \ Tel Aviv University \ \ \ \ \ \ \ \ \ \ \ \ \ }
\date{ }

\maketitle

\begin{abstract}
We introduce an extension of continuous wavelet theory that enables an efficient implementation of multiplicative operators in the coefficient space.  In the new theory, the signal space is embedded in a larger abstract signal space -- the so called window-signal space. There is a canonical extension of the wavelet transform to an isometric isomorphism between the window-signal space and the coefficient space. Hence, the new framework is called a wavelet-Plancherel theory, and the extended wavelet transform is called the wavelet-Plancherel transform.  Since the wavelet-Plancherel transform is an isometric isomorphism, any operation in the coefficient space can be pulled-back to an operation in the window-signal space. It is then possible to improve the computational complexity of methods that involve a multiplicative operator in the coefficient space, by performing all computations directly in the window-signal space. As one example application, we show how continuous wavelet multipliers (also called Calder\'{o}n-Toeplitz Operators), with polynomial symbols, can be implemented with linear complexity in the resolution of the 1D signal. As another example, we develop a framework for efficiently computing greedy sparse approximations to signals based on elements of continuous wavelet systems.
\end{abstract}

\textbf{keywords.} Continuous wavelet, Plancherel theorem, wavelet multiplier, sparse decomposition, matching pursuit

\textbf{MSC.} 42C40, 20C40, 65T60, 43A80

\section{\textcolor{black}{Introduction}}
\label{Introduction}

In this paper we study continuous wavelet transforms based on square integrable representations \cite{gmp}. Such transforms are based on the following recipe, which we write more formally in Definition \ref{GCWT}. Let $\cS$ be a separable Hilbert space that we call the \emph{signal space}. 
Let $G$  be a locally compact topological group, and let $\pi:G\rightarrow {\cal U}(\cS)$ be a unitary representation of $G$, mapping $G$ to unitary operators in $\cS$. Given some choice of a signal $f\in\cS$ that we call the \emph{window}, the \emph{wavelet system} is defined to be the set of signals $D=\{\pi(g)f\}_{g\in G}$. Consider the Haar measure $d\mu(g)$ of $G$. We call $G$ \emph{phase space}, and call $L^2(G)=L^2(G;d\mu(g))$ the \emph{coefficient space}. 
The \emph{general continuous wavelet transform} is defined to be
\begin{equation}
V_{f}:\cS\rightarrow L^2(G) \quad ,\quad V_f[s](g)= \ip{s}{\pi(g)f},
\label{eq:1}
\end{equation}
under the assumption that $g\mapsto \ip{s}{\pi(g)f}$ is indeed in $L^2(G)$. The operator $V_f$ is also called the \emph{analysis operator}. Given some conditions that will be specified in Section \ref{Continuous wavelet transforms}, we have the following wavelet reconstruction formula
\[s=CV_f^*V_f[s]= C\int V_f[s](g) \ \pi(g)f\  d\mu(g),\]
where $C>0$ is some constant. Here, the \emph{synthesis operator} $V_f^*$ is the adjoint of $V_f$
(see Remark \ref{remark:GCWT_reco}).

The analysis operator $V_f$ maps signals from $\cS$ to their representation as functions over phase space $G$, and the synthesis operator $V_f^*$ builds signals from a choice of coefficients in $L^2(G)$.
Some examples of general continuous wavelet transforms are the short time Fourier transform (STFT) \cite{Time_freq}, where $G$ is the time-frequency plane, the  1D continuous  wavelet transform (CWT) \cite{Cont_wavelet_original,Ten_lectures}, where $G$ is the time-scale space, and the Shearlet transform \cite{Shearlet}, where $G$ is the position-orientation-scale space.

The analysis operator $V_f:\cS\rightarrow L^2(G)$ is not invertible, and $V_f^*$ is only the pseudo-inverse of $V_f$ (up to the constant $C$). This poses sometimes a difficulty when developing signal processing methods based on wavelet transforms. For motivation, consider the comparison to classical Fourier analysis. There, it is often useful to work interchangeably with the time and the frequency representations of the signal. For example, the statement ``time convolution is equivalent to frequency multiplication'' has a precise meaning. In-fact, every time operation is equivalent to some corresponding frequency operation, and vice-versa, since the Fourier transform is an isomorphism. The equivalent philosophy does not hold in wavelet analysis, as $V_f$ is not onto $L^2(G)$. Hence, in general, operations on functions in $L^2(G)$ cannot be represented, or \emph{pulled-back}, to equivalent operations in $\cS$.   
In this paper we introduce the \emph{wavelet-Plancherel theory}, which aims at resolving this shortcoming. By extending the signal space $\cS$ to the so-called \emph{window-signal space} $\cW\otimes\cS$ (Definition \ref{def:WS}), and canonically extending the wavelet transform $V_f$ to the so-called \emph{wavelet-Plancherel transform} $V$ (see Subsection \ref{A wavelet Plancherel transform}), we can show that $V$ is an isometric isomorphism (Theorem \ref{Th:WP}). The new formulation allows computing more efficiently some operations, defined in phase space,  by pulling them back to operations formulated in the window-signal space.

In this paper we are interested in computational pipelines that involve the multiplication of $V_f[s]$ with a function $F:G\rightarrow\CC$. Define the multiplicative operator $\mathbf{F}:L^2(G)\rightarrow L^2(G)$  by $\mathbf{F}Q(g)=F(g)Q(g)$. One example are continuous frame multipliers $M_F$ \cite{New_mult2}, where $F$ is called the \emph{symbol}. A multiplier with symbol $F$ is defined to be
\[
    M_F = V_f^*\mathbf{F}V_f. 
\]
When $V_f$ is the 1D CWT, $M_F$ is called a Calder\'{o}n-Toeplitz Operator \cite{wave_mult0}. 
Some papers derive theoretical results for multipliers, like boundedess, compactness,  and convergence of multipliers as the symbol converges \cite{New_mult2}. Other works deal with computational aspects that focus on the coefficient space representation of the multiplier, like showing that the adjoint and the inverse of a multiplier are multipliers
\cite{New_mult2, mult_inv}.
Another line of work uses multipliers in applications, e.g., in audio analysis 
\cite{ex3}
and improving signal to noise
\cite{ex4}.
 This paper can be seen as belonging to a fourth line of work. The goal is to derive closed form formulas for multipliers, written in terms of the signal $s$, bypassing the need to map the signal to phase space in the computational pipeline.  

For this, the multiplier $M_F$ is extended to an operator over the window-signal space $\cW\otimes\cS$ by
\begin{equation}
    \label{eq:mult0}
    \tilde{M}_F = V^*\mathbf{F}V.
\end{equation}
We call (\ref{eq:mult0}) the \emph{pull-back} of the operator $\mathbf{F}$ to the window-signal space.
We show 
 that $\tilde{M}_F$ has a simple formulation in terms of the window-signal space, that does not require an explicit computation of the wavelet-Plancherel transform (Propositions \ref{prop:SPWT_pull_obs} and \ref{pullDGWT}). It is then possible to project the resulting formula of $\tilde{M}_F$ from the window-signal space to the signal space. This results in a formula of $M_Fs$ in terms of the signal $s$, that does not require an explicit computation of the wavelet transform $V_f[s]$.
When working with discrete signal of resolution $N$, and polynomial symbols $F$, we show in Subsection \ref{Closed form formulas for phase space filters} that the wavelet-Plancherel implementation of $M_F$ is of computational complexity $O(N)$.

We are also interested in pipelines that start with a signal $s$, and return the value $\norm{\mathbf{F}V_f[s]}_{L^2(G)}$. Such a computation is involved in a search algorithm for large wavelet coefficients, that we propose in this paper. The idea is to choose $F$ as the characteristic function $\chi_{G_0}$ of compact domains $G_0\subset G$. Hence, if $\norm{\mathbf{F}V_f[s]}_{L^2(G)}$ is large,  $V_f[s]$ must have a large coefficient in $G_0$. By considering a sequence of refined partitions of $G$, the computation of the above norms allows pinpointing large wavelet coefficients via a bisection search. We use the resulting search algorithm in a sparse approximation method that we call \emph{wavelet Plancherel phase-space projection pursuit} (WP4),  based  on a wavelet-Plancherel implementation (see Subsection \ref{Sparse approximations with continuous wavelet systems}). When working with discrete signals of resolution $N$, the method allows pinpointing large wavelet coefficients in phase space with resolution $N^2$ with $O(N\log(N))$ operations (see Section \ref{Greedy sparse continuous wavelet transform}).

\subsection{Main contribution}
We summarize our contribution as follows.
\begin{itemize}
	\item 
	We develop the wavelet-Plancherel theory for generic wavelet transforms based on square integrable representations, in which the wavelet transform is canonically extended to an isometric isomorphism from the window-signal space to a space of functions on phase space.
	\item
	We develop closed form formulas for the pull-back of phase space multiplicative operators for a large class of generic wavelet transforms -- semi-direct product wavelet transforms (SPWT) \cite{MyRef}.
	\item
	In the 1D continuous wavelet transform, we show that carrying all computations in the window-signal space leads to fast multiplicative operator computations for a large class of function $F$. For polynomial or characteristic functions $F$, we have a reduction in computational complexity -- $O(N)$ instead of $O(N^2)$ of the naive method. 
	\item
	We utilize the fast implementation of multiplication by characteristic functions in a coefficient search method. The method takes $O(N\log(N))$ operations, instead of the naive $O(N^2\log(N))$. As an example, this search method is used in a matching pursuit algorithm which we call WP4.
\end{itemize}

\subsection{Outline}

The general wavelet-Plancherel theory is derived for a large class of transforms called semi-direct product wavelet transforms \cite{MyRef}, that include the STFT, 1D wavelet transform, and shearlet transform as special cases. Since the general formulation is somewhat long and involved, we begin this paper with a special case -- the wavelet-Plancherel theory of the 1D wavelet transform. Here, the formulas and claims are much simpler.
In Section \ref{The one-dimensional wavelet-Plancherel theory} we present this special case, and postpone the proofs to Sections \ref{A wavelet Plancherel theory} and \ref{Wavelet Plancherel phase space filtering} where we develop rigorously the general theory.

In Section \ref{Preliminaries} we review some general preliminaries. 
In Section \ref{Continuous wavelet transforms} we review the general theory of continuous wavelet transforms based on square integrable representations, which is the basis of the wavelet-Plancherel theory. Moreover, we recall the recently developed class of semi-direct product wavelet transforms (SPWT) \cite{MyRef}, whose special structure will allow the derivation of closed form pull-back formulas of phase space multiplicative operators.  Last, we introduce a new class of wavelet transforms, called \emph{diffeomorphism geometric wavelet transforms (DGWT)}, which will allow a simplification of the wavelet-Plancherel theory and the pull-back formulas. 

In Section \ref{A wavelet Plancherel theory} we derive our wavelet-Plancherel theory for general continuous wavelet transforms based on square integrable representations.
In Section \ref{Wavelet Plancherel phase space filtering} we derive closed form formulas for the pull-back of phase space multiplicative operators. The general formulas are developed for SPWTs, the structure of which allows a functional calculus machinery as the basis of calculations.
Last, for DGWTs, we show that the pull-back of phase space multiplicative operators are explicitly given as convolutions and multiplicative operators along integral lines in the window-signal space.

In Section \ref{Greedy sparse continuous wavelet transform} we show how to utilize the wavelet-Plancherel theory to design an efficient sparse continuous wavelet transform algorithm (WP4), for the special case of the 1D continuous wavelet transform. We moreover derive the computational complexity of the new method. \textcolor{black}{As a toy example}, we show the advantage in having increased resolution in phase space using our model, in a phase vocoder application.

\section{The one-dimensional wavelet-Plancherel theory}
\label{The one-dimensional wavelet-Plancherel theory}

In this section we summarize the wavelet-Plancherel theory in the special case of the 1D wavelet transform. 
The theory in this case is simpler than the general formulation. 
We explain the main ideas without going into the proofs, which are given in the general case in the rest of this paper.

\subsection{The 1D continuous wavelet transform}

The following material can be found, e.g., in \cite{Cont_wavelet_original,Ten_lectures}.
We formulate the wavelet transform for signals represented in the frequency domain. 
We consider a construction of the wavelet transform for analytic signals, namely, signals supported on the positive half frequency line $\left.\left[0,\infty\right.\right)$. Since we focus on a frequency domain representation, the signal space is $L^2(0,\infty)$. Signals supported on the whole frequency line can be treated by considering the positive and negative frequency lines separately.

\subsubsection{Definition of the wavelet transform}
The wavelet transform is based on three operations in $L^2(0,\infty)$.
\emph{Translation} by $g_1$ in the time domain takes the form of modulation in frequency
\begin{equation}
[\hat{L}(g_1)\hf](\w)=e^{-2\pi i g_1 \w}\hf(\w).
\label{eq:1D_mod}
\end{equation}
We call the parameter $g_1$ $time$.
\emph{Dilation} by $g_2$ in frequency is
\begin{equation}
[\hat{D}(g_2)\hf](\w)=e^{\frac{1}{2}g_2}\hf(e^{g_2}\w)
\label{eq:1D_dil}
\end{equation}
We call the parameter $g_2$ $scale$. The smaller the scale $g_2$ the larger the spread of $\hat{D}(g_2)\hf$.
The \emph{wavelet representation} is defined to be the mapping $\pi$ that assigns to each $(g_1,g_2)$ pair the translation-dilation operator 
\[\hat{\pi}(g_1,g_2)=\hat{L}(g_1)\hat{D}(g_2).\]

The \emph{wavelet transform} $V_{\hf}$, based on the mother wavelet $\hf\in L^2(0,\infty)$, is the mapping that transforms each signal $\hs\in L^2(0,\infty)$ to the function
\begin{equation}
V_{\hf}[\hs]: \RR^2\rightarrow\CC  , \quad V_{\hf}[\hs]: (g_1,g_2)\mapsto \ip{\hs}{\hat{\pi}(g_1,g_2)\hf}.
\label{eq:1D_wavelet}
\end{equation}
The mother wavelet $\hf$ is assumed to satisfy the \emph{admissibility condition}
\begin{equation}
\int_{0}^{\infty}\abs{\hf(\w)}^2 \frac{1}{\w}d\w < \infty .
\label{eq:1D_admiss}
\end{equation}
Any $\hf$ that satisfies (\ref{eq:1D_admiss}) is called an \emph{admissible wavelet}. \textcolor{black}{An admissible wavelet is also called a \emph{mother wavelet} in 1D wavelet analysis. In general wavelet analysis, an admissible wavelet is often called a \emph{window}, and since most of this paper deals with general wavelet analysis, we stick to the term window also in 1D continuous wavelet analysis.}

\subsubsection{Phase space}
\label{Phase space}

There is a special measure in the space $\RR^2$ of the $time$-$scale$ parameters. 
We denote by $G$ the space $\RR^2$ with the measure $d\mu(g)=e^{-g_2}dg_1dg_2$, and call $G$ \emph{phase space}.
The space
\[L^2(G) = L^2(\RR^2;d\mu(g))\]
is define to be the space of all measurable functions $F:\RR^2\rightarrow\CC$ with
\[\int_{\RR^2}\abs{F(g_1,g_2)}^2e^{-g_2}dg_1dg_2 <\infty.\]
The inner product in the Hilbert space $L^2(G)$ is
\[\ip{F}{H} = \int_{\RR^2}F(g_1,g_2)\overline{H(g_1,g_2)}e^{-g_2}dg_1dg_2.\]
The wavelet transform $V_{\hf}$ is an isometric embedding of $L^2(0,\infty)$ to $L^2(G)$ up to \textcolor{black}{the admissibility scalar (\ref{eq:1D_admiss}). Namely,
\[ \norm{\hs} = \sqrt{\int_{0}^{\infty}\abs{\hf(\w)}^2\frac{1}{\w} d\w}\norm{V_{\hf[\hs]}}.\]}

\subsubsection{Reconstruction formula and orthogonality relation}

The adjoint of the wavelet transform, $V_{\hf}^*:L^2(G)\rightarrow L^2(0,\infty)$, can be written as
\begin{equation}
V_{\hf}^*[F]=\int_{\RR^2}F(g_1,g_2)\hat{\pi}(g_1,g_2)\hf e^{-g_2}dg_1dg_2,
\label{eq:1D_V_f_adj}
\end{equation}
We have the reconstruction formula
\[V_{\hf}^*V_{\hf}[\hs] = \int_{0}^{\infty}\abs{\hf(\w)}^2\frac{1}{\w} d\w\ \hs.\]
More generally, for every pair of signals $\hs_1,\hs_2\in L^2(0,\infty)$ and admissible wavelets $\hf_1,\hf_2\in L^2(0,\infty)$
\begin{equation}
\ip{V_{\hf_1}[\hs_1]}{V_{\hf_2}[\hs_2]} = \overline{\int_{0}^{\infty}\hf_1(\w)\overline{\hf_2(\w)}\frac{1}{\w} d\w} \ip{\hs_1}{\hs_2}.
\label{eq:1D_orth}
\end{equation}
Equation (\ref{eq:1D_orth}) is called the \emph{orthogonality relation} of the wavelet transform.

The operator $P=V_{\hf}V_{\hf}^*:L^2(G)\rightarrow L^2(G)$ is the orthogonal projection from $L^2(G)$ upon the image space $V_{\hf}[L^2(0,\infty)]$ of the wavelet transform, \textcolor{black}{up to the constant $\int_{0}^{\infty}\abs{\hf(\w)}^2\frac{1}{\w} d\w$}.

\subsection{A wavelet-Plancherel theory}

In this subsection we motivate and define the wavelet-Plancherel transform.

\subsubsection{Motivation from signal processing in phase space}
Often, we would like to perform signal processing in phase space \cite{MyRefStoch}. Namely, we would like to transform a signal $\hs$ to phase space by $V_{\hf}[\hs]$, apply some linear operator $T$ on $V_{\hf}[\hs]$ to obtain $F=TV_{\hf}[\hs]$, and return $V_{\hf}^*F$ as output. Namely
\begin{equation}
Y\hs = V_{\hf}^*TV_{\hf}[\hs].
\label{eq:1D_SPPS}
\end{equation}
\textcolor{black}{One example of (\ref{eq:1D_SPPS}) is a multiplier \cite{wave_mult0,New_mult2}.}

Since phase space is two dimensional, and the domain of the signal space $L^2(0,\infty)$ is one dimensional, it would be beneficial to bypass the transformation to phase space in the pipeline (\ref{eq:1D_SPPS}), and somehow derive a closed form formula of $Y$ in the signal domain.
On the other hand,
$V_{\hf}$ maps $L^2(0,\infty)$ to a ``small'' subspace of $L^2(G)$. For most function $F\in L^2(G)$ there is no signal $\hs\in L^2(0,\infty)$ that gives $V_{\hf}[\hs]=F$. As a result, in general, $F$ is not the wavelet transform of $V_{\hf}^*F$
\[F\neq V_{\hf}V_{\hf}^*F.\]
Hence, in general, $T$ is not unitarily equivalent to $Y$
\[T\neq V_{\hf}YV_{\hf}^*.\]
It is only true that $PTP$ is unitary equivalent to $Y$.
Since there is no simple characterization of the space $V_f[L^2(0,\infty)]$, it is difficult in most cases to find a closed form formula for the pull-back $Y$ of $PTP$.
\textcolor{black}{We overcome this limitation using our wavelet-Plancherel theory, which extends the signal domain, and by which extends the wavelet transform to an isometric isomorphism onto $L^2(G)$.}

\subsubsection{The window-signal space} 

To extend the signal space, we first consider an extended space of admissible wavelets. Consider the Hilbert space $\cW$ of measurable functions $\hf:(0,\infty)\rightarrow\CC$ satisfying the admissibility condition (\ref{eq:1D_admiss}), with the inner product
\begin{equation}
\ip{\hf_1}{\hf_2}_{\cW}=\int_{0}^{\infty}\hf_1(\w)\overline{\hf_2(\w)} \frac{1}{\w}d\w .
\label{eq:1D_admiss2}
\end{equation}
\textcolor{black}{Namely, $\cW$ is a weighted $L^2$ space.}
Note that there are signals in $\cW$ which are not in $L^2(0,\infty)$. We call $\cW$ the \emph{window space}. To emphasize that $L^2(0,\infty)$ is the space of signals we denote it by $\cS$. Note that $\cW\cap\cS$ is the space of admissible wavelets.

We define the \emph{window-signal} space as the tensor product space $\cW\otimes\cS$. \textcolor{black}{For the definition of tensor product in the general case, see Subsection \ref{Tensor products}.} \textcolor{black}{In our case, we have}
\[\cW\otimes\cS = L^2(\RR^2; \frac{1}{\w'}d\w' d\w),\]
where $\w'$ denotes the window variable and $\w$ denotes the signal variable.  There is a way to map window-signal pairs from $\cW\times\cS$ to $\cW\otimes\cS$, namely, by the tensor product operator
\begin{equation}
[\hf\otimes\hs](\w',\w) = \overline{\hf(\w')}\hs(\w).
\label{eq:1D_simple}
\end{equation}
Any function $F\in \cW\otimes\cS$ of the form (\ref{eq:1D_simple}) is called a \emph{simple function}. Note that not every function in $\cW\otimes\cS$ is simple. The window signal space $\cW\otimes\cS$ is the linear closure of all simple functions. Namely,  every $F\in \cW\otimes\cS$  can be approximated by a finite sum of simple functions.
\textcolor{black}{For the theory for general wavelet transforms, see Subsection \ref{The window-signal space}.}

\subsubsection{The wavelet-Plancherel transform}

We extend the wavelet transform to a mapping $V:\cW\otimes\cS\rightarrow L^2(G)$ as follows.
For simple functions we define
\[V(\hf\otimes\hs) = V_{\hf}[\hs].\]
By the orthogonality relation (\ref{eq:1D_orth}), and by the fact that $\cW\otimes\cS$ is the linear closure of simple functions, $V$ is uniquely defined on the whole window-signal space $\cW\otimes\cS$ and must be an isometric embedding. As we show in Example \ref{Example:1Dwave453}, $V$ is actually an isometric isomorphism between $\cW\otimes\cS$ and $L^2(G)$. We call the isomorphism property the \emph{wavelet-Plancherel theorem}. \textcolor{black}{For the theory for general wavelet transforms, see Subsection \ref{A wavelet Plancherel transform}.}

It is interesting to derive closed form formulas for $V$. By (\ref{eq:1D_wavelet}),
the wavelet-Plancherel transform of a simple function $\hf\otimes \hs$ is given by 
\[V(\hf\otimes \hs)(g_1,g_2)=\int_{\RR} \overline{e^{-i g_1 \w}e^{\frac{1}{2}g_2}\hf(e^{g_2}\w)}s(\w)d\w.\]
Thus, by linear closure, for any $F\in\cW\otimes\cS$,
\begin{equation}
V(F)(g_1,g_2)=\int_{\RR} e^{i g_1 \w}e^{\frac{1}{2}g_2}F(e^{g_2}\w,\w)d\w.
\label{eq:1Dwav_slope}
\end{equation}

The pull-back of the classical synthesis $V_{\hf}^*:L^2(G)\rightarrow \cS$ to $V_{\hf}^*V:\cW\otimes\cS\rightarrow \cS$ is given by
\begin{equation}
V_{\hf}^*V F (\w) = \int_{\RR} F(\w',\w){f(\w')}\frac{1}{w'} d\w'.
\label{eq:1Deqg:rr6y}
\end{equation}
Indeed, for simple functions $[\hat{q}\otimes \hs](\w',\w)=\overline{\hat{q}(\w')}\hs(\w)$,
\begin{equation}
\begin{split}
[V_{\hf}^*V (\hat{q}\otimes \hat{s})](\w) & = [V_{\hf}^*V_{\hat{q}} \hs] (\w) = \ip{\hf}{\hat{q}}_{\cW}\hs(\w) \\
 & = \int \hf(\w')\overline{\hat{q}(\w')}\hs(\w)\frac{1}{w'} d\w'.
\end{split}
\label{eq:1DPullWave0}
\end{equation}
Identity (\ref{eq:1Deqg:rr6y}) gives an interpretation for the window signal space. \textcolor{black}{We interpret $V^*_{\hf}Q$ as a view of $Q\in L^2(G)$, since $Q$ contains information outside $V_{\hf}(\cS)$, and $V^*_{\hf}$ extracts the part of $Q$ in $V_{\hf}(\cS)$. The function $F\in\cW\otimes\cS$ can be hence interpreted as a rule for how different windows ``see'' $F$ as a signal. Namely, observed by the window $\hf$, the function $F$ takes the form of (\ref{eq:1Deqg:rr6y}).  To see this better, consider an orthonormal basis of windows $\{\hf_j\}_j$, and represent the function $F$ as
\[F= \sum_j \hf_j\otimes\hs_j,\]
where $\{\hs_j\}_j$ is a sequence of signals. The function $F$
is a ``mixed state'' that defines different views of signals. Viewed by $\hf_j$, $VF$ is seen as $V_{\hf_j}^* VF = \hs_j$. For the general formulation, see Remark \ref{remark:WP_invv}.}

\begin{figure}[!ht]
\centering 
\includegraphics[width=0.6\linewidth]{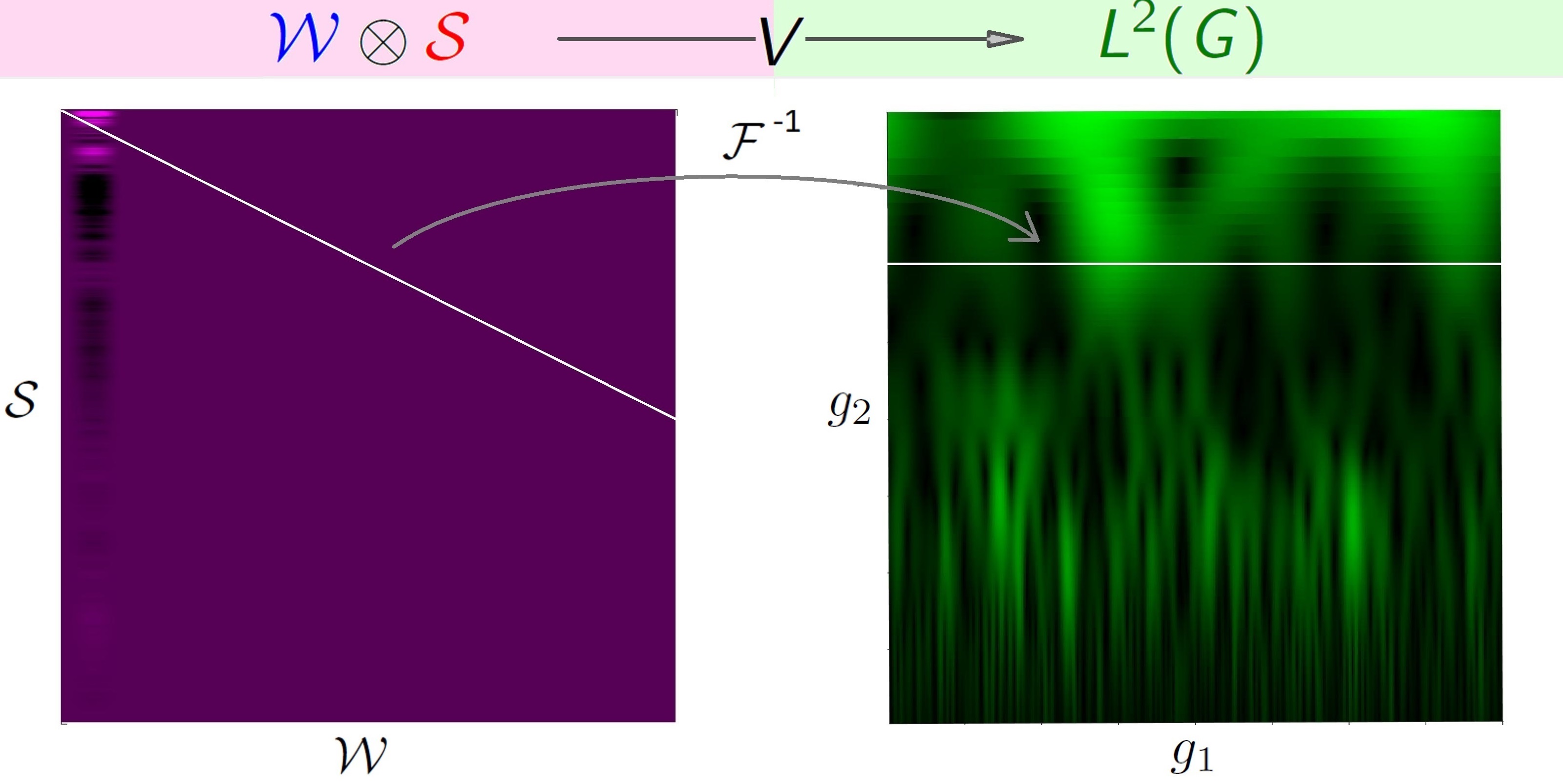}
\caption{Visualization of the wavelet-Plancherel transform and its inverse. Left: a simple function $\hf\otimes\hs\in\cW\otimes\cS$. Since mother wavelets are typically simple ``bump function'' near $\w'=0$, the function $\hf\otimes\hs$ has most of its energy concentrated near the axis $\w$, with $\w'=0$. Right: $V(\hf\otimes\hs)$. Each restriction of $V(\hf\otimes\hs)$ to a constant $g_2$ is  the inverse Fourier transform of the restriction of $\hf\otimes\hs$ to a slope ray. The inverse wavelet-Plancherel transform is  the Fourier transform between restrictions of $V(\hf\otimes\hs)$ to a constant $g_2$, and restrictions of  $\hf\otimes\hs$ to a slope rays.}
\label{fig:V} 
\end{figure}

\subsubsection{Inversion formula for the wavelet-Plancherel transform}
\label{Inversion formula for the wavelet-Plancherel transform}

In view of  Formula (\ref{eq:1Dwav_slope}), a useful change of variabes in $\cW\otimes\cS$ is the \emph{slope change of variable}. We define 
\begin{equation}
z=\frac{\w'}{\w} , \quad \w=\w
    \label{eq:slope3}
\end{equation}
where $z$ is called the \emph{slope}. We call sets of the form 
\begin{equation}
l_z=\{(z,\w)\ |\  \w>0\},
    \label{eq:slope_ray}
\end{equation}
 for fixed $z$, \emph{slope rays}. \textcolor{black}{We define the \emph{slope transform} $\mathcal{L}:\cW\otimes\cS\rightarrow\cW\otimes\cS$ as the unitary operator defined by
\[\mathcal{L}F(z,w) = F(z\w,\w), \quad {\rm with\ } \mathcal{L}^{-1}Q(w',w) = Q(\frac{\w'}{\w},\w).\]}
Equation (\ref{eq:1Dwav_slope}) 
\textcolor{black}{ now takes the form
\begin{equation}
V(F)(g_1,g_2)=\int_{\RR} e^{i g_1 \w}e^{\frac{1}{2}g_2}\mathcal{L}F(e^{g_2},\w)d\w.
\label{eq:1Dwav_slope2}
\end{equation}
}
This gives an interpretation for the wavelet-Plancherel transform. For $F\in \cW\otimes\cS$, the restriction of $VF$ to a fixed scale $g_2$ is the inverse Fourier transform of $F$ along slope rays, up to the normalization \textcolor{black}{$e^{\frac{1}{2}g_2}$}. 
This interpretation gives an inversion formula for $V$. For $H\in L^2(G)$, the values of $V^*H$ along slope rays are the Fourier transforms of $H$ along time $g_1$ for fixed scales $g_2$. 
\textcolor{black}{ Namely, it is easy to see that for any $H\in L^2(G)$, 
\begin{equation}
    [\mathcal{L}V^*H](z,\w) = z^{-\frac{1}{2}}\frac{1}{2\pi}\int_{\RR} e^{-ig_1\w} H\big(g_1,\ln(z)\big) dg_1.
    \label{eq:IWP_CWT}
\end{equation}
In Figure \ref{fig:V} we visualize the wavelet-Plancherel transform and its inverse.}

\subsection{Wavelet-Plancherel phase space filtering}

We study \emph{phase space filters}, namely, multiplicative operator $\breve{H}$ in $L^2(G)$ of the form
\begin{equation}
  \breve{H} F(g_1,g_2)= H(g_1,g_2)F(g_1,g_2)  
  \label{eq:mult00}
\end{equation}
where $H\in L^{\infty}(G)$.
Wavelet-Plancherel phase space filtering is the theory of deriving closed form formulas for efficiently computing phase space filters directly in the window-signal space, or computing projected versions in the signal space.
\textcolor{black}{For general SPWTs, the theory is derived in Section \ref{Wavelet Plancherel phase space filtering}.}

\subsubsection{Phase space observables}
\label{Phase space observables}

We consider special operators in phase space as building blocks for phase space signal processing operators.
The \emph{phase space time observable}  is the operator $\breve{N}_1$ that multiplies every point in $G$ by its \textit{time} value. Namely, for every $F\in L^2(G)$
\[[\breve{N}_1 F](g_1,g_2) = g_1F(g_1,g_2)\]
The \emph{phase space scale observable} is the operator $\breve{N}_2$ that multiplies every point in $G$ by its \textit{scale} value
\[[\breve{N}_2 F](g_1,g_2) = g_2F(g_1,g_2).\]
\textcolor{black}{For general SPWTs, the theory is derived in Subsection \ref{The pull-back of phase space observables2}.}

\subsubsection{Approximating multiplicative operators using simple functions}
\label{Approximating multiplicative operators using simple functions}

We can formally approximate any multiplicative operator in phase space using functions of $\breve{N}_1$ and $\breve{N}_2$ as follows.
Consider the multiplicative operator $\breve{H}$ defined in (\ref{eq:mult00}) .
We can approximate $H\in L^{\infty}(G)$ by a finite sum of simple functions
\[H(g_1,g_2)\approx \sum_{j=1}^J h_1^j(g_1) h_2^j(g_2)\]
where $h_1^j,h_2^j\in L^{\infty}(\RR)$ for every $j=1,\ldots,J$ and $J\in\NN$.
Consider the operators $h_1^j(\breve{N}_1),h_2^j(\breve{N}_2)$ in $L^2(G)$ defined for every $j=1,\ldots,J$  by
\begin{equation}
h_1^j(\breve{N}_1) F(g_1,g_2)= h_1^j(g_1)F(g_1,g_2) , \quad h_2^j(\breve{N}_2) F(g_1,g_2)= h_2^j(g_2)F(g_1,g_2).
\label{eq:1D_func1}
\end{equation}
We thus have
\[\breve{H}F\approx \sum_{j=1}^J h_1^j(\breve{N}_1)h_2^j(\breve{N}_2)F,\]
\textcolor{black}{We formulate the approximation precisely in Subsection \ref{Phase space filters via trigonometric polynomials of observables}.}
Hence, to study general multiplicative operators in phase space it is enough to focus on multiplicative operators of the form $h_k(\breve{N}_k)$, $k=1,2$, as defined in (\ref{eq:1D_func1}).

\subsubsection{Phase space observables as building blocks for filters}
\label{Phase space observables as building blocks for filters}

In the next subsection we show that there is a closed form formula for the pull-back of phase space observables $\breve{N}_k$ to the window-signal space. Namely, the operators 
\[\breve{T}_k= V^*\breve{N}_kV  ,\quad k=1,2\]
have computationally tractable formulas in $\cW\otimes\cS$.  \textcolor{black}{The operators $\bT_k$ are at the core of our theory, since it is enough to know a formula for $\bT_k$ in order to derive a formula for the pull-back of any phase space filter (\ref{eq:mult00}) from $L^2(G)$ to $\cW\otimes\cS$. This is explained next.}

The pull-back expressions \textcolor{black}{of phase space filters} are based on the application of functions $h_k$ on the operators $\breve{N}_k$ and $\breve{T}_k$. 
The general theory of applying functions on operators is called \emph{functional calculus}. In Subsection \ref{Functional calculus0} we recall the general theory of functional calculus. There, we recall how to apply functions on any normal operator, not just multiplicative operators as in formula (\ref{eq:1D_func1}). It turns out that the pull-back $V^*h_k(\breve{N}_k)V$ of $h_k(\breve{N}_k)$ to $\cW\otimes\cS$ satisfies
\[V^*h_k(\breve{N}_k)V=h_k(V^*\breve{N}_kV) = h_k(\breve{T}_k).\]
Namely, it is enough to derive one formula for $\breve{T}_k= V^*\breve{N}_kV$, and then to use functional calculus  on $\breve{T}_k$ directly in $\cW\otimes\cS$. We also present in Subsections \ref{Scale filters}--\ref{General time filters} closed form formulas for functional calculus in $\breve{T}_k$.
Thus, $\breve{T}_1$ and $\breve{T}_2$ are basic building blocks of phase space filtering. \textcolor{black}{The general theory is developed in Section \ref{Wavelet Plancherel phase space filtering}, and, specifically for DGWTs, in Subsections \ref{Phase space filters via trigonometric polynomials of observables} and \ref{The pull-back of phase space filters for geometric wavelet transforms}.
}

\subsubsection{Pull-back of phase space observables}

In Section \ref{Wavelet Plancherel phase space filtering}, \textcolor{black}{and especially Propositions \ref{prop:SPWT_pull_obs} and \ref{pullDGWT},} we derive formulas for the pull-back of phase space observables for SPWTs and DGWTs. In Subsection \ref{A:The 1D wavelet transform} we show that for the 1D wavelet transform, the pull-back takes the following form.
For simple window-signal functions
\textcolor{black}{
\begin{equation}
\breve{T}_1 (\hf\otimes\hs) (\w',\w) = \hf(\w')\otimes \Big(  i\frac{\partial}{\partial\w}\hs(\w)\Big) -  \Big(i\w'\frac{\partial}{\partial\w'}\hf(\w')\Big) \otimes\Big( \frac{\hs(\w)}{\w}\Big)
\label{eq:1D_Pull_1}
\end{equation}
}
Thus, by (\ref{eq:1D_simple}), 
for general $F\in \cW\otimes\cS$
\begin{equation}
[\breve{T}_1 F](\w',\w) = \Big(i\frac{\partial}{\partial\w} +\frac{\w'}{\w}i\frac{\partial}{\partial\w'}\Big)F(\w',\w) .
\label{eq:1D_Pull_2}
\end{equation}

For simple window-signal functions we have
\textcolor{black}{
\begin{equation}
\breve{T}_2 (\hf\otimes\hs) (\w',\w) = -\hf(\w')\otimes \Big(\ln(\w)\hs(\w)\Big) +\Big( \ln(\w')\hf(\w') \Big)\otimes \hs(\w).
\label{eq:1D_Pull_3}
\end{equation}
}
Thus, by (\ref{eq:1D_simple}), for general $F\in \cW\otimes\cS$
\begin{equation}
[\breve{T}_2 F](\w',\w) = \Big(\ln(\w')-\ln(\w)\Big)F(\w',\w).
\label{eq:1D_Pull_22}
\end{equation}

\subsection{Closed form formulas for phase space filters}
\label{Closed form formulas for phase space filters}

As explained in Subsection \ref{Phase space observables as building blocks for filters}, to complete the wavelet-Plancherel phase space filtering theory we need to derive formulas for $h(\breve{T}_k)$. We focus on different classes of functions $h$.

\subsubsection{Polynomial filters}
\label{Polynomial filters}

Low order polynomial implementations of phase space filtering are efficient since they represent all signals as combinations of 1D functions, even though the window-signal space and phase space are 2D.

We first explain how to efficiently compute $h(\breve{N}_2)$ applied on simple functions $\hf\otimes\hs$ if $h$ is a low order polynomial.
The idea is to repeat formulas (\ref{eq:1D_Pull_1}) and (\ref{eq:1D_Pull_3}) for the different powers in the polynomial. For example,
\textcolor{black}{
\[\begin{split}
\breve{T}_1^2 (\hf\otimes\hs) =& \breve{T}_1\Big( \hf(\w')\otimes i\frac{\partial}{\partial\w}\hs(\w)\Big) - \breve{T}_1\Big(i\w'\frac{\partial}{\partial\w'}\hf(\w')\otimes \frac{1}{\w} \hs(\w)\Big)\\
  =&  \hf(\w')\otimes(-\frac{\partial^2}{\partial\w^2}\hs(\w)) - \big(i\w' \frac{\partial}{\partial\w'}\hf(\w')\big)\otimes(i\frac{1}{\w}\frac{\partial}{\partial\w}\hs(\w))\\
  &-\big(i\w' \frac{\partial}{\partial\w'}\hf(\w')\big)\otimes \Big(i\frac{1}{\w}\frac{\partial}{\partial\w}\hs(\w) -i\frac{1}{\w^2}\hs(\w) \Big) \\
	& -\big(\w' \frac{\partial}{\partial\w'}\hf(\w') + \w^{\prime 2} \frac{\partial^2}{\partial\w^{\prime 2}}\hf(\w')\big)\otimes \Big(\frac{1}{\w^2}\hs(\w)\Big)
\end{split}
\]
}


In general, repeated application of (\ref{eq:1D_Pull_1}) and (\ref{eq:1D_Pull_3}) on a simple function results in a linear combination of simple functions, where the number of simple functions increases exponentially in the order of the polynomial. \textcolor{black}{Namely, for a polynomial $p(g_1,g_2)$ of degree $L$, there are at most $J=\frac{2^{L+1}-1}{L-1}$ windows $\{\hf_j\}_{j=1}^J$ and signals $\{\hs_j\}_{j=1}^J$ such that 
$p(\bT_1,\bT_2) = \sum_{j=1}^J \hf_j\otimes\hs_j$.}

\textcolor{black}{The above approach can be used for implementing  multipliers $V_{\hf}^* p(\breve{N}_1,\breve{N}_2)V_{\hf}$ that have low order polynomial symbols $p$ directly in $\cW\otimes\cS$. 
Indeed, by (\ref{eq:1DPullWave0}), we have
\[V_{\hf}^* p(\breve{N}_1,\breve{N}_2)V_{\hf}=\sum_{j=1}^J\ip{\hf}{ \hf_j}\otimes\hs_j.\]
This implementation involves only computations on univariate functions. If windows and signals are discretized with $N$ samples in the frequency line, derivatives are approximated by finite differences, and $J\ll N$ is a constant, the implementation is of $O(N)$ computational complexity.} 


\subsubsection{Scale filters}
\label{Scale filters}

Scale filters $h(\breve{N}_2)$ have a simple formula in $\cW\otimes\cS$. Namely, under the slope change of variable, $h(\breve{N}_2)$ is a multiplicative operator that multiplies by the function of scale $h(z)=h(\frac{\w'}{\w})$, 
\textcolor{black}{
\begin{equation}
h(\breve{N}_2) F(\w',\w) = h(\frac{\w'}{\w}) F(\w',\w).
\label{eq:SF00}
\end{equation}
}
\textcolor{black}{This is shown in Appendix \ref{A:The 1D wavelet transform}.
Alternatively, if $h$ is a polynomial and $F=\hf\otimes\hs$ is a simple function, $h(\breve{N}_2)$ can be computed as explained in Subsection \ref{Polynomial filters}, by repeated applications of the formula (\ref{eq:1D_Pull_3}). In Figure \ref{fig:WP_filt} we visualize the  pull-back formula of time and scale filters.}

\subsubsection{Exponential time filters}

As a building block of general time filters, we derive a formula for $\exp(it\bT_1)$. The idea is to approximate general time filters as linear combinations of time exponentials $\exp(it\bT_1)$, using a Fourier expansion.
We show in Subsection \ref{A:The 1D wavelet transform} that time exponentials are translations in $\cW\otimes\cS$ along slope rays. Namely,
\[\exp(it\bT_1) F(z\w,\w) = F(z(\w-t),\w-t).\]
or equivalently 
\begin{equation}
\exp(it\bT_1) F(\w',\w) = F(\frac{\w'}{\w}(\w-t),\w-t).
\label{eq:1DslopeTrans}
\end{equation}

\subsubsection{General time filters}
\label{General time filters}

General time filters can be written as a Fourier expansion in $\exp(it\bT_1)$. As a result, a general time filter is a convolution along slope rays of the form
\begin{equation}
h(\bT_1) F(z\w,\w) = [F(z(\cdot),\cdot)*\hat{h}](\w).
\label{eq:1DslopeConv}
\end{equation}
To approximate (\ref{eq:1DslopeConv}) in practice, we translate $F$ along the slope via (\ref{eq:1DslopeTrans}) and consider the linear combination
\begin{equation}
h(\bT_1) F(z\w,\w) \approx \sum_j F\big(z(\w-t_j),\w-t_j\big)\hat{h}(t_j).
\label{eq:1DslopeConvD}
\end{equation}

\begin{figure}[!ht]
\centering 
\includegraphics[width=0.9\linewidth]{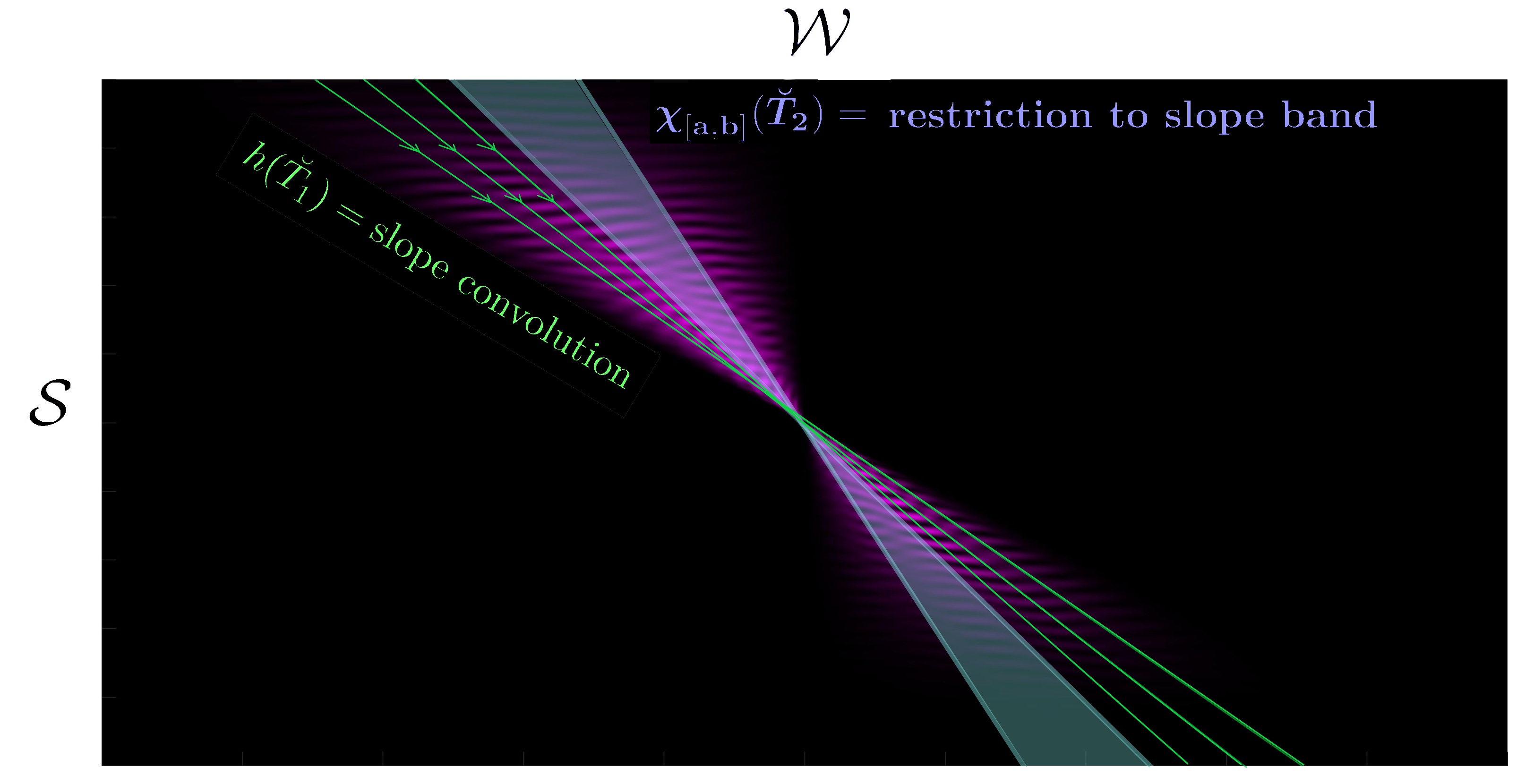}
\caption{Visualization of time and scale  filters. Magenta: a signal $F\in \cW\otimes\cS$. Green: visualization of time filters $h(\bT_1)$ as convolution operators along slope rays. Purple-blue: visualization of scale filters, with $h=\chi_{[a,b]}$ the characteristic function of the interval $[a,b]$. The operator $\chi_{[a,b]}(\bT_2)$ restricts $F$ to a band of slope rays.}
\label{fig:WP_filt} 
\end{figure}

\subsection{Efficient implementation of phase space filters}

At this point it is still not clear how the wavelet-Plancherel formulation of filters improves the computational complexity \textcolor{black}{of methods that compute $V_f[s]$ and apply phase space filters directly in $L^2(G)$}. Indeed, both $\cW\otimes\cS$ and $L^2(G)$ are function spaces over 2D domains. In this subsection we explain how to implement wavelet-Plancherel phase space filtering efficiently.

\subsubsection{Implementation for low order polynomial filters}
\label{Implementation for low order polynomial filters}

For low order polynomial filters $\breve{H}$ we discretize the window and the signal using $N$ samples in a regular grid in the frequency line. \textcolor{black}{We approximate derivatives with finite differences.} As explained in Subsection \ref{Polynomial filters}, applying the filter $V^*\breve{H}V$ on a simple function in $\cW\otimes\cS$, and representing the output as a sum of simple functions, takes $O(N 2^{M})$ operations, where $M$ is the order of the polynomial. Similarly, $V_{\hf}^*\breve{H}V_{\hf}$ takes $O(N 2^{M})$ operations. 

The term $2^M$ in the above complexity is due to the repeated application of discrete versions of formulas (\ref{eq:1D_Pull_1}) and (\ref{eq:1D_Pull_2}). However, $V^*\breve{H}V$ can be implemented more efficiently by deriving analytic expressions for the different monomials of $\bT_1$ and $\bT_2$.

Since both $\cW$ and $\cS$ are discretized to spaces of resolution $N$, the resolution of $\cW\otimes\cS$ is $N^2$.
A naive implementation of $\breve{H}$ in a discrete version of phase space with $N^2$ samples costs $O(N^2)$ operations, which is significantly higher than $O(N 2^{M})$ when the order $M$ is low. 

An implementation of $V_{\hf}^*\breve{H}V_{\hf}$ based on a discrete wavelet transform with a discretization of $O(N)$ points in $G$ costs $O(N)$ operations in a multi-resolution analysis implementation. In FFT based implementations, $V_{\hf}^*\breve{H}V_{\hf}$ takes $O(N\log(N))$ operations (see Subsection \ref{Standard discretization of the 1D wavelet transform}).

\subsubsection{Spline sequence discretization of the window-signal space}

It is possible to obtain a low computational complexity formulation of phase space filters using the following philosophy. Both spaces, $\cW\otimes\cS$ and $L^2(G)$, are function spaces over a 2D domain.  The window and signal \textcolor{black}{ information is mixed together via $V_{\hf}[\hs](g) = \ip{\hf}{\hat{\pi}(g)\hf}$ when represented in $L^2(G)$, and there is no reasonable discretization of $G$ that somehow separates the signal data from the window data. On the other hand, the window and signal data in $\cW\otimes\cS$ are more easily separated in view of the tensor product structure, and can hence be discretized separably.} This allows us to encode the window direction with only a few parameters, \textcolor{black}{since, in principle, windows are simple probes that measure the content of the signal, and need not be very complicated themselves}. We spend most of the resources encoding the signal direction, which contains all of the information.

In Subsection \ref{Discretization of the wavelet Plancherel method} we present the following formulation accurately.
We discretize the window direction using splines. Let us consider the special case of piece-wise linear continuous functions, also called \emph{linear splines}. \textcolor{black}{Linear splines are encoded by saving in memory the locations of the knots -- the points of discontinuity of the derivative --  and the values of the function at the knots.}  Such a spline can only be discontinuous at the first and last knots, if the value of the function is non-zero there.
The mother wavelet is defined as a linear spline with just a few knots. 

The signal direction is discretized finely, using $N$ samples on a regular grid in the frequency line. We discretize $\cW\otimes\cS$ as the tensor product of the spline discretization along $\cW$, and the grid-based discretization along $\cS$. Since each $\hs\in\cS$ is discretized as a sequence, and each $\hf\in\cW$ as a spline, the function $\hf\otimes\hs$ is discretized as a sequence of splines. Thus, general functions in the discrete $\cW\otimes\cS$ are sequences of splines.
We call this discretization of $\cW\otimes\cS$ the space of spline sequences, \textcolor{black}{and give a detailed construction in Subsection \ref{Efficient discretization of the window-signal space}.}

To be able to use the above discretization for phase space filtering it is important to show that filters of spline sequences are still spline sequences.

\subsubsection{Invariance of spline sequences under scale filters}

\textcolor{black}{By (\ref{eq:SF00}),} scale filters are multiplication by functions of slope. We consider examples in which applying the scale filter on the spline sequences gives another spline sequence. One example are characteristic functions of scale intervals, like the ones used for the search algorithm presented in Subsection \ref{Coefficient search in continuous wavelet systems}. \textcolor{black}{The invariance property is proven in Subsection \ref{Invariance of the discrete window-signal space to search operations}.}
General piecewise constant scale filters can be considered by extending the linear spline space to the space of piecewise linear functions which are not necessarily continuous.

\subsubsection{Invariance of spline sequences under time trigonometric polynomials}

By Subsection \ref{General time filters}, time filters are convolutions along slope rays. By (\ref{eq:1DslopeConvD}), time filters are computed by a linear combination of translated version of the spline sequence along slope rays. When applied on spline sequences, this results in a spline sequence, \textcolor{black}{as proven in Subsection \ref{Invariance of the discrete window-signal space to search operations}.}

\subsubsection{Phase space signal processing with spline sequences}
When the goal is to pull-back phase space filters via $V_{\hf}$, the projection upon the window $\hf$ is done by integrating all splines in the spline sequence along the window direction.
When the goal is to compute the norm of the filtered signal, like in the search algorithm of Subsection \ref{Coefficient search in continuous wavelet systems}, we compute the norm in $\cW\otimes\cS$, instead of $L^2(G)$, since $V$ is an isometric isomorphism.


\subsection{\textcolor{black}{Sparse approximations with continuous wavelet systems}}
\label{Sparse approximations with continuous wavelet systems}
 
It is well known that certain classes of signals can be effectively approximated using discrete wavelet systems, keeping only a sparse number of coefficients. For example, 1D wavelets are optimal for piecewise smooth functions \cite{Nonlinear_approximation}, and Shearlets are optimal for a class of 2D piecewise-smooth signals (cartoon-like images) \cite{optimal_Shearlet}.
In this paper we show how to compute sparse approximations to signals directly using the continuous wavelet system, instead of first discretizing it to a frame. 

We focus on greedy methods, like matching pursuit, where the atoms of the sparse approximation are chosen one by one, to decrease the approximation error as much as possible at each step. An important ingredient in a greedy algorithm is a coefficient search method, that extracts atoms having large coefficients. In discrete dictionaries we can simply compute all coefficients and extract the largest one. However, in continuous dictionaries, it is not possible to compute the whole continuum of coefficients. Using the wavelet-Plancherel theory, we develop a new efficient method for extracting large coefficients of continuous wavelet systems.

\subsubsection{Motivation for sparse continuous wavelet decompositions}

Continuous wavelet systems are advantageous over discrete wavelet system in situations where the wavelet transform is treated as a feature extraction method. For example, in the continuous 1D wavelet transform the atoms are treated as features at different times and scales, or as local frequencies. In some signal processing tasks, like phase vocoder (see Subsection \ref{Sampling resolution in phase space}), it is important to extract the local frequencies of a signal accurately. To compare the continuous and discrete methods, note that the continuous 1D wavelet transform is based on a continuous set of dilations and translations of a mother wavelet, whereas a discrete wavelet transform is based on a discrete set of translations and dilations. The spacing between different discrete wavelet atoms in the \textcolor{black}{time-frequency space} is \textcolor{black}{usually} big. This means that features in the signal with certain time-scale locations cannot be accurately pinpointed by the atoms of the discrete wavelet system. 
This effect is especially pronounced in high local frequencies, since discrete wavelet transforms are based on an exponential grid in the frequency direction (see, e.g., the dyadic wavelet transform \cite{Ten_lectures}). \textcolor{black}{In Subsection \ref{Standard discretization of the 1D wavelet transform} we detail this phenomenon.} Failing to extract the correct local frequencies of a signal degrades signal processing methods that manipulate these local frequencies, as is illustrated in Subsection \ref{Sampling resolution in phase space}.


\subsubsection{Coefficient search in continuous wavelet systems}
\label{Coefficient search in continuous wavelet systems}

Given a signal $\hs\in L^2(0,\infty)$, in principle, if the whole function $V_{\hf}[\hs]$ was given to us, we could have searched for big coefficients directly in phase space using a bisection method as follows. We consider a big rectangle $G_1$ in phase space, that contains most of the energy of $V_{\hf}[\hs]$. We partition phase space into four sub rectangles $G_1^1,\ldots,G_1^4$. For each $G^k_1$, we consider the restriction $V_{\hf}[\hs]\big|_{G^k_1}$, padded by zero outside $G^k_1$, and denote it by $P_{G_1^k}V_{\hf}[\hs]$. The projection operator $P_{G_1^k}$ is called a  band-pass filter. We compute each of the norms $\norm{P_{G_1^k}V_{\hf}[\hs]}$, $k=1,\ldots,4$,  and denote by $G_2=G^k_1$ the argmax. By repeating this process, each time on the subdomain with the greatest norm, we can pinpoint big wavelet coefficients in logarithmic time with respect to the resolution in phase space. The resulting coefficient might not be the global maximum. However, the method is also not a local method, as all of the wavelet coefficients are accounted for in the search. Thus, we interpret this method as ``somewhere between a global and a local method''.

Clearly this approach is not practical, since computing $\norm{P_{G_1^k}V_{\hf}[\hs]}$ requires knowing a continuum of wavelet coefficients, which we want to avoid. If it was possible to pull back the computation of $P_{G_1^k}V_{\hf}[\hs]$ from phase space to the signal domain $\cS$, the computations in the search could have been carried out entirely on the signal domain data, using the fact that $V_{\hf}$ is an isometry. \textcolor{black}{This of course is impossible, since the wavelet transform is not an isomorphism from the signal domain $L^2(0,\infty)$ to a reasonable space of functions on phase space. More accurately, 
 in the generic case, there is no signal $\hs_k$ with phase space representation $V_{\hf}[\hs_k]=P_{G_1^k}V_{\hf}[\hs]$.}

We hence reformulate the problem in the wavelet-Plancherel setting, were the concept of pulling back the operator $P_{G_1^k}$ from $L^2(G)$ to $\cW\otimes\cS$ is well defined. The resulting search algorithm is called wavelet Plancherel phase-space projection pursuit, or WP$^4$. See Figure \ref{fig:WP4}  for visualization of the bisection method.

\begin{figure}[!ht]
\centering 
\includegraphics[width=0.3\linewidth]{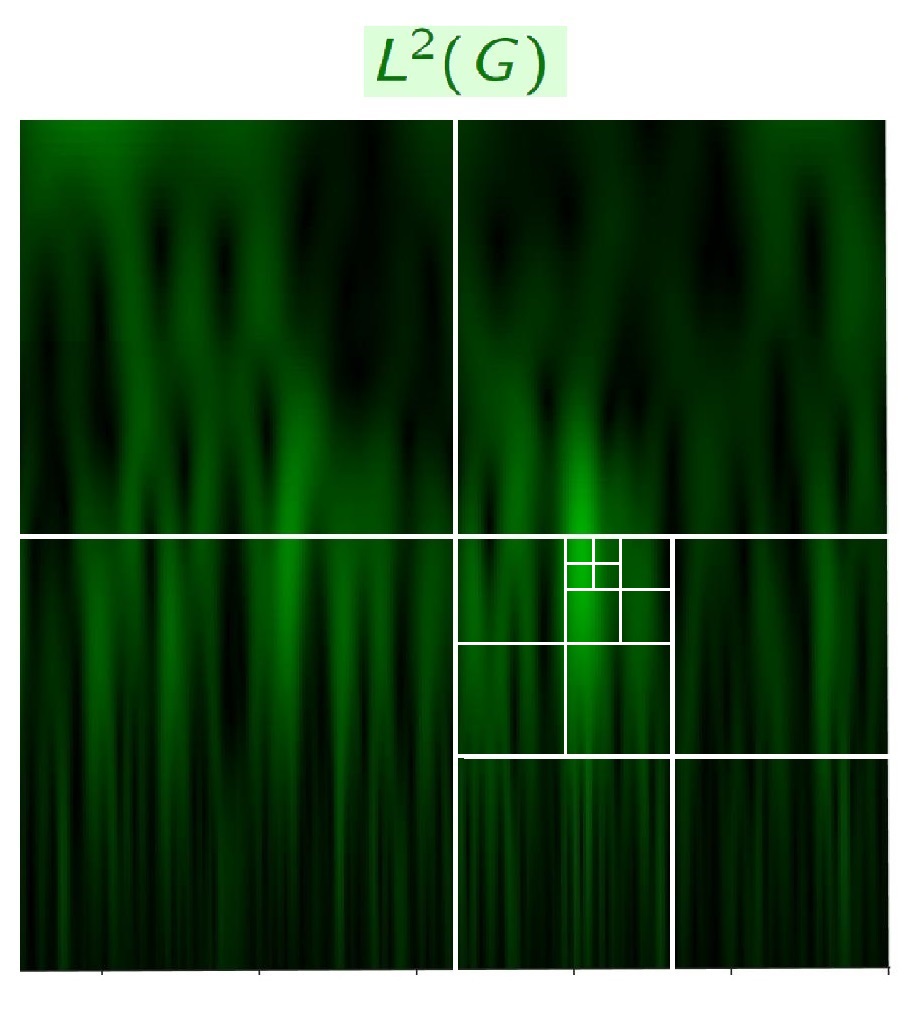}
\caption{Visualization of the WP4 bisection search method. At each step, the square with the largest norm of $V(\hf\otimes\hs)$ in phase space is chosen, until converging to a large wavelet coefficient.}
\label{fig:WP4} 
\end{figure}

\subsubsection{Coefficient search via the wavelet-Plancherel theory}
\label{Coefficient search via a wavelet-Plancherel theory}


Given a signal $\hs\in\cS$ and a window $\hf\in\cW$, we pull-back the norm computations in the search algorithm to a window-signal formulation by
\[\norm{ V^*P_{G_n}V (\hf\otimes \hs)}_{\cW\otimes\cS}=\norm{P_{G_n}V_{\hf}[\hs]}_{L^2(G)},\]
where $G_n\subset G$ is the domain at step $n$ of the algorithm. 
We call scale filters based on characteristic functions of intervals \emph{scale-pass filters}. We similarly define \emph{time-pass} filters.
We compute $P_{G_n}$ as a composition of a scale-pass filter with a time-pass filter.

We discretize $\cW\otimes\cS$ as spline sequences, and the WP4 search algorithm is implemented by repeated applications of scale-pass and time-pass filters. We approximate the time-pass filters by trigonometric polynomials, and scale-pass filters are implemented directly. This is explained in more detail in Section \ref{Greedy sparse continuous wavelet transform}. 
\textcolor{black}{We present the pseudo-code of WP4 in Algorithm \ref{Wavelet Plancherel phase-space projection pursuit (WP4)}, and give the implementation details in Appendix \ref{Implementation of the Wavelet-Plancherel search algorithm}.}
In Subsection \ref{Complexity of the WP4 algorithm} we show that the overall complexity of one coefficient search is $O(N\log N)$ operations for signals of resolution $N$. Observe that even though the coefficient search in phase space locates large wavelet coefficients in a resolution of $N\times N$ pixels in $G$, the search algorithm takes only $O(N\log N)$ operations.

Note that WP4 has lower computational complexity than naive grid-based coefficient search methods that utilize FFT. Specifically, for a discretization of time and scale based on $N$ samples each, the naive method takes $O(N^2\log(N))$ operations (see Subsection \ref{Standard discretization of the 1D wavelet transform}), while our method takes $O(N\log(N))$ operations. Our lower computational complexity is comparable to discrete methods based on painless reconstruction wavelet frames \cite{Painless}, which take $O(N\log(N))$ operations. As opposed to the discrete method, which implies a resolution of $O(N)$ in phase space, our method localizes an atom in a resolution of $O(N^2)$ in phase space. 

\section{Preliminaries}
\label{Preliminaries}

\textcolor{black}{In this section we recall basic facts from representation theory  and functional calculus, which are the building blocks for the wavelet-Plancherel theory. For more details on representation theory see, e.g., \cite{repre_HA}, and for functional calculus in Hilbert spaces \cite{Spectral_Hilbert}.}

Let $\SS$ be $\RR,\ZZ,e^{i\RR}$ or $e^{2\pi i \ZZ/N}$  for some $N\in\NN$, and for each case of $\SS$ let $\hat{\SS}$ be respectively $\RR,e^{i\RR},\ZZ$ or $e^{2\pi i \ZZ/N}$. We denote the Fourier transform $L^2(\SS)\rightarrow L^2(\hat{\SS})$ by $\cF$, and for $f\in L^2(\SS)$ we denote $\hf=\cF f$. The convention is that the normalization in the  Fourier transform is in the exponent, e.g., $e^{2\pi i \w x}$, and for $f\in L^2(\RR)\cap L^1(\RR)$
\[[\cF f](\w) = \int_{\RR} f(x)e^{-2\pi i \w x}dx. \]

\subsection{Representations of topological groups}

The function system of a wavelet transform is constructed by applying a group representation on a basic function, called a window. 
In the following we review some definitions from representation theory.
For a linear vector space $V$, the group of invertible linear operator $V\rightarrow V$, with composition as the group product, is denoted by ${\cal GL}(V)$.
A \emph{representation} $\pi$ of a group $G$ in the linear space $V$ is a homomorphism $\pi:G\rightarrow {\cal GL}(V)$. A subspace $W$ of $V$ is called \emph{invariant} under the representation $\pi$ if for every $v\in W$ and $g\in G$, $\pi(g)v\in W$. 
A representation $\pi$ on the vector space $V$ is called \emph{irreducible} if the only invariant subspaces of $V$ under $\pi$ are $\{0\}$ and $V$. For a Hilbert space $\cH$, a \emph{unitary representation} of $G$ is a homomorphism $\pi$ from $G$ to the subgroup ${\cal U}(\cH)\subset{\cal GL}(\cH)$ consisting of unitary operators. 
A \emph{topological group} is a group $G$ which is also a topological space, such that the group product and inversion are continuous maps (as mappings $G\times G\rightarrow G$ and $G\rightarrow G$ respectively). A unitary representation $\pi$ of a topological group $G$ is called \emph{strongly continuous} if it is continuous from $G$ to ${\cal U}(\cH)$ endowed with the strong topology.

Any locally compact Hausdorff topological group has a unique Radon measure, up to constant, that is invariant under left translations. Namely, there is a Radon measure $d\mu(g)$ in $G$ such that for every $g'\in G$ and every measurable function $F:G\rightarrow\RR_+$ the following equality holds
\[\int_G F(g)d\mu(g) = \int_G F(g'g)d\mu(g).\]
This measure is called the \emph{left Haar measure} of $G$.

\subsection{Functional calculus}
\label{Functional calculus0}

Functional calculus is the theory of applying functions on unitary or self-adjoint operators. In the finite dimensional case, normal operators have an eigen-decomposition, and applying a function on the operator is defined by applying the function on the eigenvalues, keeping the eigenvectors intact. In the infinite dimensional case, there is an extension of the notion of an eigen-decomposition, namely the spectral theorem.

A self-adjoint or unitary operator $T$ in an infinite dimensional separable Hilbert space $\cH$ does not admit an eigen-decomposition in general. Instead, it admits a more subtle notion of spectral decomposition, called a projection-valued Borel measure, or PVM. In case $T$ is self-adjoint, its spectrum is a subset of $\RR$, and in case $T$ is unitary, its spectrum is a subset of $e^{i\RR}$. \textcolor{black}{We denote $\SS=\RR$ if $T$ is self-adjoint, and $\SS=e^{i\RR}$ if $T$ is unitary. Hence, $\SS$ contains the spectrum of $T$.}
Informally, the PVM corresponding to $T$ is a mapping $P$, that maps sets $B\subset\SS$ to the projection $P(B)$ upon the ``eigenspace'' corresponding to the eigenvalues in $B$.  For a precise formulation see Definition \ref{defi_PVM} in Appendix A. 

The spectral theorem links $T$ with their PVMs. The theorem states that given $T$, there is a PVM $P$
such that
	\begin{equation}
T=\int_{\SS}\lambda \ dP(\lambda).
\label{eq:spectral_theorem0}
\end{equation}
The integral in (\ref{eq:spectral_theorem0}) is defined informally as
\[\lim_{ \scriptsize
\begin{array}{c}
	{\rm diam}({\bf x})\rightarrow 0 \cr
	[x_0,x_n]\rightarrow \SS
\end{array}
} \sum_{k=0}^{n-1} x_k P\big([x_k,x_{k+1})\big).  \]
Here, ${\bf x}=\{x_0,\ldots,x_n\}$ denotes a Riemann partition of $\SS$, and ${\rm diam}({\bf x})$ denote the maximal diameter of intervals in ${\bf x}$. For a precise formulation see Theorem \ref{spectral theorem}. The projections $P(B)$ are also called the spectral projections of $T$.

In analogy to the application of functions on finite dimensional normal operator, a measurable function $\phi:\SS\rightarrow\CC$ of a self-adjoint or unitary operator $T$ is defined to be the normal operator
\begin{equation}
\phi(T)=\int_{\SS}\phi(\lambda) dP(\lambda).
\label{eq:functional_calculus11}
\end{equation}
See Remark \ref{Band_limit_poly} of Appendix A for more details. 

Let $T$ be a self-andjoint or unitary operator in the Hilbert space $\cH$, with PVM $P$.  Let $U$ be an isometric isomorphism from the Hilbert space $\cS$ to $\cH$. We call $U^*TU$ the pull-back of $T$ to $\cS$. The spectral decomposition of $U^*TU$ is given by the PVM $U^*P(\cdot)U$, namely
\begin{equation}
U^*TU =\int_{\SS}\lambda \ dU^*P(\lambda)U.
\label{eq:pull_spectral}
\end{equation} 
Intuitively, in the conjugation of $T=\int_{\SS}\lambda \ dP(\lambda)$ with $U$, by ``linearity of the integral'', the conjugation is applied on each ``infinitesimal element'' $dP(\lambda)$. 
As a result, for a measurable function $\phi:\SS\rightarrow\CC$, we have
\begin{equation}
\phi(U^*TU) =U^*\phi(T)U.
\label{eq:pull_func_calc}
\end{equation}

\section{Continuous wavelet transforms}
\label{Continuous wavelet transforms}

In this section we review basic facts and definitions from the general theory of continuous wavelet transforms. We then review the class of semi-direct product wavelet transforms \cite{MyRef}, and introduce the class of geometric wavelet transforms.

\subsection{General continuous wavelet transforms}

General wavelet transform frameworks generalize the two classical examples of the STFT and the continuous 1D wavelet transform. The classical general formulation of general wavelet transforms was developed in \cite{gmp0}\cite{gmp}, where a phase space $G$ is a locally compact topological group, and $\pi$ is a square integrable representation.

\begin{definition}
\label{GCWT}
Let $G$ be a locally compact topological group, and consider the Hilbert space $L^2(G)$ based on the left Haar measure of $G$. Let $\pi$ be an irreducible strongly continuous unitary representation of $G$ in a \textcolor{black}{separable} Hilbert space of signals $\cS$. 
An element $f\in\cS$ such that the function $G\ni g\mapsto \ip{f}{\pi(g)f}$ is in $L^2(G)$, is called a \emph{window}. Given a window $f$, the mapping $V_f:\cS\rightarrow L^{\infty}(G)$ defined for $s\in\cS$ and $g\in G$ by
\begin{equation}
V_f[s](g)=\ip{s}{\pi(g)f}
\label{eq:voice1}
\end{equation}
is called the \emph{wavelet transform} based on the window $f$.
\end{definition}

An irreducible strongly continuous unitary representation, for which there exists a window $f$, is also called a \emph{square integrable representation}.
The following properties of the wavelet transform can be found in \cite{gmp}.

\begin{proposition}
\label{prop_wavelet1}
Let $f_1,f_2\in\cS$ be windows, and let $s_1,s_2\in\cS$.
\begin{enumerate}
	\item 
	The wavelet transform $V_{f_1}$ is a scalar times an isometric embedding of $\cS$ into $L^2(G)$.
	\item
	There exists a unique (up to multiplication by scalar), densely defined positive (self-adjoint) operator $K$ in $\cS$, with densely defined inverse, called the \emph{Duflo-Moore operator}, such that the domain of $K$ is the set of windows, and
	\begin{equation}
	\ip{V_{f_1}[s_1]}{V_{f_2}[s_2]}_{L^2(G)} = \ip{s_1}{s_2}_{\cS}\ip{Kf_2}{Kf_1}_{\cS}.
	\label{eq:ortho_reco}
	\end{equation}
\end{enumerate}
\end{proposition}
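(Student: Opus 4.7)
The plan is to exploit the irreducibility of $\pi$ via Schur's lemma. First I would verify the intertwining identity $V_f \circ \pi(h) = L(h) \circ V_f$, where $L$ is the left regular representation on $L^2(G)$; this follows directly from unitarity of $\pi(h)$ and the definition $V_f[s](g)=\ip{s}{\pi(g)f}$. Together with the windows assumption $V_f[f]\in L^2(G)$ and irreducibility, this extends (e.g.\ by a closed-graph argument) to boundedness of $V_f:\cS\to L^2(G)$ for every window $f$. Then, for any two windows $f_1,f_2$, the bounded operator $V_{f_2}^{*}V_{f_1}:\cS\to\cS$ intertwines $\pi$ with itself, so Schur's lemma for irreducible unitary representations delivers a scalar $c(f_1,f_2)\in\CC$ with $V_{f_2}^{*}V_{f_1}=c(f_1,f_2)\,I_{\cS}$. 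Pairing both sides with $s_1,s_2$ gives $\ip{V_{f_1}[s_1]}{V_{f_2}[s_2]}_{L^2(G)}=c(f_1,f_2)\ip{s_1}{s_2}_\cS$, and specializing to $f_1=f_2=f$, $s_1=s_2=s$ yields $\norm{V_f[s]}^2=c(f,f)\norm{s}^2$, which is claim 1 with isometry constant $\sqrt{c(f,f)}$.

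It then remains to rewrite the scalar $c$ in the form $c(f_1,f_2)=\ip{Kf_2}{Kf_1}$. The map $(f_1,f_2)\mapsto c(f_1,f_2)$ is sesquilinear (conjugate-linear in $f_1$, linear in $f_2$) and satisfies $c(f,f)\ge 0$, so I would view it as a positive quadratic form on $\cS$ whose form-domain is the set of windows. The main technical obstacle is that this form is typically unbounded---this is precisely the non-unimodular phenomenon that makes the Duflo-Moore operator genuinely unbounded for, e.g., the continuous $1$D wavelet transform. I would handle it by verifying that the form is closed (or at least closable) on the set of windows, and then invoking the Friedrichs representation theorem to obtain a unique positive self-adjoint operator $T$ with $c(f_1,f_2)=\ip{Tf_2}{f_1}$; setting $K:=T^{1/2}$ via functional calculus then yields the claimed identity (\ref{eq:ortho_reco}).

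The remaining assertions of claim 2 require only a few extra observations. The domain of $K$ coincides with the form-domain of $c$, namely the set of windows, whose density in $\cS$ is a standard consequence of square-integrability of $\pi$ together with irreducibility. Injectivity of $K$ and density of its range follow from the non-degeneracy of $c$: if $Kf=0$ then $V_f\equiv 0$, so $\ip{s}{\pi(g)f}=0$ for all $s\in\cS$ and $g\in G$, forcing $f=0$. Uniqueness up to a positive scalar factor is inherited from the uniqueness in the Friedrichs representation, with the scalar freedom corresponding to the overall normalization of the Haar measure on $G$. The hard part of the argument is the unbounded-operator bookkeeping needed to justify the form-to-operator passage; once that is in place, the first and third paragraphs are clean applications of Schur's lemma and the spectral theorem.
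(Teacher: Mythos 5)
The paper does not actually prove Proposition \ref{prop_wavelet1}: it imports it from the classical literature on square integrable representations (\cite{gmp0}, \cite{gmp}), and your outline follows the same standard Duflo--Moore/Grossmann--Morlet--Paul strategy as those sources --- the intertwining relation $V_f\pi(h)=L(h)V_f$, Schur's lemma to produce a scalar $c(f_1,f_2)$ with $V_{f_2}^*V_{f_1}=c(f_1,f_2)I$, and a form-to-operator passage yielding $K$. So there is no divergence of approach to report; what needs comment is the rigor of two steps.

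First, the closed graph theorem cannot give boundedness of $V_f$ the way you invoke it, because a priori $V_f$ is not everywhere defined as a map into $L^2(G)$: admissibility only guarantees $V_f[f]\in L^2(G)$, not $V_f[s]\in L^2(G)$ for all $s$. The standard repair, which also removes the circularity of proving boundedness before Schur, is: set $\cD_f=\{s\in\cS:\ V_f[s]\in L^2(G)\}$; it is $\pi$-invariant (by the intertwining relation) and contains $f$, hence dense by irreducibility; $V_f$ is closed on $\cD_f$, since $s_n\to s$ forces $V_f[s_n]\to V_f[s]$ uniformly on $G$ by Cauchy--Schwarz, so an $L^2$-limit of $V_f[s_n]$ must equal $V_f[s]$ almost everywhere; then the \emph{unbounded} version of Schur's lemma, applied to the positive self-adjoint operator $V_f^*V_f$ whose spectral projections commute with the irreducible $\pi$, gives $V_f^*V_f=c(f,f)I$, and this simultaneously yields boundedness and $\cD_f=\cS$. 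Second, closability of your quadratic form is not automatic --- densely defined positive forms need not be closable --- so the Friedrichs/Kato representation theorem cannot be invoked before this is verified, and you leave it as an unexamined ``technical obstacle''. The cleanest fix avoids the form language altogether: fix a unit vector $s_0$ and consider the linear operator $Af=\overline{V_f[s_0]}$ with domain the set of windows. The same uniform-convergence argument shows $A$ is closed, and its domain is exactly the set of windows, because $V_f[s_0]\in L^2(G)$ for the single nonzero vector $s_0$ already forces $f$ to be admissible by the argument above. Then $K:=(A^*A)^{1/2}$ is positive self-adjoint with ${\cal D}(K)={\cal D}(A)$ and $\ip{Kf_2}{Kf_1}=\ip{Af_2}{Af_1}=\ip{V_{f_1}[s_0]}{V_{f_2}[s_0]}=c(f_1,f_2)$, as required. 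Note also that you tacitly assume the windows form a linear subspace when you speak of a sesquilinear form on them; this too follows only after boundedness of each $V_f$ is established. Your remaining points --- injectivity of $K$ via non-degeneracy, density of the range of the injective self-adjoint $K$, and the scalar ambiguity matching the normalization of the Haar measure --- are correct as stated.
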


\begin{remark}
\label{remark:GCWT_reco}
$ $
\begin{enumerate}
	\item 
	\label{reconstruction1}
Equation (\ref{eq:ortho_reco}) can be read of as a weak reconstruction formula. Namely, by taking $s_1=s$ and two windows $f_1,f_2$, against an arbitrary $s_2$, we get
\begin{equation}
s=\frac{1}{\ip{Kf_2}{Kf_1}}\int_G V_{f_1}[s]\pi(g)f_2 \ d\mu(g),
\label{eq:inversionR}
\end{equation}
where \ref{eq:inversionR} is a weak vector integral, also called Pettis integral \cite{Weak_Integral}.
\item
By taking $f_1=f_2=f$, equation (\ref{eq:ortho_reco}) shows that for any window $f$, $V_f$ is an isometric embedding of $\cS$ into $L^2(G)$, up to a global normalization that depends on $f$.
\item
The wavelet transform $V_{f}$ is also called the \emph{analysis operator} corresponding to the window $f$. The adjoint $V_{f}^*$ is called the \emph{synthesis operator} corresponding to $f$. For $F\in L^2(G)$, we have
\begin{equation}
V_{f}^*(F) = \int F(g) \pi(g)f d\mu(g)
\label{eq:wave_inversion1}
\end{equation}
where the integral is defined in the weak sense as in (\ref{eq:inversionR}).
The reconstruction formula (\ref{eq:inversionR}) can be written in the form
$s=\frac{1}{\ip{K f_2}{K f_1}}V^*_{f_2}V_{f_1}[s]$.
\end{enumerate}
\end{remark}

The convolution of two $L^2(G)$ functions $F$ and $Q$, is defined by
	\[[F*Q](g)= \int F(q^{-1} g)Q(q)d\mu(q).\]
	The convolution $F*Q$ is interpreted as ``filtering'', or ``blurring,'' $Q$ using the kernel $F$.
 Given a window $f\in \cS$, its ambiguity function is defined to be $V_f[f]$. Consider the image space of the wavelet transform, $V_f[\cS]= \{V_f[s]\ |\ s\in\cS\}$.
The following two propositions show that $V_f[\cS]$ is a reproducing kernel Hilbert space, with the translations of the ambiguity function $V_f[f]$ as the reproducing kernels (see, e.g., \cite{Fuhr_wavelet}).	
	
\begin{proposition}
\label{Amb_is_kernel}
Let $f\in\cS$ be a window with $\norm{Kf}=1$, where $K$ is the Duflo-Moore operator. Consider the image space of the wavelet transform, $V_f[\cS]$. Then
	$V_fV_f^*$ is the orthogonal projection $L^2(G)\rightarrow V_f[\cS]$. Moreover, for any $Q\in L^2(G)$, $V_fV_f^*[Q] =V_f[f]*Q$.
\end{proposition}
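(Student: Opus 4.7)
The plan is to treat the two assertions in turn, obtaining both from the isometry property of $V_f$ and the explicit form of the synthesis operator $V_f^*$.

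First I would observe that the normalization $\norm{Kf}=1$ is exactly what is needed to upgrade $V_f$ from an isometric embedding up to a scalar into an honest isometry. Indeed, substituting $f_1=f_2=f$ and $s_1=s_2=s$ into the orthogonality relation (\ref{eq:ortho_reco}) gives $\norm{V_f[s]}_{L^2(G)}^2 = \norm{s}_{\cS}^2\norm{Kf}_{\cS}^4 = \norm{s}_{\cS}^2$. By polarization, $V_f^*V_f = I_{\cS}$. Consequently $V_fV_f^*$ is idempotent ($(V_fV_f^*)^2 = V_f(V_f^*V_f)V_f^* = V_fV_f^*$) and self-adjoint, so it is an orthogonal projection. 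Its range is clearly contained in $V_f[\cS]$, and for any $s\in\cS$ the identity $V_fV_f^*(V_f s) = V_f s$ shows the range equals $V_f[\cS]$. This settles the first claim.

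For the reproducing-kernel formula, I would evaluate $V_fV_f^*[Q]$ at an arbitrary point $g\in G$ and compare it with the convolution. Using (\ref{eq:wave_inversion1}) for the synthesis operator and (\ref{eq:voice1}) for the analysis operator,
\begin{equation*}
[V_fV_f^*Q](g) = \ip{V_f^*Q}{\pi(g)f}_{\cS} = \Bigl\langle \int_G Q(q)\,\pi(q)f\,dq,\ \pi(g)f\Bigr\rangle_{\cS} = \int_G Q(q)\,\ip{\pi(q)f}{\pi(g)f}_{\cS}\,dq,
\end{equation*}
where the exchange of integral and inner product is legitimate because (\ref{eq:wave_inversion1}) is defined precisely in this weak sense. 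Next, using that $\pi$ is a unitary homomorphism, $\ip{\pi(q)f}{\pi(g)f} = \ip{f}{\pi(q)^*\pi(g)f} = \ip{f}{\pi(q^{-1}g)f} = V_f[f](q^{-1}g)$. Substituting back and invoking the definition of convolution on $G$ yields $[V_fV_f^*Q](g) = \int_G V_f[f](q^{-1}g)\,Q(q)\,dq = [V_f[f]*Q](g)$, which is the desired formula.

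The main subtlety, rather than an obstacle, is justifying the interchange of the inner product and the weak integral defining $V_f^*Q$; this is exactly the content of the weak definition in Remark following Proposition \ref{prop_wavelet1}, so nothing beyond that needs to be verified. The rest reduces to the homomorphism/unitarity of $\pi$ together with the left-invariance of the Haar measure, neither of which requires any delicate analysis.
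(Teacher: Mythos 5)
Your proof is correct, and it follows the standard route: the paper itself gives no proof of this proposition, deferring to \cite{Fuhr_wavelet}, and your two steps --- upgrading $V_f$ to an isometry via the orthogonality relation (\ref{eq:ortho_reco}) with the normalization $\norm{Kf}=1$ (whence $V_f^*V_f=I$ and $V_fV_f^*$ is the orthogonal projection onto the closed subspace $V_f[\cS]$), followed by the pointwise evaluation $[V_fV_f^*Q](g)=\int_G Q(q)\ip{\pi(q)f}{\pi(g)f}\,dq$ using the weak definition (\ref{eq:wave_inversion1}) and the unitarity and homomorphism property of $\pi$ to identify the kernel with $V_f[f](q^{-1}g)$ --- are exactly the argument the cited reference supplies. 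One cosmetic slip: substituting $f_1=f_2=f$ and $s_1=s_2=s$ into (\ref{eq:ortho_reco}) gives $\norm{V_f[s]}^2=\norm{s}^2\norm{Kf}^2$, not $\norm{s}^2\norm{Kf}^4$; since $\norm{Kf}=1$ this does not affect anything, but the exponent should be corrected.
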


The following is a result of Proposition \ref{Amb_is_kernel}.
	\begin{corollary}
	\label{Amb_is_kernel2}
	Let $f\in\cS$ be a window with $\norm{K f}=1$. Then the image space of the wavelet transform, $V_f[\cS]$, is a reproducing kernel Hilbert space with kernel $V_f[f]$. Precisely, for any $Q\in V_f[\cS]$, $Q= V_f[f]*Q$.
	\end{corollary}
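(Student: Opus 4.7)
The plan is to derive this corollary essentially as a direct consequence of Proposition \ref{Amb_is_kernel}, which already carries most of the analytic content. The only work is to translate the projection identity into a reproducing-kernel statement.

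First I would verify the convolution identity $Q = V_f[f]*Q$. Take any $Q \in V_f[\cS]$, so $Q = V_f[s]$ for some $s \in \cS$. Since $\norm{Kf}=1$, Proposition \ref{Amb_is_kernel} tells us that $V_fV_f^*$ is the orthogonal projection of $L^2(G)$ onto $V_f[\cS]$; because $Q$ already lies in this subspace, $V_fV_f^*Q = Q$. The same proposition also identifies $V_fV_f^*Q$ with the convolution $V_f[f]*Q$. Chaining these two equalities gives the stated identity.

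Next I would unpack the identity into the reproducing-kernel form. Using the definition of convolution on $G$,
\[
Q(g) \;=\; [V_f[f]*Q](g) \;=\; \int_G V_f[f](q^{-1}\bullet g)\,Q(q)\,dq \;=\; \ip{Q}{k_g}_{L^2(G)},
\]
where $k_g(q) := \overline{V_f[f](q^{-1}\bullet g)}$. Thus for every $g\in G$ the point evaluation $Q \mapsto Q(g)$ is a continuous linear functional on $V_f[\cS]$, represented by the element $k_g$, which itself lies in $V_f[\cS]$ (being, up to conjugation, a left-translate of the ambiguity function $V_f[f]$, and translations preserve the image space since $\pi$ is a representation). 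This is exactly the reproducing-kernel property, with reproducing kernel built from $V_f[f]$.

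There is essentially no obstacle here, since Proposition \ref{Amb_is_kernel} does the heavy lifting. The only point worth checking carefully is that one is allowed to speak of pointwise values $Q(g)$ at all: this is legitimate because $V_f[s](g) = \ip{s}{\pi(g)f}$ is continuous in $g$ by strong continuity of $\pi$, so each equivalence class in $V_f[\cS]\subset L^2(G)$ has a canonical continuous representative. Once that is noted, the corollary follows in a couple of lines.
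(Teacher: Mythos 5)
Your proof is correct and follows exactly the route the paper intends: the corollary is stated there as an immediate consequence of Proposition \ref{Amb_is_kernel}, obtained precisely as in your first paragraph ($Q\in V_f[\cS]$ is fixed by the projection $V_fV_f^*$, which equals convolution with $V_f[f]$). Your extra unpacking of the reproducing-kernel property is also sound; note only that no conjugation caveat is needed, since $\overline{V_f[f](q^{-1}\bullet g)} = V_f[f](g^{-1}\bullet q) = V_f[\pi(g)f](q)$ by the symmetry $\overline{V_f[f](h)}=V_f[f](h^{-1})$, so $k_g$ is exactly a left translate of the ambiguity function and lies in $V_f[\cS]$.
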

	
In view of Corollary \ref{Amb_is_kernel2}, the space $V_f[\cS]$ is \textcolor{black}{interpreted as} ``blurry,'' \textcolor{black}{since we interpret convolution as blurring, and any $Q\in V_f[\cS]$ is the blurring of some function in $L^2(G)$}. The ambiguity function $V_f[f]$ is interpreted as the ``point spread function'' of $V_f[\cS]$, posing a lower bound on the resolution of $V_f[\cS]$ functions.

\subsection{Examples of general continuous wavelet transforms}
\label{Examples of general continuous wavelet transforms}

\subsubsection{The short time Fourier transform} 

For references on the short time Fourier transform see  {\color{black}e.g.} \cite{Time_freq}.
Consider the signal space $L^2(\RR)$. Let $L:\RR\rightarrow {\cal U}(L^2(\RR))$ be the translation in $L^2(\RR)$. Namely, for $g_1\in\RR$ and $f\in L^2(\RR)$, $[L(g_1)f](t)=f(t-g_1)$. Let $M:\RR\rightarrow {\cal U}(L^2(\RR))$ be the modulation in $L^2(\RR)$. Namely, for $g_2\in\RR$ and $f\in L^2(\RR)$, $[M(g_1)f](t)=e^{ig_2 t}f(t)$. Note that $L$ and $M$ are representations of the group $\{\RR,+\}$. We have the commutation relation
\begin{equation}
[L(g_1), M(g_2)]:=L(g_1)^*M(g_2)^*L(g_1)M(g_2) = e^{-ig_2 t} I.
\label{eq:Heisy_comm}
\end{equation}
where $I$ is the identity operator. This shows that the set of unitary operators
\[J=\{g_3 L(g_1)M(g_2)\ |\ g_3\in e^{i\RR}, g_1,g_2\in\RR\}\]
is a group, with composition as the group product. We can treat $J$ as a group of tuples $\RR\times\RR\times e^{i\RR}$, with group product derived from (\ref{eq:Heisy_comm}). The group $J$ is called the {  \emph{(reduced) Heisenberg group}}. The mapping 
\[\pi(g_1,g_2,g_3)=g_3 L(g_1)M(g_2)\]
is a square integrable representation \textcolor{black}{called the Schr\"odinger representation \cite[Example 2.27]{Fuhr_wavelet}}, with Dulfo-Moore operator $K=I$. The resulting wavelet transform is called the   \emph{short time Fourier transform (STFT)}. \textcolor{black}{We note that, typically, the short-time-Fourier transform is defined, for some window $f\in L^2(\RR)$, as the restriction of $V_f[s]$ to the subset in phase space $J_0=\{(g_1,g_2,0)\ |\ g_1,g_2\in\RR\}$. Since $g_1$ acts on signals by multiplication,  $V_f\big|_{J_0}: L^2(\RR)\rightarrow L^2(J_0)$ is also an isometry (see \cite[Example 2.27]{Fuhr_wavelet}). We ignore this subtlety in this paper and call the transform $V_f$, based on the full reduced Heisenberg group $J$, the STFT.}

\subsubsection{The 1D wavelet transform} 
\label{The 1D wavelet transform}

For references on the 1D wavelet transform see   \color{black}e.g. \cite{Cont_wavelet_original,Ten_lectures}.
Consider the signal space $L^2(\RR)$, and the translation $L$. Let $D:\RR\rightarrow {\cal U}(L^2(\RR))$ be the dilation in $L^2(\RR)$. Namely, for $g_2\in\RR$ and $f\in L^2(\RR)$, $[D(g_2)f](t)=e^{-\frac{1}{2}g_2}f(e^{-g_2}t)$. Consider the reflection $R:\{1,-1\}\rightarrow {\cal U}(L^2(\RR))$, namely $[R(g_3)f](t)=f(g_3 t)$. The mappings $D,R$ are representations of the groups $\{\RR,+\},\{\{1,-1\},\cdot\}$ respectively. The set of transformations
\begin{equation}
A=\{ L(g_1)D(g_2)R(g_3)\ |\ g_3\in \{1,-1\}, g_1,g_2\in\RR\}
\label{eq:Affine_group1}
\end{equation}
is closed under compositions. We can treat $A$ as a group of tuples $\RR\times\RR\times \{1,-1\}$, with group product derived from  the compositions of operators in (\ref{eq:Affine_group1}). The group $A$ is called the {  \emph{1D affine group}}. The mapping 
\begin{equation}
\pi(g_1,g_2,g_3)=L(g_1)D(g_2)R(g_3)
\label{eq:wave_rep}
\end{equation}
is a square integrable representation, with Dulfo-Moore operator $K$ defined by
\[ [\cF K \cF^*\hf](\w) = \frac{1}{\sqrt{\abs{\w}}}\hf(\w).\]
 The resulting wavelet transform is called the {  \emph{continuous 1D wavelet transform}}. In this paper we simply call it the {  \emph{wavelet transform}}. \textcolor{black}{We note that the standard way to parametrize the 1D affine group, for example in \cite{Ten_lectures}, is with two parameters: time $\alpha=g_1$ and scale $\beta=g_3 e^{g_2}$. With this, the 1D wavelet representation takes the form
 \[ \pi'(\alpha,\beta)f(x) = \frac{1}{\sqrt{\abs{\beta}}}f\Big(\frac{x-\alpha}{{\beta}}\Big).\]}

\subsubsection{The rotation-dilation wavelet transform}

For references on the  rotation-dilation wavelet transform see {\color{black}e.g.} \cite{geometric}.
Consider the signal space $L^2(\RR^2)$, and the translations $L_x, L_y:\RR\rightarrow {\cal U}(L^2(\RR^2))$, along the $x$ and $y$ directions,
\[[L_x(g_1^1)f](x,y) = f(x-g_1^1,y) \quad , \quad [L_y(g_1^2)f](x,y) = f(x,y-g_1^2).\] 
 Let $R:e^{i\RR}\rightarrow {\cal U}(L^2(\RR^2))$, be the rotation, namely 
{  
\[[R(g_2)f](x,y) = f\Big(\left(
\begin{array}{cc}
	{\rm Re}(g_2) & -{\rm Im}(g_2)\\
	{\rm Im}(g_2) & {\rm Re}(g_2)
\end{array}
\right)^{-1} \left(\begin{array}{c}
	x \\
	y
\end{array}\right)
\Big).\] 
}
Let $D:\RR\rightarrow {\cal U}(L^2(\RR^2))$ be the (isotropic) dilation, $[D(g_3)f](x,y)=e^{-\frac{1}{2}g_3}f(e^{-g_3}x,e^{-g_3}y)$. 
The mappings $L_x,L_y,D$ are representations of the groups $\{\RR,+\}$, and $R$ is a representation of $\{e^{i\RR}, \cdot\}$. The set of transformations
\begin{equation}
A_R=\{ L_x(g^1_1)L_y(g^2_1)R(g_2)D(g_3)\ |\ g_2\in e^{i\RR}, g^1_1,g_1^2,g_3\in\RR\}
\label{eq:Affine_group2}
\end{equation}
is closed under compositions. We can treat $A_R$ as a group of tuples $\RR^3\times e^{i\RR}$, with group product derived from  the compositions of operators in (\ref{eq:Affine_group2}). The group $A_R$ is called the {  \emph{rotation-dilation affine group}}. The mapping 
\[\pi(g^1_1,g_1^2,g_2,g_3)=L_x(g^1_1)L_y(g^2_1)R(g_2)D(g_3)\]
is a square integrable representation. The corresponding Dulf-Moore operator is given by \cite{geometric}
\[[\cF K\cF^*\hf](\w_1,\w_2) = \frac{1}{\abs{(\w_1,\w_2)}}\hf(\w_1,\w_2).\]
We call in this paper the resulting wavelet transform the {  \emph{rotation-dilation wavelet transform}}.

\subsubsection{The Shearlet transform} 

For references on the Shearlet transform see   {\color{black}e.g.} \cite{Shearlet}.
Consider the signal space $L^2(\RR^2)$, and the translations $L_x, L_y$.
 Let $S:\RR\rightarrow {\cal U}(L^2(\RR^2))$, be the shear, namely 
{  
\[[S(g_2)f](x,y) = f\Big(\left(
\begin{array}{cc}
	1 & g_2\\
	0 & 1
\end{array}
\right)^{-1} \left(\begin{array}{c}
	x \\
	y
\end{array}\right)
\Big).\] 
}
Let $D:\RR\rightarrow {\cal U}(L^2(\RR^2))$ be the anisotropic dilation, 
\[[D(g_3)f](x,y)=e^{-\frac{3}{4}g_3}f(e^{-g_3}x,e^{-\frac{1}{2}g_3}y).\]
Last, let $R:\{1,-1\}\rightarrow {\cal U}(L^2(\RR^2))$ be the reflection $[R(g_4)f](x,y)=f(g_4 x, g_4 y)$.
The mappings $L_x,L_y,S,D$ are representations of the groups $\{\RR,+\}$, and $R$ is a representation of $\big\{\{1,-1\}, \cdot\big\}$. The set of transformations
\begin{equation}
A_S=\{ L_x(g^1_1)L_y(g^2_1)S(g_2)D(g_3)R(g_4)\ |\ g_4\in \{1,-1\}, g^1_1,g_1^2,g_2,g_3\in\RR\}
\label{eq:Affine_group3}
\end{equation}
is closed under compositions. We can treat $A_S$ as a group of tuples $\RR^4\times \{1,-1\}$, with group product derived from  the compositions of operators in (\ref{eq:Affine_group3}). The group $A_S$ is called the {  \emph{Shearlet group}}. The mapping 
\[\pi(g^1_1,g_1^2,g_2,g_3,g_4)=L_x(g^1_1)L_y(g^2_1)S(g_2)D(g_3)R(g_4)\]
is a square integrable representation. The corresponding Duflo-Moore operator is given by \cite{Affine_uncertainty2}
\[[\cF K\cF^*\hf](\w_1,\w_2)=\frac{1}{\abs{\w_1}}\hf(\w_1,\w_2).\]
The resulting wavelet transform is called the {  \emph{Shearlet transform}}. Note that this construction of the Shearlet transform, differs slightly from the standard construction, in that it models dilations by multiplying the variable by $e^{g_3}$, instead of by $g_3$, as in the standard formulation.

\subsection{Semi-direct product wavelet transforms}

\textcolor{black}{In the above examples of wavelet transforms, the phase space has a physical interpretation. For example, in the STFT, $G$ is the time-frequency plane, and in the 1D wavelet transform it is the time-scale space. There is a systematic approach for formulating general phase spaces with physical interpretations \cite{Adjoint,MyRef}. A general class of wavelet transforms for which this is possible is called semi-direct product wavelet transforms \cite{MyRef}. The physical interpretation of phase space is given by an explicit construction, which we use later on to develop the pull-back formulas of phase space filters. We next recall the definition of semi-direct product wavelet transforms, starting from their motivation. For more details see \cite{MyRef}.}

\textcolor{black}{In each example from Subsection \ref{Examples of general continuous wavelet transforms}, the general wavelet transform is interpreted as a procedure of measuring a set of ``physical quantities'' on the signal. 
In the 1D wavelet transform, the window $f$ is interpreted as a time-scale atom, sitting at time zero and scale zero. The dilation operator $D(g_2)$ is interpreted as an operation that increases (or decreases) scale by $g_2$, and thus $D(g_2)f$ is a time-scale atom at the time-scale pair $(0,g_2)$. Similarly, $L(g_1)$ increases (or decreases) time by $g_1$, and $\pi(g)f=L(g_1)D(g_2)f$ is a time-scale atom at the time-scale pair $(g_1,g_2)$. Last, the inner product $V_f[s](g)=\ip{s}{\pi(g)f}$ is interpreted as a probing of the content of the signal $s$ at the time-scale pair $(g_1,g_2)$.  Similarly, for the STFT, $V_f[s](g_1,g_2,g_3)$ is interpreted as the content of $s$ at the time-frequency pair $(g_1,g_2)$, for the rotation-dilation wavelet, $V_f[s](g_1^1,g_1^2,g_2,g_3)$ is the content of the image $s$ at the position-angle-scale triplet $(g^1_1,g_1^2,g_2,g_3)$, and for the Shearlet transform, $V_f[s](g_1^1,g_1^2,g_2,g_3,g_4)$ is the content of $s$ at the position-slope-scale triplet $(g_1^1,g_1^2,g_2,g_3)$.}

For the general treatment, we assume that the elements of $G$ are tuples of numerical values, with a certain structure to the way different tuples interact.
We begin by recalling the definition of physical quantities \cite{MyRef}.

\begin{definition}
A \emph{physical quantity} is one of the following numerical Lie groups
\[\{\RR,+\} \quad,\quad \{e^{i\RR},\cdot\} \quad,\quad \{\ZZ,+\} \quad,\quad \{e^{2\pi i \ZZ/N},\cdot\} \quad (N\in\NN).\]
\end{definition}



\subsubsection{Semi-direct product groups of physical quantities}

A topological group $G$ is called a \emph{semi-direct product} of a normal subgroup $N\triangleleft G$ and a subgroup $H\subset G$, if $G=NH$ and $N\cap H=\{e\}$ \textcolor{black}{\cite{abs_alg}}. This is denoted by $G=N\rtimes H$. If $G=N\rtimes H$, then each element $g\in G$ can be written in a unique way as $nh$ where $n\in N$, $h\in H$. Thus we can identify elements of $G$ with ordered pairs, or coordinates $(n,h)\in N\times H$. In the coordinate representation, the group multiplication takes the form
\[(n,h)(n',h')\sim nhn'h'=n\ hn'h^{-1}\ hh'\sim (n\ hn'h^{-1},hh').\]
\textcolor{black}{Here, $\sim$ denotes the relation between points in $G$ and their coordinate representations.}
Since $N$ is a normal subgroup, the element
\begin{equation}
  A_h(n'):=hn'h^{-1} 
  \label{SP_A0}
\end{equation}
 is in $N$. Moreover, for a Lie group $G$, $A_h$ is a smooth group action of $H$ on $N$, and a smooth automorphism of $N$ for each $h\in H$.

When $N,H$ are isomorphic to physical quantities, $G=N\times H$ is interpreted as the group of ordered pairs of the physical quantities $quantity_1,quantity_2$. Each coordinate of $G$ is interpreted as the ``physical dimension'' of the corresponding physical quantity, and the value at this coordinate corresponds to the value of the physical quantity.
Thus, the semidirect product structure allows us to make the following philosophical argument: ``physical quantities may change their values when transformed, but they retain their dimensions.'' Indeed, multiplying a group element $g$ of $G$ with another, may change the values of the coordinates of $g$, but may not change the ordered pair structure itself.

\begin{example}
In the case of the affine group, we have $G=N\rtimes H$, where $N\cong \{\RR,+\}$ is the subgroup of translations and $H\cong\{\RR,+\}\times\{-1,1\}$ is the subgroup of dilations and reflections. The group product takes the following form in coordinates
\[(n,h_1,h_2)\bullet(n',h'_1,h_2')=(n+h_2e^{h_1}n',h_1+h_1',h_2h_2').\]
Namely, $A_{(h_1,h_2)}(n')=h_2e^{h_1}n'$.
\end{example}

\subsubsection{General semi-direct product wavelet transforms}

In the systematic approach for treating general wavelet transforms as measurements of physical quantities, $G$ is a nested semi-direct product of physical quantities.
The following definition is taken from \cite{MyRef} and consists of a list of assumptions on the semi-direct product structure. On the one hand, these assumptions allow  a systematic derivation of a general wavelet-Plancherel phase space filtering theory. Particularly, the assumption that the coordinates in phase space are numerical allows us to express the physical quantities of the transform using normal operators which can be pulled back to the window-signal space with closed form expressions. On the other hand, the assumptions are general enough to encompass most of the common examples of general wavelet transforms.

\begin{definition}[Semi-direct product wavelet transform (SPWT)]
\label{ass_voice2}
A Semi-direct product wavelet transform is constructed by, and assumed to satisfy, the following.
\begin{enumerate}
\item
The group $G$ is a nested semi-direct product group, namely
\begin{equation}
\begin{split}
&G = H_0\\
&H_0=  (N_0\times N_1)\rtimes H_1 \\
&H_m  =  N_{m+1} \rtimes H_{m+1} \quad , \quad m=1,\ldots, M-2 \\
&H_{M-1}=N_M .
\end{split}
\label{eq:general_semi2}
\end{equation}
Here, $N_0$ is the center of $G$. 
For $m=0,\ldots,M$,
 $N_m$ is a group direct product of physical quantities, $G_m^1\times \ldots\times G_m^{K_m}$, where $K_m\in\NN$. 
For each $m=0,\ldots,M$, \textcolor{black}{ we assume that $G_m^1= \ldots= G_m^{K_m}$  are the same physical quantity.}
\item
We denote elements of $N_m$ in coordinates by 
\[{\bf g}_m=(g_m^1,\ldots,g_m^{K_m}) \sim g_m^1g_m^2\ldots g_m^{K_m},\] 
where $g_m^k\in G_m^k$ for $k=1,\ldots,K_m$.
We denote elements of $H_m$ in coordinates by 
\[{\bf h}_m=({\bf g}_{m+1},\ldots,{\bf g}_M)\sim g_{m+1}g_{m+2}\ldots g_M,\]
where $g_{m'}\in N_{m'}$ for $m'=m+1,\ldots,M$. 
For the center, we also denote $Z=N_0$, and $K_z=K_0$, and denote elements of $Z$ in coordinates by ${\bf z}=(z^1,\ldots,z^{K_z})$. 
  \item
  We consider the representations $\pi_m({\bf g}_m)=\pi_m^1(g_m^1) \circ\ldots \circ \pi_m^{K_m}(g_m^{K_m})$ of $N_m$, $m=0,\ldots,M$ in $\cS$. We assume  that
	$\pi(g)=\pi_0({\bf g}_0)\circ\ldots \circ \pi_M({\bf g}_M)$ is a square integrable representation of $G$. Namely, $\pi$ is a strongly continuous irreducible representation, and there is a vector $f\in\cS$ such that $V_f[f]\in L^2(G)$.  %
	\item
	The \emph{semi-direct product wavelet transform} based on $\pi$ and on the window $f\in\cS$, satisfying $V_f[f]\in L^2(G)$, is $V_f:\cS\rightarrow L^2(G)$, defined in (\ref{eq:voice1}).
\end{enumerate}
\end{definition}

When there is no concrete interpretation of the physical quantity underlying the coordinate $N_m$, e.g.$time$, $scale$ or $angle$, we call it by the generic name $quantity_m$.

\subsubsection{Notations and properties}
	\label{ass_inversion2}
	Consider a SPWT. In the following we give a list of notations and properties from \cite{MyRef} that will be used repeatedly in subsequent sections.
	\begin{enumerate}
	\item
	We denote by $e$ the unit element of a generic group, and by ${\bf e}$ a tuple of unit elements. We denote by $\bullet$ the group product of a generic group, e.g.$g\bullet g'$, or in short $gg'$. We also denote by $\bullet$ the group product in coordinates, e.g., ${\bf g}'\bullet{\bf g} \sim g'\bullet g$.
	\item
	We denote by abuse of notation elements of $G$ with coordinate representation \newline
	$({\bf e},\ldots,{\bf e},{\bf g}_m,{\bf e},\ldots,{\bf e})$, by $g_m$. We sometimes do not distinguish between the group $N_m$, and the subgroup of $G$ consisting of elements of the form $g_m$. We denote elements of $G$ with coordinate representation $({\bf e},\ldots,{\bf e},{\bf h}_{m})$, by $h_m$.
	Note that 
	\[h_m \sim {\bf h}_m=({\bf g}_{m+1},{\bf h}_{m+1})\sim g_{m+1}h_{m+1}.\]
		\item	
	As a result of the semi-direct product structure, the group multiplication has the following form in coordinates
	\begin{equation}
	\begin{split}
	g\bullet g' = & ({\bf z},{\bf g}_1,\ldots,{\bf g}_M)\bullet ({\bf z}',{\bf g}_1',\ldots,{\bf g}_M') \\
	 =    & \Big(({\bf z},{\bf g}_1) \bullet A_{z,1}\big({\bf h}_1;({\bf z}',{\bf g}_1')\big) \ ,\  {\bf g}_{2} \bullet A_{2}({\bf h}_{2};{\bf g}_{2}')    \ ,\ \ldots  \\
	 & \quad \quad\quad \quad \ldots \ , \ {\bf g}_{M-1} \bullet A_{M-1}({\bf h}_{M-1};{\bf g}_{M-1}')\ ,\ {\bf g}_M \bullet {\bf g}_M'\Big)
	\end{split}
	\label{eq:vvvvvvvvvvv}
	\end{equation}
	where $A_m$ are smooth automorphisms with respect to ${\bf g}_m'$ if $m\geq 2$, and with respect to $({\bf z}',{\bf g}_1')$ for $m=(z,1)$, \textcolor{black}{as defined in (\ref{SP_A0})}. Moreover, $A_m$ with respect to ${\bf h}_m$  are smooth group actions of $H_m$ on $N_m$  for $m\geq 2$, and on $Z\times N_1$ for $m=(z,1)$. Here, ${\bf z}$, ${\bf g}_m$, ${\bf h}_m$ ($m=1,\ldots,M$) are coordinates corresponding to $g$, and ${\bf z}'$, ${\bf g}'_m$ ($m=1,\ldots,M$) are coordinates corresponding to $g'$.	
	\item
	There is a convenient way to represent the automorphisms $A_m({\bf h}_m; \ \cdot\ )$ as matrix operators. Let us denote in short $A_m=A_m({\bf h}_m; \ \cdot\ )$ the automorphism $N_m\rightarrow N_m$ for some $m=1,\ldots,M$. 
	For each $k=1,\ldots,K_m$, consider the projections $P_k:N_m\rightarrow G_m^k$, defined in coordinates by
	\[{\bf P}_k(g_m^1,\ldots,g_m^{K_m})=(e,\ldots,e,g_m^k,e,\dots,e)\]
	 Since $N_m$ is a direct product, $P_k$ are homomorphisms. For each pair $k,k'=1,\ldots,K_m$ consider the homomorphism $a_{k,k'}=P_{k}A_mP_{k'}:G_m^{k'}\rightarrow G_m^k$. The automorphism $A$ can now be written in coordinates as the homomorphism valued invertible matrix ${\bf A}$ with entries $a_{k,k'}$. 
	The multiplication of ${\bf A}$ by ${\bf g}_m$ is defined by
	\[\forall k=1,\ldots,K_m\ , \quad [{\bf A}{\bf g}_m]_k = a_{k,1}(g_m^1)\bullet \ldots \bullet a_{k,K_m}(g_m^{K_m}).\]
	Here, invertibility means that there is another homomorphism valued matrix ${\bf A}^{-1}$ with ${\bf A}{\bf A}^{-1}={\bf A}^{-1}{\bf A}={\bf I}$. We denote these matrices in the extended notation by ${\bf A}_m({\bf h}_m)$.
	
	For example, if $G^1_m=\{\RR,+\}$, any automorphism is an invertible matrix in $\RR^{K_m \times K_m}$.
	If $G_m=\{\ZZ,+\}$, any automorphism is an invertible matrix in $\ZZ^{K_m \times K_m}$, with inverse in $\ZZ^{K_m \times K_m}$. 
	Another example is the case where $N_m=G^1_m$ consists of one coordinate.  In this case, for $N_m=G_m=\{\RR,+\}$, ${\bf A}_m({\bf h}_m)$ is a multiplication by a nonzero scalar. For $N_m=G_m=\{\ZZ,+\}$,  ${\bf A}_m({\bf h}_m)$ is a multiplication by $\pm 1$. For $N_m=G_m=\{e^{i\RR},+\}$, ${\bf A}_m({\bf h}_m)$ is a multiplication of the exponent by $\pm 1$. Last, for $N_m=G=\{e^{2\pi i \ZZ/N}\}$, ${\bf A}_m({\bf h}_m)$ is a multiplication of the exponent by a co-prime of $N$.

\item
The matrices ${\bf A}_m({\bf h}_m)$, with respect to ${\bf h}_m$, are smooth group actions of $H_m$ on $N_m$. As a result
\[{\bf A}_m({\bf h}_m)={\bf A}_m({\bf g}_{m+1})\ldots{\bf A}_m({\bf g}_{M}).\]

		\item
		There is a formula for the group inverse of $g\in G$ in coordinates. 
The inverse of $({\bf z},{\bf g}_1,{\bf h}_1)$ is given by 
\begin{equation}
({\bf z},{\bf g}_1,{\bf h}_1)^{-1}=\big({\bf A}_{z,1}({\bf h}_1^{-1})({\bf z}^{-1},{\bf g}_1^{-1}),{\bf h}_1^{-1}\big).
\label{eq:inverse_coo0}
\end{equation}
The inverse of $({\bf g}_m,{\bf h}_m)$ in $H_{m-1}$ is given by
\begin{equation}
({\bf g}_m,{\bf h}_m)^{-1}=\big({\bf A}_m({\bf h}_m^{-1}){\bf g}_m^{-1},{\bf h}_m^{-1}\big).
\label{eq:inverse_coo1}
\end{equation}

\end{enumerate}

\subsubsection{The $quantity_m$ transforms}

\textcolor{black}{The assumption that the coordinates of a SPWT are scalars is essential in our wavelet-Plancherel phase space filtering theory. This fact allows us to measure the physical quantities of the transform using normal operators. More accurately, it is possible to define multiplicative operators in phase space that multiply each point by the value of one physical quantity coordinate. These operators are the so-called \emph{observables of the $quantity_m$ transforms}, introduced in \cite{MyRef}. The special nested semi-direct product structure of SPWT is also essential in the wavelet-Plancherel theory, as it allows to faithfully represent the group rule as canonical commutation relations on the observables (see Definition \ref{def:quantity_trans}). This is the key property which allows explicit pull-back formulas which generalize the 1D CWT case  (\ref{eq:1D_Pull_2}) and (\ref{eq:1D_Pull_22}).}

The $quantity_m$ transforms formalize the notion that $\pi(g)$ translates the values of the physical quantities of windows. 
To illustrate the concept, consider the STFT, with group structure
\[({\rm phase\ rotation}\times{\rm translations})\rtimes{\rm modulations} =(e^{i\RR}\times\RR)\rtimes \RR = (Z\times N_1)\rtimes N_2.\]
 The operator $\pi_1(g_1)$ translates $f$ in the \textit{time} domain, and $\cF\pi_2(g_2)\cF^{-1}$ translates $\cF f$ in the \textit{frequency} domain. In the point of view of the SPWT theory, what makes $\cF$ a \textit{frequency} transform is the fact that it maps $\pi_2(g_2)$ to a translation in $L^2(N_2)=L^2(\RR)$, and what makes the identity operator a \textit{time} transform is the fact that it maps $\pi_1(g_1)$ to a translation in $L^2(N_1)=L^2(\RR)$. 

Next, we recall the extension of this idea for general SPWTs. In a general SPWT, when translating an element $g\in G$ by multiplying from the left with an element $g^{\prime-1}_m\in G_m$, not only the ${\bf g}_m$ coordinate of $g$ is affected. Indeed, by the semi-direct product structure of $G$, all of the coordinates ${\bf g}_1,\ldots, {\bf g}_m$ are affected. This is taken into account in the following construction \cite{MyRef}.

We want to prove an equivalence between $\pi_m(g_m)$ and translations in some space. The space $L^2(N_m)$ is a good candidate for this translation space. However, for dimensional balance considerations, the space $L^2(N_m)$ is not appropriate in general. For example, consider the representation $\pi_1(g_1)$ that translates $L^2(\RR^2)$ functions along the $x$ axis. In this case $N_1=\RR$. However, there is no reasonable transform from $L^2(\RR^2)$ to $L^2(N_1)=L^2(\RR)$ that transforms $\pi_1(g_1)$ to translation. To solve this problem, we consider a manifold $Y_m$ with Radon measure, and take the translation space to be $L^2(Y_m\times N_m)$. In \cite{MyRef}, the existence and construction of such a space $Y_m$ is studied. We call $Y_m\times N_m$ the \emph{$quantity_m$ domain}.

For the definition of the $quantity_m$ transforms, we rely on calculations of tuples of operators and tuples of vectors.
Let $M\in\NN$.
A tuple of operators ${\bf T}=(T_1,\ldots, T_M)$, applied on a single vector $f$, is defined to be ${\bf T}f=(T_1 f,\ldots, T_M f)$. The composition of a tuple of operators ${\bf T}$ with an operator $U$ is defined by ${\bf T}U=(T_1U,\ldots, T_MU)$ and $U{\bf T}=(UT_1,\ldots, UT_M)$.

We are interested in $quantity_m$ transforms of the form $\Psi_m: \cS \rightarrow L^2(Y_m\times N_m)$ where $\cS$ is the signal space.  For every $1\leq k\leq K_m$, we define the multiplicative operator $\bQ_m^k$ in $L^2(Y_m\times N_m)$ by
	\begin{equation}
	\bQ_m^k F(y_m,g_m)=g_m^kF(y_m,g_m),
	\label{eq:mult_one1}
	\end{equation}
	with domain ${\cal D}(\bQ_m^k)$ consisting of all functions $F\in L^2(Y_m\times N_m)$ such that (\ref{eq:mult_one1}) is also in 
	$L^2(Y_m\times N_m)$. Note that $\bQ_m^k$ is unitary in case $G_m^1$ is $e^{i\RR}$ or $e^{2\pi i\ZZ/N}$, or self-adjoint in case $G_m^1$ is $\RR$ or $\NN$.
	 The operators $\bQ_m^k$ multiply each point in $Y_m\times N_m$ by the value of its physical quantity $G_m^k$. It is customary in quantum mechanics to call such operators \emph{observables of $qunatity_m$} \cite{HAPS,quantum_measure}.
	Consider the multiplicative operator ${\bf\bQ}_m$, defined in $\bigcap_{k=1}^{K_m}{\cal D}(\bQ_m^k)$, mapping to $L^2(Y_m\times N_m)^{K_m}$, by
	\[{\bf\bQ}_m F = ( \bQ_m^1 F , \ldots , \bQ_m^{K_m} F ).\]
	Note that the entries of ${\bf\bQ}_m$ are either unitary or essentially self-adjoint\footnote{A symmetric operator is called essentially self-adjoint, if there exists a unique extension of the operator to a self-adjoint operator.}.

The following definition imposes the group structure of $G$ on the $quantity_m$ transforms through observables.
\begin{definition}
\label{def:quantity_trans}
Let $1\leq m \leq M$, and
 $Y_m$ be a manifold with Radon measure. The isometric isomorphism 
$\Psi_m: \cS \rightarrow L^2(Y_m\times N_m)$ is called a $quantity_m$ transform if the following two conditions are met.
\begin{itemize}
	\item 
	Consider the left $N_m$ translation $L_m(g_m'):L^2(Y_m\times N_m)\rightarrow L^2(Y_m\times N_m)$ defined by
	\[[L_m(g_m')F](y_m,g_m) = F(y_m,g_m^{\prime-1}g_m).\]
	Then
	\begin{equation}
	\Psi_m\pi_m(g_m')\Psi_m^* = L_m(g_m').
	\label{eq:quant_trans1}
	\end{equation}
	\item
	Define the representation
	\[\rho^m(g')=\Psi_m\pi(g')\Psi_m^*.\]
	Then 
	\begin{equation}
	\rho^m(g')^*{\bf\bQ}_m \rho^m(g') = {\bf g}'_m\bullet {\bf A}_m({\bf h}'_m){\bf\bQ}_m,
	\label{eq:quant_trans2}
	\end{equation}
	where the operator tuple ${\bf g}'_m\bullet {\bf A}_m({\bf h}'_m){\bf\bQ}_m$ is understood in the functional calculus sense, as the application of the function ${\bf g}'_m\bullet {\bf A}_m({\bf h}'_m)(\cdot)$ on ${\bf\bQ}_m$ elementwise (see Remark \ref{R3e3}). Equation (\ref{eq:quant_trans2}) is called the \emph{canonical commutation relation} of $quantity_m$.
\end{itemize}
\end{definition}

\begin{remark}
\label{R3e3}
Equation (\ref{eq:quant_trans2}) is understood as follows.
Note that for any $k=1,\ldots,K_m$ we have
\[\int dP_m^k(\l_m^k)=I\]
where $P_m^k(\l_m^k)$ are the spectral projections of $\bQ_m^{k}$.
Thus, by commutativity of \newline $dP_m^1(\l_m^1),\ldots, dP_m^{K_m}(\l_m^{K_m})$ we can write the spectral decomposition of ${\bf\bQ}_m$ by
\[{\bf\bQ}_m = \Big(\int \l_m^1 dP_m^1(\l_m^1)\ ,\ \ldots\ ,\  \int \l_m^{K_m}dP_m^{K_m}(\l_m^{K_m})\Big)\]
\[=\Big(\int\ldots\int \l_m^1  dP_m^1(\l_m^1)\ldots dP_m^{K_m}(\l_m^{K_m})\ \ ,\ \ \ldots\ \ ,\ \  \int\ldots\int \l_m^{K_m} dP_m^1(\l_m^1)\ldots dP_m^{K_m}(\l_m^{K_m})\Big)\]
\[= \int\ldots\int (\l_m^1,\ldots, \l_m^{K_m}) dP_m^1(\l_m^1)\ldots dP_m^{K_m}(\l_m^{K_m}).\]
We now define the tuple of operators
\[{\bf g}'_m\bullet {\bf A}_m({\bf h}'_m){\bf\bQ}_m = \int\ldots\int {\bf g}'_m\bullet {\bf A}_m({\bf h}_m)(\l_m^1,\ldots, \l_m^{K_m}) dP_m^1(\l_m^1)\ldots dP_m^{K_m}(\l_m^{K_m}),\]
\textcolor{black}{where $(\l_m^1,\ldots, \l_m^{K_m})$ is seen as a group element of $N_m$, and ${\bf g}'_m\bullet {\bf A}_m({\bf h}_m)(\l_m^1,\ldots, \l_m^{K_m})$ as a tuple of numerical values.}
\end{remark}

Note that if $G_m^1$ is $\RR$ or $\ZZ$, then the operators in the tuple (\ref{eq:quant_trans2}) are essentially self-adjoint, as the sum of commuting essentially self-adjoint operators. Similarly, in case $G_m^1$ is $e^{i\RR}$ or $e^{2\pi i\ZZ/N}$, the operators in the tuple (\ref{eq:quant_trans2}) are unitary.

For later calculations, it is important to demand the existence of the $quantity_m$ transforms. In addition, it is required to restrict the group structure of $G$, to have nontrivial ${\bf A}({\bf h_m})$ automorphisms only for $quantity_m$ coordinates that are equal to $\RR$ of $\ZZ$. This is summarized in the following definition.

\begin{definition}
\label{simply_dilatet}
A SPWT is called simply dilated, if for every $m=1,\ldots,M$ the following two conditions are met.
\begin{enumerate}
	\item
	There exists a $quantity_m$ transform.
	\item
	If $G^1_m$ is $e^{i\RR}$ or $e^{2\pi i \ZZ/N}$, then ${\bf A}_m({\bf h}_m)$ is the unit matrix ${\bf I}$, mapping every ${\bf g}_m$ to itself.
\end{enumerate} 
\end{definition}

Note that in Definition \ref{simply_dilatet} the center $N_0$ of $G$ is not required to have a $quantity_0$ transform. Indeed, such a transform does not exist, as the restriction of $\pi$ to $N_0$ is a scalar representation times the identity, which makes (\ref{eq:quant_trans1}) and (\ref{eq:quant_trans2}) impossible. For this reason, the STFT is a wavelet transform that measure only times and frequencies, and the phase shift subgroup $N_0$ is not considered to be a transformation group of a physical quantity.

\subsection{Geometric wavelet transforms}

In this section we define a class of SPWT that are based on affine transformations of a window in $L^2(\RR^N)$, called geometric wavelet transforms.
Geometric wavelet transforms are a special case of the class of wavelet transforms introduced in
\cite{geometric}. \textcolor{black}{We study geometric wavelet transforms for two reasons. First, the pull-back of phase space filters can be written explicitly via first order partial differential equations, and second, geometric wavelet transforms are ubiquitous in general wavelet transforms.}

In the following we analyze $L^2(\RR^N)$ functions through their spatial domain and frequency domain representations $f$ and $\hf$ respectively.
We treat points in the spatial domain as column vectors in $\RR^N$, and treat points in the frequency domain as row vectors in $\RR^N$. We do not distinguish between linear operators ${\bf A}\in {\cal GL}(\RR^N)$ and their matrix representations in the standard basis. For example, for time points $\bx\in\RR^N$ and frequency points $\bw\in\RR^N$, ${\bf A}\bx$ and $\bw {\bf A}$ are defined by matrix multiplication.

\begin{definition}
Let $H_1$ be a nested semi-direct product group of the subgroups $N_m$, $m=2,\ldots,M$, where each $N_m$ is a group direct product of physical quantities, as in Definition \ref{ass_voice2}. Consider a representation $\bD:H_1\rightarrow {\cal GL}(\RR^N)$. For $h_1=g_2\ldots g_M$, we denote the decomposition $\bD(h_1)^{-1}=\bD_M(g_M)^{-1}\ldots \bD_2(g_2)^{-1}$, where each ${\bf D}_m:N_m\rightarrow {\cal GL}(\RR^N)$ is a representations of the subgroup $N_m$. Denote the group $N_1=\{\RR^N,+\}$. Consider the group $G=\RR^N\rtimes H_1=N_1\rtimes H_1$, with group product defined in coordinates by
\[(\bx,{\bf h}_1)\bullet(\bx',{\bf h}_1') = (\bx+{\bf D}({\bf h}_1)\bx',{\bf h}_1\bullet{\bf h}_1').\]
Let $\bw_0\in\RR^N$ and suppose that $U=\bw_0{\bf D}(H_1)= \{\bw_0 \bD(h_1)\ |\ h_1\in H_1\}$ is an open dense subset of $\RR^N$. Suppose that $\pi:G\rightarrow {\cal U}(L^2(\RR^N))$, defined by
\[[\pi(g)f](\bx)=\abs{{\rm det}\big(\bD({\bf h}_1)^{-1}\big)}^{\frac{1}{2}} f\big(\bD({\bf h}_1)^{-1}(\bx-{\bf g}_1)\big)\]
is a square integrable representation of $G$ in $L^2(\RR^N)$.
If the resulting wavelet transform $V_f$ is a simply dilated SPWT, we call $V_f$ a \emph{geometric wavelet transform}.
\end{definition}

A geometric wavelet transform takes the following form in the frequency domain 
\begin{equation}
{\hat{\pi}}(g)\hf(\bw) :=\cF\pi(g)\cF^*\hf(\bw) = \abs{{\rm det}\big(\bD({\bf h}_1)\big)}^{\frac{1}{2}}e^{i\bw\cdot {\bf g}_1}\hf\big(\bw \bD({\bf h}_1)\big).
\label{eq:geo_wav_freq}
\end{equation}

The following theorem is a special case of Corollary 1 of Theorem 1 in \cite{geometric}.
\begin{theorem}
\label{Geo_Duflo}
There exists a measurable function $\kappa:\RR^N\rightarrow \RR_+$ such that the
 Duflo-Moore operator $K$ is given by
\[[\cF K \cF^* \hf](\w) = \kappa(\w)\hf(\w).\]
\end{theorem}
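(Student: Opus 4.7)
The plan is to compute $\norm{V_f[s]}^2_{L^2(G)}$ in the frequency domain as an explicit quadratic form in $\hf$, and then invoke the uniqueness of the Duflo--Moore operator from Proposition \ref{prop_wavelet1} to identify $K$ as a Fourier multiplier.

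First I would combine Plancherel with formula (\ref{eq:geo_wav_freq}) to write
\[
V_f[s](g)=\ip{\hs}{\hat\pi(g)\hf}=c({\bf h}_1)\int\hs(\bw)\,\overline{\hf(\bw\bD({\bf h}_1))}\,e^{-i\bw\cdot{\bf g}_1}\,d\bw,
\]
with $c({\bf h}_1)=\abs{{\rm det}\bD({\bf h}_1)}^{1/2}$. For each fixed ${\bf h}_1$, this is (up to $c({\bf h}_1)$) the Fourier transform in ${\bf g}_1$ of $\bw\mapsto \hs(\bw)\overline{\hf(\bw\bD({\bf h}_1))}$. Writing the left Haar measure on $G=N_1\rtimes H_1$ as $d{\bf g}_1\,d\mu(h_1)$ for some measure $d\mu$ on $H_1$ coming from the semidirect product structure, Plancherel in ${\bf g}_1$ and Fubini yield
\[
\norm{V_f[s]}^2_{L^2(G)}=\int_{\RR^N}\abs{\hs(\bw)}^2\,I_f(\bw)\,d\bw,
\]
where $I_f(\bw)=\int_{H_1}c({\bf h}_1)^2\,\abs{\hf(\bw\bD({\bf h}_1))}^2\,d\mu(h_1)$.

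The next step is the change of variable $\bw'=\bw\bD({\bf h}_1)$, which rewrites $I_f(\bw)$ as an integral of $\abs{\hf(\bw')}^2$ against a pushforward density supported on the orbit $U=\bw_0\bD(H_1)$; since $U$ is open and dense, this gives a density on almost all of $\RR^N$. The left $H_1$-invariance of $d\mu$ translates, after this change of variable, into independence of the pushforward density on the starting point $\bw$, so $I_f(\bw)=\int\kappa(\bw')^2\abs{\hf(\bw')}^2\,d\bw'$ for some measurable $\kappa:\RR^N\to\RR_+$. Comparing with the identity $\norm{V_f[s]}^2=\norm{s}^2\norm{Kf}^2$ of Proposition \ref{prop_wavelet1} yields $\norm{Kf}^2=\int\kappa(\bw)^2\abs{\hf(\bw)}^2\,d\bw$, and polarization extends this to $\ip{Kf_2}{Kf_1}$. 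Uniqueness (up to scalar) of the positive self-adjoint Duflo--Moore operator then forces $[\cF K\cF^*\hf](\bw)=\kappa(\bw)\hf(\bw)$.

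The main obstacle is the change of variable from $H_1$ onto its orbit $U$ and the identification of the correct pushforward density as $\kappa(\bw)^2\,d\bw$. One has to quotient out any stabilizer of $\bw_0$ in $H_1$ under the right action $\bD$; for the simply dilated SPWTs considered here this stabilizer is small enough (discrete or compact) that the pushforward of $d\mu$ is absolutely continuous with respect to Lebesgue measure on $U$, and its Radon--Nikodym derivative is exactly $\kappa(\bw)^2$. This specializes the general Plancherel/Duflo--Moore computation of \cite{geometric}.
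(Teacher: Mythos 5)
The paper gives no proof of Theorem \ref{Geo_Duflo} at all: it is quoted as a special case of Corollary 1 of Theorem 1 in \cite{geometric}, and nothing in the appendices argues it. Your sketch therefore supplies what the paper delegates to the literature, and it reconstructs essentially the classical computation for quasi-regular representations of $\RR^N\rtimes H_1$ (Bernier--Taylor/F\"uhr style): Plancherel in ${\bf g}_1$, Fubini, reduction of $\norm{V_f[s]}^2$ to $\norm{\hs}^2 I_f$ with $I_f(\bw)$ constant on the orbit, identification of $\kappa^2$ as the Radon--Nikodym derivative of the pushed-forward measure, then polarization and uniqueness of the positive operator. The structure is sound, but two points need repair. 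First, with your normalization $dg=d{\bf g}_1\,d\mu(h_1)$ of the left Haar measure, one has $d\mu(h_1)=\abs{{\rm det}\,\bD(h_1)}^{-1}d_{l}h_1$, which is \emph{not} left-invariant (it picks up the factor $\abs{{\rm det}\,\bD(h_0)}^{-1}$ under $h_1\mapsto h_0h_1$), so the invariance claim as you state it is wrong; the base-point independence of $I_f$ actually comes from the exact cancellation $c({\bf h}_1)^2\,d\mu(h_1)=d_{l}h_1$, i.e.\ the $\abs{{\rm det}}^{1/2}$ normalization in $\pi$ is calibrated precisely against the modular factor of $G$, after which left-invariance of $d_l h_1$ on $H_1$ gives $I_f(\bw_0\bD(h_0))=I_f(\bw_0)$. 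Second, ``discrete or compact'' stabilizer should read ``compact'' (finite, if discrete): a noncompact stabilizer makes $I_f\equiv\infty$ for every nonzero $f$, contradicting the square-integrability built into the definition of a geometric wavelet transform, so compactness is automatic rather than a separate hypothesis; and since $\bD$ is a smooth representation with open orbit, the orbit map $h_1\mapsto\bw_0\bD(h_1)$ is a submersion onto $U$ with compact fibers (a diffeomorphism in the DGWT case, where this is assumed outright via $C_m$), which is what actually delivers absolute continuity of the pushforward. A final small point worth a sentence in a full write-up: openness and density of $U$ do not by themselves make $\RR^N\setminus U$ Lebesgue-null, which your a.e.\ identification of $\kappa$ on $\RR^N$ tacitly uses; nullity of the complement is forced by the assumed square-integrability (irreducibility), and $\kappa$ is then extended arbitrarily off $U$. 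With these adjustments your argument is a valid, self-contained proof of the statement the paper only cites.
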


Consider a geometric wavelet transform.
We denote $Y_m=N_2\times\ldots\times N_{m-1}\times N_{m+1}\times\ldots\times N_M$, and denote generic elements of $Y_m$ in coordinates by $y_m\sim {\bf y}_m=({\bf g}_2,\ldots,{\bf g}_{m-1},{\bf g}_{m+1},\ldots,{\bf g}_M)$. We denote
\[\bD^m(y_m) = \bD_2({\bf g}_2)\ldots \bD_{m-1}({\bf g}_{m-1})\bD_{m+1}({\bf g}_{m+1})\ldots \bD_M({\bf g}_M).\]
Next, we introduce a property, that if satisfied, there is a constructive definition of all of the $qunatity_m$ transforms of a geometric wavelet transform.

\begin{definition}
Consider a geometric wavelet transform.
Let $U\subset \RR^N$ be the domain of the signal space in frequency, inheriting the topology of $\RR^N$.
If for each $m=2,\ldots,M$, there exists a point $\bw_m\in\RR^N$ such that the  mapping
\[C_m:Y_m\times N_m \rightarrow U \quad , \quad C_m(y_m,g_m) = \bw_m\bD^m(y_m)\bD_m(g_m^{-1})\]
is a diffeomorphism of the manifold  ${\textstyle G}/ _{\textstyle \RR^N} \cong Y_m\times N_m$ to $U$, we call the wavelet transform a \emph{diffeomorphism geometric wavelet transform}, or \emph{DGWT}.
\end{definition}

In the following we consider the signal space $\cS$ of a geometric wavelet transform as the frequency domain $L^2(U)$, using (\ref{eq:geo_wav_freq}) for the representation. By Theorem \ref{Geo_Duflo}, $\cW$ is the weighted frequency space $L^2(U,\k^2(\w)d\w)$.

\begin{definition}
Consider a DGWT.
Consider the standard Riemannian structure on $Y_m\times N_m$, namely the direct product of Riemannian structures of the physical quantities.
Let $J(y_m,g_m)$ be the Jacobian of the substitution $C_m$. The \emph{$quantity_m$ diffeomorphism} is defined to be
\[\Gamma_m: L^2(U)\rightarrow L^2(Y_m\times N_m) \quad , \quad  [\Gamma_m \hf](y_m,g_m) = \hf\big(C_m(y_m,g_m)\big).\]
The \emph{diffeomorphism $quantity_m$ transform} is defined to be
\begin{equation}
\begin{split}
 & \Psi_m: L^2(U)\rightarrow L^2(Y_m\times N_m) \  ,  \\
  &  [\Psi_m \hf](y_m,g_m) = \sqrt{\abs{{\rm det}(J(y_m,g_m))}}\hf\big(C_m(y_m,g_m)\big).  
\end{split}
\label{eq:quantity_trans}
\end{equation}
\end{definition}
Note that $\Psi_m$ is an isometric isomorphism. 
In the following we show that $\Psi_m$ is a $quantity_m$ transform.

\begin{remark}
The construction also works under the direct product Riemannian structure weighted by the Haar measure of $G$, represented in the coordinates
$(y_m,g_m)\sim y_m g_m$, with a corresponding Jacobian. Note that $\Gamma_m$ is the same in this construction, and $\Psi_m$ has a different normalizing Jacobian.
\end{remark}

\begin{proposition}
\label{T_m_multiplicative_in_freq}
Consider a DGWT, where $\cS$ is considered to be the frequency domain $L^2(U)$.
Then we have the following.
\begin{itemize}
	\item The inverse Fourier transform is a \emph{position transform}, \textcolor{black}{namely, the $quantity_1$ transform of the position parameter $\mathbf{g}_1$ (see Definition \ref{def:quantity_trans}).}
	\item For every $m=2,\ldots,M$, the diffeomorphism $quantity_m$ transform $\Psi_m$ is a  $quantity_m$ transform.
\end{itemize} 
\end{proposition}

The proof of this proposition is in Appendix B.

\subsection{Examples of semi-direct product wavelet transforms}
\label{Examples_SPWT}

The following examples can be checked directly (see \cite{MyRef}) or as instances of Proposition \ref{T_m_multiplicative_in_freq}. It is also noteworthy to mention the papers \cite{allW1}\cite{allW2}, in which all square integrable quasi-regular representation in $L^2(\RR^2)$ and $L^2(\RR^3)$ respectively were classified and explicitly given. Many of the representations listed in these papers are SPWT, and the papers can serve as a rich library of wavelet transforms.

\subsubsection{The short time Fourier transform}

The STFT is a simply dilated SPWT. The semi-direct product structure of the Heisenberg group is 
\[J= (e^{i\RR }\times \RR)\rtimes\RR = ({\rm phase\ rotations} \times {\rm translations}) \rtimes {\rm modulations},\]
where $e^{i\RR}$ is the center of $J$. We have
\[A_{z,1}=\left(\begin{array}{cc}
	I(g_2) &A_z(g_2) \\
	0(g_2)  &  A_1(g_2)
\end{array}\right)\]
where 
\[\begin{array}{lccclll}
	I(g_2)&:&\{e^{i\RR},\cdot\}\rightarrow\{e^{i\RR},\cdot\} & \ \ \ ,\ \ \  & I(g_2)z&=&z, \\
	A_z(g_2)&:&\{\RR,+\}\rightarrow\{e^{i\RR},\cdot\} & \ \ \ ,\ \ \  & A_z(g_2)g_1&=&e^{ig_2g_1}, \\
	0(g_2)&:&\{e^{i\RR},\cdot\}\rightarrow\{\RR,+\} & \ \ \ ,\ \ \  & 0(g_2)z&=&0,\\
	A_1(g_2)&:&\{\RR,+\}\rightarrow\{\RR,+\} & \ \ \ ,\ \ \  & A_1(g_2)g_1&=&g_1.
\end{array}\]
In the application of $A_{z,1}$ on $(z,g_1)$, the elements of the first row are multiplied, and the elements of the second row are summed.
 The $time$ transform is the unit operator, and the $frequency$ transform is the Fourier transform. The Duflo-Moore operator is $K=I$.

\subsubsection{The 1D wavelet transform} 

The 1D wavelet transform is a DGWT. The semi-direct product structure of the 1D affine group is 
\[A= \RR\rtimes(\RR\times\{1,-1\}) = \RR\rtimes(\RR\rtimes\{1,-1\}) = {\rm translations} \rtimes ({\rm dilations}\rtimes {\rm reflections}),\]
with trivial center. We have 
\[A_1(g_2,g_3):\RR\rightarrow\RR \quad , \quad A_1(g_2,g_3)g_1=g_3e^{g_2}g_1.\]
The set $U$ is $\RR\setminus\{0\}$.
The $time$ transform is the unit operator. The $scale$ transform 
$\Psi_2: L^2(\RR)\rightarrow L^2(\RR)$ is given by
\[[\cF\Psi_2 \cF^*\hf](g_2)=e^{-\frac{1}{2}g_2}\hf(e^{-g_2}).\]

\subsubsection{The rotation-dilation wavelet transform}.

The rotation-dilation wavelet transform is a DGWT. The semi-direct product structure of its group is 
\[A_R= (\RR\times\RR)\rtimes(e^{i\RR}\times \RR) = (\RR\times\RR)\rtimes(e^{i\RR}\rtimes  \RR) = {\rm translations} \rtimes ({\rm rotations}\rtimes {\rm dilations}),\]
with trivial center. We have 
\[A_1(g_2,g_3):\RR^2\rightarrow\RR^2 \quad , \quad A_1(g_2,g_3)\bg_1={\bf R}(g_2){\bf D}(g_3)\bg_1\]
where ${\bf R},{\bf D}$ are the rotation and dilation matrices
\[{\bf R}=  
 \left(
\begin{array}{cc}
	{\rm Re}(g_3) & -{\rm Im}(g_3)\\
	{\rm Im}(g_3) & {\rm Re}(g_3)
\end{array}
\right)
\quad ,\quad  
{\bf D}=\left(
\begin{array}{cc}
	e^{g_2} & 0\\
	0 & e^{g_2}
\end{array}
\right) .\]
The set $U$ is $\RR^2\setminus\{{\bf 0}\}$.

The $position$ transform is the unit transform, 
\[[\Psi_1^xf](g_1^1,g_1^2)=f(g_1^1,g_1^2)\]

Since rotations and dilations commute, we can model the $angle$ and the  $scale$ transforms using one transform.
The $angle-scale$ transform $\boldsymbol{\Psi_2}: L^2(\RR^2)\rightarrow L^2(e^{i\RR}\times \RR)$ is given by
\begin{equation}
[\cF\Psi_2 \cF^*\hf](g_2,g_3)=e^{-g_3}\hf(e^{-g_3}\Re(g_2), e^{-g_3}\Im(g_2)).
\label{eq:angle-scale}
\end{equation}

\subsubsection{The Shearlet transform}

The Shearlet transform is a DGWT. The semi-direct product structure of the Shearlet group is 
\[\begin{split}
A_S =& (\RR\times\RR)\rtimes\big(\RR\rtimes (\RR\times\{1,-1\})\big)= (\RR\times\RR)\rtimes\big(\RR\rtimes (\RR\rtimes\{1,-1\})\big) \\
 = & {\rm translations} \rtimes \big({\rm shears}\rtimes {(\rm anisotropic\ dilations}\rtimes{\rm reflections})\big),
\end{split}\]
with trivial center. We have 
\[A_2(g_3,g_4):\RR\rightarrow\RR\quad , \quad A_2(g_3,g_4)g_2=e^{\frac{1}{2}g_3}g_2\]
\[A_1(g_2,g_3,g_4):\RR^2\rightarrow\RR^2 \quad , \quad A_1(g_2,g_3,g_4)\bg_1=g_4{\bf D}(g_3){\bf S}(g_2)\bg_1\]
where ${\bf D}(g_3),{\bf S}(g_2)$ are the anisotropic dilation and shear matrices
\[{\bf D}(g_3)=\left(
\begin{array}{cc}
	e^{g_2} & 0\\
	0 & e^{\frac{1}{2}g_2}
\end{array}
\right) 
\quad ,\quad  
{\bf S}(g_2)=  
 \left(
\begin{array}{cc}
	1 & g_2\\
	0 & 1
\end{array}
\right).\]
The set $U$ is $\{(\x,\y)\in\RR^2\ |\ \x\neq 0\}$.

The $position$ transform is the unit transform, 
\[[\Psi_1^xf](g_1^1,g_1^2)=f(g_1^1,g_1^2).\]
The $slope$  transform 
$\Psi_2: L^2(\RR^2)\rightarrow L^2(\RR^2)$ is given by
\begin{equation}
[\cF\Psi_2 \cF^*\hf](y_2,g_2)=e^{y_2}\hf(e^{y_2},-g_2 e^{y_2}).
\label{eq:slope_t}
\end{equation}
The $anisotropic\ scale$  transform $\Psi_3: L^2(\RR^2)\rightarrow L^2(\RR^2)$ is given by
\begin{equation}
[\cF\Psi_3 \cF^*\hf](y_3,g_3)=e^{-\frac{3}{4}g_3}\hf(e^{-g_3},y_3e^{-\frac{1}{2}g_3}).
\label{eq:scale_t}
\end{equation}

\section{A wavelet Plancherel theory}
\label{A wavelet Plancherel theory}

The wavelet transform $V_f:\cS\rightarrow L^2(G)$ is an isometric embedding of $\cS$ into $L^2(G)$, and is generally not surjective.
Recall that our goal is to pull-back \textcolor{black}{phase space filters} from phase space to the signal space. As one example, consider a window $f$, a signal $s$, a domain $B\subset G$ and the projection (band-pass filter) $P_B: L^2(G)\rightarrow L^2(G)$ defined for any $F\in L^2(G)$ by 
\[P_B F(g) = \left\{ \begin{array}{ccc}
	F(g) & ,  & g\in B \\
	0 & ,  & g\notin B \\
\end{array}\right. .\]
We want to know if there is some signal $q\in\cS$ such that $V_f[q]= P_B V_f[s]$. Since $V_f[\cS]$ is a space of continuous functions, and band passed continuous functions are in general not continuous, such a $q$ does not exist in general. One might think that considering approximations to band-pass filters, based on multiplying $F$ with a continuous function, might alleviate this problem. However, such an approach also fails. Indeed, in view of the reproducing kernel property of $V_f[\cS]$ (Corollary
\ref{Amb_is_kernel2}), narrow band pass filters cannot be approximated in the ``blurry'' space $V_f[\cS]$.

We thus need to extend the wavelet theory to accommodate \textcolor{black}{the pull-back of phase space filters.} We do this by embedding the signal space $\cS$ in a larger space, and canonically extending the wavelet transform to a mapping between the larger signal space and $L^2(G)$. Considering Proposition \ref{prop_wavelet1}, it is natural to choose the larger signal space as the tensor product of some window space with the signal space, where in the window space we define the inner product via $\ip{K f_2}{K f_1}$. Indeed, using Proposition \ref{prop_wavelet1}, it is almost a triviality that $V_f$ extends to an isometry in the tensor product space. In this section we define this extension accurately. In subsequent sections, we show how to derive closed form formulas for the pull-back of band-pass filters in phase space, or more generally of multiplicative operators in phase space. Moreover, we show that the image space of the wavelet-Plancherel transform is invariant under such multiplicative operators.

\subsection{Tensor products}
\label{Tensor products}

The wavelet Plancherel theory relies on the notion of tensor product.
There are many equivalent (or isomorphic) ways to define the tensor product of two spaces. We adopt a less abstract but convenient definition, based on orthonormal basis of Hilbert spaces \textcolor{black}{(see, e.g., \cite[Chapter 3.4]{tens1})}. In our definition we complex conjugate the coefficients of one of the spaces,  to accommodate comparison with the wavelet orthogonality relation. 
\textcolor{black}{In \cite[Page 9]{Fuhr_wavelet} the same approach to tensor product was considered, but based on the Hilbert–Schmidt operator definition of tensor product.}
Since it is more standard to define tensor products without conjugation, we go through some of the technical details.

Let $\cW,\cS$ be two separable Hilbert spaces. Let $\{\phi_n\}_{n\in\NN}$, $\{\eta_n\}_{n\in\NN}$ be orthonormal basis of $\cW$ and $\cS$ respectively.
We define the Hilbert space $\cW\otimes\cS$ formally as the space with the orthonormal basis
\[\{\phi_n\otimes \eta_m\}_{n,m\in\NN}.\]
We define the inner product of $\cW\otimes\cS$ for basis elements by
\[\ip{\phi_n\otimes \eta_m}{\phi_k\otimes \eta_l} = \delta_{(n,m),(k,l)}.\]
Using this, we can determine uniquely the inner product for general $\cW\otimes\cS$ vectors using linearity and continuity.

This definition by itself seems to carry no information, as all separable Hilbert spaces have orthogonal bases. To add some essence to the tensor product space, we need to define how vectors of the spaces $\cW,\cS$ are related to the space $\cW\otimes\cS$.
There is a
sesquilinear form, dependent on the basis $\{\phi_n\}_{n\in\NN}$ and $\{\eta_n\}_{n\in\NN}$, embedding $\cW\times\cS$ into $\cW\otimes\cS$
\[(f,s)\mapsto f\otimes s.\]
This embedding is defined for $f=\sum_{n\in\NN}c_n \phi_n$ and $s=\sum_{m\in\NN}d_m \eta_m$ by
\[f\otimes s = \sum_{m,n\in\NN}\overline{c_n}d_m \ \phi_n\otimes \eta_m.\]
Note the difference from the standard \textcolor{black}{\textcolor{black}{Hilbert space basis definition (e.g., in \cite[Chapter 3.4]{tens1})}}, in which $c_n$ is not conjugated. 
The image of $\cW\times\cS$ under $\otimes$ is not a linear subspace of $\cW\otimes\cS$. For example, for non-zero linearly independent $f_1,f_2 \in \cW$ and $s_1, s_2 \in \cS$, the sum $f_1\otimes s_1 + f_2\otimes s_2$ is not an image of any $\cW \times \cS$ pair under $\otimes$. 
The space $\cW\otimes\cS$ is the linear closure of the image of $\otimes$.
We call a vector in $\cW\otimes\cS$ of the form $f\otimes s$ a simple tensor.
The inner product of simple tensors is given by
\begin{equation}
\ip{f_1\otimes s_1}{f_2\otimes s_2}_{\cW\otimes\cS} = \overline{\ip{f_1}{f_2}_{\cW}}\ip{s_1}{s_2}_{\cS}.
\label{eq:simple_inner}
\end{equation}

The tensor product operator allows us to embed vectors from $\cW\times\cS$ to $\cW\otimes\cS$. By this, we can define operators in $\cW\otimes\cS$ based on operators in the spaces $\cW,\cS$.
The tensor product of two linear operators $T_1\in {\cal GL}(\cW), T_2\in{\cal GL}(\cS)$ is defined on simple tensors by
\[T_1\otimes T_2(f\otimes s) = (T_1 f)\otimes(T_2 s).\]
Denote the space of finite linear combinations of simple tensors by $[\cW\otimes\cS]_0$, which is a dense subspace of $\cW\otimes\cS$. The operator $T_1\otimes T_2$ has a natural extension to $[\cW\otimes\cS]_0$ via linearity. 
Sometimes operators in  $[\cW\otimes\cS]_0$ can be extended to operators in  $\cW\otimes\cS$. For example, bounded operators can be extended using boundedness, and the fact that $[\cW\otimes\cS]_0$ is dense in $\cW\otimes\cS$. Symmetric operators $T$ in $[\cW\otimes\cS]_0$ can sometimes be extended to self-adjoint operators in $\cW\otimes\cS$ by explicitly finding a self-adjoint operator $T'$ in $\cW\otimes\cS$ such that the restriction of $T'$ to $[\cW\otimes\cS]_0$ is equal to $T$. \textcolor{black}{Such a case will be discussed in Subsection \ref{The pull-back of phase space observables}, and especially in Proposition \ref{prop:SPWT_pull_obs}.}

The tensor product definition depends on the basis of $\cW,\cS$, and different bases lead to different tensor product operators $\otimes:\cW\times\cS\rightarrow \cW\otimes\cS$. However, there is a natural isometric isomorphism between different tensor product spaces via the change of basis.
Namely , let  $\{\phi_n\}_{n\in\NN}$, $\{\phi'_n\}_{n\in\NN}$ be two basis of $\cW$, and $\{\eta_n\}_{n\in\NN}$, $\{\eta'_n\}_{n\in\NN}$ two basis of $\cS$. Let $\mathbf{\Phi}$ and $\mathbf{\Psi}$ be the change of basis matrices, namely
\[{\Phi}_{nn'}=\ip{\phi_n}{\phi'_{n'}} \quad, \quad {\Psi}_{mm'}=\ip{\eta_m}{\eta'_{m'}}.\]
We define, by abuse of notation, for simple tensors 
\begin{equation}
(\mathbf{\Phi}\otimes\mathbf{\Psi})f\otimes s 
:= f\otimes' s,
\label{eq:canonical_transform}
\end{equation}
namely
\[
{    (\mathbf{\Phi}\otimes\mathbf{\Psi})\sum_{n,m}\overline{\ip{f}{\phi_{n}}}\ip{s}{\eta_{m}} \phi_{n}\otimes\eta_{m}
= \sum_{n',m'}\overline{\ip{f}{\phi_{n'}'}}\ip{s}{\eta_{m'}'} \phi_{n'}'\otimes\eta_{m'}'},
\]
where $\otimes$,$\otimes'$ denote the tensor product definitions with the basis  $\{\phi_n\}_{n\in\NN}$, $\{\eta_n\}_{n\in\NN}$ and  $\{\phi'_n\}_{n\in\NN}$, $\{\eta'_n\}_{n\in\NN}$ respectively.
Then \textcolor{black}{the extension} $\mathbf{\Phi}\otimes\mathbf{\Psi}:\cW\otimes\cS \rightarrow \cW\otimes'\cS$ is an isometric isomorphism explicitly given by
\[\mathbf{\Phi}\otimes\mathbf{\Psi}\sum_{m,n\in\NN}b_{mn}\phi_m\otimes\eta_n=\sum_{m,n\in\NN}b_{mn}\phi_m\otimes'\eta_n 
= \sum_{m',n'\in\NN}\Big(\sum_{m,n\in\NN}\overline{\Phi_{nn'}}\Psi_{mm'}b_{mn}\Big)\phi'_{m'}\otimes'\eta'_{n'}.\]
This definition is consistent with tensor products of operators, in the sense that for bounded operators $T_1,T_2$
\begin{equation}
(\mathbf{\Phi}\otimes\mathbf{\Psi}) (T_1\otimes T_2)    = (T_1\otimes' T_2).
\label{eq:Tens_consist}
\end{equation}
 Identity (\ref{eq:Tens_consist}) is also true for unbounded operators in the space $[\cW\otimes\cS]_0$.

\begin{example}
\label{Ex_tens_L2R}
Let ${\cal M},{\cal M}'$ be two smooth manifolds with Radon measure.
	In this example we show how the space $L^2({\cal M})\otimes L^2({\cal M}')$ is isomorphic to $L^2({\cal M}\times{\cal M}')$. 
	\textcolor{black}{We consider a definition of a tensor product operator of functions $\phi,\eta$, given by
	\[\overline{\phi(y)}\eta(x).\]
	This is also called an \emph{outer product}, and used to describe the joint probability distribution of independent variables in probability theory.}
	We identify the tensor product of two basis elements with the outer product by
	\[\phi_n\otimes\eta_m = \overline{\phi_n(y)}\eta_m(x).\]
	Now, the tensor product operator $\otimes$ can be identified with the outer product
	$(f\otimes s) (x,y) = \overline{f(y)}s(x)$.
	Indeed,
	\[\sum_{m,n\in\NN}\overline{\ip{f}{\phi_n}}\ip{s}{\eta_m} \ (\phi_n\otimes \eta_m) (x,y)=
	\sum_{m,n\in\NN}\overline{\ip{f}{\phi_n}\phi_n(y)}\ip{s}{\eta_m}\eta_m(x)=\overline{f(y)}s(x).\]
	This shows that the outer product definition is consistent with our definition of tensor product. 
	%
\end{example}  

\begin{remark}
\label{remark_tensor_op}
\textcolor{black}{Another equivalent way to define the tensor product $\cW\otimes\cS$ is  as the space of Hilbert–Schmidt operator $\cW\rightarrow\cS$ (see \cite[Page 9]{Fuhr_wavelet}). Here, simple tensors $f\otimes s$ are identified with the rank one operators 
\[\cW\ni q \mapsto \ip{q}{f}s \in \cS,\]
and $\sum_{m,n\in\NN}b_{mn}\phi_m\otimes\eta_n$ is identified with the Hilbert–Schmidt operator
\[\cW\ni q \mapsto \sum_{m,n}b_{mn}\ip{q}{\phi_m}\eta_n \in \cS.\]}
\end{remark}

\subsection{The window-signal space}
\label{The window-signal space}

First, we define a Hilbert space of windows.
Let ${\cal D}(K)\subset\cS$ be the domain of the Dulfo-Moore operator $K$.
Consider the inner product space ${\cal D}(K)$, with the inner product defined by
\begin{equation}
\ip{f_1}{f_2}_{\cW} = \ip{Kf_1}{Kf_2}_{\cS}.
\label{eq:norm_KK}
\end{equation}
\begin{claim}
The inner product space ${\cal D}(K)$ with the inner product (\ref{eq:norm_KK}) is separable.
\end{claim}
\begin{proof}
\textcolor{black}{This follows from the fact that $K$ is densely defined with densely defined inverse (Proposition \ref{prop_wavelet1}). Consider the domain $\mathcal{D}(K^{-1})$ of $K^{-1}$.  There exists a sequence $\{q_k\}_{k\in\NN}\subset \mathcal{D}(K^{-1})$, dense with respect to the inner product of $\cS$. We show that $\{K^{-1}q_k\}_{k\in\NN}$ is dense in $\mathcal{D}(K)$ with respect to the inner product (\ref{eq:norm_KK}). Let $\e>0$ and $r\in\mathcal{D}(K)$. There exists $q\in \mathcal{D}(K^{-1})$ such that $K^{-1}q=r$, and $q_k$ such that $\norm{q_k-q}_{\cS}<\e$. Thus 
\[\norm{K^{-1}q_k-r}_{\cW}=\norm{K^{-1}q_k-K^{-1}q}_{\cW}=\norm{q_k-q}_{\cS}<\e.\]}
\end{proof}

\begin{definition}
\label{def:WS}
Consider the above setting.
\begin{itemize}
	\item 
	The \emph{window space} $\cW$ is defined to be the completion to a Hilbert space of the inner product space $\mathcal{D}(K)$ with inner product (\ref{eq:norm_KK}).
	\item
	 The \emph{window-signal space} is defined to be $\cW\otimes\cS$.
\end{itemize}
\end{definition}

Note that ${\cal D}(K)=\cW\cap\cS$ is dense both in $\cW$ and $\cS$.

\begin{example}
\label{ex_waveWS}
Consider the 1D wavelet transform, and restrict the analysis to the frequency domain. Then we have $\cS=L^2(\RR; d\w)$ with the usual inner product, and $\cW=L^2(\RR; \frac{1}{\abs{\w}}d\w)$ with the inner product
\[\ip{f_1}{f_2}_{\cW} = \int_{\RR}\hf_1(\w) \overline{\hf_2(\w)}\frac{1}{\abs{\w}}d\w.\]
By Example \ref{Ex_tens_L2R}, up to an isometric isomorphism, we have $[f\otimes s](\w_1,\w_2) = \overline{\hf(\w_1)}\hs(\w_2)$. Moreover,  $\cW\otimes\cS$ is isometrically isomorphic to $L^2(\RR^2; \frac{1}{\abs{\w_2}}d\w_1d\w_2)$, and for $F_1,F_2\in \cW\otimes\cS$, we have 
	\[\ip{F_1}{F_2} = \int_{\RR^2} F_1(\w_1,\w_2)\overline{F_2(\w_1,\w_2)} \frac{1}{\abs{\w_2}}d\w_1d\w_2.\]
	Note that in this example the window-signal space is interpreted as a frequency domain function space. 
\end{example} 

\subsection{The $quantity_m$ transforms of window-signal spaces}

In this subsection we generalize Example \ref{ex_waveWS} for simply dilated SPWTs.   
Consider the $quantity_m$ transform, restricted to the domain \textcolor{black}{$(\cW\cap\cS)$, namely,} $\Psi_m:(\cW\cap\cS)\rightarrow L^2(Y_m\times N_m)$. Consider the positive operator $\Psi_m K \Psi_m^*$ on $\Psi_m (\cW\cap\cS)\subset L^2(Y_m\times N_m)$. Similarly to the construction of the window space, we define the inner product $\ip{\Psi_m K \Psi_m^* F}{\Psi_m K \Psi_m^* Q}$ in $\Psi_m(\cW\cap\cS)$, and extend this separable inner product space to a Hilbert space $L^2(Y_m\times N_m; \cW)$. We formally treat $L^2(Y_m\times N_m; \cW)$ as a space of functions, as it contains a dense subspace of functions. We extend $\Psi_m$ to an isometric isomorphism $\Psi'_m:\cW\rightarrow L^2(Y_m\times N_m; \cW)$. For symmetry of the notation, we denote the $quantity_m$ space of the signal space by $L^2(Y_m\times N_m; \cS):=L^2(Y_m\times N_m)$. We denote 
\begin{equation}
L^2\big((Y_m\times N_m)^2\big):= L^2(Y_m\times N_m; \cW) \otimes L^2(Y_m\times N_m; \cS).
\label{eq:quant_m_dom_WS}
\end{equation}
 We define $\Psi'_m\otimes \Psi_m$ on simple tensors by $[\Psi'_m\otimes \Psi_m](f\otimes s)=(\Psi_m' f)\otimes(\Psi_m s)$, and extend to an isometric isomorphism $\Psi'_m\otimes \Psi_m: \cW\otimes\cS \rightarrow L^2\big((Y_m\times N_m)^2\big)$. Consider two bases of 
 functions $\{\phi_n\}_{n\in\NN}$ and $\{\eta_n\}_{n\in\NN}$ of $L^2(Y_m\times N_m; \cW)$ and $L^2(Y_m\times N_m; \cS)$ respectively. Consider the bases $\{\Psi^{\prime *}_m\phi_n\}_{n\in\NN}$ and $\{\Psi_m^*\eta_n\}_{n\in\NN}$ of $\cW$ and $\cS$ respectively. Similarly to Example \ref{Ex_tens_L2R}, for the corresponding $\otimes$ operator, we define $\otimes$ explicitly by
\begin{equation}
[f\otimes_m s](g_m',y_m',g_m,y_m) = \overline{\Psi_m' f(g_m',y_m')}\Psi_m s(g_m,y_m).
\label{eq:L2WS2}
\end{equation}
The function notation in (\ref{eq:L2WS2}) is formal, and true for $f,s\in \cW\cap \cS$. Note that the explicit definition of $\otimes$ as $\otimes_m$ depend on $m$. We generally denote by $\otimes$ the abstract tensor product, and by $\otimes_m$ the explicit definition corresponding to $quantity_m$.

\subsection{A wavelet Plancherel transform}
\label{A wavelet Plancherel transform}

Given a window-signal space, with a $\otimes$ operator, there is a canonical extension of the wavelet transform to the window-signal space. Similarly to the classical Fourier theory in $L^2$, the extended  wavelet transform is first defined explicitly on a dense subspace, and then extended to the whole space via boundedness.

Given $f\in \cW\cap \cS$ and $s\in \cS$, the mapping $V:\cW\otimes\cS\rightarrow L^2(G)$ is defined for any simple tensor $f\otimes s$ with $f\in \cW\cap\cS$ by
\[V(f\otimes s) = V_f[s].\]
The mapping $V$ is defined on the space $[(\cW\cap\cS)\otimes\cS]_0$ of finite linear combinations of such simple tensors by linearity. Note that $[(\cW\cap\cS)\otimes\cS]_0$ is dense in $\cW\otimes\cS$.
Moreover, by Proposition \ref{prop_wavelet1}, $V$ is an isometry. It can thus be extended to $V:\cW\otimes\cS\rightarrow L^2(G)$, which we call the wavelet-Plancherel transform. 

\begin{theorem}[Wavelet Plancherel Theorem]
\label{Th:WP}
The wavelet-Plancherel transform $V$ is an isometric isomorphism between $\cW\otimes\cS$ and $V(\cW\otimes\cS)$.
\end{theorem}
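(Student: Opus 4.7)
The strategy is a standard three-step construction: first verify that $V$ is well-defined and linear on the dense subspace $[(\cW\cap\cS)\otimes\cS]_0$ of finite linear combinations of simple vectors, then verify the isometry identity on this subspace, and finally extend to all of $\cW\otimes\cS$ by continuity. Because elements of $[(\cW\cap\cS)\otimes\cS]_0$ admit many representations as sums of simple vectors, the first point really amounts to checking that $\sum_i c_i f_i\otimes s_i = 0$ in $\cW\otimes\cS$ forces $\sum_i c_i V_{f_i}[s_i]=0$ in $L^2(G)$, and this will follow automatically from the isometry identity.

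The key computation is to match the two sesquilinear forms on finite sums. For $F=\sum_i c_i f_i\otimes s_i$ with $f_i\in\cW\cap\cS$, Proposition \ref{prop_wavelet1} (equation (\ref{eq:ortho_reco})) gives
\[\|V(F)\|_{L^2(G)}^2=\sum_{i,j}\overline{c_i}c_j\,\ip{V_{f_i}[s_i]}{V_{f_j}[s_j]}_{L^2(G)}=\sum_{i,j}\overline{c_i}c_j\,\ip{s_i}{s_j}_{\cS}\ip{Kf_j}{Kf_i}_{\cS}.\]
On the other hand, by the simple-vector inner product (\ref{eq:simple_inner}) and the very definition of the window-space inner product $\ip{f_i}{f_j}_{\cW}=\ip{Kf_i}{Kf_j}_{\cS}$,
\[\|F\|_{\cW\otimes\cS}^2=\sum_{i,j}\overline{c_i}c_j\,\overline{\ip{f_i}{f_j}_{\cW}}\ip{s_i}{s_j}_{\cS}=\sum_{i,j}\overline{c_i}c_j\,\ip{Kf_j}{Kf_i}_{\cS}\ip{s_i}{s_j}_{\cS},\]
which is the same expression. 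Hence $\|V(F)\|_{L^2(G)}=\|F\|_{\cW\otimes\cS}$ on $[(\cW\cap\cS)\otimes\cS]_0$, which simultaneously establishes well-definedness, linearity, and isometry on this subspace.

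For the final step I would use density. Since $\cW\cap\cS=\cD(K)$ is dense in $\cW$ (by construction of $\cW$ as the completion of $\cD(K)$ in the $K$-norm), one can choose an orthonormal basis of $\cW$ lying in $\cW\cap\cS$; coupling it with any orthonormal basis of $\cS$ shows that the basis of $\cW\otimes\cS$ lies in $[(\cW\cap\cS)\otimes\cS]_0$, so this subspace is dense. A bounded isometry on a dense subspace of a Hilbert space extends uniquely by continuity to an isometry on the whole space, and the extension preserves norms (hence inner products, by polarization). An isometry between Hilbert spaces is automatically injective with closed range, so redefining the codomain as $V(\cW\otimes\cS)$ yields the asserted isometric isomorphism. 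The only real subtlety—and the main obstacle if any—is confirming density of $[(\cW\cap\cS)\otimes\cS]_0$ in $\cW\otimes\cS$; this rests on the density of $\cD(K)$ in the window norm, which in turn uses the dense definition of $K$ and of $K^{-1}$ from Proposition \ref{prop_wavelet1}.
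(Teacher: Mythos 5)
Your proposal is correct and takes essentially the same route as the paper: define $V$ on simple vectors $f\otimes s$ with $f\in\cW\cap\cS$, match the orthogonality relation (\ref{eq:ortho_reco}) against the simple-vector inner product (\ref{eq:simple_inner}) together with $\ip{f_1}{f_2}_{\cW}=\ip{Kf_1}{Kf_2}_{\cS}$ to get isometry on $[(\cW\cap\cS)\otimes\cS]_0$, and extend by density. The paper treats the theorem as immediate from Proposition \ref{prop_wavelet1} and asserts the density of $[(\cW\cap\cS)\otimes\cS]_0$ without comment; you correctly supply the two details it leaves implicit, namely well-definedness on non-unique representations of finite sums and the density argument via an orthonormal basis of $\cW$ chosen inside $\cD(K)$.
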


\textcolor{black}{From an abstract continuous wavelet theory point of view, Theorem \ref{Th:WP} may seem like a triviality. For example, the image space $V(\cW\otimes\cS)$ is treated in \cite[Theorem 2.33]{Fuhr_wavelet}. However, we choose to write this result as a theorem, not for the pure mathematical novelty, but since we believe that it is fundamental and useful. When interpreting $V$ as a ``Plancherel transform,'' we open the door to practical signal processing applications akin to classical Fourier analysis. We believe that the Wavelet Plancherel Theorem deserves a place in the toolkit of the continuous wavelet practitioner.}

It may seem at this stage that the wavelet Plancherel extension of the classical wavelet theory does not give any advantage over the classical theory. Indeed, the wavelet Plancherel theorem only states that $V$ is an isometric isomorphism to its images space, a property that $V_f$ already exhibits. However, as will be shown in Proposition \ref{prop:VWS}, the function space $V(\cW\otimes\cS)$ has a much more accessible characterization than $V_f[\cS]$, which is characterized as a reproducing kernel Hilbert space (see e.g. \cite{Fuhr_wavelet}). Moreover, $V(\cW\otimes\cS)$ is invariant under multiplication by a rich class of functions, whereas $V_f[\cS]$ is not.

\begin{example} 
\label{Example:1Dwave453}
In the 1D wavelet-Plancherel transform, $V(\cW\otimes\cS)= L^2(G)$, by Proposition \ref{prop:VWS} ahead, with center $Z=\{e\}$ of the 1D affine group.
The wavelet-Plancherel transform and its inverse are given by (\ref{eq:1Dwav_slope}) and Subsection \ref{Inversion formula for the wavelet-Plancherel transform}.
%

The pull-back of the classical synthesis $V_f^*:L^2(G)\rightarrow \cS$ to $V_f^*V:\cW\otimes\cS\rightarrow \cS$ is given by
\begin{equation}
V_f^*V F (\w) = \int_{\RR} F(\w',\w){f(\w')}d\w'.
\label{eqg:rr6y}
\end{equation}
Indeed, for simple tensors $[q\otimes s](\w',w)=\overline{q(\w')}s(\w)$,
\[[V_f^*V q\otimes s](\w)= [V_f^*V_q s] (\w) = \ip{f}{q}_{\cW}s(\w) = \int f(\w')\overline{q(\w')}s(\w) d\w'=\int f(\w')[q\otimes s](\w',\w) d\w'.\]
\end{example}

\begin{remark}
\label{remark:WP_invv}
\textcolor{black}{
The inversion formula of the wavelet Plancherel transform can be treated formally by identifying the tensor product space $\cW\otimes\cS$ with the space of Hilbert–Schmidt operators described in Remark \ref{remark_tensor_op}. 
The adjoint $V^*:L^2(G)\rightarrow\cW\otimes\cS$ can be written as the operator
\textcolor{black}{
\begin{equation}
   [V^*Q]_{\rm op} = \int_G Q(g)\pi(g)d\mu(g)
    \label{eq:inver_WP}
\end{equation}}
for  any $Q\in L^2(G)$.
\textcolor{black}{Note that (\ref{eq:inver_WP}) is an integral of operators, defined in the weak sense. By Remark \ref{remark_tensor_op}, the operator in (\ref{eq:inver_WP}) represents an element in $\cW\otimes\cS$. For signal and window  bases $\{\eta_n\}_n$ and  $\{\phi_m\}_m$ respectively, the right-hand-side of (\ref{eq:inver_WP}) is identified with the window-signal
\begin{equation}
  V^*Q=\sum_{n,m} \ip{\eta_m}{[V^*Q]_{\rm op}\phi_m}\phi_m\otimes\eta_n = \sum_{n,m}\int_G \ip{\eta_m}{Q(g)\pi(g)\phi_n}d\mu(g)\  \phi_m\otimes\eta_n.
    \label{eq:inver_WP01}
\end{equation}
} Equations (\ref{eq:inver_WP}) and (\ref{eq:inver_WP01}) are interpreted as an exhaustive characterization of $Q$, describing how $Q$ is synthesized to different signals using different windows. \textcolor{black}{Namely, given the window $f$, the view $[V^*Q]_{\rm op} f$ of the operator $[V^*Q]_{\rm op}$ is equal to $V_f^* Q$.}
}

\textcolor{black}{
Formula (\ref{eq:inver_WP}) is related to the inverse Plancherel transform for nonunimodular groups (see \cite[Theorem 3.48]{Fuhr_wavelet}), where in (\ref{eq:inver_WP}) we restrict to one discrete series representation and consider operators $\cW\rightarrow\cS$ instead of operators $\cS\rightarrow\cS$. 
}
\end{remark}

\section{Wavelet Plancherel phase space filtering}
\label{Wavelet Plancherel phase space filtering}

In this section we show how to calculate the pull-back of phase space multiplicative operators to the window-signal space. 

As a first step towards the pull-back of phase space multiplicative operators, we consider building blocks of such operators, namely phase space observables. To illustrate the idea, consider the 1D wavelet transform. Consider the two operators $\breve{N}_1$ and $\breve{N}_2$, defined for functions $F$ in a properly defined dense subspace\footnote{\textcolor{black}{The domain of $\breve{N}_1$ is $\left\{F\in L^2(G)\ |\ \big((g_1,g_2)\mapsto g_1 F(g_1,g_2)\big)\in L^2(G)\right\}$, and similarly for $\breve{N}_2$.}} of $L^2(G)$ by 
\[[\breve{N}_1 F](g_1,g_2)= g_1 F(g_1,g_2) \quad  , \quad [\breve{N}_2 F](g_1,g_2)= g_2 F(g_1,g_2).\] 
Note that such a definition is only possible since the 1D wavelet transform is based on a group of tuples of numerical values (it is a SPWT), \textcolor{black}{so multiplying the value $F(g_1,g_2)$ with the coordinate $g_1$ is well defined}. The operators $\breve{N}_1 ,\breve{N}_2$ multiply each point in phase space by the value of its physical quantity. It is customary in quantum mechanics to call such operators the observables of $qunatity_1$ and $quantity_2$ \cite{HAPS,quantum_measure}. In our case,  $\breve{N}_1 ,\breve{N}_2$ are the \textit{time} and the \textit{scale} observables. Now, consider a general time-multiplicative operator $\breve{R}_1$, defined by 
\[[\breve{R}_1 F](g_1,g_2)= R_1(g_1) F(g_1,g_2),\] 
where $R_1:\RR\rightarrow\CC$ is a measurable function.
 This operator can be written as $\breve{R}_1 = R_1(\breve{N}_1)$, where the composition of $R_1$ on $\breve{N}_1$ is understood in the functional calculus sense (see Remark \ref{Band_limit_poly} in Appendix \ref{Functional calculus}). Suppose we are able to pull back $\breve{N}_1$ to $\cW\otimes\cS$, and get the operator $V^*\breve{N}_1V$. Then, by (\ref{eq:pull_func_calc}) we have
\[R_1(V^*\breve{N}_1V)=V^*R_1(\breve{N}_1)V= V^*\breve{R}_1V.\]
This shows that if there is a computational tractable way to calculate functions of $V^*\breve{N}_1V$, we can calculate phase space filters directly in the window-signal space \textcolor{black}{as $R_1(V^*\breve{N}_1V)$}. %
In Subsection \ref{The pull-back of phase space observables}, we start with a formula for pulling back observables from phase space in the case of SPWT.
In Subsection \ref{The pull-back of phase space filters for geometric wavelet transforms}, we show that there is a first order linear PDE underlying each DGWT, whose solutions provide efficient approximations of a large class of phase space filters.

\subsection{\textcolor{black}{Theoretical preparation for the pull-back proposition}}
\label{The pull-back of phase space observables2}

To formulate the pull-back proposition (Proposition \ref{prop:SPWT_pull_obs} below), along with its assumptions, we first recall some general definitions and properties from functional analysis. 

\subsubsection{Restriction and closure of operators}

	Given a linear operator $T$, if the closure of the graph of $T$ is the graph of an operator, this operator is called the closure of $T$, and we say that $T$ is closable \textcolor{black}{\cite{rudin}}. In this case, we denote the closure of $T$ by $\overline{T}$. For any vector $v$ in the domain of $\overline{T}$, there exsists a sequence of vectors $\{v_j\}_{j\in\NN}$ in the domain of $T$, such that $\lim_{j\rightarrow\infty}v_j=v$ and $\lim_{j\rightarrow\infty}Tv_j=\overline{T}v$. 

For a linear operator $T$ in a Hilbert space $\cH$, and a subspace $\mathcal{V}$ of the domain of $T$, we denote by $T\big|_{\mathcal{V}}$ the restriction of $T$ to $\mathcal{V}$.
Recall that not every symmetric operator can be closed, or extended, to a self-adjoint operator. Especially, if we take a self-adjoint operator $T$ and restrict it to an invariant subspace $\mathcal{V}$ of $T$, we are not guaranteed that the symmetric operator $T\big|_{\mathcal{V}}$ is closable to a self-adjoint operator\footnote{
	For example, $i\frac{\partial}{\partial x}$ in $L^2(\RR)$ is self-adjoint for a properly defined subspace, and symmetric for smooth function with support in $\RR_+$. However, this restriction of $i\frac{\partial}{\partial x}$ cannot be extended to a self-adjoint operator.
	}.

\subsubsection{Analysis with tuples}
For the pull-back formula, we rely on calculations of tuples of operators, vectors, and tensor products, and matrices of operators.
Fix $M\in\NN$.
 The tensor product of a vector $f$ with a tuple of vectors ${\bf s}=(s_1,\ldots,s_M)$ is the tuple of vectors $f\otimes{\bf s}=(f\otimes s_1,\ldots,f\otimes s_M)$, and similarly for a tuple ${\bf f}=(f_1,\ldots, f_M)$ and a vector $s$. The tensor product of a tuple of vectors ${\bf f}$ with the tuple ${\bf s}$ is defined to be ${\bf f}\otimes{\bf s}=(f_1\otimes s_1,\ldots,f_M\otimes s_M)$. An operator valued matrix ${\bf A}$ is an $M\times M$ array of operators $A_{j,k}$, and the application of ${\bf A}$ on the tuple of vectors ${\bf F}= (F_1,\ldots, F_M)$ is defined to be the tuple of vectors ${\bf A}{\bf F}$ with entries
\[[{\bf A}{\bf F}]_j = \sum_{k=1}^M A_{j,k}F_k\]
for $j=1,\ldots,M$.

\subsection{\textcolor{black}{Phase space observables, pull-back, and invariance properties}}
Consider a simply dilated SPWT, with the notations of Definition \ref{ass_voice2}. Let $m=1,\ldots,M$, and consider the subgroup $N_m= G_m^1\times\ldots\times G^{K_m}_m$. For each $k=1,\ldots,K^m$, consider the self-adjoint observable $\breve{G}_m^k$ defined by
\begin{equation}
[\breve{G}_m^k F](g) = g_m^k F(g).
\label{eq:obs}
\end{equation}
on the domain of $L^2(G)$ functions such that (\ref{eq:obs}) is in $L^2(G)$.
It is beneficial to model all of the observables $\breve{G}_m^1,\ldots,\breve{G}^{K_m}_m$ ``at once'', as a tuple of operators.

\begin{definition}
\textcolor{black}{
The \emph{multi-observable} ${\bf\breve{N}}_m$, mapping functions in $L^2(G)$ to $L^2(G)^{K_m}$, is defined by 
\begin{equation}
[{\bf\breve{N}}_m F](g) = \big(g_m^1 F(g),\ldots,g_m^{K_m} F(g) \big).
\label{eq:multi_obs}
\end{equation}
The domain ${\cal D}({\bf\breve{N}}_m)$ of ${\bf\breve{N}}_m$ is the space of functions such that each entry of (\ref{eq:multi_obs}) is in $L^2(G)$.}
\end{definition}
 Note that each coordinate of ${\bf\breve{N}}_m$ is either unitary or essentially self-adjoint, as it is uniquely extended to the corresponding self-adjoint observable $\breve{G}_m^k$.
 
 \begin{definition}
 \textcolor{black}{The \emph{pulled-back multi-observable} is defined as
\begin{equation}
{\bf\bT}_m = V^*{\bf\breve{N}}_mV,
\label{eq:multi_pull01}
\end{equation}
on the domain ${\cal D}({\bf\breve{T}}_m)$ of vectors $F\in \cW\otimes\cS$ such that $VF\in {\cal D}({\bf\breve{N}}_m)$.} 
 \end{definition}
 For each $m=1,\ldots,M$, let $\Psi_m$ be the $quantity_m$ transform of $\pi$, and consider the $quantity_m$-domain tuple of observables ${\bf\bQ}_m$, as in Definition \ref{def:quantity_trans}. The observable ${\bf\bQ}'_m$ is defined similarly as a tuple of multiplicative operators in $L^2(Y_m\times N_m ; \cW)$.

Our goal is to show that the space $V(\cW\otimes\cS)$ is invariant under the application of observables. Let ${\bf\breve{N}}_m\big|_{V(\cW\otimes\cS)}$ be the restriction of  ${\bf\breve{N}}_m$ to  the subdomain ${\cal D}({\bf\breve{N}}_m)\cap V(\cW\otimes\cS)$. The range of ${\bf\breve{N}}_m\big|_{V(\cW\otimes\cS)}$  is $L^2(G)$. 
To prove the invariance property of $V(\cW\otimes\cS)$, we derive an explicit formula of ${\bf\breve{N}}_m\big|_{V(\cW\otimes\cS)}$ of the form
\begin{equation}
{\bf\breve{N}}_m\big|_{V(\cW\otimes\cS)} = V {\bf\bT}'_mV^*
\label{eq:temp5yr}
\end{equation}
where ${\bf\bT}'_m$ is some linear operator in $\cW\otimes\cS$. By (\ref{eq:multi_pull01}), (\ref{eq:temp5yr}), and the fact that $V^*V$ is the identity, we have
\[{\bf\bT}_m = V^*{\bf\breve{N}}_mV = V^*V {\bf\bT}'_mV^*V = {\bf\bT}'_m\]
so
\begin{equation}
{\bf\breve{N}}_m\big|_{V(\cW\otimes\cS)} = V {\bf\bT}_mV^*.
\label{eq:tempd73g60}
\end{equation}
No matter the form of ${\bf\bT}_m$, equation (\ref{eq:tempd73g60}) means that ${\bf\breve{N}}_m\big|_{V(\cW\otimes\cS)}$ maps to $V(\cW\otimes\cS)$, and thus $V(\cW\otimes\cS)$ is invariant under  ${\bf\breve{N}}_m$.

 For clarity, we would like to point the attention to the difference between ${\bf\breve{N}}_m\big|_{V(\cW\otimes\cS)}$ and $V V^*{\bf\breve{N}}V V^*$. 
Since $V$ is an isometric embedding, $VV^*$ is the orthogonal projection upon $V(\cW\otimes\cS)$, and not the identity.
Thus, the invariance property cannot be deduced from the identity 
\[V V^*{\bf\breve{N}}V V^*=V{\bf\bT}_mV^*.\]


More generally, we want to show that $V(\cW\otimes\cS)$ is invariant under the application of some class of multiplicative operators, \textcolor{black}{that we call \emph{phase space filter},} to be define next.
 Consider the quotient group $G/Z$, where $Z$ is the center of $G$. The quotient group $G/Z$ has the semi-direct product structure
\[\begin{split}
G \ \ \ \ \  &= H_0 \\
H_m \ \ \  &= N_{m+1}\rtimes H_{m+1} \quad , \quad m=0,\ldots, M-2 \\
 H_{M-1} &= N_m
\end{split}\]
with the same $A_m$ automorphisms as of $G$, and the left Haar measure of the group $G/Z$
(see  \cite[Equation (64)]{MyRef}). In the next definition, we
 consider the space $L^{\infty}(G/Z)$ of essentially bounded measurable functions.
 \begin{definition}
 \textcolor{black}{Denote by ${\cal M}^{\infty}(G) \subset L^{\infty}(G)$ the space of functions $R\in  L^{\infty}(G)$ of the form 
\[R({\bf z},{\bf g}_1,\ldots,{\bf g}_M)=Q({\bf g}_1,\ldots,{\bf g}_M)\]
for some $Q\in L^{\infty}(G/Z)$. We call the space $\breve{{\cal M}}^{\infty}(G)$ of multiplicative operators by ${\cal M}^{\infty}(G)$ functions the \emph{phase space filters space}. Namely $\breve{R}\in \breve{{\cal M}}^{\infty}(G)$ if there exists $R\in{\cal M}^{\infty}(G)$ such that for every $F\in L^2(G)$
\[[\breve{R} F](g)=R(g)F(g).\]
We call such $\breve{R}$ a \emph{phase space filter}.}
 \label{def:PS_filt}
 \end{definition}
 
Proposition \ref{prop:SPWT_pull_obs} below shows that $V(\cW\otimes\cS)$ is invariant under phase space filters in $\breve{{\cal M}}^{\infty}(G)$s. Moreover, it gives an explicit formula for ${\bf\bT}_m$.

	%
	%
	%
	%
	%
	
	\subsection{The pull-back of phase space observables}
\label{The pull-back of phase space observables}

Consider a simply dilated SPWT.
For each $m=1,\ldots,M$,
consider the $\otimes=\otimes_m$ operator based on signal and window $quantity_m$ function bases, as in (\ref{eq:L2WS2}). 
Consider the tuples of multiplicative operators ${\bf \breve{Q}}'_m\otimes {\bf I},{\bf I}\otimes{\bf \breve{Q}}_m, {\bf \breve{Q}}'_m\otimes{\bf \breve{Q}}_m$ in $L^2\big((Y_m\times N_m)^2\big)$, defined by
		\begin{equation}
		\begin{split}
		[{\bf \breve{Q}}'_m\otimes {\bf I}\  F ](y_m',g_m',y_m,g_m)= & \overline{{\bf g}_m'}F(y_m',g_m',y_m,g_m)\\
		[{\bf I}\otimes{\bf \breve{Q}}_m  F ](y_m',g_m',y_m,g_m)= & {\bf g}_mF(y_m',g_m',y_m,g_m) \\
		[{\bf \breve{Q}}'_m\otimes{\bf \breve{Q}}_m  F ](y_m',g_m',y_m,g_m)= & \overline{{\bf g}_m'}{\bf g}_mF(y_m',g_m',y_m,g_m) \\
		\end{split}
		\label{eq:QXI}
		\end{equation}
		Denote by $[\cW\otimes\cS]_0$ the space of (finite) linear combinations of simple tensors in $\cW\otimes\cS$. 
Denote $V(\cW\otimes\cS)_0=V([\cW\otimes\cS]_0)$. 

		In Proposition \ref{prop:SPWT_pull_obs} below, we derive explicit formulas for the pull-back of phase space observables, and establish the invariance of $V(\cW\otimes\cS)$ under $\breve{{\cal M}}^{\infty}(G)$ filters. \textcolor{black}{The formulas are computed by induction, where the formula for ${\bf\bT}_m$ is derived by those of  ${\bf\bT}_{m'}$ with $m'=m+1,\ldots,M$, using the following definition.}
		
		\begin{definition}
		\textcolor{black}{Let ${\bf\bT}_1,\ldots,{\bf\bT}_M$ be the pulled-back multi-observables of a simply dilated SPWT. We define the \emph{composite  pulled-back multi-observables} ${\bf\bT}'_m$, for each $m=1,\ldots, M-1$, as follows.
		\begin{itemize}
	\item if $G_m^1$ is $\RR$ or $\ZZ$ then
\begin{equation}
		\begin{split}
		{\bf\breve{T}}'_m= &\ \ [\Psi_m\otimes\Psi_m]^*[{\bf I}\otimes{\bf \breve{Q}}_m][\Psi_m\otimes\Psi_m]\\
		& -{\bf A}_m({\bf\bT}_{m+1},\ldots,{\bf \bT}_M) [\Psi_m\otimes\Psi_m]^*[{\bf \breve{Q}}'_m\otimes {\bf I}][\Psi_m\otimes\Psi_m].
		\end{split}
		\label{eq:pull1_self_adjoint2}
		\end{equation}
where the expression ${\bf A}_m({\bf\bT}_{m+1},\ldots,{\bf \bT}_M)$ is an operator valued matrix, understood in the functional calculus sense (see Remark \ref{simul_A_m})
\item
if $G_m^1$ is $e^{i\RR}$ or $e^{2\pi i \ZZ/N}$ then
\begin{equation}
		{\bf\breve{T}}'_m=[\Psi_m\otimes\Psi_m]^*[{\bf \breve{Q}}'_m\otimes{\bf \breve{Q}}_m][\Psi_m\otimes\Psi_m].
		\label{eq:pull1_self_adjoint2a1}
		\end{equation}
	\end{itemize}
		For $m=M$, the composite  pulled-back multi-observable ${\bf\bT}'_M$ is defined by (\ref{eq:pull1_self_adjoint2}) or (\ref{eq:pull1_self_adjoint2a1}) with ${\bf A}_M({\bf\bT}_{M+1},\ldots,{\bf \bT}_M)$ defined formally as the identity $\mathbf{I}$.}
\end{definition}
		
\begin{remark}
\label{simul_A_m}
Note that all of the spectral projections of all of the multiplicative operators  \newline
$\{\breve{G}_m^1,\ldots \breve{G}_m^{K_m}\}_{m=1}^M$ commute. By the fact that $V$ is an isometric isomorphism of $\cW\otimes\cS$ to an invariant space of all of the $\{\breve{G}_m^1,\ldots \breve{G}_m^{K_m}\}_{m=1}^M$ operators, by pull-back, all spectral projections of operators in all tuples $\{{\bf \bT}_m\}_{m=1}^M$ also commute. 
This allows us to write a simultaneous spectral decomposition of $\{{\bf \bT}_m\}_{m=1}^M$, as a ``Fubini'' integral. Namely,
\begin{equation}
{ \{{\bf \bT}_m\}_{m=1}^M} = \iint_{G/Z} (\lambda_m^1,\ldots ,\lambda_m^{K_m})_{m=1}^M \Pi_{m=1}^M\big(dP(\lambda_m^1)\ldots dP(\lambda_m^{K_m})\big),
\label{eq:simul_A_m1}
\end{equation}
where the ``infinitesimal'' spectral projections in (\ref{eq:simul_A_m1}) are compositions of ``infinitesimal'' spectral projections of the form $dP(\lambda_m^k)$, \textcolor{black}{as explained more rigorously in Remark \ref{R3e3}}.
The simultaneous application of ${\bf A}_m$ on ${\bf\bT}_{m+1},\ldots,{\bf \bT}_M$ in (\ref{eq:pull1_self_adjoint}) and (\ref{eq:pull1_self_adjoint2}) is thus defined as the matrix of commuting operators
\begin{equation}
\iint_{H_m} {\bf A}_{m}(\lambda_{m'}^1,\ldots \lambda_{m'}^{K_{m'}})_{m'=m+1}^M \Pi_{{m'}=m+1}^M\big(dP(\lambda_{m'}^1)\ldots dP(\lambda_{m'}^{K_{m'}})\big).
\label{eq:simul_A_m2}
\end{equation}
Equation (\ref{eq:simul_A_m2}) is taken on a properly defined domain, as defined in Remark \ref{Band_limit_poly}.
\end{remark}

\textcolor{black}{Next, we derive a formula for ${\bf\bT}'_m$ applied on simple tensors $f\otimes s$.}
\begin{claim}
\textcolor{black}{Let $m=1,\ldots,M$, and $f,s \in {\Psi}^{\prime *}_m{\cal D}({\bf\bQ}'_m)\cap {\Psi}_m^*{\cal D}({\bf\bQ}_m)\subset \cW\cap\cS$. Then for any construction of $\otimes$, $f\otimes s\in {\cal D}({\bf\breve{T}}'_m)$, and the following holds.
\begin{itemize}
	\item If $G_m^1$ is $\RR$ or $\ZZ$ then
	\begin{equation}
\begin{split}
 & {\bf\bT}'_m (f\otimes s) =\\
 & f\otimes[\Psi_m ^*{\bf\breve{Q}}_m\Psi_m s] - {\bf A}_m({\bf\bT}_{m+1},\ldots,{\bf \bT}_M)\Big( (\Psi_m ^*{\bf\breve{Q}}_m\Psi_m f)\otimes s  \Big).
\end{split}
\label{eq:pull1_self_adjoint}
\end{equation}
\item
If $G_m^1$ is $e^{i\RR}$ or $e^{2\pi i \ZZ/N}$ then
\begin{equation}
{\bf\bT}'_m (f\otimes s) = (\Psi_m^*{\bf\breve{Q}}_m \Psi_m f)\otimes(\Psi_m^*{\bf\breve{Q}}_m \Psi_m s).
\label{eq:pull1_unitary}
\end{equation}
\end{itemize}}
\end{claim}

\textcolor{black}{In Proposition \ref{prop:SPWT_pull_obs} next, we  also  prove that ${\bf\bT}_m={\bf\bT}'_m$ for every $m=1,\ldots,M$, under some conditions.}		
\begin{proposition}
\label{prop:SPWT_pull_obs}
Consider a simply dilated SPWT, and denote
\textcolor{black}{\[{\bf\bT}_m\big|_{[\cW\otimes\cS]_0}=V^*{\bf\breve{N}}_mV\big|_{[\cW\otimes\cS]_0} =V^*{\bf\breve{N}}_m\big|_{V(\cW\otimes\cS)_0}V\big|_{[\cW\otimes\cS]_0}.\]}
\textcolor{black}{Let ${\bf\bT}_1,\ldots,{\bf\bT}_M$ be the pulled-back multi-observables, and ${\bf\bT}'_1,\ldots,{\bf\bT}'_M$ the composite pulled-back multi-observables.}
Then the following holds.
\begin{enumerate}
	\item 
	\textcolor{black}{We have ${\bf\bT}_M={\bf\bT}'_M$,} and ${\bf\breve{N}}_M\big|_{V(\cW\otimes\cS)}$ is the self-adjoint or unitary \textcolor{black}{operator given by $V{\bf\bT}_MV^*$.}
	\item
	\textcolor{black}{Let $m=1,\ldots,M-1$. If for every $m'=m+1,\ldots,M$ the closure of}
	\textcolor{black}{${\bf\bT}_{m'}\big|_{[\cW\otimes\cS]_0}$ is the  self-adjoint or unitary operator ${\bf\bT}'_{m'}$, then ${\bf\bT}_m\big|_{[\cW\otimes\cS]_0}={\bf\bT}'_m\big|_{[\cW\otimes\cS]_0}$, and ${\bf\breve{N}}_{m}\big|_{V(\cW\otimes\cS)_0}$ is the symmetric or isometric operator in $V(\cW\otimes\cS)$ given by 
	\[{\bf\breve{N}}_{m}\big|_{V(\cW\otimes\cS)_0} = V{\bf\bT}_m\big|_{[\cW\otimes\cS]_0}V^*.\].}
\end{enumerate}
\textcolor{black}{Suppose that the closure of ${\bf\bT}_m\big|_{[\cW\otimes\cS]_0}$ is the  self-adjoint or unitary operator ${\bf\bT}'_m$ for every $m=1,\ldots,M-1$. Then the following holds.}
\begin{enumerate}
	\item 
	The subspace $V(\cW\otimes\cS)\subset L^2(G)$ is invariant under $\breve{{\cal M}}^{\infty}(G)$ operators.
	\item
	\textcolor{black}{For any $m=1,\ldots,M$ and  
 any construction of $\otimes$, ${\bf\bT}_m={\bf\bT}'_m$, and ${\bf\breve{N}}_m\big|_{V(\cW\otimes\cS)}$ is the self-adjoint or unitary operator given by $V{\bf\breve{T}}_mV^*$.}
\end{enumerate}
\end{proposition}

The proof of this proposition is in Appendix B.
Formulas (\ref{eq:pull1_self_adjoint}) and (\ref{eq:pull1_self_adjoint2}) can be calculated by induction, starting at $m=M$, where there is no $A_M$ automorphism, and advancing $m$ backwards.
\textcolor{black}{
	Note that closability of an isometric operator to a unitary operator is always guaranteed. Also note that every symmetric operator is closable, so the only condition to check in Proposition \ref{prop:SPWT_pull_obs} is that the closure of the symmetric operators ${\bf\bT}_m\big|_{[\cW\otimes\cS]_0}$ is self-adjoint.}
	\textcolor{black}{The condition that the closure of ${\bf\breve{T}}_m\big|_{[\cW\otimes\cS]_0}$ is self-adjoint in $\cW\otimes\cS$ may seem hard to check in an abstract setting. However, our main goal in this paper is to use the pull-back formulas of the phase space observables in a computational machinery. For the resulting formulation to be computationally tractable, explicit formulas of these pull-backs must be obtained. Then, checking that the closure of ${\bf\breve{T}}_m\big|_{[\cW\otimes\cS]_0}$ is self-adjoint is a simple matter of checking if the resulting explicit formulas are self-adjoint operators.}
\textcolor{black}{	We note that for DGWT the closability to a self-adjoint operator condition is always fulfilled, as we prove in Proposition \ref{pullDGWT}.}


\begin{remark}
All of the pulled-back observables $\{{\bf \breve{T}}_m\}_{m=1}^M$ commute. Thus, we do not have a ``Heisenberg uncertainty'' in $\cW\otimes\cS$, and simultaneous localization in all $quantity_m$ observables is possible.
\end{remark}

\subsection{Characterization of the range of the wavelet-Plancherel transform}

We can now give an explicit characterization of the space $V(\cW\otimes\cS)$. A character of a group $B$ is a homomorphism from $B$ to the group $\{e^{i\RR},\cdot\}$.

\begin{proposition}
\label{prop:VWS}
Consider a simply dilated SPWT based on the group $G$.  Suppose that the closure of ${\bf\bT}_m\big|_{[\cW\otimes\cS]_0}$ is the  self-adjoint or unitary operator given by (\ref{eq:pull1_self_adjoint2}) or (\ref{eq:pull1_self_adjoint2a1}) for every $m=1,\ldots,M-1$.
\begin{enumerate}
	\item 
	The center $Z$ of $G$ is compact, and there exists a character $\chi:Z\rightarrow e^{i\RR}$ such that $\pi_z({\bf z})=\chi({\bf z}){\bf I}$, where ${\bf I}$ is the identity operator in $\cS$.
	\item
	The space $V(\cW\otimes\cS)$ is the space of functions $F\in L^2(G)$ for which there exists a function $Q\in L^2(G/Z)$ such that
	\begin{equation}
	F({\bf z},{\bf g}_1,\ldots,{\bf g}_M) = \overline{\chi({\bf z})}Q({\bf g}_1,\ldots,{\bf g}_M).
	\label{eq:prop:VWS}
	\end{equation}
\end{enumerate}
\end{proposition}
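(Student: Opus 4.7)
The plan for part 1 is to invoke Schur's lemma and a measure-theoretic decomposition. Since $\pi$ is irreducible and each $\pi(z)$ with $z\in Z$ commutes with every $\pi(g)$, unitarity forces $\pi(z)=\chi(z)\mathbf{I}$ with $\chi(z)\in e^{i\RR}$; strong continuity and the homomorphism property make $\chi:Z\to e^{i\RR}$ a continuous character. For the compactness of $Z$, I would use the resulting covariance
\[V_f[f](zg)=\ip{f}{\pi(z)\pi(g)f}=\overline{\chi(z)}V_f[f](g),\]
which shows $|V_f[f]|^2$ is $Z$-invariant. Applying Weil's integration formula to $\int_G|V_f[f]|^2\,dg$ along the normal subgroup $Z\triangleleft G$ factors this integral as $\bigl(\int_Z dz\bigr)\bigl(\int_{G/Z}|V_f[f]|^2\,d\dot g\bigr)$; the left side is finite and the $G/Z$-factor is strictly positive, so the Haar measure of $Z$ must be finite, i.e., $Z$ is compact.

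The forward inclusion $V(\cW\otimes\cS)\subseteq L^2(G)_\chi$ in part 2 follows directly from a calculation on simple vectors: for $f\in\cW\cap\cS$, $s\in\cS$, and $z\in Z$,
\[V(f\otimes s)(zg)=\ip{s}{\pi(zg)f}=\overline{\chi(z)}V(f\otimes s)(g).\]
The $\chi$-covariance condition cuts out a closed subspace of $L^2(G)$, so linearity on $[(\cW\cap\cS)\otimes\cS]_0$ together with continuity of the isometry $V$ extend the covariance to every element of $V(\cW\otimes\cS)$, yielding the form~(\ref{eq:prop:VWS}).

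The substantive content is the reverse inclusion, for which I would combine two structural properties of $V(\cW\otimes\cS)$. First, by Proposition~\ref{prop:SPWT_pull_obs}, the space is invariant under every $\breve R\in\breve{\mathcal{M}}^{\infty}(G)$, i.e., under multiplication by bounded $Z$-invariant functions on $G$. Second, it is invariant under left translations on $G$: the identity
\[[L_h V_f[s]](g)=V_f[s](h^{-1}g)=\ip{\pi(h)s}{\pi(g)f}=V_f[\pi(h)s](g)\]
shows each $V_f[\cS]\subset V(\cW\otimes\cS)$ is $L_h$-invariant, and density of $\{V_f[s]\}$ in $V(\cW\otimes\cS)$ together with boundedness of $L_h$ propagate the invariance to the whole subspace. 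Under the identification $L^2(G)_\chi\cong L^2(G/Z)$ given by the parametrization in~(\ref{eq:prop:VWS}), the first property becomes the statement that $V(\cW\otimes\cS)$ is a closed $L^\infty(G/Z)$-submodule of $L^2(G/Z)$. Since $L^\infty(G/Z)$ acts as a maximal abelian von Neumann algebra on $L^2(G/Z)$, any such submodule is of the form $L^2(B)$ for some measurable $B\subseteq G/Z$. The second property forces $B$ to be left-$G$-invariant in $G/Z$, and by transitivity of the left $G$-action on $G/Z$ the only possibilities (modulo null) are $B=\emptyset$ or $B=G/Z$. Non-triviality of $V(\cW\otimes\cS)$ rules out $B=\emptyset$, so $V(\cW\otimes\cS)=L^2(G)_\chi$.

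The step I expect to be the main obstacle is the module-theoretic identification of closed $L^\infty(G/Z)$-invariant subspaces of $L^2(G/Z)$ with characteristic-function slices $L^2(B)$. This relies on the maximal abelianness of $L^\infty(G/Z)$ in $B(L^2(G/Z))$, which in turn requires $\sigma$-finiteness and sufficient regularity of the Haar measure on $G/Z$; these conditions are available from second countability of $G$ (inherited from the SPWT structure built from numerical physical quantities), but the details need to be checked, and one must also carefully match the $L^\infty(G/Z)$-module structures on $L^2(G)_\chi$ and $L^2(G/Z)$ under the identification.
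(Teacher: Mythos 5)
Your proposal is correct, but the reverse inclusion in part 2 is argued by a genuinely different route than the paper's. For part 1 the paper simply cites the known facts (compactness of the center of a group admitting a square integrable representation, and scalarity of the restriction of $\pi$ to $Z$); your Schur-plus-Weil argument is in effect the standard proof of the cited fact, and it is sound --- note only that Weil's formula needs $\Delta_G|_Z\equiv 1$, which is automatic for the center since right translation by a central element coincides with left translation. The forward inclusion is identical to the paper's, which obtains it from its equation (\ref{eq:5rts8}) with $Q=1$. For the reverse inclusion the paper argues constructively: for each $g$ it takes $s=\pi(g)f$ so that $V(f\otimes s)(g)\neq 0$, uses continuity to get non-vanishing on a neighborhood, multiplies by a suitable bounded function in ${\cal M}^{\infty}(G)$ (essentially $\mathbf{1}_U/V(f\otimes s)$, bounded since the transform is bounded away from zero there) to manufacture near-characteristic functions of small neighborhoods, and concludes by density. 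You instead argue operator-algebraically: the image is closed (isometry), invariant under the $*$-algebra $\breve{{\cal M}}^{\infty}(G)\cong L^{\infty}(G/Z)$ by Proposition \ref{prop:SPWT_pull_obs} and under left translations via $L_hV_f[s]=V_f[\pi(h)s]$, so its projection lies in the commutant of a maximal abelian algebra, hence is multiplication by $\mathbf{1}_B$, and ergodicity of the transitive $G$-action on $G/Z$ forces $B$ conull. What each buys: the paper's sketch is elementary, exhibits explicit generators of $V(\cW\otimes\cS)$, and avoids von Neumann algebra machinery; yours is cleaner at the density step, makes the module structure transparent, and immediately yields the classification asserted in Remark \ref{no_closure_needed} as the special case of trivial center. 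The two caveats you flag are the right ones, and both resolve affirmatively here: second countability of the SPWT group gives $\sigma$-finiteness (so $L^{\infty}(G/Z)$ is indeed maximal abelian), and it also supplies the standard Fubini argument needed to pass from ``$hB=B$ modulo a null set depending on $h$'' to global triviality of $B$; at the paper's own level of rigor (its proof of part 2 is explicitly a sketch) your argument is complete.
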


The proof of this proposition is in Appendix B.

\begin{remark}
\label{no_closure_needed}

In case $V(\cW\otimes\cS)=L^2(G)$, $V(\cW\otimes\cS)$ is trivially invariant under ${\bf\breve{N}}_m$, and more generally under  $\breve{{\cal M}}^{\infty}$ operators, so the problem of checking that the closure of ${\bf\bT}_m\big|_{[\cW\otimes\cS]_0}$ is self-adjoint is avoided. Indeed, $V$ is an isometric isomorphism, and ${\bf\bT}_m = V^*{\bf\breve{N}}_m V$ is unitarily equivalent to the self-adjoint or unitary operator $\breve{N}$, and is thus unitary or self-adjoint.

 It is thus worthwhile to classify the case $V(\cW\otimes\cS)=L^2(G)$. It can be shown that for groups with trivial center, $V(\cW\otimes\cS)$ is invariant under $\breve{{\cal M}}^{\infty}$ operators if and only if $V(\cW\otimes\cS)=L^2(G)$. The sufficiency direction ``$\Leftarrow$'' is trivial. The necessity direction ``$\Rightarrow$'' can be proved similarly to Proposition \ref{prop:VWS}. 
\end{remark}

In Proposition \ref{pullDGWT} we show that for DGWTs $V(\cW\otimes\cS)=L^2(G)$, and thus no condition on the closure of ${\bf\bT}_m\big|_{[\cW\otimes\cS]_0}$ is needed.
An important example where the center in not trivial is the STFT, studied in Appendix \ref{Examples222r}. There, it can be checked directly that the closure of the operators ${\bf\bT}_m\big|_{[\cW\otimes\cS]_0}$ are self-adjoint or unitary.

\subsection{Phase space filters via trigonometric polynomials of observables}
\label{Phase space filters via trigonometric polynomials of observables}

Consider a simply dilated SPWT.
For each $m=1,\ldots,M$ and $k=1,\ldots,K_m$, consider the following class of functions
\begin{definition}
A measurable function $R_m^k\in L^{\infty}(G)$ is called $quantity_m^k$ periodic, if the following two conditions are met.
\begin{enumerate}
	\item 
	The function $R_m^k$ depends only on the variable $g_m^k$.
	\item
	The restriction $R_m^k|_{G_m^k}$ of $R_m^k$ to $G_m^k$ is periodic.
\end{enumerate} 
\end{definition}
Note that if $G_m^k$ is $e^{i\RR}$ or $e^{2\pi i \ZZ/N}$, then any $L^{\infty}(G)$ function that depends only on $g_m^k$ is periodic.

\textcolor{black}{In Subsections \ref{Coefficient search in continuous wavelet systems} and \ref{Coefficient search via a wavelet-Plancherel theory} we presented a search algorithm for large 1D CWT atoms. The algorithm is based on pulling back to $\cW\otimes\cS$ the computation of time-pass and scale-pass filters in $L^2(G)$. Here, time-pass and scale-pass filters are multiplicative operators by characteristic functions of scale and time intervals in the time-scale phase space $G\cong\RR^2$. In Section \ref{Greedy sparse continuous wavelet transform} we  develop this algorithm in more detail. In particular, the time-pass filters that we use are periodic. Next, we explain how to discretize general periodic phase space filters.} 


Consider a piecewise smooth $quantity_m^k$ periodic function $R_m^k$, and the corresponding phase space filter
\[\breve{R}_m^k F({\bf g})=R_m^k(g_m^k) F({\bf g}).\]
Consider the classical Fourier series on $e^{i\RR}$ for continuous physical quantities, and on $e^{2\pi i \ZZ/N}$ for discrete physical quantities. By Parseval's identity, we can approximate $R_m^k$ on one period using a trigonometric polynomial
\[R_m^k\approx \sum_{l=-L}^L c_l e_l=Q_L.\]
Here, $e_l$ are the complex exponential Fourier basis functions, and $c_l\in\CC$ are coefficients. Note that for real physical quantities we have $e_l:= \frac{1}{\sqrt{2\pi}}e^{i l (\cdot)}$, and for physical quantities in the unit complex sphere we have $e_l:= (\cdot)^l$ up to normalization.
Next, we show that the periodic phase space filter $\breve{R}_m^k=R_m^k(\breve{G}_m^k)$  can be approximated in some sense by
\begin{equation}
Q_L({\breve{G}}_m^k)= \sum_{l=-L}^L c_l e_l({\breve{G}}_m^k).
\label{eq:trigo_poly_filt}
\end{equation}

We are interested in the approximation of $\breve{R}_m^k F$ by $Q_L({\breve{G}}_m^k) F$ for bounded functions $F\in V(\cW\otimes\cS)$. We focus on bounded functions, since for $f\in\cW\cap\cS$, and $s\in\cS$, by Cauchy Schwarz inequality $\norm{V_f[s]}_{\infty}\leq \norm{f}_{\cS}\norm{s}_{\cS}$. In addition, when performing signal processing in phase space, like in the WP4 method, one typically repeatedly apply piece-wise smooth periodic phase space filters on $V(f\otimes s)$, which give bounded functions in $V(\cW\otimes\cS)$.

To show the approximation property, let $\e>0$, and consider a compact rectangle $D\subset G$ such that
$\int_{G\setminus D}\abs{F(g)}^2dg < \e$.
By assumption, $F$ is bounded in $D$. Moreover, $\breve{R}_m^k$ is piecewise smooth, so by Gibbs phenomenon \cite[Page 93]{Gibbs} \textcolor{black}{$\norm{Q_L}_{\infty}$ is bounded by $(1.01 +o_{L}(1))\norm{R_m^k}_{\infty}$, where  $o_{L}(1)$ is a function that decays in $L$.} Thus
\[\begin{split}
E_L=&\norm{\breve{R}_m^k F - Q_L({\breve{G}}_m^k)F}^2\\
=&\int_D\abs{R(g_m^k)F(g) -Q_L(g_m^k)F(g)}^2 dg \ +  \int_{G\setminus D}\abs{R(g_m^k)F(g) -Q_L(g_m^k)F(g)}^2 dg\\
\leq &C_0\int_{D\cap G_m^k}\abs{R(g_m^k) -Q_L(g_m^k)}^2 dg_m^k \ + \  C_1\int_{G\setminus D}\abs{F(g)}^2 dg =o_{L}(1)+ O(\e).
\end{split}
\]
\textcolor{black}{
where 
\[C_0 = \norm{F}_{\infty}\mu(D|_{Y_{m}^k})\]
with $\mu(D|_{Y_{m}^k})$ the volume of the restriction of $D$ to $\prod_{(m',k')\neq (m,k)}G_{m'}^{k'}$, and
\[C_1 = \norm{Q_L}_{\infty}+ \norm{R_m^k}_{\infty}.\]}
 It is thus evident that $\lim_{L\rightarrow\infty}E_L=0$.

We remark that for the \textcolor{black}{WP4} search algorithm \textcolor{black}{of Subsections \ref{Coefficient search in continuous wavelet systems} and \ref{Coefficient search via a wavelet-Plancherel theory}}, band-pass filters can be reasonably approximated by very low order trigonometric polynomial, e.g., nine nonzero coefficients.
This discussion shows that the entities we want to pull-back to $\cW\otimes\cS$ are the unitary operators $e_l({\breve{G}}_m^k)$.
In the next subsection we show that there exists a linear first order PDE, whose solutions are precisely $V^*e_l({\breve{G}}_m^k)V$. In Appendix \ref{Examples222r} we give closed form formulas for $V^*e_l({\breve{G}}_m^k)V$ for prominent wavelet transforms.

\subsection{The pull-back of phase space filters for geometric wavelet transforms}
\label{The pull-back of phase space filters for geometric wavelet transforms}

In this section we restrict ourselves to geometric wavelet transforms. Our goal is to derive a computationally tractable formula for the pull-back of multiplicative operators in phase space. When $\breve{R}^k_m$ is a multiplicative operator by a characteristic function of an interval in $G_m^k$, we call it a $quantity_m$-pass transforms.
 As explained in Subsection \ref{Phase space filters via trigonometric polynomials of observables}, as building blocks for approximating multiplicative operators in $position$ we show how to pull back complex exponentials of the $position$ observable.
Next, we show how to calculate the $quantity_m$-pass transforms on $\cW\otimes\cS$ and $position$ complex exponentials. 
\begin{proposition}
\label{pullDGWT}
\textcolor{black}{Consider a DGWT. Then the following holds.}
\begin{enumerate}
	\item 
	We have $V(\cW\otimes\cS)=L^2(G)$.
	\item
	For $m=2,\ldots,M$, ${\bf\bT}_m$  are self-adjoint or unitary multiplicative operators in the frequency domain, and $quantity_m$-pass filters are characteristic functions of sets in $U^2$. Moreover,
\begin{equation}
[{\bf\bT}_1 F](\bw',\bw) = i \frac{\partial}{\partial \bw}F(\bw',\bw)
 + {\bf A}_m({\bf\bT}_2,\ldots,{\bf\bT}_N)i \frac{\partial}{\partial \bw'}F(\bw',\bw).
\label{eq:time_obs11}
\end{equation}
is a self-adjoint differential operator,
where ${\bf A}_m({\bf\bT}_2,\ldots,{\bf\bT}_N)$ is a multiplicative operator-valued matrix in $L^2(U^2)$.
The unitary operator $\exp({it {\bf\bT}_1})$ is defined for $F\in\cW\otimes\cS$ via the differential equation
\begin{equation}
\begin{split}
\frac{\partial}{\partial t}\big[\exp({it {\bf\bT}_1})F\big]\ \ \ \ = \ \ &i {\bf\bT}_1 \big[\exp({it {\bf\bT}_1})F\big] = -\Big(\frac{\partial}{\partial \bw}
 + {\bf A}_m({\bf\bT}_2,\ldots,{\bf\bT}_N) \frac{\partial}{\partial \bw'}\Big)\big[\exp({it {\bf\bT}_1})F\big]\\
\big[\exp({it {\bf\bT}_1})F\big]_{t=0}= \ \ & F.
\end{split}
\label{eq:flow_pos1}
\end{equation}
Namely, $\exp({it {\bf\bT}_1})F$ is calculated by taking the flow of $F$ along the integral lines of (\ref{eq:flow_pos1}), 
and position-pass filters are convolutions along these integral lines.
\end{enumerate}

\end{proposition}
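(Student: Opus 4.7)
The plan is to exploit two special features of DGWTs — the frequency-domain representation acts via affine maps parametrized by the group coordinates, and the underlying group $G = \RR^N \rtimes H_1$ has trivial center — to turn the abstract pull-back formulas of Proposition \ref{prop:SPWT_pull_obs} into concrete formulas on $L^2(U^2)$. The first step will be to establish claim~(1), which then lets us apply Proposition \ref{prop:SPWT_pull_obs} without having to verify any closability condition, by Remark \ref{no_closure_needed}. For claim~(1) I would show directly that the center of $G$ is trivial: any central element $({\bf g}_1,{\bf h}_1)$ must satisfy ${\bf g}_1 = \bD({\bf h}_1'){\bf g}_1$ for all ${\bf h}_1'$, and since $\bw_0 \bD(H_1)$ is open dense in $\RR^N$ the action of $H_1$ is faithful, forcing ${\bf g}_1=0$; the remaining constraint then forces ${\bf h}_1=e$. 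Invoking Proposition \ref{prop:VWS} with trivial center yields $V(\cW\otimes\cS) = L^2(G/Z) = L^2(G)$.

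For the $m \geq 2$ part of claim~(2), I would proceed by downward induction on $m$ starting at $m=M$, where the automorphism ${\bf A}_M$ is absent so (\ref{eq:pull1_self_adjoint2}) or (\ref{eq:pull1_self_adjoint2a1}) applies directly. By definition of a DGWT, the $quantity_m$ transform $\Psi_m$ is precomposition with the diffeomorphism $C_m : Y_m \times N_m \to U$ weighted by the square root of its Jacobian, so conjugating a multiplicative operator by $\Psi_m$ yields a multiplicative operator on $L^2(U)$. Lifting to the tensor-product domain of (\ref{eq:QXI}), the summands in (\ref{eq:pull1_self_adjoint2}) and (\ref{eq:pull1_self_adjoint2a1}) are multiplicative operators on $L^2(U^2)$ depending on $\bw$ and $\bw'$ respectively, and ${\bf A}_m$ applied in the functional calculus sense of Remark \ref{simul_A_m} to multiplicative operators gives a multiplicative-operator-valued matrix. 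Hence each ${\bf\breve{T}}_m$ is a self-adjoint or unitary multiplicative operator in $L^2(U^2)$, which verifies the inductive hypothesis for $m-1$ and simultaneously identifies its spectral projections on sets $\{g_m^k \in B\}$ with multiplication by characteristic functions of the corresponding preimage sets in $U^2$ under $C_m\otimes C_m$, i.e.\ with $quantity_m$-pass filters.

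For the $m=1$ part, the $quantity_1$ transform is the inverse Fourier transform (by Proposition \ref{T_m_multiplicative_in_freq}), so $\Psi_1^* {\bf\breve{Q}}_1 \Psi_1$ becomes the differential operator $i\partial/\partial\bw$ on $L^2(U)$. Substituting into (\ref{eq:pull1_self_adjoint}) on simple vectors $f \otimes s$, using the function-theoretic form $[f\otimes s](\bw',\bw) = \overline{\hf(\bw')}\hs(\bw)$ from Example \ref{Ex_tens_L2R}, and extending by linearity and density yields exactly (\ref{eq:time_obs11}). Since the matrix ${\bf A}_m({\bf\breve{T}}_2, \ldots, {\bf\breve{T}}_M)$ depends only on $\bw'$, it commutes with $\partial/\partial\bw$, so the two summands of (\ref{eq:time_obs11}) are commuting essentially self-adjoint first-order operators; their sum is essentially self-adjoint on a Schwartz-like common core, and claim~(1) together with Remark \ref{no_closure_needed} guarantees that its closure is a genuine self-adjoint operator on $V(\cW\otimes\cS) = L^2(G)$. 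The unitary group $\exp(it{\bf\breve{T}}_1)$ is then defined by Stone's theorem, with (\ref{eq:flow_pos1}) as the defining generator equation; being a first-order real transport operator, the method of characteristics identifies its solution with the flow of $F$ along the integral curves of the vector field $\partial/\partial\bw + {\bf A}_m \,\partial/\partial\bw'$, and position-pass filters (obtained by integrating $\exp(it{\bf\breve{T}}_1)$ against a characteristic function of an interval) are precisely convolutions along these curves. The main technical obstacle will be verifying that the possibly unbounded matrix coefficients of ${\bf A}_m({\bf\breve{T}}_2,\ldots,{\bf\breve{T}}_M)$ leave a common dense core invariant, so that essential self-adjointness of the sum in (\ref{eq:time_obs11}) can be asserted rigorously rather than merely formally; everything else is bookkeeping with the diffeomorphisms $C_m$ and the formulas of Proposition \ref{prop:SPWT_pull_obs}.
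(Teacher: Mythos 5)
Your part (2) essentially reproduces the paper's argument (downward induction from $m=M$ using that each $\Psi_m$ is a Jacobian-weighted change of variables, the Fourier transform turning ${\bf\bQ}_1$ into $i\partial/\partial\bw$, and the method of characteristics for (\ref{eq:flow_pos1})), but your part (1) has a genuine circularity. Proposition \ref{prop:VWS} is stated under the hypothesis that the closure of ${\bf\bT}_m\big|_{[\cW\otimes\cS]_0}$ is the self-adjoint or unitary operator given by (\ref{eq:pull1_self_adjoint2}) or (\ref{eq:pull1_self_adjoint2a1}) for every $m=1,\ldots,M-1$; its proof builds the approximating characteristic functions by applying $\breve{{\cal M}}^{\infty}(G)$ multipliers to $V(f\otimes s)$, and the invariance of $V(\cW\otimes\cS)$ under such multipliers is exactly the conclusion of Proposition \ref{prop:SPWT_pull_obs} under that same closability hypothesis. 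But the entire purpose of claim (1) of Proposition \ref{pullDGWT} is to discharge that hypothesis via Remark \ref{no_closure_needed}; you have not verified it independently, so invoking Proposition \ref{prop:VWS} to prove (1) assumes what you are trying to establish. The paper avoids this by a genuinely different route: the DGWT conditions give a single open dual orbit $U$ with trivial stabilizers, Kleppner--Lipsman's theorem then concentrates the Plancherel measure of $\hat{G}$ on the single discrete series representation $\pi$, and F\"uhr's identification of $L^2_{\pi}(G)$ with $\overline{{\rm span}}\{V_f[s] \mid f\in\cW\cap\cS,\ s\in\cS\}$ yields $V(\cW\otimes\cS)=L^2(G)$ with no closability input. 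Your triviality-of-center computation is fine as far as it goes (a nonzero translation part fixed by every $\bD({\bf h}_1')$ would confine the dual orbit to an affine hyperplane, contradicting density of $\bw_0\bD(H_1)$), but trivial center plus Proposition \ref{prop:VWS} does not constitute a proof of (1).

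There is also a concrete error in your $m=1$ argument: the claim that ${\bf A}_m({\bf\bT}_2,\ldots,{\bf\bT}_M)$ ``depends only on $\bw'$'' and hence commutes with $\partial/\partial\bw$ is false in the paper's own examples. For the 1D wavelet transform, ${\bf A}_1(\bT_2)=e^{\bT_2}$ is multiplication by $\w'/\w$, and the rotation-dilation and Shearlet matrices likewise depend on both $(\bw',\bw)$; so your commuting-summands argument for essential self-adjointness of (\ref{eq:time_obs11}) collapses. Fortunately it is unnecessary: once (1) is in place, Remark \ref{no_closure_needed} gives self-adjointness of ${\bf\bT}_1=V^*{\bf\breve{N}}_1V$ directly by unitary equivalence with the multiplication operator ${\bf\breve{N}}_1$ on $L^2(G)$, which is how the paper argues -- and this also dissolves the domain-invariance ``main technical obstacle'' you flag at the end, since no essential self-adjointness of the formal sum needs to be checked.
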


Note that the first order linear PDEs in the system (\ref{eq:flow_pos1}) are independent.
It is thus evident that $\exp({it {\bf\bT}_1})F$ is calculated by taking the flow of $F$ along the integral lines of (\ref{eq:flow_pos1}), for a distance $t$. A periodic filter of the form (\ref{eq:trigo_poly_filt}), applied on $F$, is thus a finite linear combination of transformed versions of $F$ along the integral lines, and a general $position$ filter is convolution along the integral lines. 
 In Appendix C we give closed form formulas for the solution of (\ref{eq:flow_pos1}), and its integral lines, for prominent examples of wavelet transforms.

\section{Greedy sparse continuous wavelet transform}
\label{Greedy sparse continuous wavelet transform}

\textcolor{black}{In this section we develop in detail the sparse approximation methods that was proposed in Subsection \ref{Sparse approximations with continuous wavelet systems}.} Namely, we formulate a greedy algorithm for extracting sparse approximations to signals using atoms from the 1D wavelet system. We show that, by  using the wavelet-Plancherel theory, we can search for large wavelet coefficients in a phase space grid consisting of $O(N^2)$ points in $O(N\log(N))$ operations. Here, $N$ is the resolution of the discrete signal. This search approach is utilized in an efficient matching pursuit algorithm. 

\subsection{A search algorithm via the wavelet-Plancherel theory}
\label{A search algorithm via the wavelet-Plancherel theory}

\textcolor{black}{Using the wavelet-Plancherel theory, we can describe the matching pursuit method of Subsections \ref{Coefficient search in continuous wavelet systems} and  \ref{Coefficient search via a wavelet-Plancherel theory} more accurately.} Given a signal $s$ and a window $f$, our goal is to find large wavelet coefficients of $V_f[s]$. We do this via a bisection search in $V(\cW\otimes\cS)$. We start with a compact domain $G_0$ of $G$, on which $V_f[s]$ takes most of its energy. We take $G_0$ to be a rectangle $T_0\times S_0$ in the coordinate representation $N_1\rtimes N_2$. \textcolor{black}{We partition $G_0$ to four rectangular sub domains $G_0^1=T^1_0\times S^1_0,\ldots,G_0^4=T^2_0\times S^2_0$.} These four rectangular domains are assumed to cover identical areas in a sense that will be defined precisely in Proposition \ref{partition_prop}. For each $G_0^r$, we consider an approximation $\tilde{P}_{G_0^r}$ to the projection upon $G_0^r$. The approximate projection is a multiplicative operator of the form $\tilde{P}_{G_0^r}=\tilde{P}_{T_0^r} \tilde{P}_{S_0^r}$. Here, $\tilde{P}_{T_0^r}$ is an approximate periodic time-pass, namely a trigonometric polynomial of $\breve{N}_1$ with period $\abs{T_0}$, and $\tilde{P}_{S_0^r}$ is a perfect scale-pass, namely the characteristic function of $S_0^r$ applied on $\breve{N}_2$. We then define $G_1$ as the rectangle having the greatest $\norm{\tilde{P}_{G_0^r}V[f\otimes s]}$. Using the wavelet-Plancherel theorem, this calculation is carried out in the $\cW\otimes\cS$ space using the pull-back of $\tilde{P}_{G_0^r}$, and calculating the norm in $\cW\otimes\cS$. 

Next, we show how to continue this process. At step $J$, we consider the rectangle $G_{J-1}$, partitioned into $G_{J-1}^1,\ldots,G_{J-1}^4$. We would like to define $G_{J}$ as the rectangle $G_{J-1}^r$ having the greatest $\norm{\tilde{P}_{G^r_{J-1}}V [f\otimes s]}$. However, the narrower the support of $G_{J}$ along the $\breve{N}_1$ axis is, the more Fourier coefficients are required for the periodic approximate $time$-pass filter. \textcolor{black}{To keep the number of Fourier coefficient bounded, and control the asymptotic complexity of the method, we use the following filtering scheme.} Consider the trigonometric polynomial
\begin{equation}
R_1^0(x)=\sum_{l=-L}^L c_l e_l(x)
\label{eq:T_filter}
\end{equation}
that approximates the characteristic function of $[-\pi,0]$ in $L^2[-\pi,\pi]$. Here, $e_l$ are the complex exponential Fourier basis functions, and $c_l$ are scalar coefficients (see Subsection \ref{The pull-back of phase space filters for geometric wavelet transforms}). The function $R_1^1=1-R_1^0$ approximates the characteristic function of $[0,\pi]$.  Note that $R_1^0,R_1^1$ are $2\pi$ periodic. The dilated filter
\begin{equation}
R^0_j(x)=\sum_{l=-L}^L c_j e_{2^jl}(x)
\label{eq:T_filter2}
\end{equation}
is a $2^{-j}2\pi$ periodic function, approximating the characteristic function of $[-2^{-j}\pi,0]$. The filter $R^1_j=1-R^0_j$  approximates the characteristic function of $[0,2^{-j}\pi]$. To approximate the characteristic function of some interval $2^{-J} 2\pi r+ [-2^{-J}\pi,0]$ , for some integer $r$, via a $2\pi$ periodic filter, we simply take the product $\prod_{j=1}^J{R_j^{b_j}}$, where $\{b_j\}_{j=0}^J$ is the binary representation of the number $r\in\NN$. 

We thus continue the search process, and define $G_{J+1}$ as the rectangle $G_J^r$ having the greatest 
\begin{equation}
\Big\|\tilde{P}_{G^r_J}\prod_{j=0}^J\tilde{P}_{G_j}V\ (f\otimes s)\Big\|.
\label{eq:gnf7}
\end{equation}
Note that equation (\ref{eq:gnf7}) requires only the calculations of the projections $\tilde{P}_{G^r_J}$, $r=1,2,3,4$,  which are based on $2L+1$ exponentials of $\breve{N}_1$ and perfect $scale$-passes. Indeed, the product part in (\ref{eq:gnf7}) was calculated in the previous step of the search.
 When we stop the process at a big $J=\overline{J}$, we are left with an approximation of the projection of $V_f[s]$ upon a small rectangle about the point $g^1=g^1_{\overline{J}}$ in $G$. The resulting wavelet atom $\pi(g^1)f$ has a large wavelet coefficient. This coefficient search algorithm does not necessarily give the global maximum. On the other hand, this search method is also not a local method, since all of the wavelet coefficients of $V_f[s]$ are accounted for in the search. 

\subsection{Continuous wavelet matching pursuit}

\textcolor{black}{Assume that $f\in\cS\cap\cW$ is normalized in $\cS$.  
 We consider in this paper two ways to compute a sparse approximation of the signal $s\in\cS$.
In Matching Pursuit (MP) \cite{Matching_pursuit}, we take the first approximation to $s$ as $s'_1=\ip{s}{\pi(g^1)f}\pi(g^1)f$, where $g^1$ is the large wavelet atom found by the wavelet-Plancherel search algorithm. We compute the residual $s_1=s-s'_1$, and repeat the process by induction for the signal $s_k$, to obtain sparse approximations $s'_k$, for $k=1,\ldots, K$. 
In Orthogonal Matching Pursuit (OMP) \cite{OMP}, for each $k$ we consider all of the points $g^1,\ldots,g^{k-1}$ found at steps $1,\ldots,k-1$, and define the sparse approximation $s'_k$ as the orthogonal projection of $s$ upon ${\rm span}\{\pi(g^1)f,\ldots,\pi(g^{k-1})f\}$. 
We call either of these two methods \emph{wavelet Plancherel phase-space projection pursuit}, or \emph{WP$^4$}, which we write also as \emph{WP4}.}

\begin{algorithm}[\textcolor{black}{Wavelet Plancherel phase-space projection pursuit (WP4)}]
\label{Wavelet Plancherel phase-space projection pursuit (WP4)}
$ $ \newline
\emph{Input}: signal $s\in \cS$. \newline
\emph{Output}: sparse signal $s'_K\in\cS$ approximating $s$ after $K$ iterations. \newline
\emph{Notations}: 

$\quad$ $s_k=$ signal residual

$\quad$ $s'_k=$ sparse approximation at step $k$

$\quad$ ${\rm Proj}(s; \mathcal{V})=$ orthogonal projection of the vector $s\in\cS$ to the subspace $\mathcal{V}\subset\cS$

$\quad$ $G_0=$ the rectangular search domain in phase space

\begin{itemize}
    \item 
    Initialize the sparse approximation at step $0$ as $s'_0=0$
    \item
    Initialize the residual signal at step $0$ as $s_0=0$
    \item
    Repeat for $k=1,2,\ldots$ until a stopping criterion\footnote{The stopping criterion can be error below some tolerance $\norm{s-s_k}<\epsilon$, or $k$ above some maximal number of iterations $K'$.} is satisfied, at step $K$
\begin{itemize}
    \item Find a large wavelet coefficients $g'_k$ from $V_f[s_{k-1}](G_0)$ using the wavelet-Plancherel search algorithm.
    \begin{multicols}{2}
    Version 1: Matching Pursuit
\begin{itemize}
 \item $s'_k=s'_{k-1} + V_f[s_{k-1}](g_k')\pi(g_k')f$ 

 \item $s_k=s_{k-1} - V_f[s_{k-1}](g_k')\pi(g_k')f$
\end{itemize}

    Version 2: Orthogonal Matching Pursuit 
\begin{itemize}
 \item $s'_k={\rm Proj}\big(\ s\ ; \ {\rm span}\{\pi(g'_j)f\}_{j=1}^k\big)$  
 \item $s_k=s-s_k'$
\end{itemize}
\end{multicols}
\end{itemize}
\item
Return $s'_K$
\end{itemize}

\end{algorithm}

In this section we show that by choosing a suitable discretization of $\cW\otimes\cS$, with a suitable data structure and scale bisection, we can derive a coefficient search algorithm with lower computational complexity than naive continuous wavelet methods, based on a 2D discretization of phase space. Moreover, the computational complexity of our method is the same as that of a discrete method, while squaring the sampling resolution in phase space.
 To understand the improvement in resolution of the WP4 method over standard methods, we explain in Subsection \ref{1D continuous wavelet as a time-frequency transform} how the continuous 1D wavelet transform can be interpreted as a time-frequency transform. We last present a setting in which the WP4 method is beneficial, namely wavelet phase vocoder.

We note that the issue of stability of the projections is not addressed in this paper. However, we observe that in practice, the pointwise deviations of the function underlying $\prod_{j=1}^J\tilde{P}_{G_j}$, from the characteristic function of $G_J$ do not all conspire to be in the same direction, and $\prod_{j=1}^JV^*\tilde{P}_{G_j}V$ approximate the perfect band-pass filter $V^*P_{G_j}V$ reasonably. The approximation is reasonable even for low order trigonometric polynomials, e.g., having nine nonzero coefficients. Moreover, note that a crude approximation is appropriate in our situation, since we look for big wavelet coefficients, but do not need the exact value of the projected $V_f[s]$. Indeed, after the point $g^k$ is found in phase space, the coefficient is calculated directly by $\ip{s}{\pi(g^k)f}\pi(g^k)f$ in MP, or by projection of $s$ upon ${\rm span}\{\pi(g^1)f,\ldots,\pi(g^{k-1})f\}$ in OMP.

\subsection{Standard discretization of the 1D wavelet transform}
\label{Standard discretization of the 1D wavelet transform}

In this subsection we review two standard methods, that do not rely on the wavelet-Plancherel theory, namely wavelet frame methods.
Since the construction is important for understanding the advantage of the WP4 method, we offer instructive calculations in this section, without going into technical details. For reference, see \cite{fast_wave,Painless} and \cite[Sections 4.3 and 5.2]{wavelet_tour}.

We restrict the analysis to signals $\hs$ and a windows $\hf$ supported in a compact interval of the positive frequency line $(0,\infty)$, namely, in $L^2(0,\infty)$.
We consider the wavelet transform based on the \emph{reduced affine transform}, namely, the group $A$ of (\ref{eq:Affine_group1}) restricted to $g_3=1$. It can be shown that the wavelet transform $\pi(g_1,g_2)$, defined by (\ref{eq:wave_rep}) with $g_3=1$, is a square integrable representation.
Denote by $\hat{\pi}(g_1,g_2) = \cF\pi(g_1,g_2)\cF^*$ the wavelet representation in the frequency domain.

The wavelet transform can be formulated as follows. The wavelet transform $V_f[s]$ at the fixed scale $g_2$, and variable time $g_1$, is the inverse Fourier transform of $\hs \overline{\hat{\pi}_2(g_2)\hf}$ at $g_1$, where $\hat{\pi}_2(g_2) = \cF\pi_2(g_2)\cF^*$.
Note that $[\hat{\pi}_2(g_2)\hf](\w) = e^{\frac{1}{2}g_2}\hf(e^{g_2}\w)$ is frequency dilation.
 The reconstruction formula of the wavelet transform, restricted to the fixed scale $g_2$, gives
\begin{equation}
\begin{split}
\int V_f[s](g_1,g_2)\hat{\pi}(g_1,g_2)\hf \ dg_1 & =\int V_f[s](g_1,g_2)e^{-i g_1\w}\hat{\pi}_2(g_2)\hf \ dg_1 \\
 & = \hat{\pi}_2(g_2)\hf \int V_f[s](g_1,g_2)e^{-i g_1\w}\ \ dg_1    =\hs\abs{\hat{\pi}_2(g_2)\hf}^2.
\end{split}
\label{eq:reco_g1}
\end{equation}
Integrating (\ref{eq:reco_g1}) over $g_2$, the reconstruction formula can then be written as
\begin{equation}
\hs=\hs\int\abs{\hat{\pi}_2(g_2)\hf}^2 e^{-g_2}dg_2,
\label{eq:reco_g10}
\end{equation}
which shows that $\int\abs{\hat{\pi}_2(g_2)\hf}^2 e^{-g_2}dg_2=1$. In (\ref{eq:reco_g10}), the term $e^{-g_2}$ is due to the Haar measure of the reduced affine group.
The following discrete methods are derived from these observations, using FFT as the main computational tool.

Assume that the signal $\hs$ is discretized in the frequency domain, with the equi-spaced frequency samples $0<\w_0,\w_1,\ldots,\w_N$, with $\w_{n+1}-\w_n=r$, $\w_N=O(rN)$. Assume that $\hf$ is compactly supported away from $0$, and supported about the frequency $\w_0$. Precisely, let $(\w_0-\Delta,\w_0+\Delta)$ be the support of $\hf$, and
assume that $\Delta<\w_0$, and $\Delta>r$. 
 Consider a decreasing grid in scale, 
\begin{equation}
\{-g_2^m\}_{m=0}^M \quad , \quad 0=-g_2^0>-g_2^1>\ldots>-g_2^M,
\label{eq:scale_grid1}
\end{equation}  
such that $\exp(g_2^M)\w_0\geq \w_N$. 
The dilated window $\hat{\pi}_2(-g_2^m)\hf$ is centered about the frequency $\exp(g^m_2)\w_0$, with support 
\[(\exp(g^m_2)\w_0-\exp(g^m_2)\Delta\ ,\ \exp(g^m_2)\w_0+\exp(g^m_2)\Delta)\] 
for every 
$m=0,\ldots,M$.
Assume that $\exp(g^m_2)(\w_0+\Delta)$ is sufficiently larger than $\exp(g^{m+1}_2)\w_0$ for every 
 $m=0,\ldots,M-1$. Namely, assume there is enough overlap between the support of 
$\hat{\pi}_2(-g_2^m)\hf, \hat{\pi}_2(-g_2^{m+1})\hf$ for every $m=0,\ldots,{M-1}$, to guarantee that
\begin{equation}
\sum_{m=0}^M W_m\abs{\hat{\pi}_2(-g_2^m)\hf(\w)}^2 \approx 1
\label{eq:reco_g10D}
\end{equation}
for every $\w\in [\w_0,\w_M]$. Here, $W_m$ are weights that depend on the grid, which normalize the sum so that it approximates the integral of (\ref{eq:reco_g10}). Note that (\ref{eq:reco_g10D}) guarantees approximate reconstruction of the discrete method, in view of  (\ref{eq:reco_g10}). \textcolor{black}{Suppose we are interested in the content of the signal in the time interval $[X_{\rm l}, X_{\rm r}]$  with $X_{\rm r}-X_{\rm l} = r^{-1}$.}
For each fixed scale $-g_2^m$, we consider a regular time grid $\{g^k_1\}_k$ with number of \textcolor{black}{samples $K_m$ proportional to the support size $2\exp(g_2^m)\Delta$ of $\hat{\pi}_2(-g_2^m)\hf$, namely $K_m=C \lceil2\exp(g_2^m)\Delta\rceil$ for some $C>1$.} The approximate value of the wavelet transform at $(g_1^k,-g_2^m)$ is given by
\begin{equation}
V_f[s](g_1^k,-g_2^m)\approx \Big[\cF^{-1}e^{-\frac{1}{2}g_2^m}\overline{\hf(e^{-g_2^m}\cdot)}\hs(\cdot)\Big](g_1^k),
\label{eq:rerwe121212wd56}
\end{equation}
where $\cF^{-1}$ is the discrete inverse Fourier transform.  In (\ref{eq:rerwe121212wd56}), $\hf$ is sampled at different time points for different scales $-g_2^m$.

Next, we derive the computational complexity of this discrete wavelet method. Consider the correspondence between scale and frequency,
$\k=\w_0\exp(g_1)$ (see Subsection \ref{1D continuous wavelet as a time-frequency transform}). Let 
$\{\k_m=\exp(g_2^m)\w_0\}_{m=0}^M$ be a frequency grid, constructed by $\k_m=\kappa(m)$, where $\kappa$ is a smooth function, and $\kappa_N\geq \w_N$.
The support of $\hat{\pi}_2(-g_2^m)\hf$ is 
$(\k_m-\frac{\Delta}{\w_0}\k_m\ ,\ \k_m+\frac{\Delta}{\w_0}\k_m)$. \textcolor{black}{Denote by $n^m_{\rm l}$ the greatest $n$ such that $\w_n\leq C(\k_m-\frac{\Delta}{\w_0}\k_m)$, and by $n^m_{\rm r}$ the smallest $n$ such that $\w_n\geq C( \k_m+\frac{\Delta}{\w_0}\k_m)$. Then, by the relations $g_1^k = k\frac{X_{\rm r} - X_{\rm l}}{n^m_{\rm r}-n^m_{\rm l}} + X_{\rm l}$ and $\w_n = rn + \w_{n^m_{\rm l}}$, we can write (\ref{eq:rerwe121212wd56}) as
\[
\begin{split}
  V_f[s](g_1^k,-g_2^m)\approx & r\exp\Big(2\pi i\frac{X_{\rm r} - X_{\rm l}}{n^m_{\rm r}-n^m_{\rm l}}k\w_{n^m_{\rm l}} + 
  2\pi i X_{\rm l}\w_{n_{\rm l}^m}
  \Big) \cdot\\
& \sum_{n=0}^{n^m_{\rm r}-n^m_{\rm l}} \exp\Big(  2\pi i X_{\rm l}rn\Big)
\exp\Big(2\pi i\frac{X_{\rm r} - X_{\rm l}}{n^m_{\rm r}-n^m_{\rm l}}rnk\Big)
e^{-\frac{1}{2}g_2^m}\overline{\hf(e^{-g_2^m}\w_n)}\hs(\w_n). 
\end{split}
\]
Hence, we can compute $V_f[s](g_1^k,-g_2^m)$ using FFT by
\[
\begin{split}
  \tilde{L}(X_{\rm l})V_f[s](g_1^k,-g_2^m) \approx & r\exp\bigg(2\pi i\Big(\frac{X_{\rm r} - X_{\rm l}}{n^m_{\rm r}-n^m_{\rm l}}k - X_{\rm l}\bigg)\w_{n^m_{\rm l}} + 
 2\pi i  X_{\rm l}\w_{n_{\rm l}^m}
  \Big) \cdot\\
& \sum_{n=0}^{n^m_{\rm r}-n^m_{\rm l}}  
\exp\Big(2\pi i\frac{1}{n^m_{\rm r}-n^m_{\rm l}}nk\Big)
e^{-\frac{1}{2}g_2^m}\overline{\hf(e^{-g_2^m}\w_n)}\hs(\w_n) ,
  \end{split}
\]
where $\tilde{L}$ is the periodic translation operator.
}
Since there are $O(\frac{1}{r}\frac{\Delta}{\w_0}\k_m)$ samples in the support of $\hat{\pi}_1(-g_2^m)\hf$, the computational complexity of one inverse FFT is
$O\Big(\frac{1}{r}\frac{\Delta}{\w_0}\k_m\log(\frac{1}{r}\frac{\Delta}{\w_0}\k_m)\Big)$.
If $\k$ is smooth enough in $m\in\RR$, by the complexity of FFT, we can estimate the computational complexity by the order of
\[\frac{1}{r}\int_{m_0}^M \frac{\Delta}{\w_0}\k(m)\log(\frac{\Delta}{\w_0}\k(m))dm.\]

We now give two extremes  of this construction. The first example is the standard exponential scale grid \textcolor{black}{\cite[Section 5.3]{wavelet_tour5}}. Here, $\kappa(m)=\exp(hm)$, and the complexity is of order of
\begin{multline}
\frac{1}{r}\int_{\log(\w_0)/h}^{\log(\w_N)/h} \frac{\Delta}{\w_0}\exp(h m) (\log(\frac{\Delta}{\w_0})+h m)dm   \\
= \frac{1}{r}\frac{\Delta}{\w_0}\Big(\frac{1}{h}\exp(h m)\log(\frac{\Delta}{\w_0}) + \frac{1}{h}\exp(h m)( \a m-1) \Big) \Big|_{\log(\w_0)/h}^{\log(\w_N)/h} \\
=O(\frac{1}{h}\frac{1}{r}\frac{\Delta}{\w_0}\w_N\log(\w_N))=O(\frac{\Delta}{\w_0h}N\log(N)).
\label{Disc_complex1}
\end{multline}
Note that the overall number of samples in phase space in this method is $O(N)$.

Another example is when $\k(m)=\w_0+rm$, and we get computational complexity of order 
\[\frac{1}{r}\int_{0}^{N} \frac{\Delta}{\w_0}(\w_0+rm)\log\big(\frac{\Delta}{\w_0}(\w_0+rm)\big)dm= O(\frac{\Delta}{\w_0 }N^2\log(N))\]

Last, if we are interested in an $N\times N$ regular grid resolution in time-frequency, taking $N$ time samples for each scale, the computational complexity is $O(N^2\log(N))$. 

To conclude, in the discrete wavelet methods, if we want a computational complexity of $O(N\log(N))$, we are limited to a resolution of $O(N)$ samples in phase space, and if we want $O(N^2)$ samples in phase space, we are limited to a computational complexity of $O(N^2\log(N))$.

\subsection{Discretization of the wavelet Plancherel method}
\label{Discretization of the wavelet Plancherel method}

In this subsection we consider two discretizations of $\cW\otimes\cS$ that accommodate computations of phase space band-pass filters. We show that the first discretization leads to a naive algorithm, while the second leads to a novel algorithm, with lower computational complexity. To avoid dealing with the reflection subgroup, we assume that both the signal $\hs$ and the window $\hf$ are positively supported in the frequency domain.

Recall that for the 1D wavelet transform, $\cW\otimes\cS\cong L^2(\RR^2;\frac{1}{\abs{\w'}}d\w'd\w)$. The integral lines of the exponential of $\bT_1$, $\exp(it\bT_1)$, is given in (\ref{eq:scale_x_flow11}) and (\ref{eq:scale_x_flow22}). In view of (\ref{eq:scale_x_flow11}) and (\ref{eq:scale_x_flow22}), these integral lines are \emph{slope rays}, \textcolor{black}{as defined in (\ref{eq:slope_ray}).} 
We consider the slope variables in $\cW\otimes\cS$, $z=\x'/\x$ and $\x$, \textcolor{black}{as defined in (\ref{eq:slope3}).} In these variables, $\exp(it\bT_1)$ is a translation along $\x$ with  constant slope $z$. As a result, time-pass filters are convolutions along slopes. The band-pass filters in $\bT_2$ are multiplicative operators, depending only on the slope $z$. 

\subsubsection{Naive discretization of the window-signal space}
Let us introduce our first discretization of the coefficient search in $\cW\otimes\cS$.
Recall that the $time$ exponential $\exp(it\bT_1)$ depends only on the variable $\x$, and the $scale$-pass filters depend only on the slope variable of $z$. Thus, a natural choice for a discretization is to model functions $F\in \cW\otimes\cS$ as samples of $F$ on a uniform $\{(\x_n,z_k)\}_{n=1,k=1}^{N\ \ ,\ K}$ grid. In the initialization of the search, with $\hf\in\cW$ and $\hs\in\cS$, we define
\begin{equation}
F(\x_n,z_k)=[\hf\otimes \hs](\x_n,z_k\x_n)=\overline{\hf(z_k\x_n)}\hs(\x_n).
\label{eq:dfdfdfnmnmakfsdfk}
\end{equation}
The $scale$-pass filters simply restrict the grid to a subset of the slope samples $\{z_k\}_{k=1}^K$. The square norm of a $slope$-passed $F$ is given by 
$\sum_{k=K_{0}}^{K_{1}}\sum_{n=1}^N \abs{F(\x_n,z_k)}^2$.
A $time$-pass of $F$, based on $L$ nonzero Fourier coefficients, is a convolution of $F$ along the $\x$ direction, independently of $z$. Namely it is a linear combination of $L$ translated versions of $F$ along the slope-rays. For $K=O(N)$, one coefficient search takes $O(N^2L)$ operations.

In view of (\ref{eq:dfdfdfnmnmakfsdfk}) and (\ref{eq:rerwe121212wd56}), the above method could have been constructed based on the standard wavelet theory, without the need of the wavelet-Plancherel heavy machinery. Moreover, this method does not give a significant, if any, improvement over the $O(N^2\log(N))$ complexity of the naive method.

\subsubsection{Efficient discretization of the window-signal space}
\label{Efficient discretization of the window-signal space}

Next, we construct our second discretization of $\cW\otimes\cS$. 
This discretization is based on the tensor product of a discretized $\cW$ space, $\tilde{\cW}$, and a discretized $\cS$ space, $\tilde{\cS}$. In choosing $\tilde{\cW}$ and $\tilde{\cS}$, we follow two guidelines. First, the model should be invariant under the operations of the search algorithm. Since in the search algorithm, functions of $\tilde{\cW}\otimes\tilde{\cS}$ are translated along slope-rays, and restricted to slope-bands, we would like the space $\tilde{\cW}\otimes\tilde{\cS}$ to be invariant under these operations. Moreover, since $time$-pass filters are based on additions, we would like $\tilde{\cW}\otimes\tilde{\cS}$ to also be invariant under addition. 
The second guideline is that $\cS$ should be discretized ``finely'', to accommodate a rich class of signals. Thus we discretize $\cS$ using a uniform grid $\{\w_n\}_{n=0}^N$. However, the window can be defined ``coarsely'', as a low dimensional parametric model. We consider windows that are B-splines in the frequency domain, and specifically in this paper we restrict ourselves to piecewise linear continuous windows. We note that even a piecewise linear continuous window with three knots can model mother wavelets with a range of desirable properties. Using a window, with its first knot near $0$, we can design ``impulse-like'' mother wavelets, for detecting discontinuities. For another example, using a window  with its first knot away from $0$, we can design ``oscillatory'' mother wavelets, for detecting local frequencies.  We note that spline wavelet systems were extensively studied in the past, see, e.g., \cite{Unser:1993:PSW:160964.174020}.
We call $\tilde{\cW}\otimes\tilde{\cS}$ the space of spline sequences.

Next, we show how to interpret $\tilde{\cW}\otimes\tilde{\cS}$ as a (nonlinear) subspace of $\cW\otimes\cS$. Let $\{\hf_n\}_{n=0}^N\in \tilde{\cW}\otimes\tilde{\cS}$, where each $\hf_n$ is a linear spline. Let $d:\RR\rightarrow\RR$ be an $L^2(\RR)$ function, with support $[-\e,\e]$, such that $\e<\w_0$. We call $d$ a bump function. To embed the discrete model in $\cW\otimes\cS$, the spline sequence $\{\hf_n\}_{n=0}^N$ is mapped to the function
\begin{equation}
\{\hf_n\}_{n=0}^N(\w',\w)=\sum_{n=0}^N \hf_n(\w') d(\w-\w_n)
\label{eq:spline1}
\end{equation}
in $\cW\otimes\cS$. The tensor product of a spline window $\hf$ and a discrete signal $\{\hs(\w_n)\}_{n=0}^N$, gives the spline sequence 
\[\hf\otimes\{\hs(\w_n)\}_{n=0}^N = \{\hs(\w_n)\overline{\hf}\}_{n=0}^N\in \tilde{\cW}\otimes\tilde{\cS}.\]
In the following we do not distinguish between spline sequences and their embedding to the window-signal space.

\subsubsection{Invariance of the discrete window-signal space to search operations}
\label{Invariance of the discrete window-signal space to search operations}

Next, we show that the space $\tilde{\cW}\otimes\tilde{\cS}$ is invariant under approximate operations of the search algorithm. Let $\{\hf_n\}_{n=0}^N\in \tilde{\cW}\otimes\tilde{\cS}$. Each $\hf_n$, can be represented as the two sequences,  $\{\w'_k\}$ and $\{f_n(\w'_k)\}$, of the knots of the spline and the values of $\hf_n$ at the knots respectively. The spline $\hf_n$ may be discontinuous only at its first and its last knots, in case the values of the spline are nonzero there. In view of the embedding of spline sequences to $\cW\otimes\cS$, a $scale$-pass  filter $P^2_{[a,b]}$, upon the $scale$ interval $[a,b]$, approximately corresponds to restricting each $\hf_n$ to an interval that depends on $n$. This is true since the bump $d$ in (\ref{eq:spline1}) is narrow. Thus, an approximate $scale$-pass keeps $\{\hf_n\}_{n=0}^N$ a spline sequence, and does not increase the number of knots.

Second, $\tilde{\cW}\otimes\tilde{\cS}$ is also invariant under approximate translations along slope-rays. Assume that $F(z\w,\w)=\{\hf_n\}_{n=0}^N(z\w,\w)$ is zero for any slope $z>B$ for some $B$. Translations along slope-rays keep the $\cW$ cross-sections $F\ (\cdot\ ,{\rm const})$ of $F$ as linear splines. Moreover, if we choose $\e$ (of the bump $d$) sufficiently smaller than $\frac{1}{B}$ and $\w_0$,  
then the shear in $\exp(it\bT_1)f_n(\w') d(\w-\w_n)$ is negligible. Indeed, by the support of $d$ we consider $(\w-\w_n)=O(\e)$, so
\[\exp(it\bT_1)[f_n(\w') d(\w-\w_n)]= \hf_n(\frac{\w'}{\w}(\w-t)) d(\w-\w_n-t) \]
\[ =\hf_n(\frac{\w'}{\w_n+O(\e)}(\w_n + O(\e)-t)) d(\w-\w_n-t) \]
\[=\hf_n(\frac{\w'}{\w_n}(1+O(\frac{\e}{\w_n}))(\w_n + O(\e)-t)) d(\w-\w_n-t)\]
\[=\hf_n\Big(\frac{\w'}{\w_n}\big(\w_n+O(\e)-t +O(\e)+O(\frac{\e^2}{\w_n})+O(\frac{\e t}{\w_n})\big)\Big) d(\w-\w_n-t)\]
and by the fact that $\frac{\w'}{\w_n}<B$ and $\e\ll\w_0$,
\[=\hf_n\Big(\w'(1-\frac{t}{\w_n})+O(\e)+O(\frac{\e }{\w_n})\big)\Big) d(\w-\w_n-t)\]
so
\[ \exp(it\bT_1)[f_n(\w') d(\w-\w_n)] \approx \hf_n\big(\w'(1-\frac{t}{\w_n})\big) d(\w-\w_n-t).\]
In conclusion, up to an approximation, $\exp(it\bT_1)\{\hf_n\}_{n=0}^N\in \tilde{\cW}\otimes\tilde{\cS}$.

We can give two justifications to the assumption that the support of $F$ is bounded from above in the slope variable. First, note that if $\hs$ is supported in $[\w_0,\infty)$, and $\hf$ is supported in $(0,\w'_K)$, then $F(z\w,\w)=\hf\otimes\hs(z\w,\w)$ is zero for any slope $z>B=\frac{\w'_K}{\w_0}$. In practice, a small band of low frequencies of $\hs$ can be filtered out, and saved separately, prior to the implementation of the method. Alternatively, the support of $\hf\otimes\hs$ on high slopes can be filtered out using a $scale$-pass filter, and saved separately as a ``father wavelet'' component. 

Last, the addition of two spline sequences is also a spline sequence, with at most twice the number of knots. This shows that the model is invariant under the search operations. Note as well that the computational complexity of each of the above operations is linear in the overall number of knots.

\textcolor{black}{
In Appendix \ref{Implementation of the Wavelet-Plancherel search algorithm} we give implementation details of the wavelet-Plancherel search algorithm, and summarize the method in Algorithm \ref{Wavelet-Plancherel Search Algorithm}.
}

\subsection{Complexity of the WP4 algorithm}
\label{Complexity of the WP4 algorithm}

Assume that we perform our search on $\hf\otimes\hs$, where $\hf$ consists of $K$ knots. The spline sequence $\hf\otimes\hs$ consists of overall $NK$ knots. In general, the linear combination of $\hf\otimes\hs$ with $\exp(i t \bT_1)\hf\otimes\hs$ is a spline sequence with overall $2NK$ knots.
At this stage, it may seem like the data size of our model can increase exponentially as the search algorithm goes deeper, since we keep on summing spline sequences. However, we next show that the overall number of values of our model is bounded by $NKL$, at any depth of the search.

\begin{proposition}
\label{partition_prop}
Let $\hs$ consist of $N$ equidistant samples $0<\x_0,\ldots,\x_N$, and $\hf$ consist of $K$ knots $0<\x'_1,\ldots,\x'_K$.
Consider the coefficient search algorithm of WP4, with $time$-pass based on $L$ nonzero coefficients. 
Consider the $scale$ binary search, such that for slope support $[a,b]$, bisects the band in
\begin{equation}
c= \frac{2}{\frac{1}{a}+\frac{1}{b}}.
\label{eq:77ghjghjghj}
\end{equation}
Suppose that the initial slopes satisfy  $a\geq\frac{\w'_K}{\w_N}$ and $b\leq \frac{\w'_1}{\w_0}$.
Then the overall number of knots of the spline sequence model is approximately bounded by $NKL$ at every search step. Namely, for every $\d>0$ there is $N_{\d}>0$ such that for every $N>N_{\d}$,  the overall number of knots of the spline sequence model is bounded by $(1+\d)NKL$.
\end{proposition}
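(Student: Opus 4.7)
The strategy is to track each node of the spline sequence by two invariants — its \emph{source slope} and its \emph{cumulative translation} — and count (source slope, cumulative translation) pairs after $J$ steps of the bisection search.

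First, I would use the flow equations \eqref{eq:scale_x_flow11}--\eqref{eq:scale_x_flow22} to observe that the slope $z=\omega'/\omega_n$ of each node is preserved under every time-translation $\exp(it\bT_1)$. Consequently, every node present at any step inherits its slope $z=\omega'_k/\omega_m$ from an original source pair $(m,k)$ with $m\in\{0,\ldots,N\}$ and $k\in\{1,\ldots,K\}$; only its frequency position changes, through a cumulative shift $\tau=\sum_{j=1}^J 2^j l_j$ accumulated over the $J$ applied time-pass filters (where $l_j$ ranges over the index set of $R_j^{b_j}$).

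Second, I would bound the number of source pairs $(m,k)$ whose slope lies in the current band $[a_J,b_J]$. The harmonic-mean bisection \eqref{eq:77ghjghjghj} has the property $1/c=(1/a+1/b)/2$, so each step halves the band in the coordinate $u=1/z$. After $J$ steps the $u$-width equals $(1/a-1/b)/2^J$, which, under the hypotheses $a\geq\omega'_K/\omega_N$ and $b\leq\omega'_1/\omega_0$, is at most $Nr/(\omega'_K\,2^J)$. For each fixed $k$, the equidistant-sample assumption gives the number of $m$ with $\omega'_k/\omega_m\in[a_J,b_J]$ as $\omega'_k(1/a_J-1/b_J)/r\leq N/2^J$; summing over $k$ yields at most $NK/2^J$ in-band source pairs.

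Third, I would count the distinct cumulative translations $\tau=\sum_{j=1}^J 2^j l_j$. By the binary dilation structure of the dilated filters $R_j^{b_j}$, these sums take integer values in a range of size $O(L\cdot 2^J)$, which (in the periodic sample model) bounds the number of distinct target samples reachable from a given source. The crucial cancellation then arises: the product of the in-band source count $NK/2^J$ with the cumulative-translation count $O(L\cdot 2^J)$ is $O(NKL)$, independent of $J$. The shrinkage of the slope band and the growth of the translation range conspire, via the common factor $2^J$, to keep the total node count bounded.

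Finally, I would combine these bounds and absorb the two sources of looseness — the spline-discretization approximation error introduced in \eqref{eq:spline1} and boundary effects near the endpoints of the sample range — into the slack $\delta NKL$, both being of lower order in $N$. I expect the main obstacle to lie in the third step: precisely pinning down the constant so that the estimate yields $(1+\delta)NKL$ rather than a larger multiple, which requires careful combinatorial bookkeeping on the integer representations $\sum_j 2^j l_j$ and their interaction with periodicity.
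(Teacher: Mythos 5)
Your proposal is correct and follows essentially the same route as the paper's proof: the slope invariance of nodes under $\exp(it\breve{T}_1)$, the fact that the harmonic-mean bisection splits the node count evenly (equivalently, halves the band width in the coordinate $u=1/z$, which is exactly how the paper derives $c$ by equating the two linear count estimates), the $O(2^J L)$ count of cumulative dyadic translations $\sum_j 2^j l_j$, and the cancellation of the $2^J$ factors between the $NK/2^J$ in-band sources and the translation count. The only difference is packaging: the paper runs a per-step induction, bounding the count at step $j$ by $\left(\frac{1}{2}+\frac{1}{N}\right)^j 2^j LKN$ and letting $\left(1+\frac{2}{N}\right)^{\log N}\rightarrow 1$, whereas you multiply the depth-$J$ counts directly; the constant-pinning you flag as the main remaining obstacle is resolved (at the same heuristic level of rigor) by that compounding estimate.
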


\begin{proof}
Each application of $\exp(it\bT_1)$ on a spline sequence creates new knots, by translating the original knots along slopes. Note that the exponentials $\exp(it\bT_1)$ in the Fourier expansion (\ref{eq:T_filter2}) translate in distances along $\cS$ that are multiples of the grid spacing of $\cS$.   For a knot $(\x_n,\x'_k)$, the set of all translated versions of this knot along a slope intersecting it, is a regular grid on the slope. There are $L$ new knots due to the first $time$-pass filter. More generally, there are $2^jL$ new knots due to all of the $j$ first $time$-pass filters.
Now, the partition (\ref{eq:77ghjghjghj}) divides evenly the knots of $\hf\otimes\hs$ between the two bands of slopes $[a,c]$ and $[c,b]$. To see this, denote $\l_k=\{(\w'_k,\w_n)\ |\ n=0,\ldots,N\}$.
For each $k=1,\ldots,K$, the number of knots of $\l_k$ in the slope band $[a,c]$ is estimated by
\[\frac{N}{\w_N-\w_0}\Big(\frac{\w'_k}{a}-\frac{\w'_k}{c}\Big),\]
with error at most $1$. Similarly, the number of knots of $\l_k$ in the slope band $[c,b]$ is estimated by
\[\frac{N}{\w_N-\w_0}\Big(\frac{\w'_k}{c}-\frac{\w'_k}{b}\Big).\]
Thus the overall number of knots of $\{(\w'_k,\w_n)\ |\ k=1,\ldots,K \ ,\ n=0,\ldots,N\}$, intersecting $[a,c]$ and $[c,b]$ respectively is
\[\frac{N}{\w_N-\w_0}\sum_{k=1}^K \Big(\frac{\w'_k}{a}-\frac{\w'_k}{c}\Big) \quad , \quad \frac{N}{\w_N-\w_0}\sum_{k=1}^K \Big(\frac{\w'_k}{c}-\frac{\w'_k}{b}\Big),\]
with error at most $K$. To guarantee equality between these estimates, we choose $c= \frac{2}{\frac{1}{a}+\frac{1}{b}}$. Each of the resulting slope bands $[a,c]$ and $[c,b]$ have at most $\frac{1}{2}NK+K=(\frac{1}{2}+\frac{1}{N})NK$ knots.

Thus, the partition (\ref{eq:77ghjghjghj})  also divides evenly the set of knots due to all of the $j$ first $time$-pass filters, counting repeated knots. This shows that at step $j$ of the search, the model comprises at most $(\frac{1}{2}+\frac{1}{N})^{j}2^jLKM$ knots. Observe that 
\[(\frac{1}{2}+\frac{1}{N})^{j}2^j = (1+\frac{2}{N})^j \]
and for $j=\log(N)$, the expression $(1+\frac{2}{N})^{\log(N)}$ converges to $1$ as $N\rightarrow\infty$.
\end{proof}

The initialization of the slopes in Proposition \ref{partition_prop} is satisfactory if the window $\hf$ has a small support near $\w_0$. The signal $\hs$ can be padded by zeros in such a situation, to guarantee that the initial slope interval covers all of the data. If we want to use the initial slopes $a=\frac{\w'_1}{\w_N}$ and $b=\frac{\w'_K}{\w_0}$, we need to demand that the $c$ slope partitions the knots $\{(\w'_k,\w_n)\ |\ k=1,\ldots,K \ ,\ n=0,\ldots,N\}$ intersecting the slope band $[a,b]$ into two equal sized subsets.

Since the model comprises at most of $NKL$ values, $time$ exponentials and a $scale$-pass filters take $O(NKL)$ operations \textcolor{black}{(see Appendix \ref{Implementation of the Wavelet-Plancherel search algorithm} for the implementation details)}. Moreover, one $time$-pass consists of $L$ $time$ exponentials, so we have the following result.

\begin{proposition}
The computational complexity of one search is $O(N\log(N)KL^2)$.
\end{proposition}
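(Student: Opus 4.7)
The plan is to bound the cost of one coefficient search by counting, at each bisection step, how expensive the elementary operations are on the spline sequence model. By Proposition \ref{partition_prop}, the spline sequence representing the current partially-filtered version of $\hf\otimes\hs$ carries at most $\approx NKL$ nodes throughout the entire bisection, so I use this as the invariant that drives the complexity bound.

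First I would establish the per-step cost of the three building blocks acting on a spline sequence of size $O(NKL)$. A scale-pass filter $\tilde{P}_{S^r_J}$ is a characteristic-function multiplier in the slope variable $z=\w'/\w$; as discussed around \eqref{eq:spline1}, restricting each $\hf_n$ to the relevant slope interval is done in time linear in the number of nodes, i.e.\ $O(NKL)$. A single time exponent $\exp(it\bT_1)$ translates each node along a slope ray via \eqref{eq:scale_x_flow11}--\eqref{eq:scale_x_flow22}, which again touches each of the $O(NKL)$ nodes a constant number of times. A time-pass filter $\tilde{P}_{T^r_J}$ is the trigonometric polynomial \eqref{eq:T_filter2}, i.e.\ a linear combination of $L$ such translated copies, so its total cost is $O(NKL\cdot L)=O(NKL^2)$.

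Next I would assemble the cost of one bisection step. At step $J$, the algorithm computes $\tilde{P}_{G^r_J}=\tilde{P}_{T^r_J}\tilde{P}_{S^r_J}$ applied to the current model for each of the four sub-rectangles, and then evaluates the $\cW\otimes\cS$ norm of the result, which is a sum over at most $O(NKL)$ nodes. Each sub-rectangle therefore costs $O(NKL^2)$, and the four together still cost $O(NKL^2)$. Crucially, I would emphasise (as the text already notes after \eqref{eq:gnf7}) that the factor $\prod_{j=0}^{J-1}\tilde{P}_{G_j}$ was computed at the previous step, so only the outermost $\tilde{P}_{G^r_J}$ needs to be applied at the current step---this is what prevents the cost from compounding.

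Finally I would count the bisection depth. The search refines a phase-space rectangle of resolution $O(N)\times O(N)$ by halving in both directions, so it terminates after $O(\log N)$ steps. Multiplying the per-step cost $O(NKL^2)$ by the number of steps $O(\log N)$ yields the claimed bound $O(N\log(N)KL^2)$. The only delicate point, which I expect to be the main obstacle, is justifying that the model size really stays $O(NKL)$ uniformly in the depth $J$; this is the content of Proposition \ref{partition_prop} and can be invoked directly, but one should verify that the approximate invariance of $\tilde{\cW}\otimes\tilde{\cS}$ under the filters (discussed in Subsection \ref{Discretization of the wavelet Plancherel method}) is strong enough so that no auxiliary bookkeeping inflates the constant.
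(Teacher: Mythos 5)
Your proposal is correct and follows essentially the same route as the paper: the paper's own (very terse) justification is exactly the node-count invariant of Proposition \ref{partition_prop} giving $O(NKL)$ per $scale$-pass or $time$ exponent, $L$ exponents per $time$-pass giving $O(NKL^2)$ per step, and $O(\log N)$ bisection steps. Your additional remarks---that the accumulated product $\prod_{j<J}\tilde{P}_{G_j}$ is reused from the previous step and that the norm evaluation is linear in the node count---are implicit in the paper's discussion around \eqref{eq:gnf7} and correctly filled in.
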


We remark that in practice we can choose for example $K=3$ and $L=9$, and both are $O(1)$ with respect to $N$. Therefore, our algorithm takes $O(N\log(N))$ operations for each coefficient search. This is comparable to a standard Matching Pursuit on the wavelet coefficients of a discrete wavelet frame \cite{Painless}, as described in Subsection \ref{Standard discretization of the 1D wavelet transform}.


\subsection{1D continuous wavelet as a time-frequency transform}
\label{1D continuous wavelet as a time-frequency transform}

The WP4 method is good at pinpointing time-scale components of a signal accurately. We show in the next subsection the WP4 squares the phase space resolution in comparison to
the phase space resolution of discrete methods, with the same computational complexity. To understand the underlying resolution of the WP4 method, we explain in this subsection how to interpret the 1D continuous wavelet transform as a time-frequency transform. The time-frequency interpretation of the continuous wavelet transform is standard, see e.g. \cite{Practical}. For instructive purposes we offer a simplified analysis, skipping technical details.

Consider the space of signals $\hs$ with supports in 
 $[0,\infty)$.
Assume that the window $\hf$ is compactly supported, concentrated about the frequency $\w_0$. Let $(\w_0-\Delta,\w_0+\Delta)$ be the support of $\hf$, and
suppose that $0<\Delta<\w_0$. 

Next, we consider the standard formulation of the 1D continuous wavelet transform, with non-exponential scale $\b$ and time $\a$ (see Subsection \ref{The 1D wavelet transform} and \cite{Ten_lectures}).
The standard 1D continuous wavelet transform is written in frequency as
\[W_{f}[s](\a,\b) = \int_0^{\infty} \hs(\w)\overline{\sqrt{\b}e^{2\pi i \w \a}\hf\Big(\b\w\Big)}d\w\] 
with reconstruction
\begin{equation}
\hs(\w)= \int_{-\infty}^\infty\int_{0}^\infty W_f[s](\a,\b) \sqrt{\b}e^{2\pi i \w \a}\hf\Big(\b\w\Big)  d\a \frac{1}{\b^2}d\b.
\label{eq:reco_standard}
\end{equation}
Note that the dilated window $\sqrt{\b}\hf\Big(\b\w\Big)$ is centered at the frequency $\frac{\w_0}{\b}$, with support 
$(\frac{\w_0}{\b}-\frac{\Delta}{\b},\frac{\w_0}{\b}+\frac{\Delta}{\b})$. Thus, the scale $\b$ corresponds to the frequency $\k=\frac{\w_0}{\b}$.
Therefore, the change of variable $\k=\frac{\w_0}{\b}$ in the reconstruction formula (\ref{eq:reco_standard}), transforms phase space into a time-frequency space, and we have
\begin{equation}
\hs(\w)= \int_{-\infty}^\infty\int_{0}^\infty W_{f}[s](\a,\frac{\w_0}{\k}) \sqrt{\frac{\w_0}{\k}}e^{2\pi i \w \a}\hf\Big(\frac{\w_0}{\k}\w\Big)  d\a d\k,
    \label{eq:TF_Haar}
\end{equation}
where $\a$ is the time variable in phase space, and $\k$ is the frequency variable. 

The phase space kernels $W_{\hf}[\hf](\a,\frac{\w_0}{\k})$ (the ambiguity functions)
become elongated along the frequency direction when the frequency $\k$ gets larger, and become elongated along the time direction when the frequency $\k$ gets smaller (see Proposition \ref{Amb_is_kernel} and the discussion around it for ambiguity functions). From this point of view, we can think of the wavelet transform as a version of the STFT, with support size of the window proportional to the frequency. In other words, each transformed version of the window has the same number of oscillations, no matter  what the frequency value is.

This formulation is beneficial when analyzing signals that are best understood in terms of time-frequency, e.g. audio signal, using the wavelet transform. We use this point of view in Subsection \ref{Sampling resolution in phase space} to design a wavelet-based phase vocoder.

\subsection{Sampling resolution of WP4 in the time-frequency plane}
\label{Sampling resolution in phase space}

In view of (\ref{eq:77ghjghjghj})  and the change of variable from scale to frequency, $\k=\frac{\w_0}{\b}$, it is evident that the WP4 method partitions uniformly in frequency.
This shows that the underlying resolution of the method is a regular $N\times N$ grid in the time-frequency plane. In this subsection we compare WP4 to standard discrete wavelet methods, and investigate the merits of WP4 from a signal processing point of view. To make the ideas concrete, we take as a toy test case the phase vocoder method. 

\subsubsection{Resolution and complexity of standard discrete wavelet methods}
The discrete frame method with computational complexity equivalent to the WP4 method is based on the exponential frequency grid
$\{\k_m=\exp(hm)\w_0\}_{m=0}^{O(\log(N))}$ (see Subsection \ref{Standard discretization of the 1D wavelet transform}).
This discrete method consists of overall $O(N)$ samples in phase space. The samples in the discrete method are evenly distributed in the time-frequency plane, in the sense that any big rectangle of the same area contains approximately the same number of samples as $N\rightarrow\infty$, regardless of its position in the time-frequency plane. However, the resolution in the frequency direction decreases exponentially as the frequency increases. Indeed, the space between consecutive frequencies is proportional to the frequency. This exponential frequency grid can be constructed sequentially by $\k_{m+1}=\exp(h)\k_m$. 
If we want the resolution of this discrete method to be comparable to the WP4 resolution in the high frequencies, the spacing $h$ has to satisfy 
\[\k_{M-1}+r \approx \k_{M}=\exp(h)\k_{M-1} \] 
where $r$ is the grid spacing of the signal domain in frequency. Therefore
\[1+h\approx \exp(h)\approx\frac{\w_N+r}{\w_N}=1+\frac{r}{\w_N}\]
or $h=O(\frac{r}{\w_N})=O(\frac{1}{N})$. In view of (\ref{Disc_complex1}), such a choice of $h$ results in an $O(N^2\log(N))$ computational complexity, in comparison to the $O(N\log(N))$ complexity of WP4.
To conclude, the WP4 method improves the resolution in phase space of the discrete method, from $O(N)$ to $O(N^2)$, while keeping the complexity $O(N\log(N))$.

\subsubsection{Feature extraction using a densely discretized phase space}

Next, we explain the importance of the above observation in feature extraction. Recall that the spread of the kernels in phase space, along the frequency direction, is proportional to the frequency of the kernel (Subsection \ref{1D continuous wavelet as a time-frequency transform}). From this point of view, an exponential grid in the frequency direction is appropriate. Indeed, this discrete dictionary is a frame, and the discrete wavelet transform is invertible with stable inversion. However, if we treat the time-frequency plane as a feature space, and the wavelet coefficients as local frequency features, then the resolution of the discrete wavelet transform in this feature space is coarse, while the resolution in the WP4 method is fine. 

A fine resolution in the time-frequency feature space is important in some signal processing tasks. In the following we review in short such a task, namely time stretching phase vocoder \cite{vocoder_book}. The task in the phase vocoder method is to dilate the time of an audio signal, without changing its pitch, or more accurately without dilating its frequency content.
The standard phase vocoder is based on the STFT. First, the signal is transformed to the time-frequency plane via STFT. Then, the phase space representation of the signal is dilated along the time direction, while the phase of each coefficient is adjusted in a manner to be described later. Last, the resulting phase space function is synthesized back to the signal space. 

Since the wavelet transform can be interpreted as a time-frequency transform, in principle, the method should also work when replacing STFT with the wavelet transform. Some work have been done in this direction (see e.g. \cite{vocoder_ex,MyRefStoch}).
An important advantage that can be gained by using wavelets instead of STFT has to do with localization. In the STFT, the support $\Delta$ of the window is fixed. Therefore, there is a fixed time distance that govern the blurriness in the time direction. Indeed, every feature in the signal with a local frequency at time $t_0$, interacts with any window with the same frequency, in the time interval $(t_0-\Delta,t_0+\Delta)$. This phenomenon results in the smearing of signal features with sharp attacks (e.g., drums), and in periodic artifacts when dilating feature of sharp attack separated in time by less than $\Delta$. On the other hand, the support of a wavelet atom is inversely proportional to its frequency. This means that features with sharp attacks, which have slowly decaying frequency contents, are smeared less when using wavelets.
Note that to avoid wavelet atoms with overly large time supports, the low frequency content of the signal can be stretched using a standard STFT phase vocoder.

The signal model underlying the phase vocoder method is a sum of slowly varying pure waves. An audio signal is modeled as
\begin{equation}
\sum_{m=0}^M A_m(t) \exp(i\w_m(t))
\label{vocoder_model}
\end{equation}
where the instantaneous frequency of the $m$th component, $\w'_m(t)$, and the amplitude $A_m(t)$ are slowly varying. The phase vocoder method is justified for this model, if the frequency resolution is fine enough to approximate the instantaneous frequencies  \cite{vocoder_imp}. Note that many real-life audio signals are polyphonic, with the components of (\ref{vocoder_model}) well spread in the time-frequency plane. For this reason, a standard discrete wavelet transform produces low quality results for polyphonic audio signals, as it has a coarse time-frequency resolution \cite{MyRefStoch}. On the other hand, since the resolution of the WP4 method is fine in time-frequency, it is more suitable for time stretching polyphonic audio signals.

In Figure \ref{fig:vocoder} we compare a standard discrete wavelet method to WP4 in a toy example.
In the standard method, a sparse approximation to the signal $s$ is extracted using OMP on the discrete wavelet dictionary, while in the second method the sparse approximation is obtained using WP4.
In both methods, the signal is approximated by
\[\sum_k \exp(i\theta_k)c_k \ \pi(\a_k,\b_k)f_2,\]
where the scalar coefficients consist of $\theta_k\in[0,2\pi]$ and $c_k\in\RR_+$, and the window is $f_2$.
We then stretch time by an integer factor $T$, to get the output signal
\begin{equation}
\sum_k \exp(iT\theta_k)c_k \pi(T\a_k,\b_k)\ f_3,
\label{eq:Wvocoder}
\end{equation}
where $f_3$ is a different window. 
For the reason behind the $\exp(i\theta_k)c_k\mapsto \exp(iT\theta_k)c_k$ transformation, see \cite{vocoder_book}.
In both cases we follow the next procedure.
We consider three windows $\hf_1,\hf_2,\hf_3$, all centered about $\w_0$, with frequency supports $\Delta,\frac{1}{2}\Delta$ and $\frac{1}{2T}\Delta$ respectively.
We extract the coefficients $(\a_k,\b_k)$ using the pursuit method on the window $f_1$. To increase the overlap between the atoms, we then consider the atoms 
$\pi(\a_k,\b_k)f_2$, and calculate the coefficients by
\[\exp(i\theta_k)c_k = \ip{s}{\pi(\a_k,\b_k)f_2}.\]
Last, to reconstruct the stretched signal, we stretch the window to $f_3$, while keeping its frequency the same as $f_1,f_2$, and synthesize using (\ref{eq:Wvocoder}).

\begin{figure}[!ht]
\centering 
\includegraphics[width=1\linewidth]{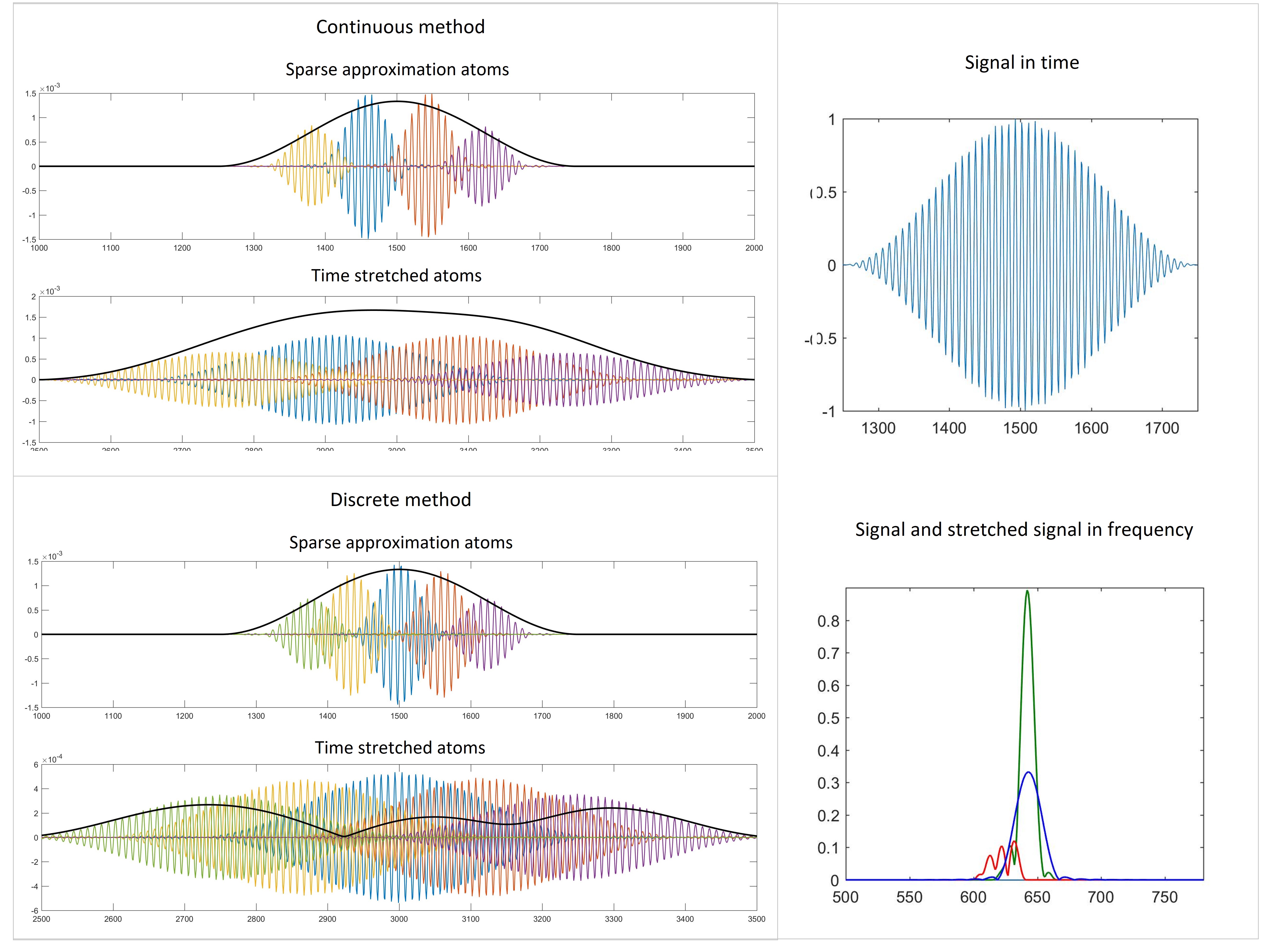}
\caption{ A signal $s$ with a constant rate of oscillation, localized in time, the sparse wavelet phase vocoder time stretching of the signal using the discrete frame method $s_D$, and the WP4 method $s_W$.
Top right: the signal $s$ in time.
 Bottom right: in blue $\abs{\hs}$, in green $\abs{\hs_W}$, in red $\abs{\hs_D}$. The WP4 preserves the instantaneous frequency of the signal, while dilating it.
Left: Top two: the envelope $\abs{s}$ in black (first graph), and $\abs{s_W}$ in black (second graph), along with the 4 coefficients of the sparse WP4 method in color (original coefficients in the first graph, and stretched coefficients in the second).
Bottom two: the same as the top two, but by the discrete frame method.}
\label{fig:vocoder} 
\end{figure}

\section*{Acknowledgment}

This research was supported by Grant No 1154/10 from the Israel Science Foundation (ISF).

\bibliographystyle{plain}	    
\bibliography{Ref_uncertainty3}

\begin{thebibliography}{10}

\bibitem{Kep_ex0}
S.~T. Ali, H.~Führ, and Anna~E. Krasowska.
\newblock Plancherel inversion as unified approach to wavelet transforms and
  wigner functions.
\newblock {\em Annales Henri Poincar{\'e}}, 4(6):1015--1050, Dec 2003.

\bibitem{New_mult2}
P.~Balazs, D.~Bayer, and A.~Rahimi.
\newblock Multipliers for continuous frames in {H}ilbert spaces.
\newblock {\em Journal of Physics A: Mathematical and Theoretical}, 45(24), may
  2012.

\bibitem{ex3}
P.~Bal{\'a}zs, B.~Laback, G.~Eckel, and W.~A. Deutsch.
\newblock Time-frequency sparsity by removing perceptually irrelevant
  components using a simple model of simultaneous masking.
\newblock {\em Trans. Audio, Speech and Lang. Proc.}, 18(1):34--49, January
  2010.

\bibitem{mult_inv}
P.~Balazs and D.T. Stoeva.
\newblock Representation of the inverse of a frame multiplier.
\newblock {\em Journal of Mathematical Analysis and Applications},
  422(2):981--994, 2015.

\bibitem{geometric}
D.~Bernier and K.~Taylor.
\newblock Wavelets from square-integrable representations.
\newblock {\em SIAM J. Math. Anal.}, 27(2):594--608, 1996.

\bibitem{quantum_measure}
P.~Busch, P.~Lahti, J.-P. Pellonp\"{a}\"{a}, and K.~Ylinen.
\newblock {\em Quantum Measurement}.
\newblock Springer, Cham, 2016.

\bibitem{allW2}
B.~Currey, H.~Führ, and V.~Oussa.
\newblock A classification of continuous wavelet transforms in dimension three.
\newblock {\em Applied and Computational Harmonic Analysis}, 2017.

\bibitem{Affine_uncertainty2}
S.~Dahlke, G.~Kutynoik, P.~Maass, C.~Sagiv, H.~G. Stark, and G.~Teschke.
\newblock The uncertainty principle associated with the continuous shearlet
  transform.
\newblock {\em International Journal of Wavelets Multiresolution and
  Information Processing}, 06, 2008.

\bibitem{Ten_lectures}
I.~Daubechies.
\newblock {\em Ten Lectures on Wavelets}.
\newblock SIAM: Society for Industrial and Applied Mathematics, 1992.

\bibitem{Painless}
I.~Daubechies, A.~Grossman, and Y.~Meyer.
\newblock Painless nonorthogonal expansions.
\newblock {\em J. Math. Phys.}, 27(5):1271--1283, 1986.

\bibitem{Nonlinear_approximation}
R.~A. DeVore.
\newblock Nonlinear approximation.
\newblock {\em Acta Numerica, A. Iserles (Ed.), Cambridge University Press,
  Cambridge}, pages 51--150, 1998.

\bibitem{VN_algebra}
J.~Dixmier.
\newblock {\em Von Neumann Algebras}.
\newblock North Holland, 1981.

\bibitem{gmp0}
M.~Duflo and C.~C. Moore.
\newblock On the regular representation of a nonunimodular locally compact
  group.
\newblock {\em J. Funct. Anal.}, 21:209--243, 1976.

\bibitem{abs_alg}
D.~S. Dummit and R.~M. Foote.
\newblock {\em Abstract algebra}.
\newblock Prentice-Hall, Englewood Cliffs, 1991.

\bibitem{allW1}
H.~Führ.
\newblock Continuous wavelets transforms from semidirect products.
\newblock {\em Cienc. Mat. (Havana)}, 18(2):179--190, 1988.

\bibitem{Kep_ex1}
H.~Führ.
\newblock The weak paley-wiener property for group extensions.
\newblock {\em Journal of Lie Theory}, 15(2):429--446, 2005.

\bibitem{HAPS}
G.~B. Folland.
\newblock {\em Harmonic Analysis in Phase Space}.
\newblock Princeton University Press, 1989.

\bibitem{repre_HA}
G.~B. Folland.
\newblock {\em A Course in Abstract Harmonic Analysis}.
\newblock Chapman and Hall/CRC, 1995.

\bibitem{Fuhr_wavelet}
H.~F{\"u}hr.
\newblock {\em Abstract Harmonic Analysis of Continuous Wavelet Transforms}.
\newblock Springer, 2005.

\bibitem{RS_int}
L.~Graves.
\newblock {\em The theory of functions of real variables. Second ed.}
\newblock McGraw-Hill, 1956.

\bibitem{Time_freq}
K.~Gr{\"o}chenig.
\newblock {\em Foundations of Time-Frequency Analysis}.
\newblock Birkh{\"a}user Basel, 2001.

\bibitem{Cont_wavelet_original}
A.~Grossmann and J.~Morlet.
\newblock Decomposition of hardy functions into square integrable wavelets of
  constant shape.
\newblock {\em SIAM J. Math. Anal.}, 15(4):723--736, 1984.

\bibitem{gmp}
A.~Grossmann, J.~Morlet, and T.~Paul.
\newblock Transforms associated with square integrable group representations i.
  general results.
\newblock {\em J. Math. Phys.}, 26(10):2473--2479, 1985.

\bibitem{Shearlet}
K.~Guo, G.~Kutyniok, and D.~Labate.
\newblock Sparse multidimensional representations using anisotropic dilation
  and shear operators.
\newblock {\em International Conference on the Interaction between Wavelets and
  Splinesl}, Wavelets and Splines, 2005.

\bibitem{optimal_Shearlet}
K.~Guo and D.~Labate.
\newblock Optimally sparse multidimensional representation using shearlets.
\newblock {\em SIAM J. Math. Anal.}, 39:298--318, 2007.

\bibitem{Spectral_Hilbert}
G.~Helmberg.
\newblock {\em Introduction to Spectral Theory in {H}ilbert Space}.
\newblock North-Holland Publishing Company, 1969.

\bibitem{tens1}
j.~Weidmann.
\newblock {\em Linear Operators in Hilbert Spaces}.
\newblock Springer-Verlag New York, 1980.

\bibitem{Kleppner}
A.~Kleppner and R.~L. Lipsman.
\newblock The plancherel formula for group extensions, i and ii.
\newblock {\em Ann. Sci. Ecole Norm. Sup.}, 5(6):459--516, 1972.

\bibitem{Qregula1}
A.~E. Krasowska and S.~T. Ali.
\newblock Wigner functions for a class of semi-direct product groups.
\newblock {\em Journal of Physics A: Mathematical and General}, 36(11):2801,
  2003.

\bibitem{vocoder_imp}
J.~Laroche and M.~Dolson.
\newblock Improved phase vocoder time-scale modification of audio.
\newblock {\em IEEE Transactions on Speech and Audio Processing},
  7(3):323--332, 1999.

\bibitem{MyRefStoch}
R.~Levie and H.~Avron.
\newblock Stochastic phase space signal processing with application to
  localizing phase vocoder.
\newblock {\em arXiv:1808.08810 [math.NA]}, 2018.

\bibitem{MyRef}
R.~Levie and N.~Sochen.
\newblock Uncertainty principles and optimally sparse wavelet transforms.
\newblock {\em Applied and Computational Harmonic Analysis}, 48(3):811 -- 867,
  2020.

\bibitem{Adjoint}
R.~Levie, H.~G. Stark, F.~Lieb, and N.~Sochen.
\newblock Adjoint translation, adjoint observable and uncertainty principles.
\newblock {\em Advances in Computational Mathematics}, 40:609--627, 2014.

\bibitem{ex4}
P.~Majdak, P.~Bal{\'a}zs, W.~Kreuzer, and M.~D{\"o}rfler.
\newblock A time-frequency method for increasing the signal-to-noise ratio in
  system identification with exponential sweeps.
\newblock {\em 2011 IEEE International Conference on Acoustics, Speech and
  Signal Processing (ICASSP)}, pages 3812--3815, 2011.

\bibitem{wavelet_tour}
S.~Mallat.
\newblock {\em A Wavelet Tour of Signal Processing: The Sparse Way, Third
  Edition}.
\newblock Elsevier, 2009.

\bibitem{Matching_pursuit}
S.~G. Mallat and Z.~Zhang.
\newblock Matching pursuits with time-frequency dictionaries.
\newblock {\em IEEE Transactions on Signal Processing}, 41(12):3397--3415,
  1993.

\bibitem{wave_mult0}
K.~Nowak.
\newblock On {C}alder\'{o}n-{T}oeplitz operators.
\newblock {\em Monatshefte f\"{u}r Mathematik}, 116:49 -- 72, 1993.

\bibitem{OMP}
Y.~C. Pati, R.~Rezaiifar, and P.~S. Krishnaprasad.
\newblock Orthogonal matching pursuit: recursive function approximation with
  applications to wavelet decomposition.
\newblock {\em Proceedings of 27th Asilomar Conference on Signals, Systems and
  Computers}, pages 40--44, 1993.

\bibitem{Weak_Integral}
B.~J. Pettis.
\newblock On integration in vector spaces.
\newblock {\em Transactions of the American Mathematical Society}, 44(2):277 --
  304, 1938.

\bibitem{fast_wave}
O.~Rioul and P.~Duhamel.
\newblock Fast algorithms for discrete and continuous wavelet transforms.
\newblock {\em IEEE Transactions on Information Theory}, 38(2):569--586, 1992.

\bibitem{rudin}
W.~Rudin.
\newblock {\em Functional Analysis, Second ed.}
\newblock McGraw-Hill, 1991.

\bibitem{vocoder_ex}
A.~G. Sklar.
\newblock A wavelet based pitch shifting method.
\newblock {\em unpublished}, 2006.

\bibitem{wavelet_tour5}
Mallat Stéphane.
\newblock Chapter 5 - frames.
\newblock In M.~Stéphane, editor, {\em A Wavelet Tour of Signal Processing
  (Third Edition)}, pages 155--204. Academic Press, Boston, third edition
  edition, 2009.

\bibitem{Practical}
C.~Torrence and G.~P. Compo.
\newblock A practical guide to wavelet analysis.
\newblock {\em Bulletin of the American Meteorological Society}, 79(1):61--78,
  1998.

\bibitem{Unser:1993:PSW:160964.174020}
M.~Unser and A.~Aldroubi.
\newblock Polynomial splines and wavelets: A signal processing perspective.
\newblock In Charles~K. Chui, editor, {\em Wavelets: A Tutorial in Theory and
  Applications}, pages 91--122. Academic Press Professional, Inc., San Diego,
  CA, USA, 1992.

\bibitem{Gibbs}
A.~Vretbladr.
\newblock {\em Fourier Analysis and its Applications}.
\newblock Springer, 2000.

\bibitem{vocoder_book}
U.~Zolzer.
\newblock {\em DAFX: Digital Audio Effects, Second Edition}.
\newblock Wiley, 2011.

\end{thebibliography}

\newpage

\appendix

\section{Functional calculus}
\label{Functional calculus}

\textcolor{black}{We consider the ``Riemann-Stieltjes'' formulation of the spectral theorem (see e.g.\cite{Spectral_Hilbert}).}

\begin{definition}
\label{defi_PVM}
Let $\cH$ be a separable (complex) Hilbert space, and $\SS$ be $\RR$ or $e^{i\RR}$. Let ${\cal B}$ be the standard Borel $\sigma$-algebra of $\SS$, and let ${\cal P}$ be the set of orthogonal projections in $\cH$. A mapping $P:{\cal B}\rightarrow {\cal P}$ is called a \emph{projection valued Borel measure (PVM)} if
\begin{enumerate}
	\item $P(\SS)=I$ and $P(\emptyset)=0$.
	\item
	If $\{B_n\}_{n\in\NN}$ is a sequence of pairwise disjoint Borel sets, then for every $k\neq j$, $P(B_k)$ and $P(B_j)$ are projections to two orthogonal subspaces, and
	\[P(\bigcup_{n\in\NN}{B_n})=\sum_{n\in\NN}P(B_n).\] 
\end{enumerate}
\end{definition}

We present a ``Riemann-Stieltjes'' formulation of the spectral theorem (see e.g.\cite{Spectral_Hilbert}).

\begin{theorem}
\label{spectral theorem}
Let $T$ be a self-adjoint or unitary operator in the separable Hilbert space $\cH$. Let $\SS=\RR$ in case $T$ is self-adjoint, and $\SS=e^{i\RR}$ in case $T$ is unitary.
Then,
there is a PVM, $P:\SS\rightarrow {\cal P}$
such that
\begin{enumerate}
	\item 
	\begin{equation}
T=\int_{\SS}\lambda \ dP(\lambda)
\label{eq:spectral_theorem}
\end{equation}
where the integral in (\ref{eq:spectral_theorem}) is defined as follows.
Let ${\bf x}=\{x_0,\ldots,x_n\}$ denote a finite Riemann partition of $\SS$. Let ${\rm diam}({\bf x})$ denote the maximal diameter of intervals in ${\bf x}$. Denote by $[x_k,x_{k+1}]$ a general interval in the partition, and by $[x_0,x_n]$ the union of the intervals of ${\bf x}$. We have
\[T= \lim_{ \scriptsize
\begin{array}{c}
	{\rm diam}({\bf x})\rightarrow 0 \cr
	[x_0,x_n]\rightarrow \SS
\end{array}
} \sum_{k=0}^{n-1} x_k P\big([x_k,x_{k+1})\big)  \]
where the limit is in the strong topology in case $T$ is unbounded (and thus self-adjoint), and in the operator norm topology otherwise.
	\item
	For $\SS=\RR$,
	$f$ is in the domain of $T$ if and only if
	\begin{equation}
	\int_{\RR}\lambda^2 \ d\norm{P\big((-\infty,\lambda]\big)f}^2<\infty
	\label{eq:spectral_domain1}
	\end{equation}
	where the integral in (\ref{eq:spectral_domain1}) is the Riemann-Stieltjes integral \textcolor{black}{\cite[Chapter XII]{RS_int}} with respect to the weight function $\norm{P\big((-\infty,\ \cdot\ ]\big)f}^2:\RR\rightarrow\RR$.
\end{enumerate}

\end{theorem}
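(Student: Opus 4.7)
The plan is to treat this as a specialization of the classical spectral theorem (see e.g.\ Reed--Simon or Rudin's \emph{Functional Analysis}), and to reorganize the standard output into the Riemann--Stieltjes form stated. I would handle the unitary and self-adjoint cases together by reducing the self-adjoint case to the unitary case via the Cayley transform $U=(T-i)(T+i)^{-1}$, which carries the problem for $T$ into a problem for a unitary operator on $\cH$ whose spectrum avoids $1$. Alternatively, for the self-adjoint case one can go through the bounded case first (via the $C^*$-algebra generated by $T$) and extend to unbounded $T$ by the resolvent $(T-zI)^{-1}$.

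The first step is to construct a \emph{continuous} functional calculus $\phi\mapsto \phi(T)$ from $C(\sigma(T))$ to $\cB(\cH)$. For a bounded self-adjoint or unitary $T$, this is obtained by approximating $\phi$ uniformly by polynomials in $\lambda$ (Weierstrass) or in $\lambda,\overline{\lambda}$ (Stone--Weierstrass on $e^{i\RR}$), and defining $\phi(T)$ as the norm limit of the corresponding operator polynomials; the Gelfand isomorphism of the commutative $C^*$-algebra generated by $T$ and $I$ (and $T^*$ in the unitary case) gives the isometric, $*$-preserving character needed. The second step is to extend this to a \emph{Borel} functional calculus. For each $f\in\cH$, the positive linear functional $\phi\mapsto \ip{\phi(T)f}{f}$ on $C(\SS)$ extends by the Riesz--Markov theorem to a Borel measure $\mu_f$, and polarization produces a complex measure $\mu_{f,g}$ for each pair. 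Bounded Borel functions $\phi$ can then be applied by declaring $\ip{\phi(T)f}{g}=\int_\SS \phi\,d\mu_{f,g}$; checking multiplicativity on bounded Borel functions uses dominated convergence from the continuous case. The PVM is then defined by $P(B)=\chi_B(T)$, where $\chi_B$ is the characteristic function of the Borel set $B$; the required $\sigma$-additivity and normalization conditions of Definition \ref{defi_PVM} follow from the fact that $\chi_{\SS}=1$, $\chi_\emptyset=0$, orthogonality of $\chi_{B_1}\chi_{B_2}$ when $B_1\cap B_2=\emptyset$, and monotone convergence for PVMs.

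For item 1 of the theorem, the Riemann--Stieltjes formulation is obtained by observing that $T=\mathrm{id}(T)$ in the Borel calculus, and that $\mathrm{id}$ on $\SS$ can be uniformly approximated on bounded intervals by simple functions $\sum_k x_k \chi_{[x_k,x_{k+1})}$. Applying the calculus yields $\sum_k x_k P([x_k,x_{k+1}))$, and the desired limit as $\mathrm{diam}(\mathbf{x})\to 0$ and $[x_0,x_n]\to\SS$ follows in operator norm for bounded $T$ (since the approximation is uniform on $\sigma(T)$) and in the strong topology for unbounded self-adjoint $T$ (since one must cut off outside large intervals and invoke convergence pointwise on the dense domain). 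For item 2, the construction directly gives that $f\in{\cal D}(T)$ iff the function $\lambda\mapsto\lambda$ is in $L^2(\SS,\mu_f)$, and with $F(\lambda):=\norm{P((-\infty,\lambda])f}^2$ one has $\mu_f((a,b])=F(b)-F(a)$, so that the $L^2$ condition reads $\int_\RR \lambda^2\, dF(\lambda)<\infty$, which is precisely the Riemann--Stieltjes condition (\ref{eq:spectral_domain1}).

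The main obstacle is not any single step but keeping the bookkeeping clean across the bounded/unbounded, self-adjoint/unitary split and justifying the topology in which the Riemann--Stieltjes limit converges for unbounded $T$, since the partial sums are unbounded operators whose domains must be shown to exhaust ${\cal D}(T)$. The cleanest route is to do everything for unitary $T$ first, obtain the spectral integral on $e^{i\RR}$ in operator norm, and then push back to unbounded self-adjoint $T$ via the Cayley transform using the change of variable $\lambda=i(1+z)/(1-z)$ to convert the integral on $e^{i\RR}\setminus\{1\}$ into one on $\RR$; the domain condition translates correspondingly, giving (\ref{eq:spectral_domain1}).
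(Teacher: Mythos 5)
The paper contains no proof of this theorem: it is stated as a known ``Riemann--Stieltjes'' formulation of the spectral theorem and delegated to a standard functional-analysis reference, so there is no internal argument to compare yours against. On its own terms, your proposal is the classical textbook route and is sound in outline: continuous functional calculus via Stone--Weierstrass and the Gelfand isomorphism of the commutative $C^*$-algebra generated by $T$ (and $T^*$), extension to a Borel calculus through Riesz--Markov plus polarization, the PVM defined by $P(B)=\chi_B(T)$, item 1 by uniform approximation of the identity function by simple functions $\sum_k x_k\chi_{[x_k,x_{k+1})}$, and the reduction of the unbounded self-adjoint case to the unitary case via the Cayley transform with the substitution $\lambda=i(1+z)/(1-z)$. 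This matches both conclusions of the statement, including the domain characterization in item 2 via $\mu_f\big((a,b]\big)=F(b)-F(a)$ with $F(\lambda)=\norm{P\big((-\infty,\lambda]\big)f}^2$.

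Two imprecisions are worth correcting, though neither is a fatal gap. First, the finite Riemann sums $\sum_{k} x_k P\big([x_k,x_{k+1})\big)$ are \emph{bounded} operators (finite linear combinations of orthogonal projections with scalar coefficients), not unbounded ones as you assert; the genuine difficulty in the unbounded case is that the strong limit exists precisely for $f\in\cD(T)$, since for $f\notin\cD(T)$ the norms of the partial sums diverge as the partition exhausts $\RR$ --- and this dichotomy is exactly the content of item 2, so the two items should be proved together rather than sequentially. Second, the Cayley transform $U=(T-i)(T+i)^{-1}$ of an unbounded self-adjoint $T$ need \emph{not} have spectrum avoiding $1$: the point $1$ can lie in the continuous spectrum of $U$. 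What the change of variable actually requires is only that $1$ is not an eigenvalue, i.e.\ $P_U(\{1\})=0$, which follows since $1-U=2i(T+i)^{-1}$ is injective with range $\cD(T)$ dense in $\cH$. With these repairs (and the trivial remark that on $e^{i\RR}$ the ``intervals'' of the partition are arcs), your sketch expands into a complete and correct proof of the statement as formulated.
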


\begin{remark}(Functional calculus)
\label{Band_limit_poly}
A measurable function $\phi:\SS\rightarrow\CC$ of a self-adjoint or unitary operator $T$ is defined to be the normal operator
\begin{equation}
\phi(T)=\int_{\SS}\phi(\lambda) dP(\lambda).
\label{eq:functional_calculus1}
\end{equation}
defined on the domain of vectors $f\in\cH$ satisfying 
\[\int_{\RR}\abs{\phi(\lambda)}^2 \ d\norm{P\big((-\infty,\lambda]\big)f}^2<\infty.\]
This definition is consistent with polynomials of $T$ in the following sense. If $f\in \cH$ is band-limited, namely there exists some compact subset $B\subset\SS$ such that $f=P(B)f$, and if  $\{q_n\}_{n\in\NN}$ is a sequence of polynomials satisfying 
\[\lim_{n\rightarrow\infty}\norm{p_n-\phi}_{L^{\infty}(B)}=0\]
then
\[\lim_{n\rightarrow\infty}\norm{p_n(T)f - \phi(T)f}_{\cH}=0\]
where $p_n(T)$ is in the sense of compositions, additions, and multiplication by scalars of $T$, and $\phi(T)$ is in the sense of (\ref{eq:functional_calculus1}).
\end{remark}

\section{ Proofs}

\subsection{Proof of Proposition \ref{T_m_multiplicative_in_freq}}

\begin{lemma}
\label{lemma_Gamma_trans}
Consider a DGWT, let $\Gamma_m$ be the $quantity_m$ diffeomorphism for $m=2,\ldots,M$, and $L_m$ be the left translation in $L^2(Y_m\times N_m)$, along $N_m$. Then
\[[\Gamma_m \hf\big((\cdot)\bD_m(g_m')\big)](y_m,g_m) =\Gamma_m \hf (y_m,g_m^{\prime-1}g_m)= [L_m(g_m')\Gamma_m \hf](y_m,g_m)\]
\end{lemma}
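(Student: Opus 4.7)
The plan is to verify this by direct substitution into the definitions, using only that $\bD_m$ is a group homomorphism (being a representation of $N_m$).

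First I would unwind the left-hand side via the definition of $\Gamma_m$: for any $\hf\in L^2(U)$,
\[[\Gamma_m\, \hf(\,\cdot\,\bD_m(g_m'))](y_m,g_m) = \hf\bigl(C_m(y_m,g_m)\,\bD_m(g_m')\bigr)
= \hf\bigl(\bw_m\,\bD^m(y_m)\,\bD_m(g_m^{-1})\,\bD_m(g_m')\bigr).\]
Since $\bD_m:N_m\to{\cal GL}(\RR^N)$ is a representation, $\bD_m(g_m^{-1})\bD_m(g_m') = \bD_m(g_m^{-1}g_m') = \bD_m\bigl((g_m^{\prime-1}g_m)^{-1}\bigr)$.

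Next I would unwind the middle expression the same way:
\[\Gamma_m\hf(y_m,g_m^{\prime-1}g_m) = \hf\bigl(C_m(y_m,g_m^{\prime-1}g_m)\bigr) = \hf\bigl(\bw_m\,\bD^m(y_m)\,\bD_m((g_m^{\prime-1}g_m)^{-1})\bigr),\]
which matches the expression above. Finally, the second equality is simply the definition of the left $N_m$ translation $L_m(g_m')F(y_m,g_m) = F(y_m,g_m^{\prime-1}g_m)$ applied to $F=\Gamma_m\hf$.

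There is really no obstacle: the whole content of the lemma is that the substitution $\bw\mapsto \bw\,\bD_m(g_m')$ in the frequency argument is compatible, through the diffeomorphism $C_m$, with left $N_m$-translation of the $g_m$-variable, and this compatibility is nothing more than the homomorphism property of $\bD_m$. The only point one has to be careful about is keeping track of the inverse, since $C_m$ involves $\bD_m(g_m^{-1})$ while $L_m(g_m')$ acts by $g_m\mapsto g_m^{\prime-1}g_m$; these two inversions combine correctly because $(g_m^{\prime-1}g_m)^{-1}=g_m^{-1}g_m'$.
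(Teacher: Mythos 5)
Your proposal is correct and follows essentially the same route as the paper: unwind $\Gamma_m$ through $C_m$, use the homomorphism property of $\bD_m$ to combine $\bD_m(g_m^{-1})\bD_m(g_m')=\bD_m(g_m^{-1}\bullet g_m')=\bD_m\big((g_m^{\prime-1}\bullet g_m)^{-1}\big)$, and read off the result as the left translation $L_m(g_m')$ applied to $\Gamma_m\hf$. Your remark on the bookkeeping of inverses is exactly the one step where care is needed, and you handle it correctly.
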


\begin{proof}
\[\begin{split}
[\Gamma_m \hf\big( (\cdot) \bD_m(g_m')\big)](y_m,g_m)  = &  \hf\big(\bw_m\bD^m(y_m)\bD_m(g_m^{-1})\bD_m(g_m')\big)\\
 = & \hf\big(\bw_m\bD^m(y_m)\bD_m( g_m^{-1}\bullet g_m')\big) \\
 = &  \hf\big(\bw_m\bD^m(y_m)\bD_m\big( (g_m'^{-1}\bullet g_m)^{-1}\big)\big) \\
= &   [L_m(g_m^{\prime})\Gamma_m\hf](y_m,g_m). \\
\end{split}\]
\end{proof}

\begin{lemma}
\label{lemma_Gamma_trans_norm}
For every $(y_m,g_m)\in Y_m\times N_m$, and $g_m'\in N_m$, we have
\begin{equation}
\abs{{\rm det}J(y_m,g_m^{\prime -1}g_m)}=\abs{{\rm det} \bD(g_m')}\abs{{\rm det}J(y_m,g_m)}.
\label{eq:gyjghj66sjk}
\end{equation}
\end{lemma}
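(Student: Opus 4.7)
The plan is to exploit the fact, visible in Lemma \ref{lemma_Gamma_trans}, that the substitution $g_m \mapsto g_m^{\prime -1} g_m$ in the domain of $C_m$ corresponds to right-translation by $\bD_m(g_m')$ in the image $U \subset \RR^N$. First I would unpack the definition of $C_m$ to obtain the key identity
\[
C_m(y_m, g_m^{\prime -1} g_m) \;=\; \bw_m\,\bD^m(y_m)\,\bD_m\big((g_m^{\prime -1} g_m)^{-1}\big) \;=\; \bw_m\,\bD^m(y_m)\,\bD_m(g_m^{-1})\,\bD_m(g_m') \;=\; C_m(y_m,g_m)\,\bD_m(g_m'),
\]
using that $\bD_m$ is a representation of $N_m$. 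Thus, writing $\tau_{g_m'}(g_m) = g_m^{\prime -1} g_m$ for the left translation in $N_m$ and $\phi_{g_m'}(\bw) = \bw\,\bD_m(g_m')$ for right multiplication by $\bD_m(g_m')$ on $U$, we have the commutative relation $C_m \circ (\mathrm{id}_{Y_m} \times \tau_{g_m'}) = \phi_{g_m'} \circ C_m$.

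Next I would compute the Jacobians of the two auxiliary maps. The map $\tau_{g_m'}$ is a left translation in the group $N_m$, which is a direct product of the physical quantities $\RR$, $\ZZ$, $e^{i\RR}$ or $e^{2\pi i \ZZ/N}$. In each case left translation is an isometry of the standard Riemannian structure on $N_m$, so $\abs{\det J_{\tau_{g_m'}}} = 1$. The map $\phi_{g_m'}$ is a linear map on $\RR^N$, whose matrix is $\bD_m(g_m')$, so $\abs{\det J_{\phi_{g_m'}}} \equiv \abs{\det \bD_m(g_m')} = \abs{\det \bD(g_m')}$ (the last equality being the convention that $\bD$ evaluated on a single coordinate element coincides with $\bD_m$ on that coordinate).

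Applying the chain rule to $C_m \circ (\mathrm{id}_{Y_m} \times \tau_{g_m'}) = \phi_{g_m'} \circ C_m$ at $(y_m, g_m)$ and taking absolute values of determinants gives
\[
\abs{\det J(y_m, g_m^{\prime -1} g_m)} \cdot 1 \;=\; \abs{\det \bD(g_m')} \cdot \abs{\det J(y_m, g_m)},
\]
which is exactly \eqref{eq:gyjghj66sjk}.

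The only nontrivial point is checking that left translations on each of the admissible physical quantity groups are Riemannian isometries, which is immediate case by case and presents no real obstacle; everything else is a clean application of the chain rule once the substitution identity $C_m(y_m, g_m^{\prime -1} g_m) = C_m(y_m, g_m)\,\bD_m(g_m')$ is written down.
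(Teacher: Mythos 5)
Your proof is correct, but it takes a genuinely different route from the paper's. You argue directly: the equivariance identity $C_m(y_m,g_m^{\prime-1}g_m)=C_m(y_m,g_m)\,\bD_m(g_m')$ (which follows, as you say, from $(g_m^{\prime-1}g_m)^{-1}=g_m^{-1}g_m'$ and the homomorphism property of $\bD_m$, and which is essentially the computation inside Lemma \ref{lemma_Gamma_trans}), combined with the chain rule, the linearity of $\bw\mapsto\bw\,\bD_m(g_m')$ on $\RR^N$, and the fact that left translation is an isometry of the product Riemannian structure on $N_m$, immediately yields the pointwise Jacobian identity; the only mild bookkeeping points are the convention $\bD(g_m')=\bD_m(g_m')$ (valid since the other factors $\bD_{m'}$ evaluate to the identity at $e$) and the observation that right multiplication by $\bD_m(g_m')$ preserves $U$, which is automatic from the equivariance identity and $\bD$ being a representation of $H_1$. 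The paper instead argues indirectly through $L^2$ norms: it computes $L_m(g_m')\Psi_m\hf$ and $\Psi_m\pi_m(g_m')\hf$, notes both have norm $\norm{\hf}$ by unitarity of $L_m(g_m')$ and $\pi_m(g_m')$ and isometry of $\Psi_m$, and so obtains equality of two weighted integrals of $\abs{[\Gamma_m\hf](y_m,g_m^{\prime-1}g_m)}^2$ for all $\hf$; it then exhausts characteristic functions of compact domains via the diffeomorphism $C_m$ and uses continuity of both Jacobian expressions to upgrade the integral identity to the pointwise one. Your approach is shorter and makes the geometric content explicit, avoiding the density-plus-continuity step entirely; the paper's approach avoids any explicit differentiation beyond the already-established isometry of $\Psi_m$, at the cost of the final localization argument. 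Both are sound.
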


\begin{proof}
The application of $L_m(g_m')$ on $\Psi_m \hf$ gives the function
\[L_m(g_m')\Psi_m \hf:(y_m,g_m)\mapsto \sqrt{\abs{{\rm det}J(y_m,g_m^{\prime-1}g_m)}}[\Gamma_m \hf](y_m,g_m^{\prime-1}g_m).\]
By the fact that $L_m(g_m')$ is a unitary representation, and by the fact that $\Psi_m$ is an isometric isomorphism, we have
\begin{equation}
 \norm{
\begin{array}{l}
L_m(g_m')\Psi_m \hf :\\
(y_m,g_m)\mapsto \sqrt{\abs{{\rm det}J(y_m,g_m^{\prime-1}g_m)}}[\Gamma_m \hf](y_m,g_m^{\prime-1}g_m)
\end{array}
}_{L^2(Y_m\times N_m)}  
 =\norm{\hf}_{L^2_U(\RR^N)}.
\end{equation}
On the other hand, by (\ref{eq:geo_wav_freq}), and by Lemma \ref{lemma_Gamma_trans}, the application of $\Psi_m$ on $\pi_m(g_m')\hf$ gives the function
\[\Psi_m\pi_m(g_m')\hf:(y_m,g_m)\mapsto \sqrt{\abs{{\rm det} \bD(g_m')}\abs{J(y_m,g_m)}}[\Gamma_m \hf](y_m,g_m^{\prime-1}g_m).\]
Again, since $\pi_m(g_m')$ is unitary and $\Psi_m$ is an isometric isomorphism, we have
\[\begin{split}
\norm{
\begin{array}{l}
\Psi_m\pi_m(g_m')\hf:\\ (y_m,g_m)  \mapsto \sqrt{\abs{{\rm det} \bD(g_m')}\abs{{\rm det}J(y_m,g_m)}}[\Gamma_m \hf](y_m,g_m^{\prime-1}g_m)
\end{array}
}_{L^2(Y_m\times N_m)}& \\
=\norm{\hf}_{L^2_U(\RR^N)}& .
\end{split}\]
Thus, for every $\hf\in L^2(U)$,
\begin{equation}
\begin{split}
 &\norm{L_m(g_m')\Psi_m \hf:(y_m,g_m)\mapsto \sqrt{\abs{{\rm det}J(y_m,g_m^{\prime-1}g_m)}}[\Gamma_m \hf](y_m,g_m^{\prime-1}g_m)}_{L^2(Y_m\times N_m)}  \\
&= \norm{
\begin{array}{l}
\Psi_m\pi_m(g_m')\hf:\\
(y_m,g_m)\mapsto \sqrt{\abs{{\rm det} \bD(g_m')}\abs{{\rm det}J(y_m,g_m)}}[\Gamma_m \hf](y_m,g_m^{\prime-1}g_m)
\end{array}
}_{L^2(Y_m\times N_m)}.  
\end{split}
\label{eq:popogp8fdh}
\end{equation}

We now show that (\ref{eq:gyjghj66sjk}) follows from (\ref{eq:popogp8fdh}).
For a fixed $g_m'$,  consider functions of the form
\begin{equation}
h(y_m,g_m)=\abs{[\Gamma_m \hf](y_m,g_m^{\prime-1}g_m)}^2,
\label{eq:f_2_F_via_Gamma}
\end{equation}  Note that (\ref{eq:f_2_F_via_Gamma}) appears in both sides of (\ref{eq:popogp8fdh}). 
By the fact that $C_m$ is a diffeomorphism, when applied on sets it is a bijective mapping between the family of compact domains in $Y_m\times N_m$ and the family of compact domains in $U$. We can thus exhaust the space of characteristic functions of compact domains with functions of the form (\ref{eq:f_2_F_via_Gamma}). In this sense, functions of the form (\ref{eq:f_2_F_via_Gamma}) are generic.

Generally, the value of any continuous function $Q:Y_m\times N_m\rightarrow\RR$ at some point $(y_m,g_m)$ can be calculated as the limit of the integrals of $Q$ against a sequence of characteristic functions of concentric compact rectangles, centered at $(y_m,g_m)$, with intersection of all of the rectangles equal to $\{(y_m,g_m)\}$.
By this, and by the fact that $\abs{{\rm det}J(y_m,g_m^{\prime-1}g_m)}$ and $\abs{{\rm det} {\bf D}(h_1)}\abs{{\rm det}J(y_m,g_m)}$ are continuous functions,
 we must have
\[\forall (y_m,g_m)\in Y_m\times N_m\ , \quad\abs{{\rm det}J(y_m,g_m^{\prime-1}g_m)}=\abs{{\rm det} {\bf D}(g_m')}\abs{{\rm det}J(y_m,g_m)}.\]
\end{proof}

\begin{proof}[Proof of Proposition \ref{T_m_multiplicative_in_freq}]

First, it is trivial to see that the inverse Fourier transform is a $position$ transform with
\[[\bbQ_1 F](\mathbf{x}) = \big(x_1 F(\mathbf{x}),\ldots,x_NF(\mathbf{x})\big)\]
for every $F\in L^2(\RR^N)$, with the notation $\mathbf{x}=(x_1,\ldots,x_N)$.

In the following, we treat the signal space as the frequency domain $L^2(U)$, and treat $\pi$ as the representation (\ref{eq:geo_wav_freq}).  
Let $2 \leq m \leq M$.
First we show property (\ref{eq:quant_trans1}) of $quantity_m$ transforms.
By Lemmas \ref{lemma_Gamma_trans} and \ref{lemma_Gamma_trans_norm},
$\Psi_m$ transforms $\pi_m$ to $L_m$. Indeed
\[\begin{split}
[\Psi_m \pi_m(g_m')\hf](y_m,g_m) =& \abs{{\rm det} \bD(g_m')}\abs{{\rm det} J(y_m,g_m)}[\Gamma_m \hf\big( (\cdot) \bD_m(g_m')\big)](y_m,g_m) \\
= &\abs{{\rm det} J(y_m,g_m^{\prime-1}g_m)}[L_m(g_m^{\prime})\Gamma_m \hf](y_m,g_m)\\
= & [L_m(g_m')\Psi_m \hf](y_m,g_m).
\end{split}\]

Next, we show property (\ref{eq:quant_trans2}) of $quantity_m$ transforms. 
Let $\bbQ_m$ be the multiplicative operator in $L^2(Y_m\times N_m)$, mapping to to $L^2(Y_m\times N_m)^{K_m}$, defined by
\[[\bbQ_m F](y_m,g_m) = \bg_m F(y_m,g_m).\]
First, for $g'=g'_1\in N_1$, $\pi(g'_1)$ is modulations in $L^2(U)$. Since $\Psi_m$ is based on a change of variable, we conclude that $\rho^m(g'_m)$ is a multiplicative operator in $L^2(Y_m\times N_m)$. It thus commutes with ${\bf\bQ}_m$, which is also a  multiplicative operator, and (\ref{eq:quant_trans2}) is guaranteed for $\rho^m$ restricted to $N_1$.

Consider the restriction of $\pi$ to $H_1$, $\pi_1(h_1')$, for $h_1'\in H_1$.
Let
\[\rho^m(h_1') = \Psi_m\pi_1(h_1')\Psi_m^*.\]
We now show
\begin{equation}
\rho^m(h_1')^*\bbQ_m\rho^m(h_1') = \bg_m\bullet A_m(h_m') \bbQ_m,
\label{eq:asasasl9y}
\end{equation}
which shows property (\ref{eq:quant_trans2}) of Definition \ref{def:quantity_trans}, for any $m=2,\ldots,M$. As a result, (\ref{eq:quant_trans2}) is satisfied for any $g'\in G$ since for $g'=g'_1 h'_1$
\[\begin{split}
\rho^m(g')^*\bbQ_m\rho^m(g') = &\rho^m(h'_1)^*\rho^m(g'_1)^*\bbQ_m \rho^m(g'_1) \rho^m(h'_1)\\
 = & \rho^m(h'_1)^*\bbQ_m  \rho^m(h'_1) = \bg_m\bullet A_m(h_m') \bbQ_m.
\end{split}\]

Equation (\ref{eq:asasasl9y}) is equivalent to
\begin{equation}
\pi_1(h_1')^*\Psi_m^*\bbQ_m\Psi_m\pi_1(h_1') = \bg_m\bullet A_m(h_m')\Psi_m^* \bbQ_m\Psi_m.
\label{eq:to_show1t}
\end{equation}
In this proof we map $\pi_1(h_1')\hf$ to $L^2(Y_m\times N_m)$ using $\Psi_m$, then operate on the result multiplicatively using $\bbQ_m$, and map back using $\Psi_m^*$. The transform $\Psi_m$ is based on the change of variable $C_m:Y_m\times N_m\rightarrow U$, and $\Psi_m^*$ is based on the inverse change of variables $C_m^{-1}$. Since we apply $C_m^{-1}$ right after $C_m$, an explicit inversion formula is not needed. Instead, we formulate our proof using the less explicit ``$\mapsto$'' sign. Note that $C_m^{-1}:C_m(y_m,g_m)\mapsto (y_m,g_m)$ and $\Psi_m^*$ can be written implicitly by
\begin{equation}
\begin{split}
\Psi_m^* : & \bigg(F:(y_m,g_m)\mapsto F(y_m,g_m)\bigg) \\
& \mapsto \bigg(\Psi_m^* F: C_m(y_m,g_m) \mapsto \frac{1}{\sqrt{\abs{{\rm det}J(y_m,g_m)}}}F(y_m,g_m)\bigg).
\end{split}
\label{eq:4kjhgfy7}
\end{equation}
Consider the mapping $q_m:Y_m\times N_m \rightarrow N_m $ defined by
\[q_m(y_m,g_m) = \bg_m.\]
Note that $q_m$ is the function that $\bbQ_m$ multiplies by.
Pulling $q_m$ to a function defined on $U\subset\RR^N$ via $C_m^{-1}$, we get the mapping 
\[q_m\circ C_m^{-1}:C_m(y_m,g_m) \mapsto \bg_m,\] 
or equivalently
\begin{equation}
q_m\circ C_m^{-1}:\bw_m\bD^m(y_m)\bD_m(g_m^{-1}) \mapsto \bg_m.
\label{eq:gj88i8ie5xx}
\end{equation}
Thus, by (\ref{eq:quantity_trans}), (\ref{eq:4kjhgfy7}), and (\ref{eq:gj88i8ie5xx}), $\Psi_m^* \bbQ_m\Psi_m$ is implicitly given by
\[[\Psi_m^* \bbQ_m\Psi_m\hf]\big(\bw_m\bD^m(y_m)\bD_m(g_m^{-1})\big)  =  \bg_m \hf\big(\bw_m\bD^m(y_m)\bD_m(g_m^{-1})\big).\]

To calculate the left-hand side of (\ref{eq:to_show1t}), we have
\[\begin{split}
& [\pi_1(h_1')^*\Psi_m^* \bbQ_m\Psi_m\pi_1(h_1')\hf]\big(\bw_m\bD^m(y_m)\bD_m(g_m^{-1})\big) \\
& =[\pi_1(h_1')^*\Psi_m^* \bbQ_m\Psi_m \sqrt{{\rm det}\bD(h_1')}\hf\big(\ \cdot\ \bD(h_1')\big)]\big(\bw_m\bD^m(y_m)\bD_m(g_m^{-1})\big),
\end{split}\]
which is the application of $\pi_1(h_1')^*$ on the function
\[\begin{split}
 & \Psi_m^* \bbQ_m\Psi_m \sqrt{{\rm det}\bD(h_1')}\hf\big(\ \cdot\ \bD(h_1')\big): \\  
 & \quad \quad\quad \quad\quad \quad\quad \quad \bw_m\bD^m(y_m)\bD_m(g_m^{-1})    \\
 & \quad \quad\quad \quad\quad \quad \quad \quad \longmapsto \bg_m \sqrt{{\rm det}\bD(h_1')}\hf\big(\bw_m\bD^m(y_m)\bD_m(g_m^{-1})\bD(h_1')\big). 
\end{split}\]
 Since in $\pi_1(h_1')^*$ we multiply by $\sqrt{{\rm det}\bD(h_1')}^{\ -1}$, we can write
\[\begin{split}
& [\pi^1(h_1')^*\Psi_m^* \bbQ_m\Psi_m\pi^1(h_1')\hf]\big(\bw_m\bD^m(y_m)\bD_m(g_m^{-1})\big) \\
& = R(h_1';\bw_m\bD^m(y_m)\bD_m(g_m^{-1}))\hf\big(\bw_m\bD^m(y_m)\bD_m(g_m^{-1})\big).
\end{split}\]
where $R(h_1';\bw_m\bD^m(y_m)\bD_m(g_m^{-1}))$ is the application of $(\cdot)\bD(h_1^{\prime-1})^{-1}$ on the input variable of
\[q_m\circ C^{-1}:\ \bw_m\bD^m(y_m)\bD_m(g_m^{-1}) \mapsto \bg_m,\]
namely, it is
\begin{equation}
R(h_1'; \ \cdot\ ):\ \bw_m\bD^m(y_m)\bD_m(g_m^{-1})\bD(h_1') \mapsto \bg_m.
\label{eq:fgdi9h1}
\end{equation}
Indeed, for a general function $E:U\rightarrow\CC^{K_m}$, the non-normalized dilation
 $E(\ \cdot\ \bD(h_1^{\prime-1}))$ 
can be written by
\[E(\ \cdot\ \bD(h_1^{\prime-1})):\bw \mapsto E\big(\bw \bD(h_1^{\prime-1})\big),\]
or by
\begin{equation}
E(\ \cdot\ \bD(h_1^{\prime-1})):\bw \bD(h_1^{\prime-1})^{-1}\mapsto  E(\bw).
\label{eq:4kjhgfy7mmm}
\end{equation}

Next, we show that $R(h_1'; \ \cdot\ )$ can be written by
\[R(h_1'; \ \cdot\ ):\ \bw_m\bD^m({y}_m)\bD_m\big( g_m^{-1}\big)\mapsto \bg_m'\bullet{\bf A}_m({\bf h}_m') \bg_m, \]
which completes the proof.

The mapping (\ref{eq:fgdi9h1}) is equivalent to
\[\begin{split}
 R(h_1'; \ \cdot\ ):\ &  \bw_m\bD^m(y_m)\bD\big( g_m^{-1} h_1'\big)\mapsto \bg_m  \\
 R(h_1'; \ \cdot\ ):\ &  \bw_m\bD^m(y_m)\bD\big( g_m^{-1} g_2'\ldots g_m' h_m'\big)\mapsto \bg_m  
\end{split}\]
By the commutation relation due to the semi-direct product structure, there are $g_2''\ldots  g_{m-1}''$ such that
\[\begin{split}
R(h_1'; \ \cdot\ ):\ &  \bw_m\bD^m(y_m)\bD\big( g_2''\ldots  g_{m-1}'' g_m^{-1} g_m' h_m'\big)\mapsto \bg_m \\
R(h_1'; \ \cdot\ ):\ &   \bw_m\bD^m(y_m)\bD\big( g_2''\ldots   g_{m-1}'' h_m' \ h_m^{\prime-1} g_m^{-1}   g_m' h_m'\big)\mapsto \bg_m \\
R(h_1'; \ \cdot\ ):\ &   \bw_m\bD^m(y_m)\bD\big( g_2''\ldots   g_{m-1}'' h_m' \ A_m(h_m^{\prime-1}) (g_m^{-1}   g_m') \big)\mapsto \bg_m\\
R(h_1'; \ \cdot\ ):\ &   \bw_m\bD(g_1\ldots g_{m-1}g_{m+1}\ldots g_M g_2''\ldots   g_{m-1}'' h_m')\bD\big(  A_m(h_m^{\prime-1}) (g_m^{-1}   g_m') \big)\mapsto \bg_m.
\end{split}\]
Again, by the commutation relation due to the semi-direct product structure we can sort the variables in $g_1\ldots g_{m-1}g_{m+1}\ldots g_M g_2''\ldots   g_{m-1}'' h_m'$ by their index and obtain some $\tilde{y}_m\in Y_m$ such that
\begin{equation}
R(h_1'; \ \cdot\ ):\ \bw_m\bD^m(\tilde{y}_m)\bD_m\big(  A_m(h_m^{\prime-1}) (g_m^{-1}   g_m') \big)\mapsto \bg_m. 
\label{eq:fdfd5faotg}
\end{equation}
Note that the variable $\tilde{y}_m$ is in one-to-one correspondence with $y_m$, so the implicit formula (\ref{eq:fdfd5faotg}) determines $R(h_1'; \ \cdot\ )$ for any point in its domain.
Let us change the variable
\[\begin{split}
 & A_m(h_m^{\prime-1}) (g_m^{-1}   g_m') = \eta_m^{-1}\\
 &  g_m^{-1}   g_m' = A_m(h_m') \eta_m^{-1} \\
 &  g_m^{-1}   =  g_m^{\prime-1}A_m(h_m') \eta_m^{-1}. 
\end{split}\]
By the fact that $A_m(h_m')$ is a homomorphism, and $N_m$ commutative, we have
\[ g_m   =  g_m'A_m(h_m') \eta_m.\]
Therefore, changing the notation $\eta_m\mapsto g_m$ and $\tilde{y}_m\mapsto y_m$ (by bijectivity of $\tilde{y}_m,y_m$), and plugging the change of variable in (\ref{eq:fdfd5faotg}), we get
\[R(h_1'; \ \cdot\ ):\ \bw_m\bD^m({y}_m)\bD_m\big( g_m^{-1}\big)\mapsto \bg_m'\bullet{\bf A}_m({\bf h}_m') \bg_m \sim g_m'A_m(h_m') g_m. \]

\end{proof}

\subsection{Proof of Proposition \ref{prop:SPWT_pull_obs}}

Before we present the proof of Proposition \ref{prop:SPWT_pull_obs}, we begin with theoretical preparation.

\subsubsection{Technical assumptions for the pull-back analysis}

To understand the pull-back  in Proposition \ref{prop:SPWT_pull_obs}, along with its assumptions, we need the following discussion. 
Consider a simply dilated SPWT. Denote by $[\cW\otimes\cS]_0$ the space of (finite) linear combinations of simple tensors in $\cW\otimes\cS$, under a specific construction of the $\otimes$ operator. Denote 
$V(\cW\otimes\cS)_0=V([\cW\otimes\cS]_0)$. 
		
As will be shown in the proof of Proposition \ref{prop:SPWT_pull_obs}, the space $V(\cS\otimes\cW)$ 
 is invariant under the multi-observable ${\bf\breve{N}}_M$, 
 and specifically the restriction ${\bf\breve{N}}_M\big|_{V(\cW\otimes\cS)}$ is a self-adjoint or unitary operator. 	
It is then possible to derive an explicit formula for ${\bf\bT}_{M-1}\big|_{[\cW\otimes\cS]_0}$ (see (\ref{eq:pull1_self_adjoint}) and (\ref{eq:pull1_unitary})). Moreover,
	we show in the proof the following inversion formula of (\ref{eq:multi_pull01}), restricted to finite linear combinations of simple tensors,
	\begin{equation}
	{\bf\breve{N}}_{M-1}\big|_{V(\cW\otimes\cS)_0} = V{\bf\bT}_{M-1}\big|_{[\cW\otimes\cS]_0}V^*.
	\label{eq:N_MV2}
	\end{equation}
	
	At this stage we need an assumption 
	in order to proceed. Equation (\ref{eq:N_MV2}) says in particular that the space $V(\cW\otimes\cS)_0\cap {\cal D}({\bf\breve{N}}_{M-1})$ maps to $V(\cW\otimes\cS)_0$ under ${\bf\breve{N}}_{M-1}$. Note that the fact that ${\bf\breve{N}}_{M-1}$ is self-adjoint, only implies that ${\bf\breve{N}}_{M-1}\big|_{V(\cW\otimes\cS)_0}$ is symmetric, and thus by
	\begin{equation}
	{\bf\bT}_{M-1}\big|_{[\cW\otimes\cS]_0}=V^*{\bf\breve{N}}_{M-1}\big|_{V(\cW\otimes\cS)_0}V,
	\end{equation}
	${\bf\bT}_{M-1}\big|_{[\cW\otimes\cS]_0}$ is symmetric. Recall that not every symmetric operator can be extended to a self-adjoint operator.
	The subtle assumption we take is that the closure of ${\bf\bT}_{M-1}\big|_{[\cW\otimes\cS]_0}$ is a self-adjoint or unitary operator ${\bf\bT}_{M-1}$ in $V(\cW\otimes\cS)$. By this, the closure of ${\bf\breve{N}}_{M-1}\big|_{V(\cW\otimes\cS)_0}$ is a self-adjoint or unitary operator ${\bf\breve{N}}_{M-1}'$ in $V(\cW\otimes\cS)$. We further show in the proof that ${\bf\breve{N}}_{M-1}'={\bf\breve{N}}_{M-1}\big|_{V(\cW\otimes\cS)}$. We hence calculate all of the pulled-back multi-observables ${\bf\bT}_{m}$ by induction on $m$, assuming at each stage the closability to a self-adjoint operator condition, and show that  ${\bf\breve{N}}_m\big|_{V(\cW\otimes\cS)_0}$ is extended to a self-adjoint or unitary operator ${\bf\breve{N}}'_m={\bf\breve{N}}_m|_{V(\cW\otimes\cS)}$ in $V(\cW\otimes\cS)$. 
	
	Note that closability of an isometric operator to a unitary operator is always guaranteed. Also note that every symmetric operator is closable, so the only condition to check is that the closure is self-adjoint.

	%
	%
	%
	%
	%

\subsubsection{Proof of Proposition \ref{prop:SPWT_pull_obs}}

\begin{proof}
In this proof, for a tuple of windows ${\bf f}=(f_1,\ldots,f_K)$, we denote 
\[V_{{\bf f}}[s]=\big(V_{f_1}[s],\ldots,V_{f_K}[s]\big),\]
and similarly for a tuple of signals ${\bf s}=(s_1,\ldots,s_K)$. When both the window and the signal are tuples, we denote
\[V_{{\bf f}}[{\bf s}]=\big(V_{f_1}[s_1],\ldots,V_{f_K}[s_K]\big).\]

We prove by induction on $m$, starting with $m=M$, where by convention we denote ${\bf A}_M({\bf h}_M) = {\bf I}$. In the following derivation we prove the base of the induction together with the induction step.
For each $m$, we assume that  $\otimes=\otimes_m$ is based on a signal and window $quantity_m$ bases. 
All of the identities in this proof, developed in a different basis, are the canonical transform (\ref{eq:canonical_transform}) of the same identities developed in signal and window $quantity_m$ bases. Thus, if we show that ${\bf \breve{N}}_m\big|_{V(\cW\otimes\cS)}$ is self-adjoint or unitary in $V(\cW\otimes\cS)$, where $\otimes$ is based on signal and window $quantity_m$ bases, then ${\bf \breve{N}}_m\big|_{V(\cW\otimes\cS)}$ is self-adjoint or unitary in $V(\cW\otimes\cS)$ for any construction of $\otimes$.

For any $f,s\in {\Psi}_m^{\prime*}{\cal D}({\bf\bQ}'_m) \cap {\Psi}_m^*{\cal D}({\bf\bQ}_m)\subset \cW\cap\cS$ note that the application of ${\bf\bQ}'_m$ and ${\bf\bQ}_m$ are equivalent, and the application of $\Psi'_m$ and $\Psi_m$ are equivalent. Calculate
\[V_{\Psi_m^*{\bf\breve{Q}}_m \Psi_m f}[s](g)=\iint [\Psi_m s]({\bf x}_m,y) \overline{[\rho^m(g) {\bf\breve{Q}}_m \Psi_m f]({\bf x}_m,y)}  d{\bf x}_m dy\]
\[=\iint [\Psi_m s]({\bf x}_m,y) \overline{[\rho^m(g) {\bf\breve{Q}}_m \rho^m(g)^* \rho^m(g) \Psi_m f]({\bf x}_m,y)}  d{\bf x}_m dy\]
By the inversion formula (\ref{eq:inverse_coo1}), the ${\bf g}_m$ coordinate of ${\bf g}^{-1}$ is ${\bf A}_m({\bf h}_m^{-1}){\bf g}_m^{-1}={\bf A}_m({\bf h}_m)^{-1}{\bf{g}}_m^{-1}$, so by the canonical commutation relation
(\ref{eq:quant_trans2}) with $g'=g^{-1}$
\begin{equation}
\begin{split}
 & V_{\Psi_m^*{\bf\breve{Q}}_m \Psi_m f}[s](g) \\
 &= \iint [\Psi_m s]({\bf x}_m,y) \overline{[{\bf A}_m({\bf h}_m)^{-1}{\bf{g}}_m^{-1}\bullet {\bf A}({\bf h}_m)^{-1}{\bf\breve{Q}}_m \rho^m(g) \Psi_m f]({\bf x}_m,y)}  d{\bf x}_m dy\\
 &=\iint [\overline{{\bf A}_m({\bf h}_m)^{-1}{\bf{g}}_m^{-1}\bullet {\bf A}({\bf h}_m)^{-1}{\bf\breve{Q}}_m} \Psi_m s]({\bf g}_m,y) \overline{[ \rho^m(g) \Psi_m f]({\bf x}_m,y) } d{\bf x}_m dy
\end{split}
\label{eq:temp4hh08080}
\end{equation}
where the last equality is due to the fact that the application of ${\bf A}_m({\bf h}_m)^{-1}{\bf{g}}_m^{-1}\bullet {\bf A}({\bf h}_m)^{-1}{\bf\breve{Q}}_m$ on the functions $\Psi_m f$ or $\Psi_ms$ is a tuple of multiplications by numerical values.

Let us treat the two cases of the physical quantity $G_m^1$.
In case $G_m^1$ is $\RR$ or $\ZZ$, we have $\bullet = +$, and ${\bf A}_m({\bf h}_m)^{-1}{\bf{g}}_m^{-1}\bullet {\bf A}({\bf h}_m)^{-1}{\bf\breve{Q}}_m$ are multiplications by real numbers.
\[\begin{split}
V_{\Psi_m^*{\bf\breve{Q}}_m \Psi_m f}[s](g)= &\iint [{\bf A}({\bf h}_m)^{-1}{\bf\breve{Q}}_m \Psi_m s]({\bf x}_m,y) \overline{[ \rho^m(g) \Psi_m f]({\bf x}_m,y)}  d{\bf x}_m dy\\
& -\iint [{\bf A}({\bf h}_m)^{-1}{\bf g}_m\Psi_m s]({\bf x}_m,y) \overline{[ \rho^m(g) \Psi_m f]({\bf x}_m,y)}  d{\bf x}_m dy.
\end{split}\]
By the fact that the coordinates on ${\bf A}({\bf h}_m)^{-1}$ are multiplication by scalars, we can write
\[\begin{split}
V_{\Psi_m^*{\bf\breve{Q}}_m \Psi_m f}[s](g)= &{\bf A}({\bf h}_m)^{-1}\iint [{\bf\breve{Q}}_m \Psi_m s]({\bf x}_m,y) \overline{[ \rho^m(g) \Psi_m f]({\bf x}_m,y)}  d{\bf x}_m dy\\
& -{\bf A}({\bf h}_m)^{-1}{\bf g}_m\iint [\Psi_m s]({\bf x}_m,y) \overline{[ \rho^m(g) \Psi_m f]({\bf x}_m,y)}  d{\bf x}_m dy \\
= &{\bf A}({\bf h}_m)^{-1}V_{f}[\Psi_m ^*{\bf\breve{Q}}_m\Psi_m s](g) -{\bf A}({\bf h}_m)^{-1}{\bf \breve{N}}_mV_f[s](g).
\end{split}\]
so
\[{\bf \breve{N}}_mV_f[s]=  V_{f}[\Psi_m ^*{\bf\breve{Q}}_m\Psi_m s] -{\bf A}_m({\bf h}_m)V_{\Psi_m ^*{\bf\breve{Q}}_m\Psi_m f}[s].\]
By the definition of the wavelet-Plancherel transform $V$, and by functional calculus, we can write
\begin{equation}
\begin{split}
{\bf \breve{N}}_mV(f\otimes s) &={\bf \breve{N}}_mV_f[s] \\
&=  V_{f}[\Psi_m ^*{\bf\breve{Q}}_m\Psi_m s] -{\bf A}_m({\bf\breve{N}}_{m+1},\ldots,{\bf \breve{N}}_M)V_{\Psi_m ^*{\bf\breve{Q}}_m\Psi_m f}[s]\\
&=  V\Big(f\otimes(\Psi_m ^*{\bf\breve{Q}}_m\Psi_m s)\Big) -{\bf A}_m({\bf\breve{N}}_{m+1},\ldots,{\bf \breve{N}}_M)V\Big((\Psi_m ^*{\bf\breve{Q}}_m\Psi_m f)\otimes s \Big).
\end{split}
\label{eq:ihgfd4ud}
\end{equation}
Equation (\ref{eq:ihgfd4ud}) can be written in a properly defined subspace of $[\cW\otimes\cS]_0$ by
\begin{equation}
\begin{split}
{\bf \breve{N}}_m VF= & 
 \ \ V[\Psi'_m\otimes \Psi_m]^*[{\bf I}\otimes{\bf\breve{Q}}_m][\Psi'_m\otimes \Psi_m]F \\
 & -{\bf A}_m({\bf\breve{N}}_{m+1},\ldots,{\bf \breve{N}}_M) V[\Psi'_m\otimes \Psi_m]^*[{\bf\breve{Q}}'_m\otimes {\bf I}][\Psi'_m\otimes \Psi_m]F.
\end{split}
\label{eq:ihgfd4ud200}
\end{equation}
By the wavelet-Plancherel theorem we have $F=V^*VF=V^* H$ for a generic $H\in V(\cW\otimes\cS)_0$, and we can write
\begin{equation}
\begin{split}
{\bf \breve{N}}_m\big|_{V(\cW\otimes\cS)_0}= & \ \ 
 V[\Psi'_m\otimes \Psi_m]^*[{\bf I}\otimes{\bf\breve{Q}}_m][\Psi'_m\otimes \Psi_m]V^* \\
&  -{\bf A}_m({\bf\breve{N}}_{m+1},\ldots,{\bf \breve{N}}_M) V[\Psi'_m\otimes \Psi_m]^*[{\bf\breve{Q}}'_m\otimes {\bf I}][\Psi'_m\otimes \Psi_m]V^*.
\end{split}
\label{eq:before_indA1}
\end{equation}
In the base of the induction, $m=M$, and 
\begin{equation}
\begin{split}
{\bf \breve{N}}_M\big|_{V(\cW\otimes\cS)_0}= & \ \ 
 V\left([\Psi'_M\otimes \Psi_M]^*[{\bf I}\otimes{\bf\breve{Q}}_M][\Psi'_M\otimes \Psi_M]\right. \\
&  \left. - [\Psi'_M\otimes \Psi_M]^*[{\bf\breve{Q}}'_M\otimes {\bf I}][\Psi'_M\otimes \Psi_M]\right)V^*.
\end{split}
\label{eq:before_indA1_base}
\end{equation}
Now, 
\begin{equation}
\begin{split}
{\bf \breve{T}}_M\big|_{[\cW\otimes\cS]_0}= & \ \ 
 V^*{\bf \breve{N}}_M\big|_{V(\cW\otimes\cS)_0}V \\
=&  [\Psi'_M\otimes \Psi_M]^*[{\bf I}\otimes{\bf\breve{Q}}_M][\Psi'_M\otimes \Psi_M] \\
&
   - [\Psi'_M\otimes \Psi_M]^*[{\bf\breve{Q}}'_M\otimes {\bf I}][\Psi'_M\otimes \Psi_M].
\end{split}
\label{eq:before_indA1_baseT}
\end{equation}
 Moreover, since ${\bf \breve{N}}_M\big|_{V(\cW\otimes\cS)_0}$ is symmetric, so is ${\bf \breve{T}}_M\big|_{[\cW\otimes\cS]_0}$. We now show that the closure of ${\bf \breve{T}}_M\big|_{[\cW\otimes\cS]_0}$ is the self-adjoint operator given by 
\begin{equation}
\begin{split}
{\bf \breve{T}}_M = & \ \ 
  [\Psi'_M\otimes \Psi_M]^*[{\bf I}\otimes{\bf\breve{Q}}_M][\Psi'_M\otimes \Psi_M] \\
&
   - [\Psi'_M\otimes \Psi_M]^*[{\bf\breve{Q}}'_M\otimes {\bf I}][\Psi'_M\otimes \Psi_M].
\end{split}
\label{eq:before_indA1_baseT3}
\end{equation}

Denote by $L^2\big((Y_M\times N_m)^2\big)_0$ the set of finite linear combinations of simple tensors in the $quantity_M$ domain $L^2\big((Y_M\times N_m)^2\big)=L^2(Y_M\times N_m;\cW)\otimes L^2(Y_M\times N_m;\cS)$ (see (\ref{eq:quant_m_dom_WS})). Denote
\begin{equation}
\begin{split}
{\bf\breve{W}}\big|_{L^2\big((Y_M\times N_m)^2\big)_0}= & [\Psi'_M\otimes \Psi_M]{\bf \breve{T}}_M\big|_{[\cW\otimes\cS]_0}[\Psi'_M\otimes \Psi_M]^*\\
= & [{\bf I}\otimes{\bf\breve{Q}}_M-{\bf\breve{Q}}'_M\otimes {\bf I}]\big|_{L^2\big((Y_M\times N_m)^2\big)_0}.
\label{eq:tempy5yy5q}
\end{split}
\end{equation}
Note that ${\bf\breve{W}}\big|_{L^2\big((Y_M\times N_m)^2\big)_0}$ is a multiplicative operator in the $quantity_M$ domain $L^2\big((Y_M\times N_M)^2\big)$. By (\ref{eq:tempy5yy5q}) and by the fact that $[\Psi'_M\otimes \Psi_M]$ is an isometric isomorphism, to show that the closure of ${\bf \breve{N}}_M\big|_{V(\cW\otimes\cS)_0}$ is ${\bf \breve{N}}_M$ it is enough to show that the closure of ${\bf\breve{W}}\big|_{L^2\big((Y_M\times N_m)^2\big)_0}$ is the self-adjoint multiplicative operator ${\bf\breve{W}}:={\bf I}\otimes{\bf\breve{Q}}_M-{\bf\breve{Q}}'_M\otimes {\bf I}$.

Let ${\bf 1}^m_L$ be the projection that restricts functions in $L^2\big((Y_M\times N_m)^2\big)$ to the domain 
\[D_m^L=\{(y'_m,g'_m,y_m,g_m)\ |\ -L\leq g^{\prime k}_m,g^k_m \leq L \ {\rm for\ every\ } k=1,\ldots,K_m\}.\]
 Let $F$ be in the domain of ${\bf \breve{W}}_M$.  For every $j\in\NN$, let  $L_j>1$ such that 
\[\norm{F-{\bf 1}^M_{L_j} F}\leq \frac{1}{2j} \quad , \quad  \norm{{\bf \breve{W}}_M F-{\bf \breve{W}}_M {\bf 1}^M_{L_j} F}\leq \frac{1}{2j} .\] 
Indeed, $\breve{W}_M {\bf 1}^M_{L_j} F= {\bf 1}^M_{L_j}\breve{W}_M  F$, and $L_2$ norms can be approximated on large enough compact domains.
Any ${\bf 1}^{M}_{L_j} F$ can be approximated by a vector $F_j\in L^2\big((Y_M\times N_m)^2\big)_0$ up to error less than $\frac{1}{2jL_j}$, where the simple tensor components of $F_j$ are all supported in $D_m^{L_j}$, and are thus in the domain of ${\bf \breve{W}}_M\big|_{[\cW\otimes\cS]_0}$. Namely,
\[\norm{{\bf 1}^{M}_{L_j} F- F_j} \leq \frac{1}{2jL_j} \leq \frac{1}{2j}\]
and by the compact support, we also have
\[\norm{{\bf \breve{W}}_M{\bf 1}^{M}_{L_j} F- {\bf \breve{W}}_M\big|_{L^2\big((Y_M\times N_m)^2\big)_0}F_j} \leq L_j\frac{1}{2jL_j}= \frac{1}{2j}.\]
This shows that there is $F_j$ in the domain of ${\bf \breve{W}}_M\big|_{L^2\big((Y_M\times N_m)^2\big)_0}$ such that 
\[\lim_{j\rightarrow\infty}\norm{F- F_j} =0 \quad , \quad  
\lim_{j\rightarrow\infty}\norm{{\bf \breve{W}}_MF- {\bf \breve{W}}_M\big|_{L^2\big((Y_M\times N_m)^2\big)_0}F_j} =0 \]
which shows that the self-adjoint operator ${\bf\breve{W}}_M$ is the closure of ${\bf \breve{W}}_M\big|_{L^2\big((Y_M\times N_m)^2\big)_0}$.

Define the self-adjoint operator
\begin{equation}
{\bf \breve{N}}_M' =V{\bf \breve{T}}_MV^*.
\label{eq:NMprime}
\end{equation}
in the space $V(\cW\otimes\cS)$.
Next, we show that ${\bf \breve{N}}_M' = {\bf \breve{N}}_M\big|_{V(\cW\otimes\cS)}$. Similarly to before, we define by abuse of notation ${\bf 1}^m_L$ to be the projection that restricts functions from $L^2(G)$ to the domain
\[D_m^L=\{g\in G\ |\ -L\leq g^k_m \leq L \ {\rm for\ every\ } k=1,\ldots,K_m\}.\]
Let $F$ be in the domain of ${\bf \breve{N}}_M'$ .
Choose an  approximating sequence  $\{F_j\}_{j\in\NN}\subset[\cW\otimes\cS]_0$, such that
 \[\lim_{j\rightarrow\infty}V(F_j)= V(F) \quad , \quad \lim_{j\rightarrow\infty}{\bf\breve{N}}_M' V(F_j)={\bf \breve{N}}_M' V(F).\] 
Indeed, this is possible since every self-adjoint operator is closed.


We have for every $L>0$

\[ \begin{split}
{\bf\breve{N}}_M' V(F)  
 &= {\bf 1}^M_L{\bf\breve{N}}_M' V(F) + ({\bf I}-{\bf 1}^M_L){\bf\breve{N}}_M' V(F)\\
 &= {\bf 1}^M_L\lim_{j\rightarrow\infty} {\bf\breve{N}}_M' V(F_j) + ({\bf I}-{\bf 1}^M_L){\bf\breve{N}}_M' V(F)\\
 &={\bf 1}^M_L\lim_{j\rightarrow\infty} {\bf\breve{N}}_M V(F_j) + ({\bf I}-{\bf 1}^M_L){\bf\breve{N}}_M' V(F)
\end{split}\]
By continuity of the projection ${\bf 1}^M_L$
\[
{\bf\breve{N}}_M' V(F) = \lim_{j\rightarrow\infty} {\bf 1}^M_L{\bf\breve{N}}_M V(F_j) + ({\bf I}-{\bf 1}^M_L){\bf\breve{N}}_M' V(F).\]
Note that ${\bf 1}^M_L{\bf\breve{N}}_M$ is an operator that multiplies by a bonded function, and is thus continuous. Also note that $({\bf I}-{\bf 1}^M_L)$ converges to zero in the strong topology as $L\rightarrow\infty$.  Therefore
\[
\begin{split}
{\bf\breve{N}}_M' V(F) &= {\bf 1}^M_L{\bf\breve{N}}_M  \lim_{j\rightarrow\infty}   V (F_j) \ \ \ +  o(1;L)\\
 & = {\bf 1}^M_L{\bf\breve{N}}_M  V (F) \ \ \ +  o(1;L)
\end{split}
\]
where $o(1;L)$ is a vector that converges to zero as $L$ goes to infinity.
This is true for any $L$, so
\[ {\bf\breve{N}}_M' V(F) ={\bf\breve{N}}_M V(F),\]
or\footnote{Inclusion of operators $T_1\subset T_2$ means that the domain ${\cal D}_1$ of $T_1$ is contained in the domain  ${\cal D}_2$ of $T_2$, and $T_1 v=T_2 v$ for every $v\in {\cal D}_1$,}
\[ {\bf\breve{N}}_M' \subset {\bf\breve{N}}_M\big|_{V(\cW\otimes\cS)}.\]

In this situation we must have ${\bf\breve{N}}_M' = {\bf\breve{N}}_M\big|_{V(\cW\otimes\cS)}$, as shown next.  Denote the domain of ${\bf\breve{N}}_M'$ by $V(\cW\otimes\cS)_1$, and note that $V(\cW\otimes\cS)_1$ is contained both in $V(\cW\otimes\cS)$ and in the domain ${\cal D}({\bf\breve{N}}_m)$ of ${\bf\breve{N}}_m$. Moreover, for every $F\in V(\cW\otimes\cS)_1$, ${\bf\breve{N}}_M'F={\bf\breve{N}}_MF$. Now, recall that the domain of the adjoint of ${\bf\breve{N}}_M'$ is defined to be the set of all $Q\in V(\cW\otimes\cS)$ for which there exists $Q^*$ satisfying
\[\forall F\in V(\cW\otimes\cS)_1 \ . \quad \ip{{\bf\breve{N}}_M' F}{Q} = \ip{F}{Q^*}.\]
Now, for every $F\in V(\cW\otimes\cS)_1$ and $Q\in {\cal D}({\bf\breve{N}}_m)\cap V(\cW\otimes\cS)$
\[\ip{{\bf\breve{N}}_M' F}{Q} = \ip{{\bf\breve{N}}_M F}{Q} = \ip{F}{{\bf\breve{N}}_M Q}.\]
This means that any $Q\in {\cal D}({\bf\breve{N}}_m)\cap V(\cW\otimes\cS)$ is in the domain of ${\bf\breve{N}}_M^{\prime *}$. Thus, the domain of ${\bf\breve{N}}_M^{\prime *}$  which is equal to the domain of ${\bf\breve{N}}_M'$, contains ${\cal D}({\bf\breve{N}}_m)\cap V(\cW\otimes\cS)$. To conclude, 
\[{\bf\breve{N}}_M' = {\bf\breve{N}}_M\big|_{V(\cW\otimes\cS)}.\]

Let us now treat the case $m<M$ and $G_m$ is $\RR$ or $\ZZ$. By the induction assumption, for all $m'>m$, ${\bf \breve{N}}_{m'}\big|_{V(\cW\otimes\cS)}$ is self-adjoint or unitary in $V(\cW\otimes\cS)$, 
so
\[V^*{\bf A}_m({\bf\breve{N}}_{m+1},\ldots,{\bf \breve{N}}_M)V= V^*{\bf A}_m({\bf\breve{N}}_{m+1}\big|_{V(\cW\otimes\cS)},\ldots,{\bf \breve{N}}_M\big|_{V(\cW\otimes\cS)})V\]
and by
(\ref{eq:pull_func_calc})
\[V^*{\bf A}_m({\bf\breve{N}}_{m+1},\ldots,{\bf \breve{N}}_M)V= {\bf A}_m({\bf\breve{T}}_{m+1},\ldots,{\bf \breve{T}}_M).\]
Therefore, by (\ref{eq:before_indA1})
\begin{equation}
\begin{split}
{\bf \breve{N}}_m\big|_{V(\cW\otimes\cS)_0}= & \ \ 
 V\left([\Psi'_m\otimes \Psi_m]^*[{\bf I}\otimes{\bf\breve{Q}}_m][\Psi'_m\otimes \Psi_m]\right. \\
&  \left.-{\bf A}_m({\bf\breve{T}}_{m+1},\ldots,{\bf \breve{T}}_M) [\Psi'_m\otimes \Psi_m]^*[{\bf\breve{Q}}'_m\otimes {\bf I}][\Psi'_m\otimes \Psi_m]\right)V^*.
\end{split}
\label{eq:before_indA122}
\end{equation}
We define
\begin{equation}
\begin{split}
{\bf \breve{T}}_m\big|_{[\cW\otimes\cS]_0}= & V^*{\bf \breve{N}}_m\big|_{V(\cW\otimes\cS)_0}V\\
 = & [\Psi'_m\otimes \Psi_m]^*[{\bf I}\otimes{\bf\breve{Q}}_m][\Psi'_m\otimes \Psi_m] \\
&  -{\bf A}_m({\bf\breve{T}}_{m+1},\ldots,{\bf \breve{T}}_M) [\Psi'_m\otimes \Psi_m]^*[{\bf\breve{Q}}'_m\otimes {\bf I}][\Psi'_m\otimes \Psi_m].
\end{split}
\label{eq:before_indA1222}
\end{equation}
Now, by the fact that ${\bf \breve{N}}_m\big|_{V(\cW\otimes\cS)_0}$ is symmetric or isometric, so is ${\bf \breve{T}}_m\big|_{[\cW\otimes\cS]_0}$. We now use the assumption that the closure of ${\bf \breve{T}}_m\big|_{[\cW\otimes\cS]_0}$ is the self-adjoint operator in $\cW\otimes\cS$ given by
\begin{equation}
\begin{split}
{\bf \breve{T}}_m 
 = & [\Psi'_m\otimes \Psi_m]^*[{\bf I}\otimes{\bf\breve{Q}}_m][\Psi'_m\otimes \Psi_m] \\
&  -{\bf A}_m({\bf\breve{T}}_{m+1},\ldots,{\bf \breve{T}}_M) [\Psi'_m\otimes \Psi_m]^*[{\bf\breve{Q}}'_m\otimes {\bf I}][\Psi'_m\otimes \Psi_m].
\end{split}
\label{eq:before_indA12226}
\end{equation}
We then define 
\[{\bf \breve{N}}_m'=V{\bf \breve{T}}_mV^*,\]
and show as before that
\[{\bf \breve{N}}_m'={\bf \breve{N}}_m\big|_{V(\cW\otimes\cS)}\]
is self-adjoint in $V(\cW\otimes\cS)$.

We now continue from (\ref{eq:temp4hh08080}) 
in case $G_m$ is $e^{i\RR}$ or $e^{2\pi i \ZZ/N}$.  For a tuple of signals ${\bf s}$, by the fact that the SPWT is simply dilated, we have
\[V_{\Psi_m^*{\bf\breve{Q}}_m \Psi_m f}[{\bf s}](g)=\iint [\Psi_m {\bf s}]({\bf x}_m,y) \overline{[{\bf{g}}_m^{-1}\bullet {\bf\breve{Q}}_m \rho^m(g) \Psi_m f]({\bf x}_m,y)}  d{\bf x}_m dy.\]
Here, $\bullet$ is term by term multiplication, so
\[\begin{split}
V_{\Psi_m^*{\bf\breve{Q}}_m \Psi_m f}[{\bf s}](g)&=\overline{{\bf{g}}_m^{-1}}\iint [\Psi_m {\bf s}]({\bf x}_m,y) \overline{[ {\bf\breve{Q}}_m \rho^m(g) \Psi_m f]({\bf x}_m,y)}  d{\bf x}_m dy\\
&={\bf{g}}_m\bullet\iint {\bf\breve{Q}}_m^* [\Psi_m {\bf s}]({\bf x}_m,y) \overline{[ \rho^m(g) \Psi_m f]({\bf x}_m,y)}  d{\bf x}_m dy\\
&={\bf{g}}_m\bullet V_{f}[\Psi_m^*{\bf\breve{Q}}_m^* \Psi_m {\bf s}](g)
\end{split}\]
where the multiplication of $\Psi_m^*{\bf\breve{Q}}_m^*$ with  $\Psi_m {\bf s}$ is term by term.
So, by 
 plugging in ${\bf s}= \Psi_m^*{\bf\breve{Q}}_m \Psi_m s'$, and denoting ${\bf s}'=(s')_{k=1}^{K_m}$ we get
\[V_{\Psi_m^*{\bf\breve{Q}}_m \Psi_m f}[\Psi_m^*{\bf\breve{Q}}_m \Psi_m s'] =
{\bf{g}}_m\bullet V_{f}[{\bf s}'](g)\]
Thus, by changing the notation $s'\mapsto s$, and since $\bullet$ is term by term multiplication,
\[V_{\Psi_m^*{\bf\breve{Q}}_m \Psi_m f}[\Psi_m^*{\bf\breve{Q}}_m \Psi_ms] ={\bf{g}}_mV_{f}[s](g).\]
As a result,
\begin{equation}
\begin{split}
{\bf \breve{N}}_m V(f\otimes s)=& {\bf \breve{N}}_mV_f[s]= V_{\Psi_m^*{\bf\breve{Q}}_m \Psi_m f}[\Psi_m^*{\bf\breve{Q}}_m \Psi_ms]\\
= & V\Big((\Psi_m^*{\bf\breve{Q}}_m \Psi_m f)\otimes(\Psi_m^*{\bf\breve{Q}}_m \Psi_ms)\Big).
\end{split}
\label{eq:ihgfd4ud2}
\end{equation}
Similarly to (\ref{eq:before_indA1}),  
 (\ref{eq:ihgfd4ud2}) extends to non-simple tensors by
\begin{equation}
{\bf \breve{N}}_m\big|_{V(\cW\otimes\cS)}
= V[\Psi'_m\otimes \Psi_m]^*[{\bf\breve{Q}}_m'\otimes{\bf\breve{Q}}_m][\Psi'_m\otimes \Psi_m]V^*.
\label{eq:before_indA2}
\end{equation}

Next, we sketch the proof of 1, namely the invariance of $V(\cW\otimes\cS)$ under $\breve{\mathcal{M}}^{\infty}(G)$ operators. We start with an intuitive explanation, and then extend to a proper argument.
Let us denote by ${\cal M}^{1,\infty}(G)$ the space of bounded measurable functions on $G$, independent of the variable along the center $Z$, and integrable along the variables of $G/Z$. 
By the above result, we can show that any multiplicative operator by an ${\cal M}^{1,\infty}(G)$ function maps $V(\cW\otimes\cS)$ to itself. Indeed, ${\cal M}^{1,\infty}(G)$ functions can be approximated by linear combinations of characteristic functions of rectangular domains in $G/Z$. The operators that multiply by characteristic functions of rectangular domains in $G/Z$, are exactly the spectral projections of $({\bf \breve{N}}_1,\ldots,{\bf \breve{N}}_N)$ as defined in Remark \ref{simul_A_m}. Thus, multiplicative operators that multiply by ${\cal M}^{1,\infty}(G)$ functions commute with the projection $P_{V(\cW\otimes\cS)}$ upon $V(\cW\otimes\cS)$. More generally, the minimal von Neumann algebra\footnote{A subalgebra of the bounded operators in $L^2(G)$, which is closed under taking adjoints, contains the identity operator, and is closed with respect to the strong operator topology.} containing the spectral projections of $({\bf \breve{N}}_1,\ldots,{\bf \breve{N}}_N)$ are the multiplicative operators by ${\cal M}^{\infty}(G)$ functions \cite{VN_algebra}.

 \end{proof}

\subsection{Proof of Proposition \ref{prop:VWS}}

\begin{proof}
Part 1 are general properties. The center of any topological group that has a square integrable representation is compact (see e.g. \cite{Fuhr_wavelet}). In addition, the restriction of any representation to the center of its group is a scalar operator.

Next, we sketch the proof of 2. The idea is to generate an approximation to any function satisfying (\ref{eq:prop:VWS}). We do this using multiplicative operators applied on the wavelet-Plancherel transforms of simple tensors, as building blocks. For any $g\in G/Z$, there are $f\in\cW\cap\cS$ and $s\in\cS$ such that $V(f\otimes s)(g)\neq 0$. Indeed, we can take $f\in\cW\cap\cS$ and $s=\pi(g)f$. By continuity, $V(f\otimes s)$ is nonzero in a neighborhood of $g$. We can now construct an $\breve{{\cal M}}^{\infty}(G)$ multiplicative operator $\breve{R}$, such that $\breve{R}V(f\otimes s)$ is a characteristic function of a small neighborhood of $g$. Note that these characteristic functions on $G/Z$, extend via $\overline{\chi({\bf z})}$ in the $Z$ direction.
Indeed, there exists $Q\in L^{\infty}(G/Z)$ such that
\begin{equation}
\begin{split}
\breve{R}V(f\otimes s)= & Q({\bf g}_1,\ldots,{\bf g}_M)\ip{s}{\pi({\bf z},{\bf g}_1,\ldots,{\bf g}_M)(g)f}\\
= & \overline{\chi({\bf z})}Q({\bf g}_1,\ldots,{\bf g}_M)\ip{s}{\pi({\bf e},{\bf g}_1,\ldots,{\bf g}_M)f}.
\end{split}
\label{eq:5rts8}
\end{equation}
 We can thus approximate any continuous function of the form (\ref{eq:prop:VWS}), and by a density argument, prove that the space of functions of the form (\ref{eq:prop:VWS}) is contained in $V(\cW\otimes\cS)$. 
The inclusion of $V(\cW\otimes\cS)$ in the space of functions satisfying (\ref{eq:prop:VWS}) is shown by (\ref{eq:5rts8}), with $Q=1$.

\end{proof}

\subsection{Proof of Proposition \ref{pullDGWT}}

\begin{proof}
We start with showing 1.
This is a result of Kleppner and Lipsman's theorem \cite{Kleppner}, on $\RR^N \rtimes H_1$. We give a short explanation, without an exposition of the Plancherel theory for nonunimodular type I locally compact topological groups. For more details we refer the reader to \cite[Section 3]{Fuhr_wavelet}, and to \cite{geometric} and \cite{Qregula1} for the dual orbit and stabilizer of general dilation group wavelets based on quasi-regular representations. We outline the proof as follows. By the definition of DGWT, there is only one dual open orbit $U$, and for every $\gamma\in U\subset\RR^d= \hat{\RR}^d$, the stabilizer $\hat{H}_{\gamma}$ is $\{e\}$. Therefore, by Kleppner and Lipsman's theorem, the Plancherel measure of the space of unitary irreducible representations $\hat{G}$ is concentrated at one discrete series representation, which must be the DGWT representation $\pi$.  We note that this argument can be found, e.g., in Example 3.12 of \cite{Kep_ex1}, and \cite{Kep_ex0}.

In this situation, the Plancherel theorem \cite[Section 3]{Fuhr_wavelet} gives the following results.
The subspace $L^2(G)'$, defined as the inverse Plancherel transform of vector fields supported on the conull set $\{\pi\}\subset \hat{G}$, is a dense subspace of $L^2(G)$. In addition, $L^2(G)'$ is invariant under the left translation $L_G$, defined for $F\in L^2(G)$ by $L_G(g)F(g')=F(g^{-1}g')$. Moreover, $L_G\big|_{L^2(G)'}$ is unitarily equivalent to $\pi^{\kappa}$, acting on $\cS^{\kappa}$ by 
\[\pi^{\kappa}(g)\{s_j\}_{j=1}^{\kappa} = \{\pi(g)s_j\}_{j=1}^{\kappa}. \] 
Here, $\kappa\in\NN\cup\{\infty\}$ is the multiplicity of $\pi$ in $L_G\big|_{L^2(G)'}$, and $\cS^{\infty}$ is defined in the $l^2$ sense (as sequences with square summable norms $\{\norm{s_j}\}_{j=1}^{\infty}$). Thus,
\begin{equation}
L^2(G)= \overline{span\{\cH\subset L^2(G)\ :\ \cH {\rm\ is\ an\ invariant\ subspace\ of\ } L_G,\  L_G\big|_{\cH}\simeq\pi\}}
\label{eq:RepPlanch}
\end{equation}
In Theorem 2.33 of \cite{Fuhr_wavelet}, the right-hand side of (\ref{eq:RepPlanch}) is denoted by $L^2_{\pi}(G)$, and by part (a) of that theorem,
\begin{equation}
L^2_{\pi}(G) = \overline{{\rm span}\{V_f[s]\ |\ f\in \cS\cap\cW \ ,\ s\in\cS\}}.
\label{eq:L2pi}
\end{equation}
 Since the right-hand side of (\ref{eq:L2pi}) is precisely $V(\cW\otimes\cS)$, together with (\ref{eq:RepPlanch}), we conclude $L^2(G) =V(\cW\otimes\cS)$.

We continue with proving 2. First, to justify the use of Proposition \ref{prop:SPWT_pull_obs}, we use the fact that $V(\cW\otimes\cS)=L^2(G)$ and Remark \ref{no_closure_needed}.
Consider  a DGWT and its diffeomorphism $quantity_m$ transforms.
We first derive an explicit characterization of the pull-back of $R({ \breve{G}}_m^k)$ multiplicative operators for $m=2,\ldots,M$, and $R\in  L^{\infty}(G_m^k)$. Observe that all of the ${\bf\bT}_m$ operators, with $2\leq m\leq M$, are multiplicative operators in the tensor product frequency domain $L^2(U^2)= L^2(U;K^2(\w)d\w)\otimes L^2(U;d\w)$. We can show this by induction on (\ref{eq:pull1_self_adjoint2}) and(\ref{eq:pull1_self_adjoint2a1}) of Proposition \ref{prop:SPWT_pull_obs}, starting at $m=M$ and decreasing $m$. We rely on the fact that the diffeomorphism $quantity_m$ transforms are based on change of variables in the frequency domain, and thus all of the ${\bf\bQ}_m,{\bf\bQ}_m'$ operators are multiplicative. 
For $m=M$, ${\bf\bT}_M$ is a multiplicative operator, since ${\bf\bQ}_M,{\bf\bQ}_M'$ are. In the induction step we assume that ${\bf\bT}_{m+1},\ldots,{\bf\bT}_{M}$ are multiplicative operators, and thus so is ${\bf A}_m({\bf\bT}_{m+1},\ldots,{\bf\bT}_{M})$. Since ${\bf\bQ}_m,{\bf\bQ}_m'$ are also multiplicative operators, so is ${\bf\bT}_m$. 

Now, we study the pull-back of multiplicative operators along the $position$ variables, namely $R({\bf\bT}_1)$ operators.
We are interested in the calculation of $\exp({i t {\bf\bT}_1})$. 
We have
\[\Psi_1^*{\bQ}_m^k\Psi_1  = i \frac{\partial}{\partial \w_k},\]
and denote $i \frac{\partial}{\partial \bw}=\Big( i \frac{\partial}{\partial \w_k} \Big)_{k=1}^{K_1}$.
By (\ref{eq:pull1_self_adjoint}) of Proposition \ref{prop:SPWT_pull_obs},
\[{\bf\bT}_1 (\hf \otimes \hs) = \hf \otimes [i \frac{\partial}{\partial \bw} \hs]
 - {\bf A}_m({\bf\bT}_2,\ldots,{\bf\bT}_N)\Big( (i \frac{\partial}{\partial \bw}\hf)\otimes \hs \Big).\]
where ${\bf A}_m({\bf\bT}_2,\ldots,{\bf\bT}_N)$ is a multiplicative operator valued matrix in $L^2(U^2)$. 
By Remark \ref{no_closure_needed}, the closure of this symmetric operator, defined for (finite) linear combinations of simple functions, is self-adjoint. Since an operator is essentially self-adjoint if its closure is self-adjoint, there is a unique way to extend the domain of ${\bf\bT}_1$ to a self-adjoint operator. Hence, the closure of ${\bf\bT}_1$ must be the self-adjoint differential operator
\begin{equation}
[{\bf\bT}_1 F](\bw',\bw) = i \frac{\partial}{\partial \bw}F(\bw',\bw)
 + {\bf A}_m({\bf\bT}_2,\ldots,{\bf\bT}_N)i \frac{\partial}{\partial \bw'}F(\bw',\bw).
\label{eq:time_obs11000}
\end{equation}
The operator $\exp({it {\bf\bT}_1})$ is defined for $F\in\cW\otimes\cS$ via the differential equation
\begin{equation}
\begin{split}
\frac{\partial}{\partial t}\big[\exp({it {\bf\bT}_1})F\big]\ \ \ \ = \ \ &i {\bf\bT}_1 \big[\exp({it {\bf\bT}_1})F\big] = -\Big(\frac{\partial}{\partial \bw}
 + {\bf A}_m({\bf\bT}_2,\ldots,{\bf\bT}_N) \frac{\partial}{\partial \bw'}\Big)\big[\exp({it {\bf\bT}_1})F\big]\\
\big[\exp({it {\bf\bT}_1})F\big]_{t=0}= \ \ & F.
\end{split}
\label{eq:flow_pos1000}
\end{equation}

\end{proof}

\section{Appendix C: pull-back formulas for useful GDWTs}
\label{Examples222r}

The $qunatity_m$ transforms of the wavelet and the Shearlet transforms from Subsection \ref{Examples_SPWT} are diffeomorphism $quantity_m$ transforms. In the following we solve (\ref{eq:flow_pos1}) for our examples of DGWT.

%
\subsection{The 1D wavelet transform}
\label{A:The 1D wavelet transform}

By (\ref{eq:pull1_self_adjoint}), the operator $\bT_{2}$ is given by
\[{\bT}_2(\hf\otimes \hs) =  \hf\otimes(\Psi_2^*{\bQ}_2\Psi_2 \hs) - (\Psi_2^*{\bQ}_2\Psi_2 \hf)\otimes \hs\]
where $[\Psi_2^*{\bQ}_2\Psi_2 \hs](\w)=-\ln(\w)\hs(\w)$.
Therefore
\[\bT_2 F(\w',\w) = \big( \ln(\w')-\ln(\w) \big)F(\w',\w).\]

The operator $\bT_1$ is given by
\[{\bT}_1(\hf\otimes \hs) = \hf\otimes(i\frac{\partial}{\partial \w} \hs) - A_1(\bT_2)(i\frac{\partial}{\partial \w'} \hf)\otimes \hs\]
or
\[[\bT_1(\hf\otimes \hs)](\w',\w)= \Big(i\frac{\partial}{\partial \w} - e^{\bT_2}(-i)\frac{\partial}{\partial \w'}\Big)\hs(\w)\overline{\hf(\w')}.\]
Therefore
\[\bT_1 F(\w',\w)= \Big(i\frac{\partial}{\partial \w} + \frac{\w'}{\w}i\frac{\partial}{\partial \w'}\Big)F(\w',\w).\]

The operators $\exp(it\bT_1)$ on $F\in\cW\otimes\cS$  are written as
\[\exp(it\bT_1) F(\w',\w) = F(\w',\w;t)\]
where $F$ satisfies the first order linear PDE
\[\frac{\partial}{\partial t}{ F}(\w',\w;t) =\Big(-\frac{\partial}{\partial \w} - \frac{\w'}{\w}\frac{\partial}{\partial \w'}\Big)
F(\w',\w;t). \]
Denote the integral lines by ${\bf r}(t;\w'_0,\w_0)$, with curve parameter $t$ and initial position $(\w'_0,\w_0)$.
Moreover, denote ${\bf r}=(\w',\w):\RR\rightarrow\RR^2$. the curves ${\bf r}$ satisfy
\[\dot{{\bf r}}(t)=(\dot{\w'}(t),\dot{\w}(t))= \Big(\frac{\w'}{\w},1\Big)\]
with initial condition
\[\Big(\w'(0),\w(0)\Big)= \Big(\w'_0,\w_0\Big).\]
Thus
\begin{equation}
\w(t)=t+\w_0.
\label{eq:scale_x_flow11}
\end{equation}
\begin{equation}
\w'(t) = \frac{\w'_0}{\w_0}(t+\w_0).
\label{eq:scale_x_flow22}
\end{equation}

The above calculations of $\bT_{2}$ allow the application of perfect $scale$ bandpass filters, projecting upon the scale band $\{(g_1,g_2)\ |\ g_2\in[a,b]\}$. 
Namely,
\[P^2_{[a,b]}(\w',\w) = \left\{  \begin{array}{ccc}
	1 & , & a\leq \ln(\frac{\w'}{\w})\leq b\\
	0 & , & {\rm otherwise}
\end{array}
\right.\]
\[ = \left\{  \begin{array}{ccc}
	1 & , & e^a\leq \frac{\w'}{\w}\leq e^b\\
	0 & , & {\rm otherwise}
\end{array}
\right.\]

The operators $\exp({it{\bT}_1})$ allow the calculation of trigonometric polynomial approximate periodic band-pass filters in $time$. Namely, time-pass are convolution along slopes given by (\ref{eq:scale_x_flow11})(\ref{eq:scale_x_flow22}).


\subsection{The STFT}

Similarly to the 1D wavelet transform, it can be shown that time-pass filters are convolutions along diagonals 
\begin{equation}
\w(t)=t+\w_0.
\label{eq:scale_x_flow11o}
\end{equation}
\begin{equation}
\w'(t) = t+\w'_0.
\label{eq:scale_x_flow22o}
\end{equation}
Frequency-pass filters are multiplicative operators that restrict functions $F\in\cW\otimes\cS$ to a band of diagonals.


\subsection{Rotation-dilation wavelet transform}

In this section the signal and window spaces are assumed to be frequency domains. We denote the two frequency variables by $(\x,\y)$.
Consider the $angle$$-$$scale$ transform  
\[{\boldsymbol\Psi}_2=(\Psi_2^1,\Psi_2^2): L^2(\RR^2)\rightarrow L^2(e^{i\RR}\times\RR)\]
 of (\ref{eq:angle-scale}). The corresponding observable $\boldsymbol{\Psi}_2^*{\bf\bQ}_2\boldsymbol{\Psi}_2: L^2(\RR^2)\rightarrow L^2(\RR^2)^2$ is given by
\[[\boldsymbol{\Psi}_2^*{\bf\bQ}_2\boldsymbol{\Psi}_2 \hf](\x,\y)=\Big(\abs{(\x,\y)}^{-1}( \x + i\y ),-\ln\abs{(\x,\y)}\Big)\hf(\x,\y).\]

According to formula (\ref{eq:pull1_unitary}), the $angle$ observables $\bT_2$ is given by
\[{\bT}_2(\hf\otimes \hs) = (\Psi_2^*{\bQ}^1_2\Psi_2 \hf)\otimes(\Psi_2^*{\bQ}^1_2\Psi_2 \hs)\]
so
\[[\bT_2(\hf\otimes \hs)](\x',\y',\x,\y)= \frac{ \x + i\y }{\abs{(\x,\y)}}\hs(\x,\y)\overline{\frac{ \x' + i\y' }{\abs{(\x',\y')}}\hf(\x',\y')}.\]
Therefore
\[\bT_2 F(\x',\y',\x,\y) = \frac{(\x + i\y)(\x' - i\y')}{\abs{(\x,\y)}\abs{(\x',\y')}} F(\x',\y',\x,\y).\]

The $scale$ observable is given by (\ref{eq:pull1_self_adjoint}) as
\[{\bT}_3(\hf\otimes \hs) = \hf\otimes(\Psi_2^*{\bQ}^2_2\Psi_2 \hs) - (\Psi_2^*{\bQ}^2_2\Psi_2 \hf)\otimes \hs\]
or
\[[\bT_3(\hf\otimes \hs)](\x',\y',\x,\y)= \Big(\ln(\abs{(\x',\y')})-\ln(\abs{(\x,\y)})\Big)\hs(\x,\y)\overline{\hf(\x',\y')}.\]
Therefore
\[\bT_3 F(\x',\y',\x,\y)= \Big(\ln(\abs{(\x',\y')})-\ln(\abs{(\x,\y)})\Big)F(\x',\y',\x,\y).\]

To derive a formula for $angle$-pass filters, note that the space of functions $L^{\infty}(e^{i\RR})$, applied on $\bT_2$, is exhausted by the space of functions of the form
\[\big\{ \eta\big(\frac{(\x + i\y)(\x' - i\y')}{\abs{(\x,\y)}\abs{(\x',\y')}}\big) \ |\ \eta\in L^{\infty}(e^{i\RR})\big\}.\]
For $scale$-pass filters, note that the space of functions $L^{\infty}(\RR)$, applied on $\bT_3$, is exhausted by the space of functions of the form
\[\big\{ \eta(\frac{\abs{(\x',\y')}}{\abs{(\x,\y)}}) \ |\ \eta\in L^{\infty}(\RR)\big\}.\]

To calculate ${\bf\bT}_1$,
we use
\[{\bf A}_1(g_2,g_3) = \bD(g_2,g_3)  =e^{g_3}\left(
\begin{array}{cc}
	 \Re(g_2) & -\Im(g_2) \\
	\Im(g_2) & \Re(g_2)
\end{array}
\right)\]
the get
\[{\bf A}_1(\bT_2,\bT_3) =\frac{\abs{(\x',\y')}}{\abs{(\x,\y)}}\left(
\begin{array}{cc}
	 \frac{\x\x'+\y\y'}{\abs{(\x,\y)}\abs{(\x',\y')}} & -\frac{-\x\y'+\x'\y}{\abs{(\x,\y)}\abs{(\x',\y')}} \\
	\frac{-\x\y'+\x'\y}{\abs{(\x,\y)}\abs{(\x',\y')}} & \frac{\x\x'+\y\y'}{\abs{(\x,\y)}\abs{(\x',\y')}}
\end{array}
\right).\]
Therefore, by (\ref{eq:pull1_self_adjoint}),
\[{\bf\bT}_1 (\hf\otimes \hs)=\hf\otimes(\Psi_1^*{\bf\bQ}_1^x\Psi_1 \hs) - A_1(\bT_2,\bT_3)\big((\Psi_1^*{\bf\bQ}^x\Psi_1 \hf)\otimes \hs\big)\]
or
\begin{equation}
\begin{split}
& {\bf\bT}_1(\hf\otimes \hs) (\x',\y',\x,\y)=  \left(\begin{array}{c}
	i\frac{\partial}{\partial \x}\\
	i\frac{\partial}{\partial \y}
\end{array}\right)\overline{\hf(x',y')}\hs(\x,\y) \\
& \quad \quad - \frac{\abs{(\x',\y')}}{\abs{(\x,\y)}}\left(
\begin{array}{cc}
	 \frac{\x\x'+\y\y'}{\abs{(\x,\y)}\abs{(\x',\y')}} & -\frac{-\x\y'+\x'\y}{\abs{(\x,\y)}\abs{(\x',\y')}} \\
	\frac{-\x\y'+\x'\y}{\abs{(\x,\y)}\abs{(\x',\y')}} & \frac{\x\x'+\y\y'}{\abs{(\x,\y)}\abs{(\x',\y')}}
\end{array}
\right)\left(\begin{array}{c}
	-i\frac{\partial}{\partial \x'}\\
	-i\frac{\partial}{\partial \y'}
\end{array}\right)\overline{\hf(\x',\y')}\hs(\x,\y).
\end{split}
\end{equation}
Thus
\[
\begin{split}
 & {\bf\bT}_1 F(\x',\y',\x,\y) = \\
 & \left(
\begin{array}{c}
	 i\frac{\partial}{\partial \x}+ \frac{\abs{(\x',\y')}}{\abs{(\x,\y)}}\frac{\x\x'+\y\y'}{\abs{(\x,\y)}\abs{(\x',\y')}}i\frac{\partial}{\partial \x'} - \frac{\abs{(\x',\y')}}{\abs{(\x,\y)}}\frac{-\x\y'+\x'\y}{\abs{(\x,\y)}\abs{(\x',\y')}}i\frac{\partial}{\partial \y'} \\
	i\frac{\partial}{\partial \y} + \frac{\abs{(\x',\y')}}{\abs{(\x,\y)}}\frac{-\x\y'+\x'\y}{\abs{(\x,\y)}\abs{(\x',\y')}}i\frac{\partial}{\partial \x'}  + \frac{\abs{(\x',\y')}}{\abs{(\x,\y)}} \frac{\x\x'+\y\y'}{\abs{(\x,\y)}\abs{(\x',\y')}}i\frac{\partial}{\partial \y'}
\end{array}
\right)F(\x',\y',\x,\y). 
\end{split}\]

The exponential $\exp(it{\bf\bT}_1)$ of $F\in\cW\otimes\cS$ is written as
\[\exp(it{\bf\bT}_1) F(\x',\y',\x,\y) = {\bf F}(\x',\y',\x,\y;t)\]
where ${\bf F}(\ \cdot\ ;t)\in (\cW\otimes\cS)^2$ for every $t\in\RR$,
with initial condition
\[{\bf F}(\x',\y',\x,\y;0)=\Big(F(\x',\y',\x,\y),F(\x',\y',\x,\y)\Big).\]
The pair of functions ${\bf F}$ satisfies the two first order linear PDEs
\[\frac{\partial}{\partial t}{\bf F}(\x',\y',\x,\y;t) =
\left(
\begin{array}{c}
	 -\frac{\partial}{\partial \x}- \frac{\x\x'+\y\y'}{\abs{(\x,\y)}^2}\frac{\partial}{\partial \x'} + \frac{-\x\y'+\x'\y}{\abs{(\x,\y)}^2}\frac{\partial}{\partial \y'} \\
	-\frac{\partial}{\partial \y} - \frac{-\x\y'+\x'\y}{\abs{(\x,\y)}^2}\frac{\partial}{\partial \x'}  - \frac{\x\x'+\y\y'}{\abs{(\x,\y)}^2}\frac{\partial}{\partial \y'}
\end{array}
\right){\bf F}(\x',\y',\x,\y;t) \]

Denote the corresponding integral lines by ${\bf r}_{1,2}(t;\x'_0,\y'_0,\x_0,\y_0)$, with parameter $t$ and initial position $(\x'_0,\y'_0,\x_0,\y_0)$.
Moreover, denote ${\bf r}_{1,2}=(\x',\y',\x,\y):\RR\rightarrow \RR^4$. The curve ${\bf r}_1$ satisfies
\[\dot{{\bf r}}_1(t)=(\dot{\x}'(t),\dot{\y}'(t),\dot{\x}(t),\dot{\y}(t))\]
\[= \Big(\frac{\x\x'+\y\y'}{\abs{(\x,\y)}^2}  ,-\frac{-\x\y'+\x'\y}{\abs{(\x,\y)}^2} ,1, 0 \Big)\]
with initial condition
\[\Big(\x'(0),\y'(0),\x(0),\y(0)\Big)= \Big(\x'_0,\y'_0,\x_0,\y_0\Big).\]

The solution of these ODE's are
\begin{equation}
\x(t)=t+\x_0
\label{eq:Shear_x_flow100}
\end{equation}
\begin{equation}
\y(t)=\y_0.
\label{eq:Shear_x_flow2}
\end{equation}
\begin{equation}
\x'(t) = \frac{(\x_0\x_0'+\y_0\y_0')(\x_0+t)+\y_0^2\x_0'-\x_0\y_0\y'_0}{\x_0^2 +\y_0^2}.
\label{eq:angle_scale_x_prime_trans}
\end{equation}
\begin{equation}
\y'(t)= \frac{(\x_0\y_0'-\y_0\x_0')(\x_0+t) + \y_0^2\y_0' + \x_0\x_0'\y_0}{\x_0^2 +\y_0^2}
\label{eq:angle_scale_y_prime_trans}
\end{equation}

For ${\bf r}_2$, we have
\[\dot{\bf r}_2(t) = \Big( \frac{-\x\y' +\x'\y}{\x^2+\y^2} , \frac{\x \x' + \y\y'}{\x^2+\y^2},  0 , 1 \Big)\]
The solution is $\x(t)=\x_0$, $\y(t)=\y_0+t$, and
\begin{equation}
\x'(t) = \frac{(\y_0+t)(\y_0\x_0'-\x_0\y_0')  +\x_0^2\x_0' + \x_0\y_0\y_0'}{\x_0^2 +\y_0^2}.
\label{eq:angle_scale_x_prime_trans_gy}
\end{equation}
\begin{equation}
\y'(t)= \frac{(\y_0+t)(\y_0\y_0'+\x_0\x_0') + \x_0^2\y_0'-\x_0\y_0\x_0'}{\x_0^2 +\y_0^2}
\label{eq:angle_scale_y_prime_trans00}
\end{equation}

The above calculation of $\exp({it{\bf\bT}_1})$ allows the application of trigonometric polynomial $position$-pass filters.


\subsection{Shearlet transform}

In this example we assume that the supports of $\hs$ and $\hf$ are in $\RR^2_+=\{(\x,\y)\in\RR^2\ |\ \x>0\}$. By this, we can ignore the reflection subgroup $N_4$ of $G$.

Using the $scale$ transform  (\ref{eq:scale_t}), it is straightforward to show
\[[\Psi_3^*\bQ_3\Psi_3 \hs](\x,\y)=-\ln(\x)\hs(\x,\y).\]
According to (\ref{eq:pull1_self_adjoint}), $\bT_{3}$ is given by
\[\bT_3(\hf\otimes \hs) = \hf\otimes(\Psi_3^*\bQ_3\Psi_3 \hs) - (\Psi_3^*\bQ_3\Psi_3 \hf)\otimes \hs\]
or
\[[\bT_3(\hf\otimes \hs)](\x',\y',\x,\y)= -\ln(\x)\hs(\x,\y)\overline{\hf(\x',\y')} + \ln(\x')\hs(\x,\y)\overline{\hf(\x',\y')}.\]
Therefore
\[\bT_3 F(\x',\y',\x,\y) = (\ln(\x')-\ln(\x))F(\x',\y',\x,\y).\]
To derive a formula for $scale$-pass filters, note that the space of functions $L^{\infty}(\RR)$, applied on $\bT_3$, is exhausted by the space of functions 
\[\{ \eta(\frac{\x'}{\x}) \ |\ \eta\in L^{\infty}(\RR)\}.\]

To calculate $\bT_2$,
we use
\[A_2(g_3)=e^{\frac{1}{2}g_3}\]
to get
\[A_2(\bT_3)F(\x',\y',\x,\y)=e^{\frac{1}{2}(\ln(\x')-\ln(\x))}F(\x',\y',\x,\y)={\x}^{\prime\frac{1}{2}}\x^{-\frac{1}{2}}F(\x',\y',\x,\y).\]
Therefore
\[\bT_2 (\hf\otimes \hs)=\hf\otimes(\Psi_2^*\bQ_2\Psi_2 \hs) - A_2(\bT_3)\big((\Psi_2^*\bQ_2\Psi_2 \hf)\otimes \hs\big)\]
is given by
\[\bT_2 (\hf\otimes \hs)(\x',\y',\x,\y)= -\frac{\y}{\x}\overline{\hf(\x',\y')}\hs(\x,\y) + {\x}^{\prime\frac{1}{2}}\x^{-\frac{1}{2}}\frac{\y'}{\x'}\overline{\hf(\x',\y')}\hs(\x,\y).\]
Thus
\[\bT_2 F(\x',\y',\x,\y)=\Big( -\frac{\y}{\x} +{\x}^{\prime-\frac{1}{2}}\x^{-\frac{1}{2}}\y'\Big)F(\x',\y',\x,\y).\]
To derive a formula of $slope$-pass filters, note that the space of functions $L^{\infty}(\RR)$, applied on $\bT_2$, is exhausted by the space of functions 
\[\big\{ \eta\big( -\frac{\y}{\x} +{\x}^{\prime-\frac{1}{2}}\x^{-\frac{1}{2}}\y'\big) \ |\ \eta\in L^{\infty}(\RR)\big\}.\]

To calculate ${\bf\bT}_1$,
we use
\[{\bf A}_1(g_2,g_3) = D(g_2,g_3)= \left(
\begin{array}{cc}
	e^{g_3} & g_2e^{\frac{1}{2}g_3}\\
	   0    &  e^{\frac{1}{2}g_3}
\end{array}
\right)\]
to get
\[\begin{split}
{\bf A}_1(\bT_2,\bT_3) = &  \left(
\begin{array}{cc}
	e^{(\ln(\x')-\ln(\x))} & ( -\frac{\y}{\x} +{\x}^{\prime-\frac{1}{2}}\x^{-\frac{1}{2}}\y')e^{\frac{1}{2}(\ln(\x')-\ln(\x))}\\
	   0    &  e^{\frac{1}{2}(\ln(\x')-\ln(\x))}
\end{array}
\right)\\
 = & \left(
\begin{array}{cc}
	\x'\x^{-1} &  -\y\x^{\prime \frac{1}{2}}\x^{-\frac{3}{2}} +\x^{-1}\y'\\
	   0    &  \x^{\prime \frac{1}{2}}\x^{-\frac{1}{2}}
\end{array}
\right).
\end{split}\]

Therefore
\[{\bf\bT}_1 (\hf\otimes \hs)=\hf\otimes(\Psi_1^*{\bf\bQ}_1\Psi_1 \hs) - A_1(\bT_2,\bT_3)\big((\Psi_1^*{\bf\bQ}\Psi_1 \hf)\otimes \hs\big)\]
so
\[\begin{split}
\bT_1 (\hf\otimes \hs) (\x,\y)= & \left(\begin{array}{c}
	i\frac{\partial}{\partial \x}\\
	i\frac{\partial}{\partial \y}
\end{array}\right)\overline{\hf(\x',\y')}\hs(\x,\y) \\
 & - \left(
\begin{array}{cc}
	\x'\x^{-1} &  -\y\x^{\prime \frac{1}{2}}\x^{-\frac{3}{2}} +\x^{-1}\y'\\
	   0    &  \x^{\prime \frac{1}{2}}\x^{-\frac{1}{2}}
\end{array}
\right)\left(\begin{array}{c}
	-i\frac{\partial}{\partial \x'}\\
	-i\frac{\partial}{\partial \y'}
\end{array}\right)\overline{\hf(\x',\y')}\hs(\x,\y) \\
& =\left(\begin{array}{c}
	i\frac{\partial}{\partial \x}\\
	i\frac{\partial}{\partial \y}
\end{array}\right)\overline{\hf(\x',\y')}\hs(\x,\y)  \\
& + \left(
\begin{array}{cc}
	\x'\x^{-1}i\frac{\partial}{\partial \x'} -  (\y\x^{\prime \frac{1}{2}}\x^{-\frac{3}{2}} +\x^{-1}\y')i\frac{\partial}{\partial \y'}\\
	        \x^{\prime \frac{1}{2}}\x^{-\frac{1}{2}}i\frac{\partial}{\partial \y'}
\end{array}
\right)\overline{\hf(\x',\y')}\hs(\x,\y).
\end{split}\]
Thus
\[\begin{split}
& {\bf\bT}_1 F(\x',\y',\x,\y) = \\
& \left(
\begin{array}{cc}
	i\frac{\partial}{\partial \x} + \x'\x^{-1}i\frac{\partial}{\partial \x'} -  (\y\x^{\prime \frac{1}{2}}\x^{-\frac{3}{2}} +\x^{-1}\y')i\frac{\partial}{\partial \y'}\\
	   i\frac{\partial}{\partial \y}  +  \x^{\prime \frac{1}{2}}\x^{-\frac{1}{2}}i\frac{\partial}{\partial \y'}
\end{array}
\right)F(\x',\y',\x,\y). 
\end{split}\]
The exponential $\exp(it{\bf\bT}_1)$ applied on $F\in\cW\otimes\cS$ is given by
\[e^{i t {\bf\bT}_1 } F(\x',\y',\x,\y) =: {\bf F}(\x',\y',\x,\y;t)\]
with initial condition
\[{\bf F}(\x',\y',\x,\y;0)=\Big(F(\x',\y',\x,\y),F(\x',\y',\x,\y)\Big).\]
The pair ${\bf T}$ satisfies the two first order linear PDEs
\[\begin{split}
 & \frac{\partial}{\partial t}{\bf F}(\x',\y',\x,\y;t)  \\
 & =\left(
\begin{array}{cc}
	-\frac{\partial}{\partial \x} - \x'\x^{-1}\frac{\partial}{\partial \x'} +  (\y\x^{\prime \frac{1}{2}}\x^{-\frac{3}{2}} +\x^{-1}\y')\frac{\partial}{\partial \y'}\\
	   -\frac{\partial}{\partial \y}  -  \x^{\prime \frac{1}{2}}\x^{-\frac{1}{2}}\frac{\partial}{\partial \y'}
\end{array}
\right){\bf F}(\x',\y',\x,\y;t) 
\end{split}\]

Denote the integral lines by ${\bf r}_{1,2}(t;\x'_0,\y'_0,\x_0,\y_0)$.
Moreover, denote ${\bf r}_{1,2}=(\x',\y',\x,\y)$. Then the curve ${\bf r}_1$
 satisfies
\[\begin{split}
\dot{{\bf r}}_1(t)= & \big(\dot{\x}'(t),\dot{\y}'(t),\dot{\x}(t),\dot{\y}(t)\big)\\
 = & \Big(-\x'(t)\x(t)^{-1} , \y(t)\x'(t)^{\frac{1}{2}}\x(t)^{-\frac{3}{2}} +\x(t)^{-1}\y'(t) , -1, 0 \Big)
\end{split}\]
with initial condition
\[\Big(\x'(0),\y'(0),\x(0),\y(0)\Big)= \Big(\x'_0,\y'_0,\x_0,\y_0\Big).\]
Thus,
\begin{equation}
\x(t)=t+\x_0
\label{eq:Shear_x_flow1}
\end{equation}
\begin{equation}
\y(t)=\y_0.
\label{eq:Shear_x_flow200}
\end{equation}
\begin{equation}
\x'(t) = \x'_0\x_0^{-1} (t+\x_0)=\x'_0\x_0^{-1} t + \x'_0.
\label{eq:Shear_x_flow3}
\end{equation}
and
\begin{equation}
\y'(t) = \frac{-\y_0\x_0^{\prime \frac{1}{2}}\x_0^{-\frac{1}{2}}t + \x_0\y_0'}{t+\x_0}
\label{eq:Shear_x_flow4}
\end{equation}

For ${\bf r}_2$ we have
\[\dot{{\bf r}}_2(t)=(\dot{\x'}(t),\dot{\y'}(t),\dot{\x}(t),\dot{\y}(t))\]
\[=\Big( 0 , \x'(t)^{\frac{1}{2}}\x(t)^{-\frac{1}{2}}, 0 , 1 \Big)\]
with solution
\begin{equation}
\x(t)=\x_0
\label{eq:Shear_y_flow1}
\end{equation}
\begin{equation}
\y(t)=t+\y_0
\label{eq:Shear_y_flow2}
\end{equation}
\begin{equation}
\x'(t)=\x'_0.
\label{eq:Shear_y_flow3}
\end{equation}
and
\begin{equation}
\y'(t)=\x_0^{\prime\frac{1}{2}}\x_0^{-\frac{1}{2}} t + \y'_0.
\label{eq:Shear_y_flow300}
\end{equation}

The formula for $\exp({it{\bf\bT}_1})$ allows the calculation of trigonometric polynomials in ${\bf\bT}_1$, and thus of approximate $position$-pass filters.

\section{\textcolor{black}{Implementation of the WP4 search algorithm}}
\label{Implementation of the Wavelet-Plancherel search algorithm}

When computing the filters of the search algorithms, splines are restricted to interval subdomain, summed with other splines, and dilated. We first explain how to implement these three operations.

Consider a spline
\[\hf =  \{\w'_k,\hf_n(\w'_k)\}_{k=1}^{K},\]
where $\{w'_k\}_{k=1}^{K}$ are the $K\in\NN$ knots of the spline. We define the \emph{restriction} of the spline to the interval $[a,b]\subset\RR$, denoted by $[a,b]\cap \hf$, as the following spline.
The first knot of $[a,b]\cap \hf$ is 
\begin{equation}
\w'_a=\left\{ 
\begin{array}{cc}
  a   & {\rm if\ } \w'_1 \leq a\\
  \w'_1   & {\rm if\ } \w'_1> a.
\end{array}
\right.
\label{eq_SP_res0}
\end{equation}
Similarly, the last knot is
\begin{equation}
\w'_b=
\left\{ 
\begin{array}{cc}
  b   & {\rm if\ } \w'_K \geq b\\
  \w'_K   & {\rm if\ } \w'_K < b.
\end{array}
\right.
    \label{eq_SP_res1}
\end{equation}
Then, $[a,b]\cap \hf$ is implemented, using (\ref{eq_SP_res0}) and (\ref{eq_SP_res1}) by
\begin{equation}
[a,b]\cap \hf = 
\left\{
\begin{array}{ccc}
   \emptyset  & {\rm if\ } & \w'_K\leq a {\rm\ or\ } \w'_1 \geq b \\
   \{\w'_a, \hf(\w'_a)\}\cup\{\w'_k,\hf_n(\w'_k)\}_{a<\w'_k<b} \cup \{\w'_b, \hf(\w'_b)\}  & {\rm else} &
\end{array}
\right.
    \label{eq_SP_res}
\end{equation}
where in (\ref{eq_SP_res}), $\cup$ is understood as the concatenation of an element to a sequence. Note that $[a,b]\cap \hf$ can be implemented in $O(K)$ operation.

To emphasize the  implementational aspect of summation, we denote by $\hf_1\uplus\hf_2$ the sum of the two splines $\hf_1=\{\w'_{1,k}, \hf_1(\w'_{1,k})\}_{k=1}^{K_1}$ and $\hf_2=\{\w'_{2,k}, \hf_2(\w'_{2,k})\}_{k=1}^{K_2}$. This is implemented by uniting the two knot sets, and summing the values of the splines in the united knot set. Namely,
\begin{equation}
\hf_1\uplus\hf_2 = \Big\{ \{\w'_{1,k}\}_{k=1}^{K_1} \cup \{\w'_{2,k}\}_{k=1}^{K_2} \ , \ [\hf_1+\hf_2]\big(\{\w'_{1,k}\}_{k=1}^{K_1} \cup \{\w'_{2,k}\}_{k=1}^{K_2}\big) \Big\}.
    \label{eq_SP_sum}
\end{equation}
Here, $[\hf_1+\hf_2]\big(\{\w'_{1,k}\}_{k=1}^{K_1} \cup \{\w'_{2,k}\}_{k=1}^{K_2}\big) $ is the sequence of evaluations of $\hf_1+\hf_2$ on the united sample set. Uniting the two knot sequences to an ordered sequence is implemented in $O(K_1+K_2)$ operations, which is thus also the complexity of $\uplus$. We denote
\[\biguplus_{n=1}^N \hf_n = \hf_1\uplus\hf_2\uplus\ldots \uplus\hf_N.\]

The dilation of a spline by the scale $\ln(a)$ is defined by
\[\hat{D}_a\hf =  \{a\w'_k,\hf_n(\w'_k)\}_{k=1}^{K}.\]

Using the operation $[a,b]\cap$, $\uplus$, and $\hat{D}_a$, we can now formulate scale-pass and time-pass filters of spline sequences.
Consider the spline sequence
\[\{\hf_n\}_{n=0}^N = \big\{  \{\w'_{n,k},\hf_n(\w'_{n,k})\}_{k=1}^{K_n}\big\}_{n=0}^N,\]
where $\{\w'_{n,k}\}_{k=1}^{K_n}$ are the $K_n\in\NN$ knots of the $n$th spline. Denote that spacing of the signal grid $\{\w_n\}_{n=0}^N$ by $h>0$.


As explained in Subsections \ref{1D continuous wavelet as a time-frequency transform} and \ref{Sampling resolution in phase space}, we consider the change of variable between scale $g_2$ and frequency $\kappa$, given by 
\begin{equation}
   \kappa = -\ln(g_2) 
   \label{eq:S-T_cov}
\end{equation}
When representing the time-scale phase space using the change of variable (\ref{eq:S-T_cov}), it becomes the time-frequency plane with the standard Euclidean measure as the Haar measure (see (\ref{eq:TF_Haar})).
We formulate the search algorithm in terms of frequency bisections instead of scale bisections. Hence, bisecting a frequency interval at its middle point produces two sub-intervals of equal measures.

A frequency-pass filter which restricts spline sequences to a slope interval $z=\frac{\w'}{\w}\in[\frac{1}{\k^{\rm r}},\frac{1}{\k^{\rm l}}]$ is denoted by $S([\k^{\rm l},\k^{\rm r}])=\chi_{[-\ln(\k^{\rm r}),-\ln(\k^{\rm l})]}(\bT_2)$, where $\chi_{[-\ln(\k^{\rm r}),-\ln(\k^{\rm l})]}$ is the characteristic function of the interval $[-\ln(\k^{\rm r}),-\ln(\k^{\rm l})]$.
Here,  $[\k^{\rm l},\k^{\rm r}]$ represents a frequency interval.
The operator $S([\k^{\rm l},\k^{\rm r}])$ is implemented by
\[S([\k^{\rm l},\k^{\rm r}])\{\hf_n\}_{n=0}^N  = \{[\frac{\w_n}{\k^{\rm r}},\frac{\w_n}{\k^{\rm l}}]\cap \hf_n\}_{n=0}^N.\]

A translation along the slope rays by $t\in\NN$ pixels, $\hat{T}_{\rm slope}(t) = \exp(ith\bT_1)$, is computed by translating along $\w$, and dilating along $\w'$.
Denote
\[\hat{T}_{\rm signal}(t) \{\hf_n\}_n = \{\hf_{n-t}\}_n.\]
Then the slope translation is given by
\[\hat{T}_{\rm slope}(t)\{\hf_n\}_{n=0}^N = \hat{T}_{\rm signal}(t)\{\hat{D}_{\frac{\w_n}{\w_n-ht}} \hf_n\}_{n=0}^N.\]
Hence, for a trigonometric polynomial
$R(x) = \sum_{l=-L}^L c_l e^{ilhx}$, 
the time-filter $R(\bT_1)$ is given by
\[R(\bT_1) = \biguplus_{l=-L}^L c_l \hat{T}_{\rm slope}(l)\{\hf_n\}_n.\]

In the search algorithm, the filters $R$ are taken as dilated versions of a trigonometric polynomial approximation of $\chi_{[-\pi,0]}$ in $L^2[-\pi,\pi]$, as explained in Subsection \ref{A search algorithm via the wavelet-Plancherel theory}.

Lastly, in the search algorithm we compute norms of spline sequences, which are implemented by
\begin{equation}
\norm{\{\w'_{n,k},\hf_n(\w'_{n,k})\}_{k=1}^{K_n}\big\}_{n=0}^N}^2 = \sum_{n=0}^N\int_{\w'_1}^{\w'_{K_n}}\abs{\hf_n(\w')}^2d\w'.
    \label{eq_SP_int}
\end{equation}
The integral of the piece-wise quadratic function $\abs{\hf_n(\w')}^2$ in (\ref{eq_SP_int}) has a closed form formula.

The search algorithm can now be formulated in pseudo-code.

\begin{algorithm}[Wavelet-Plancherel Search Algorithm]
\label{Wavelet-Plancherel Search Algorithm}
$ $ \newline
\emph{Input}: 

$\quad$ $\hf=\{\w'_k,\hf(\w'_k)\}_{k=1}^K$: spline window 

$\quad$ $\hs=\{\hs(\w_n)\}_{n=1}^N$: discrete signal  of spacing $h$ and resolution $N=2^J$, with $J\in\NN$. \newline
\emph{Output}: a point $\mathbf{g}$ in phase space with large coefficient $V_{\hf}[\hs](\mathbf{g})$. \newline
\emph{Notations}: 

$\quad$ $R(x) = \sum_{l=-L}^L c_l e^{ilhx}$: a trigonometric polynomial approximating $\chi_{[-\pi,0]}$ in $L^2[-\pi,\pi]$

$\quad$  $t_j^{\rm l}, t_j^r$: the left and right boundaries of the time band of length $2^{-j}(t_0^{\rm r}-t_0^{\rm l})$  at step $j$

$\quad$  ${\kappa}^{\rm l}_j, {\kappa}^{\rm r}_j$: the left and right boundaries of the frequency band of length $2^{-j}({\kappa}^{\rm r}_0-{\kappa}^{\rm l}_0)$ at step $j$

$\quad$ $F_j$: the approximate projection of $\hf\otimes\hs$ to the time-frequency rectangle   $[t_j^{\rm l},t_j^{\rm r}]\times [{\kappa}^{\rm l}_j, {\kappa}^{\rm r}_j]$.

\begin{itemize}
    \item
    Initialize
    \begin{itemize}
        \item 
        $F_0=\hf\otimes\hs$
        \item
    $t_0^{\rm l}=0$, $t_0^{\rm r}=N$
        \item
     ${\kappa}^{\rm l}_0= \frac{\w_0}{\w'_1}$, ${\kappa}^{\rm r}_0= \frac{\w_N}{\w'_K}$
    \end{itemize}
    \item
    For $j=1,\ldots,J$ repeat
    \begin{itemize}
        \item 
        Define $R_j(x) = \sum_{l=-L}^L c_l e^{il2^jhx}$
        \item
        Compute the four projections
        
        $R_{j-1}(\bT_1) \ S([{\kappa}_{j-1}^{\rm l},{\kappa}_{j-1}^{\rm l}+2^{-j}])\ F_{j-1}$, 
        
        $(1-R_{j-1}(\bT_1)) \ S([{\kappa}_{j-1}^{\rm l},{\kappa}_{j-1}^{\rm l}+2^{-j}])\ F_{j-1}$,
        
        $R_{j-1}(\bT_1) \ S([{\kappa}_{j-1}^{\rm r}-2^{-j},{\kappa}_{j-1}^{\rm r}])\ F_{j-1}$,
        
        $(1-R_{j-1}(\bT_1)) \ S([{\kappa}_{j-1}^{\rm r}-2^{-j},{\kappa}_{j-1}^{\rm r}])\ F_{j-1}$

        \item
        Compute the four norms of the above projections
        \item
        Set the new boundaries $t_j^{\rm l}, t_j^{\rm r},{\kappa}_j^{\rm l}, {\kappa}_j^{\rm r}$ as the boundaries of the projection with largest norm, and set $F_j$ to be the corresponding projected spline sequence.
    \end{itemize}
    \item
    Return $\mathbf{g}=\Big(t_j^{\rm l}+\frac{1}{2},-\ln\big({\kappa}_j^{\rm l}+2^{-j-1}({\kappa}^{\rm r}_0-{\kappa}^{\rm l}_0)\big)\Big)$.
\end{itemize}

\end{algorithm}

\end{document}